\documentclass[
reprint,
nobibnotes,
 aps,
 pra,
]{revtex4-2}

\usepackage{amsmath,amssymb,amsthm,amsfonts,amsbsy}
\usepackage{braket}
\usepackage{bbold}

\usepackage[colorlinks]{hyperref}
\hypersetup{
pdfstartview={FitH},
pdfnewwindow=true,
colorlinks=true,
linkcolor=blue,
citecolor=blue,
filecolor=blue,
urlcolor=blue}
\usepackage{cleveref}
\usepackage{acro}

\usepackage{graphicx}
\usepackage{tikz}
\usetikzlibrary{arrows.meta}
\usepackage{circuitikz}
\usetikzlibrary{patterns}
\usepackage{complexity}

\newtheorem{theorem}{Theorem}
\newtheorem{definition}[theorem]{Definition}

\newtheorem{corollary}[theorem]{Corollary}
\newtheorem{proposition}[theorem]{Proposition}
\newtheorem{lemma}[theorem]{Lemma}

\newtheorem{problem}[theorem]{Problem}

\DeclareAcronym{QC}{short=QC, long=Quantum Computer,}
\DeclareAcronym{QAE}{short=QAE, long=Quantum-Amplitude-Estimation,}
\DeclareAcronym{BE}{short=BE, long=Block Encoding,}
\DeclareAcronym{LCU}{short=LCU, long=Linear-Combination-of-Unitaries,}
\DeclareAcronym{HS}{short=HS, long=Hamiltonian Simulation,}
\DeclareAcronym{LCHS}{short=LCHS, long=Linear Combination of Hamiltonian Simulation,}
\DeclareAcronym{FFT}{short=FFT, long=Fast-Fourier-Transform,}
\DeclareAcronym{ODE}{short=ODE, long=Ordinary Differential Equation,}
\DeclareAcronym{DOF}{short=DOF, long=Degree of Freedom, long-plural-form=Degrees of Freedom}

\newcommand{\ketbra}[2]{|#1\rangle\!\langle#2|}
\newcommand{\diag}{\operatorname{diag}}
\newcommand{\one}{\mathbb{1}}
\newcommand{\lb}{\left(}
\newcommand{\rb}{\right)}
\newcommand\norm[1]{\left\lVert#1\right\rVert}
\newcommand{\bC}{\mathbf{C}}

\newcommand{\bG}{\mathbf{G}}
\newcommand{\bH}{\mathbf{H}}
\newcommand{\bJ}{\mathbf{J}}
\newcommand{\bK}{\mathbf{K}}
\newcommand{\boell}{\boldsymbol{\ell}}
\newcommand{\br}{\mathbf{r}}
\newcommand{\bS}{\mathbf{S}}
\newcommand{\bT}{\mathbf{T}}
\newcommand{\bv}{\mathbf{v}}
\newcommand{\bx}{\mathbf{x}}
\newcommand{\tbx}{\tilde{\mathbf{x}}}

\newcommand{\bzero}{\mathbf{0}}
\newcommand{\cO}{\mathcal{O}}

\newcommand{\nn}{\nonumber \\}
\newcommand{\cl}{\mathrm{cl}}
\newcommand{\balpha}{\boldsymbol{\alpha}}
\newcommand{\bPi}{\boldsymbol{\Pi}}
\newcommand{\rmid}{\mathrm{mid}}
\renewcommand{\poly}{\mathrm{poly}}

\begin{document}


\title{Quantum Simulation of Dissipative Non-Markovian Coupled Classical Oscillators}

\author{Malte Schade$^1$, Sophia Simon$^2$, Nathan Wiebe$^{2,3}$, Scott Keating$^1$, Cyrill Bösch$^{4}$, Andreas Fichtner$^1$}
\affiliation{$^{1}$ Institute of Geophysics, ETH Zürich, Zürich, Switzerland}
\affiliation{$^{2}$ Department of Computer Science, University of Toronto, Toronto, Canada}
\affiliation{$^{3}$ Pacific Northwest National Laboratory, Richland WA, USA}
\affiliation{$^{4}$ Department of Computer Science, Princeton University, Princeton NJ, USA}


\begin{abstract}
We present a quantum algorithm for simulating classical oscillator networks characterized by non-Markovian dissipation and time-varying material properties, extending recent speedups for undamped harmonic systems to viscoacoustic and viscoelastic media.
We embed the history-dependent dynamics into a Markovian state space governed by a non-Hermitian operator by approximating memory kernels through a Prony series. 
We then use linear combination of Hamiltonian simulation to estimate the instantaneous kinetic and potential energy for a subset of oscillators at time $t$ within error $\epsilon$ using a number of queries to the oscillator system that scales as $\widetilde{\mathcal{O}}(\alpha_{\rm tot} t/\epsilon)$, where $\alpha_{\rm tot}$ is polynomial in the strength of the dissipation, spring constants, inverse masses, and sparsity of the connections in the network.
We further show that this energy estimation task is in the worst-case classically hard (i.e., a corresponding decision problem is \BQP-complete), even in the presence of strong dissipation.
For time-dependent materials, we show that changes in material properties appear as effective dissipation or growth in the energy representation.
Additionally, we prove the infeasibility of exponential quantum advantages in locally coupled topologies through a novel form of Lieb–Robinson-like bounds that apply to differential equations.
This allows our quantum algorithms to provide quartic speedups for locally coupled damped oscillator systems in three dimensions, raising the possibility of practical quantum speedups for realistically damped oscillator networks and approximated wave equations.
\end{abstract}


\maketitle
\newpage

\section{Introduction} \label{s:introduction}

The simulation of non-Markovian systems, i.e.~systems with memory, is fundamental to computational sciences: From dissipative wave propagation, such as viscoacoustics and viscoelasticity, in fields ranging from seismology and acoustics to structural engineering and materials science \cite{coleman_foundations_1961, robertsson_viscoelastic_1994, lakes_viscoelastic_2009} to quantum dynamics such as the Nakajima–Zwanzig equation~\cite{breuer2016colloquium}.
The simulation of such effects is usually computationally harder than Markovian problems, i.e.~memory-less systems, while still being necessary for accurately representing the real world dynamics.
Such physical systems of interest are often high-dimensional, heterogeneous, and possess broad spectra of relaxation times.
Classical integrators face significant bottlenecks in this regime and are bound to simulation costs that scale with the number of system \acp{DOF} and the fastest time scales that the system supports \cite{alves_numerical_2021}.
Specifically, adding non-Markovian effects to such systems typically requires extending the system by auxiliary or history states to formulate a Markovian embedding, or directly solve costly convolution integrals.
As such they act as multiplicative cost factors on an already extensive runtime and memory cost, rendering the simulation of complex relaxation spectra in large, high-dimensional systems expensive.

Quantum algorithms offer a promising alternative for simulating such large-scale evolution problems.
It has been shown that \ac{HS} algorithms \cite{berry2015hamiltonian, low2017optimal, low2019hamiltonian} allow for exponentially faster simulation of lossless Hamiltonian dynamics, i.e.~unitary dynamics.
Recently, Ref.~\cite{babbush2023exponential} proved that systems of undamped classical coupled oscillators can be efficiently mapped to Hamiltonian dynamics and solved with up-to exponential runtime advantages.
This generalized previous work on approximating continuum dynamics, such as wave equations, on quantum computers \cite{costa2019quantum, suau_practical_2021} and showed that a large class of classical hyperbolic many-body systems admit efficient quantum simulation.
However, physical systems are usually not lossless; they exhibit dissipation and history-dependent responses that cannot be naturally described by Hermitian Hamiltonians.
Recent frameworks such as \ac{LCHS} allow for the simulation of non-Hermitian operators \cite{an2023quantumalgorithmlinearnonunitary, low_optimal_2025} with non-positive log norm.
Furthermore, previous literature has targeted the simulation of viscously damped dynamics \cite{krovi2024quantum, sato2024quantum, villanyi2025oscillators_dissipation} or non-Markovian quantum systems \cite{li2204nonmarkovian, walters2024nonmarkovian, christensen2025nonmarkovian, ameri2026quantum}.
However, a unified quantum framework for the simulation of coupled classical oscillators with general dissipation, memory, and time-dependence has remained elusive.

In this work, we target this gap and provide a comprehensive quantum algorithm for simulating coupled oscillators with non-Markovian dissipation.
We formulate the problem using a graph-theoretic approach, where mass-mass and mass-ground couplings are mediated by generalized Maxwell connectors, i.e.~parallel damped springs.
This construction allows us to approximate arbitrary linear rheologies (e.g., Kelvin-Voigt, Standard Linear Solid), which model deformation responses of materials, and memory kernels via a Prony series expansion \cite{tschoegl_phenomenological_2012}, which is a series expansion in terms of exponentials.
By embedding these non-Markovian dynamics into an augmented Markovian state space, we derive a first-order evolution equation generated by a system matrix $\mathbf{C} = i\mathbf{H} + \mathbf{L}$, composed of a Hermitian Hamiltonian $\mathbf{H}$ and a positive semi-definite dissipator $\mathbf{L}$.

We construct explicit quantum circuits to block-encode the necessary operators.
By analyzing the spectral properties of these operators, we demonstrate that our encoding achieves improved normalization compared to prior approaches for specific graph topologies.
Combining these encodings with the optimal \ac{LCHS} algorithm, we achieve a gate complexity that is polylogarithmic in the number of degrees of freedom and linear in the simulation time.
We also extend the \BQP-completeness results of Ref.~\cite{babbush2023exponential} for estimating the energy of a subset of coupled classical oscillators to dissipative systems. 
In contrast to Ref.~\cite{krovi2024quantum}, our \BQP-completeness result holds even in the case of localized strong dissipation.
Furthermore, using information locality arguments, we show that quantum computers can provide only a polynomial advantage over classical computers for simulating non-Markovian oscillators on a lattice, even if the system has sources.
This extends existing literature for source-free local dynamics \cite{nachtergaele2007lieb, sakamoto2025quantum}.
Additionally, we derive that time-dependent materials lead to viscous dissipation or growth, which implies that for any time-dependent undamped oscillator system the generator in the energy metric acquires damping/growth terms that need to be modeled.
We pair this with corresponding block-encoding approaches that allow for the simulation of these systems.
Finally, we provide extensions to active systems, probabilistic systems, and show that ground couplings in the graph topology have an inherently lower spectral cost.

\section{Oscillator Systems} \label{s:osc_systems}

Recent work~\cite{babbush2023exponential} has shown that an exponentially large, energy-conserving network of classical oscillators
can be efficiently simulated on \acp{QC} using \ac{HS}.
Here we extend their algorithm to include general non-Markovian dissipation.
Our formulation follows the notation introduced in \cite{babbush2023exponential, koukoutsis2023} and applied in \cite{bosch_quantum_2025}.

\subsection{Oscillator Dynamics}
In this subsection we review the equations of motion for a single harmonic oscillator. 
First we consider the simple undamped case.
Then we include viscous dissipation and show that the system stays Markovian (memory-free).
Finally, we generalize to non-Markovian dissipation, which introduces a memory integral.
This motivates a subsequent Markovian embedding.

\subsubsection{Simple Harmonic Oscillator}

Let us first consider a simple harmonic oscillator in one dimension. 
It can be described by a point mass $m$ attached to a spring with spring constant $k_1$.
Its motion is governed by two fundamental relations.
Newton's second law postulates that the rate of change of momentum of the point mass equals the net force $-f(t)$ acting on it, where we choose the engineering sign convention:
\begin{equation} \label{eq:law_newton_general}
    m\frac{\mathrm{d}^2x(t)}{\mathrm{d} t^2} = -f(t),
\end{equation}
where $\tfrac{\mathrm{d}}{\mathrm{d}t}(m(\mathrm{d} x(t) / \mathrm{d}t)) = m\tfrac{\mathrm{d}^2x(t)}{\mathrm{d}t^2}$ as $m$ is time-independent. 
Here $x(t)$ denotes the displacement of the mass away from its equilibrium position.
Furthermore, Hooke's law 
\begin{equation} \label{eq:hookes_law}
    f(t) = k_1 x(t)
\end{equation}
states that the tensile force $f(t)$ inside the spring scales linearly with the displacement $x(t)$, multiplied by the spring constant $k_1$. 

By substitution, we obtain the equation of motion that captures the time evolution of the harmonic oscillator:
\begin{equation} \label{eq:undamped_motion}
    m\frac{\mathrm{d}^2x(t)}{\mathrm{d} t^2} = -k_1x(t).
\end{equation}
It describes the rate of change of momentum of the mass as a function of its displacement.

\subsubsection{Damped Harmonic Oscillator}

Let us now add viscous dissipation to the system.
This means that energy is dissipated due to external resistance acting on the mass, such as viscous drag from the surrounding medium.  
We define a dashpot to be a circuit element for this coupled oscillator that applies viscous forces on the oscillator.  Specifically,
the viscous drag is schematically modeled as a dashpot with viscosity $\eta_0$.
It acts in parallel to the original spring with spring constant $k_1$.
These two interaction mechanisms form the Kelvin-Voigt model of viscous dissipation shown in \Cref{fig:kelvin_voigt}.

\begin{figure}[t]
\centering
\begin{tikzpicture}[scale=1.0, transform shape]

\node[minimum size=1cm, draw] (mass_m) at (-1, 0) {$m$};

\draw[thick] (mass_m.east) -- (0, 0); 
\draw[thick] (0, 1) -- (0, -1);

\draw (0, 1) to[damper=$\eta_0$] (4, 1);

\draw (0, -1) to[spring=$k_1$] (4, -1);

\draw[thick] (4, 1) -- (4, -1); 
\draw[thick] (4, 0) -- (4.5, 0); 
\draw[thick] (4.5, 1.5) -- (4.5, -1.5); 
\fill[pattern=north east lines] (4.5, -1.5) rectangle (4.8, 1.5); 

\end{tikzpicture}
\caption{Kelvin-Voigt model of a damped harmonic oscillator. 
The point mass ($m$) is coupled to a fixed wall through a purely viscous dashpot ($\eta_0$) in parallel with a spring ($k_1$). \label{fig:kelvin_voigt}}
\end{figure}

By adding viscous drag, Newton's second law, \Cref{eq:law_newton_general}, becomes
\begin{equation} \label{eq:law_newton_viscous}
    m\frac{\mathrm{d}^2x(t)}{\mathrm{d} t^2} = -\Big(\eta_0 \frac{\mathrm{d}x(t)}{\mathrm{d} t} + f_1(t)\Big) ,
\end{equation}
where $\eta_0 \frac{\mathrm{d}x(t)}{\mathrm{d} t}$ captures the instantaneous viscous resistance acting on the mass and $f_1(t)$ is the tensile force of the spring.
After substituting Hooke's law (\Cref{eq:hookes_law}), the equation of motion describing the dynamics of the viscously damped oscillator reads
\begin{equation} \label{eq:eq_motion_visc}
    m\frac{\mathrm{d}^2x(t)}{\mathrm{d} t^2} = -f(t),
\end{equation}
where the total resisting force $f(t)$ is given by
\begin{equation}
    f(t) =  \eta_0 \frac{\mathrm{d} x(t)}{\mathrm{d}t} + k_1x(t).
\end{equation}
The system is Markovian since the resisting force $f(t)$ acting on the point mass only depends on the displacement $x(t)$ and the velocity $\frac{\mathrm{d} x(t)}{\mathrm{d}t}$ at the current time $t$. 
In other words, the system is \emph{memoryless}.

\subsubsection{Non-Markovian Dissipation}

While \Cref{eq:eq_motion_visc} captures purely viscous (instantaneous) dissipation, it cannot resolve all real-world relaxation mechanisms.
Consider again the damped oscillator shown in \Cref{fig:kelvin_voigt}.
It supports a second dissipation mechanism we do not yet cover: the spring itself can dissipate energy through internal friction as it continuously deforms and relaxes.
This is schematically modeled as a dashpot of viscosity $\eta_1$ in series with the existing spring $k_1$; a combination referred to as a \emph{Maxwell body}.
This Maxwell body is characterized by its relaxation rate $\lambda_1=k_1/\eta_1$.
It extends \Cref{fig:kelvin_voigt} to the Standard-Linear-Fluid model of viscoelasticity shown in \Cref{fig:standard_linear_fluid}.
\begin{figure}[t]
\centering
\begin{tikzpicture}[scale=1.0, transform shape]

\node[minimum size=1cm, draw] (mass_m) at (-1, 0) {$m$};

\draw[thick] (mass_m.east) -- (0, 0); 
\draw[thick] (0, 1) -- (0, -1);

\draw (0, 1) to[damper=$\eta_0$] (4, 1);

\draw (0, -1) to[spring=$k_1$] (2, -1) to[damper=$\eta_1$] (4, -1);

\draw[thick] (4, 1) -- (4, -1); 
\draw[thick] (4, 0) -- (4.5, 0); 
\draw[thick] (4.5, 1.5) -- (4.5, -1.5); 
\fill[pattern=north east lines] (4.5, -1.5) rectangle (4.8, 1.5); 

\end{tikzpicture}
\caption{Standard-Linear-Fluid (Jeffreys) model. 
The point mass ($m$) is coupled to a fixed wall through a purely viscous dashpot ($\eta_0$) in parallel with a Maxwell element, which consists of a spring ($k_1$) and a second dashpot ($\eta_1$) in series. \label{fig:standard_linear_fluid}}
\end{figure}

To find its equation of motion, we must evaluate the total force acting on the mass.
In the Maxwell body, the spring and dashpot are in series. 
They therefore share the identical restoring force $f_1(t)$, but their velocities add up:
\begin{equation}
    \frac{\mathrm{d}x(t)}{\mathrm{d}t} = \frac{1}{k_1} \frac{\mathrm{d}f_1(t)}{\mathrm{d}t} + \frac{1}{\eta_1} f_1(t).
\end{equation}
Multiplying by $k_1$ reveals a standard first-order linear differential equation for the force $f_1(t)$. 
Using the relaxation rate $\lambda_1 = k_1 / \eta_1$ and solving it via an integrating factor, this equation yields the explicit Maxwell restoring force
\begin{equation}
    f_1(t) = \int_{-\infty}^{t} k_1 e^{-\lambda_1 (t-\tau)} \frac{\mathrm{d}x(\tau)}{\mathrm{d}\tau} \, \mathrm{d}\tau.
\end{equation}
Substituting this force into Newton's second law alongside the parallel viscous drag gives the equation of motion for the Standard-Linear-Fluid
\begin{equation} \label{eq:eq_motion_non_markovian}
    m\frac{\mathrm{d}^2x(t)}{\mathrm{d} t^2} = -f(t),
\end{equation}
where the total resisting force $f(t)$ acting on the point mass is given by
\begin{equation} \label{eq:memory_simple}
    f(t) = \eta_0 \frac{\mathrm{d}x(t)}{\mathrm{d}t} + \int_{-\infty}^{t} k_1 e^{-\lambda_1 (t-\tau)} \frac{\mathrm{d}x(\tau)}{\mathrm{d}\tau} \, \mathrm{d}\tau.
\end{equation}

Unlike \Cref{eq:eq_motion_visc}, the right-hand side is no longer solely dependent on the current state at time $t$. 
Instead, the force currently acting on the mass depends continuously on its entire past trajectory $\frac{\mathrm{d}x(\tau)}{\mathrm{d}\tau}$. 
Consequently, by introducing internal relaxation to the spring, the dynamics become strictly \emph{non-Markovian}.

Real-world processes often have non-trivial spectra of multiple relaxation rates $\lambda$ that characterize their dissipation behavior.
Thus, the Standard-Linear-Fluid model (\Cref{fig:standard_linear_fluid}) is insufficient to model such processes accurately.
We generalize \Cref{eq:memory_simple} by defining the resisting force $f(t)$ acting on the point mass as the convolution of a memory kernel $G(t)$ with the velocity $v(t)=\frac{\mathrm{d}x(t)}{\mathrm{d}t}$ of that mass. 
The resulting integral is referred to as the \emph{hereditary integral}
\begin{equation} 
    f(t) = \int_{-\infty}^{t} G(t-\tau)v(\tau)\,\mathrm{d}\tau.
\label{eq:law_hereditary_general}
\end{equation}
The support of $G(t)$ dictates the time horizon over which memory effects take place. 

We recover simple viscous dissipation if $G(t-\tau) = \eta_0 \delta(t-\tau)$, where $\delta$ is the Dirac delta distribution.
Similarly, Hooke's law is recovered when choosing $G(t-\tau) = k$ and assuming initial rest.
Interestingly, this shows that a simple harmonic oscillator can be represented as a non-Markovian process that has infinite memory of its past when parametrized solely in terms of its velocity $v(t)$.
This memory only becomes hidden if positions $x(t)$ are used as state variables, which integrate the velocity history and act as a basic \emph{Markovian embedding}.
In the following section, we show how to map general non-Markovian dissipation to Markovian dissipation via a series approximation of the memory kernel $G(t)$.

\subsection{Markovian Embedding}\label{sec:embed}

In this subsection we show how general non-Markovian memory kernels can be approximated by a finite sum of relaxation processes. 
To provide a Markovian embedding for \Cref{eq:law_hereditary_general}, we approximate $G(t)$ using a Prony series \cite{carcione_seismic_1993, robertsson_viscoelastic_1994, blanch_modeling_1995}.
This is referred to as the \emph{generalized Maxwell model}.
We decompose the memory kernel into an instantaneous viscous part (Markovian dissipation) and a memory-dependent relaxation part. 
Introducing the viscosity $\eta_0$ and a discrete spectrum of $C$ Maxwell bodies with stiffnesses $k_c$ and viscosities $\eta_c$, we consider the following approximation:
\begin{equation} \label{eq:prony_series}
    G(t) \approx \underbrace{\eta_0 \delta(t)}_{\text{Instantaneous}} + \sum_{c=1}^C \underbrace{k_c e^{-\lambda_ct}}_{\text{Relaxation}},
\end{equation}
where $\lambda_c=\frac{k_c}{\eta_c}$ is the relaxation rate of the $c^{\text{th}}$ Maxwell body.
Physically, this groups all relaxation processes with relaxation rates faster than we resolve into an instantaneous term (dashpot), while we resolve the rest of the relaxation spectrum through a series of relaxation rates (Maxwell bodies).
Because the Maxwell bodies are mass-free, they possess no inertia. 
This is captured by the relaxation rates $\lambda_c$ being strictly real, ensuring purely monotonic decay without oscillation.

Inserting \Cref{eq:prony_series} into the hereditary integral in \Cref{eq:law_hereditary_general} decomposes $f(t)$ into a  viscous force term, $\eta_0 v(t)$, and a sum of partial memory forces $f_c(t)$.
We allow for body-specific velocity source terms $b_c(t)$ directly within the convolution such that
\begin{equation} 
\label{eq:general_maxwell}
\begin{aligned}
    f(t) &= \eta_0 v(t) \\
         &\quad + \sum_{c=1}^C \underbrace{ \int_{-\infty}^{t} k_c e^{-\lambda_c(t-\tau)} (v(\tau) + b_c(\tau)) \mathrm{d}\tau }_{=: f_c(t)}.
\end{aligned}
\end{equation}
For the memory terms, we differentiate $f_c(t)$ with respect to time which yields the following time evolution for the partial forces:
\begin{equation} \label{eq:maxwell_modes}
    \frac{\mathrm{d} f_c(t)}{\mathrm{d} t} = k_c (v(t) + b_c(t)) - \lambda_c f_c(t).
\end{equation}
We substitute into Newton's second law
\begin{equation}
    m\frac{\mathrm{d}v(t)}{\mathrm{d} t} = b_0(t) - f(t),
\end{equation}
where we have introduced the body force source $b_0$.
This gives the equation of motion
\begin{equation} \label{eq:single_osc_motion}
    m\frac{\mathrm{d}v(t)}{\mathrm{d} t} = b_0(t) -\eta_0v(t) - \sum_{c=1}^Cf_c(t).
\end{equation}
Such a transformation constitutes a \emph{Markovian embedding}. 
The history dependence is now explicitly carried by the instantaneous values of the partial forces $f_c$.
In essence, we have transformed the problem of solving the non-Markovian \ac{ODE}, to the problem of solving a system of $1 + C$ coupled Markovian \acp{ODE}.

\subsection{Coupled System}

\begin{table}[]
    \centering
    \begin{tabular}{|c|c|}
        \hline
         Variable Name & Interpretation  \\
         \hline
         $N$ & Number of degrees of freedom in Sim.\\
         & $N= M + C(N_e + N_g)$\\
         $N_e$ & Number of inter-mass connections.\\
         $N_g$ & Number of connections to ground.\\
         $M$ & Number of masses.
         \\
         $C$ & Number of Maxwell bodies per connection. \\
         \hline
    \end{tabular}
    \caption{Variables used to specify interaction graph for coupled oscillators.}
    \label{tab:symbols}
\end{table}

In this subsection, we generalize the embedding above to an oscillator network of $M$ masses.
These are coupled by $N_e$ inter-mass connectors, which characterize the coupling of masses to masses. 
In the simplest case, each of the $N_e$ inter-mass connectors would just be a single spring (\Cref{eq:hookes_law}). 
More generally though, each of the $N_e$ inter-mass connectors can consist of multiple Maxwell bodies, each with different stiffness and relaxation rate as we explain in more detail below, as well as a direct connection that represents viscous dissipation.
In addition to the inter-mass connectors, an oscillator network can also have $N_g$ ground connectors, which characterize the coupling of masses to ground and possess the same interaction mechanisms as the inter-mass connectors.
Such a network is shown in \Cref{fig:network_topology}.

\begin{figure}[tbp]
\centering
\begin{tikzpicture}[scale=0.78, transform shape]

\ctikzset{
    bipoles/length=1.2cm,
    bipoles/generic/width=0.60,
    bipoles/generic/height=0.25
}

\draw[pattern=north east lines, draw=none] (-0.5, -0.5) rectangle (10.5, 8.5);
\fill[white] (0, 0) rectangle (10, 8);
\draw[thick] (0, 0) rectangle (10, 8);

\node[minimum size=1.2cm, draw, thick, fill=white] (m1) at (2.5, 6.0) {\large $m_1$};
\node[minimum size=1.2cm, draw, thick, fill=white] (m2) at (5.5, 6.5) {\large $m_2$};
\node[minimum size=1.2cm, draw, thick, fill=white] (m3) at (7.5, 3.5) {\large $m_3$};
\node[minimum size=1.2cm, draw, thick, fill=white] (m4) at (4.0, 2.5) {\large $m_4$};

\draw (m1) to[generic, l_={\large $G_{g}$}] (0, 6.0);
\draw (m3) to[generic, l_={\large $G_{g}$}] (10.0, 3.5);
\draw (m4) to[generic, l={\large $G_{g}$}] (4.0, 0);

\draw (m1) to[generic, l={\large $G_{e}$}] (m2);
\draw (m2) to[generic, l={\large $G_{e}$}] (m3);
\draw (m3) to[generic, l={\large $G_{e}$}] (m4);
\draw (m4) to[generic, l={\large $G_{e}$}] (m1);

\draw (m4) to[generic, l_={\large $G_{e}$}, pos=0.4] (m2);

\end{tikzpicture}
\caption{Coupled oscillator network within a solid ground frame which is assumed to have infinite mass. 
The interactions between neighboring masses and the ground are abstracted into generalized coupling blocks ($G_{e}$ and $G_{g}$, respectively), representing the composite Maxwell bodies and dashpots. \label{fig:network_topology}}
\end{figure}

An oscillator network can be described by a directed graph with
incidence matrix $\mathbf{D}_0 \in \{-1,0,1\}^{N_e\times M}$, which contains a single $+1$ and $-1$ per row corresponding to the starting and end masses of inter-mass connectors,
and the ground selector $\mathbf{S}_0 \in \{0,1\}^{N_g\times M}$, which contains a single $+1$ per row indicating which masses are connected to the ground.
We summarize the overall topology of the oscillator network in the single-body topology matrix:
\begin{equation}
    \mathbf{T}_0 := 
    \begin{bmatrix} 
        \mathbf{D}_0 \\ \mathbf{S}_0 
    \end{bmatrix} \in \{-1,0,1\}^{(N_e+N_g) \times M}.
\end{equation}
Note that $\mathbf{T}_0^\dagger \mathbf{T}_0$ is the graph Laplacian of the oscillator network. The effect of the ground selector is to add $+1$ to the diagonal entries of the graph Laplacian for vertices connected to the ground.
As mentioned before, inter-mass connection or ground connections can be more complex than just consisting of a single spring. 
In particular, we consider the case where each connector consists of a  viscous dashpot parallel to $C$ Maxwell bodies as shown in \Cref{fig:general_visco_2mass}. 
The  dashpots induce Markovian dissipation, i.e.~they represent the instantaneous viscous part of the memory kernel (\Cref{eq:prony_series}). 
The Maxwell bodies on the other hand model non-Markovian dissipation with varying relaxation rates as discussed in the previous section.
We then define the multi-body topology matrix
\begin{equation}
    \mathbf{T} = [\mathbf{T}_0;\dots;\mathbf{T}_0] \in \{-1,0,1\}^{C(N_e+N_g)\times M},
\end{equation}
which captures the identical topology of all Maxwell bodies in the oscillator network.

\begin{figure}[t]
\centering
\begin{tikzpicture}[scale=1.0, transform shape]

\draw[pattern=north east lines, thick] (2, 4) -- (3, 4) -- (3, 4.3) -- (2, 4.3) -- (2, 4);
\draw[-,thick] (2.5,3.5) -- (2.5,4) node[below right] {$f_{g}$};
\draw[thick] (0, 3.5) -- (5, 3.5);
\draw[thick] (0, 1.5) -- (5, 1.5);
\draw (0, 1.5) to[damper=$\eta_{g,0}$] (0, 3.5);
\foreach \x/\k/\e in {1.25/k_{g,1}/\eta_{g,1}, 2.5/k_{g,2}/\eta_{g,2}, 5/k_{g,C}/\eta_{g,C}}{
\draw (\x, 1.5) to[spring=$\k$] (\x, 2.5) to[damper=$\e$] (\x, 3.5);
}
\node at (3.75, 2.5) {$\dots$};
\draw (2.5, 0.75) node[minimum size=1cm, draw] (m) {m};
\draw[-, thick] (m.north) -- (2.5, 1.5);
\draw[-, thick] (m.south) -- (2.5, 0);
\draw[thick] (0, 0) -- (5, 0);
\draw[thick] (0, -2) -- (5, -2);
\draw (0, 0) to[damper=$\eta_{e,0}$, l_=$\eta_{e,0}$] (0, -2);
\foreach \x/\k/\e in {1.25/k_{e,1}/\eta_{e,1}, 2.5/k_{e,2}/\eta_{e,2}, 5/k_{e,C}/\eta_{e,C}}{
\draw (\x, 0) to[spring=$\k$, l_=$\k$] (\x, -1) to[damper=$\e$, l_=$\e$] (\x, -2);
}
\node at (3.75, -1) {$\dots$};
\draw[->,thick] (2.5,-2) -- (2.5,-2.5) node[above right] {$f_{e}$};
\end{tikzpicture}
\caption{A single coupling block of \Cref{fig:network_topology}.
Mass $m$ coupled to ground (top, $G_{g}$) and neighbor (bottom, $G_{e}$) via  dashpots $\eta_0$ in parallel with $C$ Maxwell bodies ($k_c, \eta_c$). \label{fig:general_visco_2mass}}
\end{figure}

Note that the multi-body topology matrix $\mathbf{T}$ only gives information about the presence or absence of connectors in the oscillator network. 
It does not tell us anything about the material parameters, i.e.~the values of the masses, spring constants or viscosities. 
These parameters are stored in an additional set of matrices, which are diagonal as we have decoupled the individual local \acp{DOF}.
In particular, the (nodal) masses $m_1, m_2, \dots, m_M$ are collected into the diagonal matrix $\mathbf{M}\succ0 \in \mathbb{R}^{M \times M}$. 
The viscosities of the  dashpots are collected into the diagonal matrix $\boldsymbol{\eta}_0 = \diag(\boldsymbol{\eta}_{e,0}, \boldsymbol{\eta}_{g,0}) \in \mathbb{R}^{(N_e + N_g) \times (N_e + N_g)}$.
For the Maxwell bodies $c \in \{1,\dots,C\}$, we define the diagonal stiffness and viscosity matrices
\begin{align}
    \mathbf{K}_c &:= \diag(\mathbf{K}_{e,c} , \mathbf{K}_{g,c}) \in \mathbb{R}^{(N_e + N_g) \times (N_e + N_g)}, \\
    \boldsymbol{\eta}_c &:= \diag(\boldsymbol{\eta}_{e,c}, \boldsymbol{\eta}_{g,c}) \in \mathbb{R}^{(N_e + N_g) \times (N_e + N_g)}.
\end{align}
The stiffnesses and viscosities of all parallel Maxwell bodies of the mass-mass and mass-ground connections are independent parameters.

To describe the overall state of the coupled network through time we need to define a joint representation over all oscillators.
For this we collect the memory forces for the $c^{\text{th}}$ Maxwell body into stacked vectors
\begin{equation}
    \mathbf{f}_c(t) = \begin{bmatrix} \mathbf{f}_{e,c}(t) \\ \mathbf{f}_{g,c}(t) \end{bmatrix},
\end{equation}
again separating mass-mass and mass-ground couplings.
This allows us to define the total system state $\boldsymbol{\phi}(t)$ as a stack of the nodal velocities $\mathbf{v} \in \mathbb{R}^{M}$ with the memory forces $\mathbf{f}_c(t)$ for each body $c$
\begin{equation}
    \boldsymbol{\phi}(t) = [\mathbf{v}(t); \mathbf{f}_1(t); \dots; \mathbf{f}_C(t)] \in \mathbb{R}^{N},
\end{equation}
where the overall system size is $N=M+C(N_e+N_g)$.

Similarly, we stack the velocity source terms for each Maxwell body as
\begin{equation}
    \mathbf{b}_c(t) = \begin{bmatrix} \mathbf{b}_{e,c}(t) \\ \mathbf{b}_{g,c}(t) \end{bmatrix}.
\end{equation}
Together with the body force sources $\mathbf{b}_{0}(t)$ we define the overall source vector
\begin{equation}
    \mathbf{b}(t) = [\mathbf{b}_0(t); \mathbf{b}_1(t); \dots; \mathbf{b}_C(t)] \in \mathbb{R}^{N}.
\end{equation}
It defines the system in-flux or out-flux for every \ac{DOF}, i.e. each spring and each mass can be driven independently.

We rewrite the equation of motion of the single oscillator with non-Markovian dissipation (\Cref{eq:single_osc_motion}) using the network topology $\mathbf{T}_0$ as
\begin{equation}
    \frac{\mathrm{d} \mathbf{v}(t)}{\mathrm{d} t} = \mathbf{M}^{-1}\Big(\mathbf{b}_0(t) - \mathbf{T}_0^{\dagger}\big(\boldsymbol{\eta}_{0}\mathbf{T}_0\mathbf{v}(t) + \sum_{c=1}^{C}{\mathbf{f}}_{c}(t)\big)\Big),
\label{velocity_de}
\end{equation}
where $\dagger$ denotes the conjugate transpose.
The generalized Maxwell response (\Cref{eq:maxwell_modes}) for each body $c$ is expanded as
\begin{equation} \label{eq:maxwell_element_short}
    \frac{\mathrm{d} {\mathbf{f}}_{c}(t)}{\mathrm{d} t} = {\mathbf{K}}_{c}  (\mathbf{T}_0 \mathbf{v}(t) + {\mathbf{b}}_{c}(t)) - \boldsymbol{\Lambda}_c{\mathbf{f}}_{c}(t),
\end{equation}
where $\boldsymbol{\Lambda}_c={\mathbf{K}}_{c}{\boldsymbol{\eta}}_{c}^{-1}\succeq0$ is the matrix of relaxation rates.
This yields the linear system
\begin{equation} \label{eq:general_ode_short}
    {\frac{\mathrm{d} \boldsymbol{\phi}(t)}{\mathrm{d} t} = \mathbf{B}(\mathbf{A} - \boldsymbol{\eta}) \boldsymbol{\phi}(t) + \mathbf{B}\mathbf{b}(t).}
\end{equation}
The material matrix is diagonal
\begin{equation}
    \mathbf B = \diag(\mathbf M^{-1}, \mathbf{K}) \succ 0,
\end{equation}
with $\mathbf{K}=\diag(\mathbf K_1, \dots, \mathbf K_C)$.
The skew-Hermitian coupling matrix $\mathbf{A}$ is given by
\begin{equation}
    \mathbf{A} = 
    \begin{bmatrix}
        \mathbf{0} & -\mathbf{T}^\dagger \\
        \mathbf{T} & \mathbf{0}
    \end{bmatrix} = -\mathbf{A}^\dagger \in \mathbb{R}^{N \times N}  .
\end{equation}

Finally, the positive semi-definite dissipation matrix is block-diagonal
\begin{equation}
    \boldsymbol{\eta} = 
    \diag(\boldsymbol{\eta}_{\mathbf{M}}, \boldsymbol{\eta}_{\mathbf{K}}^{-1}) \succeq 0,
\end{equation}
where 
\begin{equation}
    \boldsymbol{\eta}_{\mathbf{M}}= \mathbf{T}_0^\dagger \boldsymbol{\eta}_0\mathbf{T}_0\label{eq:etaM}
\end{equation} is a grounded and weighted graph Laplacian and $\boldsymbol{\eta}_{\mathbf{K}}=\diag(\boldsymbol{\eta}_1, \dots, \boldsymbol{\eta}_C)$ is diagonal.  Similarly we define 
\begin{equation}
    \boldsymbol{\Lambda} = \mathbf{K}^{1/2}\boldsymbol{\eta}^{-1}_{\mathbf{K}}\mathbf{K}^{1/2}.
\end{equation}

\subsection{Hamiltonian Form} \label{s:port_ham}

While \eqref{eq:general_ode_short} correctly captures the dynamical evolution of the oscillator system, it does not preserve the state norm.
Hence, even for lossless system evolution we have $\|\boldsymbol{\phi}(0)\| \neq \|\boldsymbol{\phi}(t)\|$ in general. 
To address this we utilize that for lossless oscillator systems we have perfect conservation of energy.
As such, by mapping the inner product of the Hilbert space to the energy, we correctly capture the unitary evolution of lossless dynamics.  

We simplify the equation of motion through the similarity transform
\begin{equation} \label{eq:sim_trans}
    \boldsymbol\psi(t)=\mathbf B^{-1/2} \boldsymbol\phi(t), \quad \boldsymbol{\chi}(t) = \mathbf B^{1/2}  \mathbf{b}(t),
\end{equation}
which maps into a basis such that the total energy of the oscillator system obeys
\begin{equation}
    E(t) = \frac{1}{2}\|\boldsymbol{\psi}(t)\|^2.
\end{equation}
Substituting the transformed states into \Cref{eq:general_ode_short} yields the Schrödinger-form evolution equation
\begin{equation}\label{eq:damped_schrodinger_form}
    {\frac{\mathrm{d} \boldsymbol\psi(t)}{\mathrm{d} t}= -\mathbf{C} \boldsymbol\psi(t) + \boldsymbol{\chi}(t),}
\end{equation}
where $\mathbf{C}=i\mathbf H + \mathbf{L} \in \mathbb{R}^{N \times N}$ is the system matrix.
The dynamics are governed by the Hermitian Hamiltonian $\mathbf H$ and the positive semi-definite Dissipator $\mathbf L$
\begin{equation} \label{eq:hamiltonian_dissipator}
    {\mathbf H=i \mathbf B^{1/2} \mathbf A \mathbf B^{1/2} = \mathbf H^\dagger, \quad
    \mathbf L=\mathbf B^{1/2} \boldsymbol{\eta} \mathbf B^{1/2} \succeq 0,}
\end{equation}
or equivalently in expanded form
\begin{align}
    \mathbf{H} &= 
    \begin{bmatrix}
        \mathbf{0} & -i\mathbf{M}^{-1/2}\mathbf{T}^\dagger\mathbf{K}^{1/2} \\
        i\mathbf{K}^{1/2}\mathbf{T}\mathbf{M}^{-1/2} & \mathbf{0}
    \end{bmatrix}, 
    \label{H_expanded}
    \\
    \mathbf{L} &= \begin{bmatrix}
        \mathbf{M}^{-1/2}\mathbf{T}_0^\dagger \boldsymbol{\eta}_0\mathbf{T}_0\mathbf{M}^{-1/2} & \mathbf{0} \\
        \mathbf{0} & \boldsymbol{\Lambda} 
    \end{bmatrix}.
\end{align}

Our quantum algorithm will then simulate the resultant non-unitary dynamics using \ac{LCHS} methods.

\section{Quantum Simulation} \label{s:applications}
In this section we present quantum algorithms for solving \Cref{eq:damped_schrodinger_form} introduced in the last section.
This includes \ac{BE} algorithms for the Hamiltonian $\mathbf{H}$, the Dissipator $\mathbf{L}$, and the evolution operator $\mathbf{G}(T) = e^{-\mathbf{C}T}$, where $T$ is the final simulation time.
First we establish the oracle access model.
Finally we provide algorithms for preparing the normalized evolution state $\ket{\boldsymbol{\psi}(T)}$ and for measuring subspace energies.

We decompose the directed incidence matrix $\mathbf{D}_0 \in \{-1,0,1\}^{N_e\times M}$ as 
\begin{equation}
    \mathbf{D}_0  = \mathbf{S}^+-\mathbf{S}^-,
\end{equation}
where $\mathbf{S}^+\in \{0,1\}^{N_e\times M}$ and $\mathbf{S}^-\in \{0,1\}^{N_e\times M}$ describe the spring heads and tails selection matrices, respectively.
As such, they contain exactly a single $1$ per row and otherwise only $0$.
Note that the orientation of the heads and tails is arbitrary due to gauge freedom, as the underlying graph Laplacian is undirected and loses all notion of orientation.
However, the block-encoding complexities are sensitive to this chosen gauge, allowing us to orient the edges to optimize simulation costs.

We combine these inter-mass connections with the ground selector matrix $\mathbf{S}_0 \in \{0,1\}^{N_g\times M}$ to define the augmented topological selection matrices
\begin{equation}
    \mathbf{S}_A = \begin{bmatrix} \mathbf{S}^+ \\ \mathbf{S}_0 \end{bmatrix}, \quad \mathbf{S}_B = \begin{bmatrix} \mathbf{S}^- \\ \mathbf{S}_0 \end{bmatrix}.
\end{equation}
Because $\mathbf{S}^+$, $\mathbf{S}^-$, and $\mathbf{S}_0$ contain exactly a single $1$ per row, $\mathbf{S}_A$ and $\mathbf{S}_B$ inherit this property.

We define $d_{\operatorname{in}}$ as the maximum column sparsity of $\mathbf{S}_A$, and $d_{\operatorname{out}}$ as the maximum column sparsity of $\mathbf{S}_B$. 
Ground connections constructively contribute to both $d_{\operatorname{in}}$ and $d_{\operatorname{out}}$, which physically exploits the lack of anti-cyclic oscillation supported by grounding edges (see \Cref{s:ground_edges}).
We further assume that $M\leq N_e + N_g$ in our main text notation, which excludes trees and forests.
In \Cref{apx:block_encodings}, we discuss the simple generalization of our results to these settings.

Throughout this work, we assume that the source term $\boldsymbol{\chi}(t)$ is sufficiently smooth with bounded time derivatives. 
This guarantees that the continuous time-integral in the source-driven solution can be approximated with a numerical quadrature discretization error that is absorbed by the overall algorithmic error budget.
For details about source implementations consider \cite{low_hamiltonian_2019}.

\subsection{Access Model}

Here we define the quantum access model we will subsequently use.
We assume that all of these oracles are efficiently controllable:
\begin{definition}[Index Oracles] \label{def:time_independent_access}
Let $\mathcal{M} \in \{\mathbf{S}_A, \mathbf{S}_B\}$ denote the augmented topological selection matrices of the system. 
Let $O_r^{(\mathcal{M})}$ and $O_c^{(\mathcal{M})}$ be index oracles acting as in-place permutation unitaries on two registers of $n_{\mathbf{S}} = \lceil \log_2 (N_e + N_g) \rceil$ qubits, such that
\begin{equation} \label{eq:sparse_oracles_main}
\begin{aligned}
    O_r^{(\mathcal{M})} &: \ket{i} \ket{k} \rightarrow \ket{i} \ket{r^{(\mathcal{M})}_{ik}}, \\
    O_c^{(\mathcal{M})} &: \ket{l} \ket{j} \rightarrow \ket{c^{(\mathcal{M})}_{lj}} \ket{j},
\end{aligned}
\end{equation}
where $r^{(\mathcal{M})}_{ik}$ is the column index of the $k$-th nonzero entry of row $i$, and $c^{(\mathcal{M})}_{lj}$ is the row index of the $l$-th nonzero entry of column $j$ in matrix $\mathcal{M}$. For $\mathcal{M}= \mathbf{S}_A$ then the maximum column sparsity is defined to be at most $d_{\rm in}$ and similarly $d_{\rm out}$ upper bounds the column sparsity of $\mathbf{S}_B$.
\end{definition}

Because the row sparsity of both matrices is $1$, the index $k$ is strictly $0$ and the row oracle simplifies to $O_r^{(\mathcal{M})} \ket{i}\ket{0} \rightarrow \ket{i}\ket{r^{(\mathcal{M})}_{i0}}$.
We refer to the literature for the implementation of such oracles (e.g. \cite{takahashi_quantum_2009, gidney_halving_2018, gaur_logarithmic_2023, gaur_novel_2024}), for example, in the context of lattice topologies. 

\begin{definition} [Phase Oracles]
Let $\mathbf{R} \in \{\mathbf{M}^{-1/2}, \mathbf{K}^{1/2}, \boldsymbol{\eta}_0, \boldsymbol{\Lambda}\}$ be a diagonal matrix of dimension $N_{\mathbf{R}} \times N_{\mathbf{R}}$. We define the phase vector $\boldsymbol{\theta}_{\mathbf{R}} \in \mathbb{R}^{N_{\mathbf{R}}}$, where $(\theta_{\mathbf{R}})_k = \arccos\left(\mathbf{R}_{kk} / \alpha_{\mathbf{R}}\right)$, and $\alpha_{\mathbf{R}} \geq \|\mathbf{R}\|$. 
A $\varepsilon'$-accurate \emph{phase oracle} for $\mathbf{R}$ is a unitary $O_{\mathbf{R}}$ acting on $n_{\mathbf{R}}=\lceil\log_2 N_{\mathbf{R}}\rceil$ qubits such that
\begin{equation}
\bigg\| O_{\mathbf{R}} - \sum_{k=0}^{N_{\mathbf{R}}-1} e^{-i (\theta_{\mathbf{R}})_k} \ketbra{k}{k} \bigg\| \le \varepsilon',
\end{equation}
where $\|\cdot\|$ denotes the operator norm.
\end{definition}

There are efficient implementations of phase oracles using function synthesis over orthogonal bases (e.g. \cite{rosenkranz_quantum_2024, zylberman_efficient_2025, yano2026quantum}).  
The number of gates required to reduce the error in these synthesis approaches scales as $\mathcal{O}(\log(1/\varepsilon'))$, which means that errors can be made small at low cost.

\begin{definition} [State Preparation Oracles]
Let $\boldsymbol{v} \in \{\boldsymbol{\psi}_0, \boldsymbol{\chi}_{\tau}\}$ be a vector in $\mathbb{C}^{N_v}$ with dimension $N_v$ and norm $\alpha_{\boldsymbol{v}} = \|\boldsymbol{v}\|_2$, where $\boldsymbol{\chi}_{\tau} \in \mathbb{R}^{N_t}$ is defined by $(\boldsymbol{\chi}_{\tau})_j = \sqrt{\|\boldsymbol{\chi}(\tau_j)\| \tfrac{T}{N_t}}$.
An $(\alpha_{\boldsymbol{v}}, a, \varepsilon')$ \emph{state preparation oracle} for $\boldsymbol{v}$ is a unitary $O_{\boldsymbol{v}}$ acting on $(a)$ ancilla qubits and an $n_v=\lceil\log_2 N_v\rceil$ qubit system register, such that
\begin{equation}
    \bigl\| O_{\boldsymbol{v}}(|0\rangle_a \otimes |0\rangle_{n_v}) - (|0\rangle_a \otimes \ket{\boldsymbol{v}}) \bigr\|_2 \le \varepsilon',
\end{equation}
where $\ket{\boldsymbol{v}} = \boldsymbol{v}/\alpha_{\boldsymbol{v}}$ is the normalized quantum state. 
Here $N_v = N$ for $\boldsymbol{v} = \boldsymbol{\psi}_0$, and $N_v = N_t$ for $\boldsymbol{v} = \boldsymbol{\chi}_{\tau}$.

For the spatial source states $\boldsymbol{\chi}(\tau_j)$, we define a $\varepsilon'$-accurate \emph{controlled state preparation oracle} $O_{\boldsymbol{\chi}_x}$ acting on an $n_t=\lceil\log_2 N_t\rceil$ qubit clock register, $a$ ancilla qubits, and an $n=\lceil\log_2 N\rceil$ qubit system register. For all $j \in \{0, \dots, N_t-1\}$, it satisfies
\begin{equation}
    \bigl\| O_{\boldsymbol{\chi}_x}(\ket{j} \otimes |0\rangle_a \otimes |0\rangle_{n}) - (\ket{j} \otimes |0\rangle_a \otimes \ket{\bar{\boldsymbol{\chi}}(\tau_j)}) \bigr\|_2 \le \varepsilon',
\end{equation}
where $\ket{\bar{\boldsymbol{\chi}}(\tau_j)} = \boldsymbol{\chi}(\tau_j)/\|\boldsymbol{\chi}(\tau_j)\|$ and $\ket{j}$ encodes the discrete time step.
\end{definition}

For efficient state preparation algorithms consider e.g. \cite{gleinig2021efficient, de2022double,  zhang2022quantum, sun_asymptotically_2023, zylberman_efficient_2024, rosenkranz_quantum_2024}.

\subsection{Algorithms}

We encode the Hamiltonian as specified in \Cref{eq:hamiltonian_dissipator}:
\begin{lemma}[Hamiltonian Encoding, Proof in \Cref{s:be_hamiltonian}] \label{lem:hamiltonian_encoding}
    For any target error $\varepsilon > 0$, the Hamiltonian $\mathbf{H}$ admits an $(\alpha_{\mathbf{H}}, a_{\mathbf{H}}, \varepsilon)$-\ac{BE} $\mathcal{U}_{\mathbf{H}}$ with normalization factor
    \begin{equation}
        {\alpha_{\mathbf{H}} =(\sqrt{d_{\operatorname{in}}}+\sqrt{d_{\operatorname{out}}})\sqrt{\frac{k_{\max}}{m_{\min}}C}.}
    \end{equation}
    The encoding requires $a_{\mathbf{H}} = \mathcal{O}(\log N)$ ancilla qubits, $\mathcal{O}(\log N)$ additional two-qubit gates, and $\mathcal{O}(1)$ queries to the (controlled) index oracles $O_r^{(\mathcal{M})},O_c^{(\mathcal{M})}$ for $\mathcal{M} \in \{\mathbf{S}_A, \mathbf{S}_B\}$ and (controlled) $\mathcal{O}(\varepsilon/\alpha_{\mathbf{H}})$-accurate phase oracles $O_{\mathbf{M}^{-1/2}}, O_{\mathbf{K}^{1/2}}$.
\end{lemma}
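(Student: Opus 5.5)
Because $\mathbf{H}$ is Hermitian with off-diagonal block structure, it suffices to block-encode the single off-diagonal block
\begin{equation*}
\mathbf{W} := \mathbf{K}^{1/2}\mathbf{T}\mathbf{M}^{-1/2}
\end{equation*}
and then assemble $\mathbf{H}$. Writing $\mathbf{H} = \ketbra{1}{0}\otimes i\mathbf{W} + \ketbra{0}{1}\otimes(-i\mathbf{W}^\dagger)$ on a one-qubit block register (with padding to a common dimension), a block encoding $\mathcal{U}_{\mathbf{W}}$ with normalization $\alpha_{\mathbf{W}}$ yields one for $\mathbf{H}$ with the same normalization. The construction uses a controlled-$\mathcal{U}_{\mathbf{W}}$ and controlled-$\mathcal{U}_{\mathbf{W}}^\dagger$ conditioned on the block qubit, an $X$ on that qubit, and an $S$-type phase to produce the $\pm i$. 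This follows the standard Hermitian dilation of a rectangular block and costs $\mathcal{O}(1)$ extra gates. The task therefore reduces to showing $\alpha_{\mathbf{W}} = (\sqrt{d_{\rm in}}+\sqrt{d_{\rm out}})\sqrt{k_{\max}C/m_{\min}}$.

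Next I decompose $\mathbf{T}$ using $\mathbf{D}_0=\mathbf{S}^+-\mathbf{S}^-$. The ground rows need care: in $\mathbf{S}_A-\mathbf{S}_B$ they would cancel. I will therefore write $\mathbf{T}_0 = \mathbf{S}_A - \mathbf{P}_e\mathbf{S}_B$, where $\mathbf{P}_e=\diag(\mathbf{1}_{N_e},\mathbf{0}_{N_g})$ is a projector, or equivalently a sign/selection on the ground rows. This gives
\begin{equation*}
\mathbf{W} = \mathbf{K}^{1/2}(\mathbf{1}_C\otimes\mathbf{S}_A)\mathbf{M}^{-1/2} - \mathbf{K}^{1/2}(\mathbf{1}_C\otimes\mathbf{P}_e\mathbf{S}_B)\mathbf{M}^{-1/2}.
\end{equation*}
I may need a slightly different split with ground rows weighted $1/2$ in each term so that both terms carry ground edges, which is consistent with the remark that ground edges count in both $d_{\rm in}$ and $d_{\rm out}$. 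Each term is a product of a diagonal matrix, a stacked selection matrix, and a diagonal matrix.

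Each term is then block-encoded as a product of three encodings:
\begin{itemize}
\item $\mathbf{M}^{-1/2}$ is encoded from the phase oracle $O_{\mathbf{M}^{-1/2}}$ with normalization $m_{\min}^{-1/2}$, by the standard controlled-$O$ plus $O^\dagger$ with Hadamards trick to realize $\cos\theta$.
\item $\mathbf{K}^{1/2}$ is encoded analogously with normalization $\sqrt{k_{\max}}$.
\item The column vector $\mathbf{1}_C$ (the $C$-fold stacking) is encoded by a uniform superposition over a $\lceil\log_2 C\rceil$-qubit register, giving normalization $\sqrt{C}$.
\item $\mathbf{S}_A$ is encoded via the sparse-access construction. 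Its rows have exactly one nonzero, and its columns have at most $d_{\rm in}$ nonzeros. Using $O_r$ on the left state, together with a uniform superposition over $l\in[d_{\rm in}]$ followed by $O_c$ on the right state, gives a $(\sqrt{1\cdot d_{\rm in}},\cdot,0)$-encoding, since the normalization for sparse 0/1 matrices is $\sqrt{s_r s_c}$. $\mathbf{S}_B$ is handled the same way with $d_{\rm out}$.
\end{itemize}
Multiplying the normalizations gives $\sqrt{d_{\rm in}}\sqrt{Ck_{\max}/m_{\min}}$ for the first term and $\sqrt{d_{\rm out}}\sqrt{Ck_{\max}/m_{\min}}$ for the second. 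I then combine the two terms by an LCU with non-uniform coefficients $\propto\sqrt{d_{\rm in}}$ and $\propto\sqrt{d_{\rm out}}$, prepared on one ancilla qubit, with the minus sign absorbed by a $Z$. The normalizations add, giving exactly the claimed $\alpha_{\mathbf{H}}$.

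Error and resources follow from this. Each phase oracle enters $\mathcal{O}(1)$ times, and errors in products of block encodings accumulate additively after scaling by the normalizations. Choosing oracle accuracy $\mathcal{O}(\varepsilon/\alpha_{\mathbf{H}})$ therefore yields total error $\varepsilon$. The ancillas are the index registers and the $C$ and $d$ superposition registers, $\mathcal{O}(\log N)$ in total. The extra gates are the uniform-state preparations and register comparisons/swaps, also $\mathcal{O}(\log N)$.

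The main obstacle I expect is the index bookkeeping. I must check that the sparse encoding of $\mathbf{S}_{A/B}$ composes correctly with the stacking over $c$, so that the Maxwell-body index register feeds both $\mathbf{K}^{1/2}$ and the uniform superposition. I must also handle the ground rows so that the sum reproduces $\mathbf{T}_0$ rather than cancelling them, and pad dimensions when $N_{\mathbf{R}}$ is not a power of two so that out-of-range indices are flagged. The assumption $M\le N_e+N_g$ keeps all registers within $n_{\mathbf{S}}$ qubits.
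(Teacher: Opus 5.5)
Your proposal is correct and follows essentially the same route as the paper. Both arguments use sparse-access encodings of $\mathbf{S}_A,\mathbf{S}_B$ with normalizations $\sqrt{d_{\operatorname{in}}},\sqrt{d_{\operatorname{out}}}$, an LCU of the two terms, the $\sqrt{C}$ uniform-superposition stacking for $\mathbf{1}_C$, the phase-oracle encodings of $\mathbf{K}^{1/2}$ and $\mathbf{M}^{-1/2}$, and a $\mathbf{Y}$-type Hermitian dilation of $\mathbf{K}^{1/2}\mathbf{T}\mathbf{M}^{-1/2}$. The differences are cosmetic. For the ground rows, the paper multiplies $\mathbf{S}_A$ and $\mathbf{S}_B$ by the unitary diagonals $\diag(\mathbf{I}_{N_e},e^{i\pi/3}\mathbf{I}_{N_g})$ and $\diag(-\mathbf{I}_{N_e},e^{-i\pi/3}\mathbf{I}_{N_g})$, so the ground rows sum to $2\cos(\pi/3)=1$, instead of using your projector $\mathbf{P}_e$ or $1/2$ weighting. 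The paper also forms the LCU at the level of $\mathbf{T}_0$ before stacking and weighting. All of these prefactors have norm one, so the normalization is unchanged.
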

This normalization is slightly tighter than the encoding provided in \cite{babbush2023exponential} by a factor of up to $\sqrt{2}$ for certain topologies.
Also notice that multiple parallel springs accumulate stiffness. 
For example, if we assume $k_c=k$ for all $C$ bodies, we have
\begin{equation}
    k_{\max} = \frac{K_{\operatorname{total}}}{C} \implies \alpha_{\mathbf{H}} \propto \sqrt{\frac{K_{\operatorname{total}}}{C} \cdot C} = \sqrt{K_{\operatorname{total}}}.
\end{equation}
This shows that the actual overhead with respect to the number of memory bodies is not necessarily determined by the normalization constant but instead by the gate cost, i.e., $\mathcal{O}(\log C)$. 
One might expect that this gives rise to significant additional advantage compared to classical methods that scale linearly with $C$.
However, in \Cref{s:kernel_approx} we show that such hopes are likely unfounded due to the exceptional approximation accuracy of the utilized Markovian embedding and fast classical alternatives for solving the hereditary integral in \Cref{eq:law_hereditary_general}.

We encode the Dissipator as specified in \Cref{eq:hamiltonian_dissipator}:
\begin{lemma}[Dissipator Encoding, Proof in \Cref{s:be_dissipator}] \label{lem:dissipator_encoding}
    For any target error $\varepsilon > 0$, the Dissipator $\mathbf{L}$ admits an $(\alpha_{\mathbf{L}}, a_{\mathbf{L}}, \varepsilon)$-\ac{BE} $\mathcal{U}_{\mathbf{L}}$ with normalization factor
    \begin{equation}
        {\alpha_{\mathbf{L}} = \max\Big((\sqrt{d_{\operatorname{in}}}+\sqrt{d_{\operatorname{out}}})^2\frac{\|{\boldsymbol{\eta}}_0\|_{\max}}{m_{\min}}, \|\boldsymbol{\Lambda}\|_{\infty}\Big).}
    \end{equation}
    The encoding requires $a_{\mathbf{L}} = \mathcal{O}(\log N)$ ancilla qubits, $\mathcal{O}(\log N)$ additional two-qubit gates, and $\mathcal{O}(1)$ queries to the (controlled) index oracles $O_r^{(\mathcal{M})},O_c^{(\mathcal{M})}$ for $\mathcal{M} \in \{\mathbf{S}_A, \mathbf{S}_B\}$ and (controlled) $\mathcal{O}(\varepsilon/\alpha_{\mathbf{L}})$-accurate phase oracles $O_{\mathbf{M}^{-1/2}}, O_{\boldsymbol{\eta}_0}, O_{\boldsymbol{\Lambda}}$.
\end{lemma}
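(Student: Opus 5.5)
The dissipator is block-diagonal, $\mathbf{L} = \mathbf{L}_{\mathbf{M}} \oplus \boldsymbol{\Lambda}$ with $\mathbf{L}_{\mathbf{M}} = \mathbf{M}^{-1/2}\mathbf{T}_0^\dagger\boldsymbol{\eta}_0\mathbf{T}_0\mathbf{M}^{-1/2}$. The plan is to block-encode the two diagonal blocks separately. We then combine them with a single qubit flagging the velocity versus memory sector of the $N$-dimensional register, using a controlled select: the top block is applied when the flag is $\ket{0}$ and the bottom block when it is $\ket{1}$. A direct sum of block encodings with normalizations $\alpha_1,\alpha_2$ gives a block encoding of $\mathbf{L}$ with normalization $\max(\alpha_1,\alpha_2)$, provided we first rescale the smaller one down to the larger. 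The rescaling costs one extra ancilla rotation. This accounts for the $\max$ in the statement and adds only $\mathcal{O}(1)$ gates and ancillas beyond the sector comparator, which costs $\mathcal{O}(\log N)$ gates.

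The memory block $\boldsymbol{\Lambda}$ is diagonal and is encoded directly from the phase oracle $O_{\boldsymbol{\Lambda}}$. Take the standard linear combination $\tfrac12(O_{\boldsymbol{\Lambda}} + O_{\boldsymbol{\Lambda}}^\dagger)$, implemented with one Hadamard-sandwiched ancilla and controlled $O_{\boldsymbol{\Lambda}}$. This yields $\sum_k \cos(\theta_k)\ketbra{k}{k} = \boldsymbol{\Lambda}/\alpha_{\boldsymbol{\Lambda}}$ with $\alpha_{\boldsymbol{\Lambda}} = \|\boldsymbol{\Lambda}\|_\infty$, since $\boldsymbol{\Lambda}$ is diagonal and positive semi-definite. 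An $\varepsilon'$ error in the oracle propagates to an error of order $\alpha_{\boldsymbol{\Lambda}}\varepsilon'$ in the encoded operator.

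For the velocity block we factor $\mathbf{L}_{\mathbf{M}} = \mathbf{W}^\dagger\mathbf{W}$ with $\mathbf{W} = \boldsymbol{\eta}_0^{1/2}\mathbf{T}_0\mathbf{M}^{-1/2}$. We reuse the construction from the proof of Lemma~\ref{lem:hamiltonian_encoding}, replacing $\mathbf{K}^{1/2}$ by $\boldsymbol{\eta}_0^{1/2}$ and using a single copy of $\mathbf{T}_0$ instead of $C$ copies.

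\begin{itemize}
\item[(i)] Write $\mathbf{T}_0 = \mathbf{S}_A - \mathbf{S}_B$, where the ground rows cancel. The ground selector must still contribute: it is added back through the weights, or equivalently the oriented decomposition of the paper is used so that the ground rows are carried by both $\mathbf{S}_A$ and $\mathbf{S}_B$ with the correct sign.
\item[(ii)] Each of $\mathbf{S}_A$ and $\mathbf{S}_B$ has row sparsity $1$ and column sparsity at most $d_{\rm in}$ (respectively $d_{\rm out}$). Using the index oracles, it therefore admits a block encoding with normalization $\sqrt{d_{\rm in}}$ (respectively $\sqrt{d_{\rm out}}$), with the column oracle preparing uniform superpositions over the nonzeros of a column.
\item[(iii)] Combine these by a linear combination of unitaries with coefficients $\sqrt{d_{\rm in}}$ and $\sqrt{d_{\rm out}}$. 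This gives $\mathbf{T}_0$ with normalization $\sqrt{d_{\rm in}}+\sqrt{d_{\rm out}}$.
\item[(iv)] Multiply by the diagonal encodings of $\mathbf{M}^{-1/2}$ (normalization $m_{\min}^{-1/2}$) and $\boldsymbol{\eta}_0^{1/2}$ (normalization $\|\boldsymbol{\eta}_0\|_{\max}^{1/2}$). The latter is obtained from $O_{\boldsymbol{\eta}_0}$, with phases chosen so that the cosine encodes $\sqrt{\eta/\alpha}$, or by $\mathcal{O}(1)$ arithmetic on the phase register.
\end{itemize}

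The result is a block encoding of $\mathbf{W}$ with normalization $(\sqrt{d_{\rm in}}+\sqrt{d_{\rm out}})\sqrt{\|\boldsymbol{\eta}_0\|_{\max}/m_{\min}}$. Since a block encoding $U$ of $\mathbf{W}/\alpha_{\mathbf{W}}$ gives a block encoding $U^\dagger(\ldots)U$ of $\mathbf{W}^\dagger\mathbf{W}/\alpha_{\mathbf{W}}^2$ after routing the row and column registers appropriately, the velocity block has normalization $\alpha_{\mathbf{W}}^2$, which matches the first term of the stated maximum. Queries remain $\mathcal{O}(1)$, and the ancilla count is $\mathcal{O}(\log N)$ because the edge register has $\lceil\log_2(N_e+N_g)\rceil$ qubits.

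The main obstacle is the Gram-product step together with the correct signed handling of the ground rows in (i). The $\mathbf{W}^\dagger\mathbf{W}$ construction needs $\mathbf{W}$ encoded as a map from the $M$-dimensional mass register into the $(N_e+N_g)$-dimensional edge register, and the projectors must line up so that the squared-norm identity holds exactly rather than only on a subspace. I would first check this on a single edge and a single ground edge.

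The error bound follows from the triangle inequality. Each oracle error $\varepsilon'$ enters at most $\mathcal{O}(1)$ times and is amplified by at most $\alpha_{\mathbf{L}}$, so choosing $\varepsilon' = \mathcal{O}(\varepsilon/\alpha_{\mathbf{L}})$ suffices.
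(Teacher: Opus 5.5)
\textbf{The gap is step (iv).} Your outer structure matches the paper: a one-qubit controlled select over the $\hat{\boldsymbol\eta}_{\mathbf M}$ and $\boldsymbol\Lambda$ blocks, rescaled to the common $\max$, with $\boldsymbol\Lambda$ encoded as $\tfrac12(O_{\boldsymbol\Lambda}+O_{\boldsymbol\Lambda}^\dagger)$.

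In the stated access model, $O_{\boldsymbol\eta_0}$ only imprints $e^{-i\theta_k}$ with $\cos\theta_k=\eta_k/\alpha_{\boldsymbol\eta_0}$. Nothing is written into a register, so there is no phase register to do $\mathcal O(1)$ arithmetic on. Producing $\sqrt{\eta_k/\alpha_{\boldsymbol\eta_0}}$ would need phase estimation or a polynomial approximation of $\sqrt{x}$. Both need a number of queries that grows with $1/\varepsilon$, because viscosities may vanish and $\sqrt{x}$ is not smooth at $0$.

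An exact Gram factor is in fact impossible with finitely many queries. The entries of any such $\mathbf W^\dagger\mathbf W$ are trigonometric polynomials in the $\theta_k$, and the matrix is positive semidefinite for every real $\theta$. An identity with $\hat{\boldsymbol\eta}_{\mathbf M}/\alpha^2$ on the physical range $\theta_k\in[0,\pi/2]$ would therefore extend to $\theta_k=\pi$. There the right-hand side is nonzero and negative semidefinite, a contradiction.

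Your other option, choosing phases whose cosine is $\sqrt{\eta/\alpha}$, is a different oracle $O_{\boldsymbol\eta_0^{1/2}}$, which the statement does not grant. The paper names exactly this access-model change as the price of your Gram/walk variant ($\mathcal W=\mathcal U_{\mathbf F}^\dagger(2\ketbra{0}{0}-\mathbf I)\mathcal U_{\mathbf F}$ plus a one-qubit LCU), which only halves the ancillas. The $U^\dagger(\ldots)U$ step you flagged as the main obstacle is standard; the real obstruction is the square root.

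The fix, which is the paper's proof, is not to split $\boldsymbol\eta_0$ at all. Multiply the five block encodings $\mathcal U_{\mathbf M^{-1/2}}$, $\mathcal U_{\mathbf T_0^\dagger}$, $\mathcal U_{\boldsymbol\eta_0}$, $\mathcal U_{\mathbf T_0}$, $\mathcal U_{\mathbf M^{-1/2}}$, each on its own ancilla register. This uses $O_{\boldsymbol\eta_0}$ directly. It gives normalization $\alpha_{\mathbf M^{-1/2}}^2\alpha_{\mathbf T_0}^2\alpha_{\boldsymbol\eta_0}$, which is exactly the first entry of the $\max$, with error $3\alpha\varepsilon'$. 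It costs a second copy of the edge register, so the ancilla count is still $\mathcal O(\log N)$.

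\textbf{Steps (i) and (iii) are also unresolved as written.} With scalar LCU coefficients, reproducing $\mathbf D_0=\mathbf S^+-\mathbf S^-$ on the edge rows forces the ground rows to cancel. Sign-only diagonal weights give $0$ or $2\mathbf S_0$ there, never $\mathbf S_0$. You need diagonal \emph{unitary} weights, for example
$\mathbf T_0=\diag(\mathbf I_{N_e},e^{i\pi/3}\mathbf I_{N_g})\,\mathbf S_A+\diag(-\mathbf I_{N_e},e^{-i\pi/3}\mathbf I_{N_g})\,\mathbf S_B$, using $e^{i\pi/3}+e^{-i\pi/3}=1$. These weights cost only a phase gate conditioned on the edge/ground flag, so the normalization stays $\sqrt{d_{\rm in}}+\sqrt{d_{\rm out}}$. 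The $2\mathbf S_0$ choice is usable only if you add a compensating factor $\diag(\mathbf I,\tfrac14\mathbf I)$ next to $\boldsymbol\eta_0$.
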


Finally, we block-encode the system evolution using \ac{LCHS} \cite{low_optimal_2025}.
\begin{lemma}[LCHS Lemma] \label{th:_lchs_op_applied}
    Consider the \ac{ODE} system
    \begin{equation}
        \frac{\mathrm{d} \mathbf{G}(t)}{\mathrm{d} t} = -\mathbf{C}\mathbf{G}(t), \quad \mathbf{G}(0)=\mathbf{I},
    \end{equation}
    where $\mathbf{C} = i\mathbf{H} + \mathbf{L}$, with $\mathbf{H}$ (\Cref{lem:hamiltonian_encoding}), $\mathbf{L}$ (\Cref{lem:dissipator_encoding}) being Hermitian and $\mathbf{L} \succeq 0$.
    
    For any target error $\varepsilon \in (0,1)$, the evolution operator $\mathbf{G}(T) = e^{-\mathbf{C}T}$ admits an $(\alpha_{\mathbf{G}}, a_{\mathbf{G}}, \varepsilon)$-\ac{BE} $\mathcal{U}_{\mathbf{G}}$ with normalization factor $\alpha_{\mathbf{G}}=\mathcal{O}(1)$ that can be constructed using
    \begin{equation}
    {Q_{\mathbf{G}}=\mathcal{O}\Big(T(\alpha_\mathbf{L}\log(1/\varepsilon) + \alpha_\mathbf{H}) + \log(1/\varepsilon) \Big)}
    \end{equation}
    queries to the (controlled) index oracles $O_r^{(\mathcal{M})},O_c^{(\mathcal{M})}$ for $\mathcal{M} \in \{\mathbf{S}_A, \mathbf{S}_B\}$ and (controlled) $\mathcal{O}(\varepsilon/Q_{\mathbf{G}})$-accurate phase oracles $O_{\mathbf{M}^{-1/2}}, O_{\mathbf{K}^{1/2}}, O_{\boldsymbol{\eta}_0}, O_{\boldsymbol{\Lambda}}$.
    The encoding requires $a_{\mathbf{G}} = \mathcal{O}\big(\log(N\alpha_{\mathbf{L}}T/\varepsilon)\big)$ ancilla qubits and uses $\mathcal{O}\Big(\big(Q_\mathbf{G} + \log^{5/2}(1/\varepsilon)\big)\log(N\alpha_{\mathbf{L}}T/\varepsilon)\Big)$ additional two-qubit gates.
\end{lemma}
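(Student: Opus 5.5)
The plan is to reduce \Cref{th:_lchs_op_applied} to the optimal \ac{LCHS} theorem of \cite{low_optimal_2025}, using \Cref{lem:hamiltonian_encoding} and \Cref{lem:dissipator_encoding} as the input block-encodings.

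\textbf{Step 1: the \ac{LCHS} representation.} For time-independent $\mathbf{C}=i\mathbf{H}+\mathbf{L}$ with $\mathbf{L}\succeq 0$, the identity
\begin{equation}
e^{-\mathbf{C}T}=\int_{\mathbb{R}} \frac{f(k)}{1-ik}\, e^{-iT(\mathbf{H}+k\mathbf{L})}\,\mathrm{d}k
\end{equation}
holds for the kernel family of \cite{low_optimal_2025}. That kernel decays like $e^{-c|k|^{\beta}}$ with $\beta$ arbitrarily close to $1$, and its $L^1$ norm is $\mathcal{O}(1)$.

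\textbf{Step 2: truncation and discretization.} Truncate the integral to $|k|\le K=\mathcal{O}(\log(1/\varepsilon))$ (up to the near-linear power). Discretize it with Gaussian quadrature on subintervals, using $\mathrm{poly}\log(1/\varepsilon)$ nodes $k_j$ and weights $c_j$. This gives an \ac{LCU} $\sum_j c_j U_j$ with $\sum_j|c_j|=\mathcal{O}(1)$. The bound $\sum_j|c_j|=\mathcal{O}(1)$ fixes $\alpha_{\mathbf{G}}=\mathcal{O}(1)$.

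\textbf{Step 3: the controlled Hamiltonians.} Each $U_j$ is $e^{-iT\mathbf{H}_j}$ with $\mathbf{H}_j=\mathbf{H}+k_j\mathbf{L}$. Form the select operator $\sum_j\ketbra{j}{j}\otimes\mathbf{H}_j$ as a linear combination of $\mathcal{U}_{\mathbf{H}}$ and $\mathcal{U}_{\mathbf{L}}$. The coefficient $k_j$ is loaded by a controlled rotation on the quadrature register, so the resulting block-encoding has normalization $\alpha_{\mathbf{H}}+K\alpha_{\mathbf{L}}$.

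\textbf{Step 4: simulation cost.} Apply controlled quantum signal processing / qubitization \cite{low2019hamiltonian} to this select operator. This costs $\mathcal{O}\big(T(\alpha_{\mathbf{H}}+\alpha_{\mathbf{L}}\log(1/\varepsilon))+\log(1/\varepsilon)\big)$ queries, which matches $Q_{\mathbf{G}}$. Each query invokes \Cref{lem:hamiltonian_encoding} and \Cref{lem:dissipator_encoding}, which use $\mathcal{O}(1)$ index- and phase-oracle calls, so the oracle counts follow.

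\textbf{Step 5: prepare/select/unprepare and error budget.} Wrap the controlled evolution with PREP and PREP$^\dagger$ on the weights $c_j$.
\begin{itemize}
\item Gate count of PREP: preparing the $\mathrm{poly}\log(1/\varepsilon)$ weights arithmetically gives the additive $\log^{5/2}(1/\varepsilon)$ term.
\item Ancillas: they come from the quadrature register, the QSP ancilla, and the $\mathcal{O}(\log N)$ encoding ancillas. The precision of the loaded $k_j$, which must resolve $T\alpha_{\mathbf{L}}K$ to accuracy $\varepsilon$, accounts for $\log(N\alpha_{\mathbf{L}}T/\varepsilon)$.
\item Error: each of the $Q_{\mathbf{G}}$ oracle uses contributes $\mathcal{O}(\varepsilon/Q_{\mathbf{G}})$, which explains the required phase-oracle accuracy. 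Splitting $\varepsilon$ evenly among truncation, quadrature, QSP and oracle errors closes the bound.
\end{itemize}

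\textbf{Main obstacle.} The delicate part is Steps 2--4. We need the truncation bound $K$ and the quadrature node count to stay polylogarithmic while keeping the $\log(1/\varepsilon)$ factor only on the $\alpha_{\mathbf{L}}$ term, not on $\alpha_{\mathbf{H}}$. I would take this directly from the error analysis in \cite{low_optimal_2025}, checking that their assumptions hold here: $\mathbf{L}$ Hermitian PSD (\Cref{eq:hamiltonian_dissipator}) and access through block-encodings.
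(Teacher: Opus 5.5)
Your proposal follows the paper's proof: both instantiate the optimal LCHS theorem of \cite{low_optimal_2025} with the block-encodings of \Cref{lem:hamiltonian_encoding,lem:dissipator_encoding}, and both split the error budget between the LCHS truncation/quadrature error and the $Q_{\mathbf{G}}$ accumulated $\mathcal{O}(\varepsilon/Q_{\mathbf{G}})$ phase-oracle errors. One caveat on your Steps~1--4: a kernel with only stretched-exponential decay $e^{-c|k|^{\beta}}$, $\beta<1$, forces $K=\mathcal{O}(\log^{1/\beta}(1/\varepsilon))$ and hence an $\alpha_{\mathbf{L}}T\log^{1/\beta}(1/\varepsilon)$ term, so Step~4 does not by itself ``match'' $Q_{\mathbf{G}}$, and the exact $\log(1/\varepsilon)$ factor has to be taken from the sharper analysis of \cite{low_optimal_2025}, as you ultimately do.
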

\begin{proof}
    Proof follows by substitution of variables into the claims of~\cite{low_optimal_2025}, allocating half of the error budget to the \ac{LCHS} algorithmic and quadrature error and half to the accumulated phase oracle errors, and merging all logarithmic factors up to constants in the nontrivial regime $\alpha_{\mathbf{L}}T \geq 1$.
\end{proof}

The  block encoding for \ac{LCHS} can be used for state preparation of the normalized evolution state $\ket{\boldsymbol{\psi}(T)}$.
The result is derived by substitution of parameters into the general form of \Cref{th:_lchs_op_applied}:
\begin{theorem}[State Preparation \cite{low_optimal_2025}] \label{th:_lchs_state_prep_applied}
    Assume that $T \alpha_{\mathbf{L}} \in \Omega(1)$.
    The normalized state in $\mathbb{C}^N$
    \begin{equation}
    \begin{aligned}
        \ket{\boldsymbol{\psi}(T)} &\propto \|\boldsymbol{\psi}(0)\|\mathbf{G}(T) \ket{\boldsymbol{\psi}(0)} \\&+ \int_{0}^{T} \|\boldsymbol{\chi}(\tau)\| \mathbf{G}(T-\tau) \ket{\bar{\boldsymbol{\chi}}(\tau)} \mathrm{d}\tau,
    \end{aligned}
    \end{equation}
    can be prepared to any additive target error $\varepsilon \in (0,1)$ with constant success probability using
    \begin{equation}
        {Q_{\boldsymbol{\psi}} = \mathcal{O}\left(\frac{\|\boldsymbol{\psi}(0)\| + \|\boldsymbol{\chi}\|_{L^1}}{\|\boldsymbol{\psi}(T)\|}\right)}
    \end{equation}
    queries to the $\mathcal{O}(\varepsilon/Q_{\boldsymbol{\psi}})$-accurate state preparation oracles $O_{\boldsymbol{\psi}_0}$, $O_{\boldsymbol{\chi}_\tau}$, and $O_{\boldsymbol{\chi}_x}$, where $\|\boldsymbol{\chi}\|_{L^1} = \int_{0}^{T}\|\boldsymbol{\chi}(\tau)\|\mathrm{d}\tau$, and
    \begin{align}
        Q_{\mathbf{G}'} &= \mathcal{O}\Big(Q_{\boldsymbol{\psi}} T\big(\alpha_\mathbf{L}\log(Q_{\boldsymbol{\psi}}/\varepsilon) + \alpha_\mathbf{H}\big) \Big)
        \\&= \widetilde{\mathcal{O}}\Big(\frac{\|\boldsymbol{\psi}(0)\| + \|\boldsymbol{\chi}\|_{L^1}}{\|\boldsymbol{\psi}(T)\|}T \lb \alpha_\mathbf{L} + \alpha_\mathbf{H} \rb\Big)
    \end{align}
    queries to the (controlled) index oracles $O_r^{(\mathcal{M})},O_c^{(\mathcal{M})}$ for $\mathcal{M} \in \{\mathbf{S}_A, \mathbf{S}_B\}$ and (controlled) $\mathcal{O}(\varepsilon/Q_{\mathbf{G}'})$-accurate phase oracles $O_{\mathbf{M}^{-1/2}}, O_{\mathbf{K}^{1/2}}, O_{\boldsymbol{\eta}_0}, O_{\boldsymbol{\Lambda}}$.
    The algorithm requires $\mathcal{O}\big(\log(N \alpha_{\mathbf{L}} T Q_{\boldsymbol{\psi}}/\varepsilon)\big)$ ancilla qubits and uses $\mathcal{O}\Big(\big(Q_{\mathbf{G}'} + Q_{\boldsymbol{\psi}}\log^{5/2}(Q_{\boldsymbol{\psi}}/\varepsilon)\big)\log(N \alpha_{\mathbf{L}} T Q_{\boldsymbol{\psi}}/\varepsilon)\Big)$ additional two-qubit gates.
\end{theorem}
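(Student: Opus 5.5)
We prove the theorem by building a single block encoding of the unnormalized target vector and then boosting it with amplitude amplification. By Duhamel's principle, the solution of \Cref{eq:damped_schrodinger_form} is
\begin{equation}
\boldsymbol{\psi}(T)=\mathbf{G}(T)\boldsymbol{\psi}(0)+\int_0^T \mathbf{G}(T-\tau)\boldsymbol{\chi}(\tau)\,\mathrm{d}\tau .
\end{equation}
We discretize the integral on a grid $\tau_j = jT/N_t$. By the smoothness assumption on $\boldsymbol{\chi}$, the quadrature error can be made $\le \varepsilon\|\boldsymbol{\psi}(T)\|/4$ with $N_t=\mathrm{poly}(T/\varepsilon)$. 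This costs only logarithmically many clock qubits.

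The discretized sum is a linear combination with nonnegative weights $\|\boldsymbol{\psi}(0)\|$ and $\|\boldsymbol{\chi}(\tau_j)\|T/N_t$ of the vectors $\mathbf{G}(T)\ket{\boldsymbol{\psi}(0)}$ and $\mathbf{G}(T-\tau_j)\ket{\bar{\boldsymbol{\chi}}(\tau_j)}$. We implement it as an LCU.
\begin{itemize}
\item A flag qubit is rotated so that its two branches carry amplitudes $\sqrt{\|\boldsymbol{\psi}(0)\|}$ and $\sqrt{\|\boldsymbol{\chi}\|_{L^1}}$, up to normalization.
\item The source branch applies $O_{\boldsymbol{\chi}_\tau}$ to the clock register; its amplitudes $\sqrt{\|\boldsymbol{\chi}(\tau_j)\|T/N_t}$ are exactly those in the definition.
\item The source branch then applies $O_{\boldsymbol{\chi}_x}$ controlled on the clock.
\item The other branch applies $O_{\boldsymbol{\psi}_0}$.
\item Next we apply a clock-controlled block encoding of $\mathbf{G}(T-\tau_j)$, and of $\mathbf{G}(T)$ in the initial-state branch. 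We obtain it from \Cref{th:_lchs_op_applied}, either through one LCHS circuit with a controlled evolution time or through a select over time-scaled encodings; LCHS admits variable $T$ because $\alpha_{\mathbf{L}},\alpha_{\mathbf{H}}$ enter only through $T\alpha$.
\item Finally we unprepare the flag and clock with the square-root weights.
\end{itemize}
On the all-zero ancilla flag, this yields $\boldsymbol{\psi}(T)$ divided by a normalization of order $\alpha_{\mathbf{G}}(\|\boldsymbol{\psi}(0)\|+\|\boldsymbol{\chi}\|_{L^1})$, with $\alpha_{\mathbf{G}}=\mathcal{O}(1)$.

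The success amplitude is therefore $\Theta\bigl(\|\boldsymbol{\psi}(T)\|/(\|\boldsymbol{\psi}(0)\|+\|\boldsymbol{\chi}\|_{L^1})\bigr)$. Fixed-point or standard amplitude amplification with $Q_{\boldsymbol{\psi}}$ rounds gives constant success probability, with $Q_{\boldsymbol{\psi}}$ as stated. Each round uses $\mathcal{O}(1)$ calls to the state preparation oracles and one call to the $\mathbf{G}$-encoding and its inverse. This accounts for the $Q_{\boldsymbol{\psi}}$ state-oracle queries.

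For the error budget, amplification multiplies per-round errors by $\mathcal{O}(Q_{\boldsymbol{\psi}})$. We therefore require three things.
\begin{itemize}
\item The state oracles are $\mathcal{O}(\varepsilon/Q_{\boldsymbol{\psi}})$-accurate.
\item The LCHS encoding is built to precision $\varepsilon'=\mathcal{O}(\varepsilon/Q_{\boldsymbol{\psi}})$. By \Cref{th:_lchs_op_applied} this costs $\mathcal{O}(T(\alpha_{\mathbf{L}}\log(Q_{\boldsymbol{\psi}}/\varepsilon)+\alpha_{\mathbf{H}})+\log(Q_{\boldsymbol{\psi}}/\varepsilon))$ queries per round.
\item The phase oracles are $\mathcal{O}(\varepsilon/Q_{\mathbf{G}'})$-accurate.
\end{itemize}
The assumption $T\alpha_{\mathbf{L}}\in\Omega(1)$ absorbs the additive log term. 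Multiplying the per-round cost by $Q_{\boldsymbol{\psi}}$ gives $Q_{\mathbf{G}'}$, and the gate and ancilla counts follow the same way from \Cref{th:_lchs_op_applied}, with $\varepsilon$ replaced by $\varepsilon/Q_{\boldsymbol{\psi}}$.

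The main obstacle is making the time-dependent family $\mathbf{G}(T-\tau_j)$ coherent under the clock without paying $N_t$ times the cost. It also requires a common normalization $\alpha_{\mathbf{G}}$ for every $j$, so that the LCU weights are not distorted. I would handle this by noting that $\|e^{-\mathbf{C}t}\|\le 1$ for every $t$, and that the LCHS kernel integral over $k$ of $e^{-it(\mathbf{H}+k\mathbf{L})}$ has a $t$-independent $L^1$ weight. A single LCHS prepare-select circuit with the clock register controlling the Hamiltonian simulation time therefore suffices. The simulation cost is governed by the maximal time $T$, and the normalization is uniform, which gives the claimed $\mathcal{O}(1)$ overhead.
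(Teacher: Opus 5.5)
Your proposal is correct and follows the paper's route. The paper only cites Low–Somma and substitutes into the LCHS lemma; its energy-estimation proof describes the same structure you build, namely an unnormalized LCHS preparation from $O_{\boldsymbol{\psi}_0},O_{\boldsymbol{\chi}_\tau},O_{\boldsymbol{\chi}_x}$ whose all-zero-ancilla projection is $\boldsymbol{\psi}(T)/(\alpha_{\mathbf{G}}(\|\boldsymbol{\psi}(0)\|+\|\boldsymbol{\chi}\|_{L^1}))$, followed by $Q_{\boldsymbol{\psi}}$ amplification rounds at per-round accuracy $\varepsilon/Q_{\boldsymbol{\psi}}$. Your treatment of the clock-controlled family $\mathbf{G}(T-\tau_j)$, via the $t$-independent LCHS kernel or the time-scaled generator $\bigoplus_j (t_j/T)\mathbf{C}$, is a detail the paper leaves to the citation, and it is sound.
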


The similarity transform \Cref{eq:sim_trans} maps the physical energy to the squared Euclidean norm, $E(t) = \frac{1}{2}\|\boldsymbol{\psi}(t)\|^2$.
Because the material matrix $\mathbf{B}$ is diagonal, local degrees of freedom do not mix.
Hence, the energy stored in any subsystem can be determined as the support of $\boldsymbol{\psi}(T)$ on the corresponding subspace:

\begin{problem}[Energy Estimation Problem \cite{babbush2023exponential}, Proof in \Cref{apx:measurements}] \label{prob:subspace_energy}
    Given the same oracles as for the simulation problem (\Cref{th:_lchs_state_prep_applied}), and a (controlled) marking oracle $O_{\mathcal{S}}$ that flips the sign of the target physical subspace $\mathcal{S} \subseteq \{1, \dots, N\}$ such that $O_{\mathcal{S}} = \mathbf{I} - 2\mathbf{P}_{\mathcal{S}}$, where $\mathbf{P}_{\mathcal{S}} = \sum_{j \in \mathcal{S}} \ketbra{j}{j}$ is the projector onto $\mathcal{S}$, output an estimate $\Xi_{\mathcal{S}}(T)$ such that
    \begin{equation}
        {\left|\Xi_{\mathcal{S}}(T) - \frac{E_{\mathcal{S}}(T)}{E_{\operatorname{tot}}}\right| \leq \varepsilon,}
    \end{equation}
    where $E_{\mathcal{S}}(T)$ is the energy stored in the subsystem $\mathcal{S}$ at time $T$
    \begin{equation}
        E_{\mathcal{S}}(T) := \frac{1}{2}\sum_{j \in \mathcal{S}} |\psi_j(T)|^2,
    \end{equation}
    and $E_{\operatorname{tot}}= \tfrac{1}{2}(\|\boldsymbol{\psi}(0)\| + \|\boldsymbol{\chi}\|_{L^1})^2$ is the maximum theoretical energy.
\end{problem}

Building on \cite{babbush2023exponential}, we prove the following result (\Cref{th:energy_estimation}) in \Cref{apx:measurements}:
\begin{theorem}[Subspace Energy Estimation, see \cite{babbush2023exponential}] \label{th:energy_estimation_applied}
    Assume that $T \alpha_{\mathbf{L}} \in \Omega(1)$.
    For any target error $\varepsilon \in (0,1)$ and failure probability $\delta \in (0,1)$, Problem \ref{prob:subspace_energy} can be solved with probability at least $1-\delta$ using 
    \begin{equation}
        {Q_{\Xi} = \mathcal{O}\bigg(\frac{\log(1/\delta)}{\varepsilon}\bigg)}
    \end{equation}
    queries to the (controlled) marking oracle $O_{\mathcal{S}}$ and to the $\mathcal{O}(\varepsilon/Q_{\Xi})$-accurate state preparation oracles $O_{\boldsymbol{\psi}_0}$, $O_{\boldsymbol{\chi}_\tau}$, and $O_{\boldsymbol{\chi}_x}$, and
    \begin{align}
         Q_{\mathbf{G}''} &= \mathcal{O} \Big(Q_{\Xi} T\big(\alpha_\mathbf{L}\log(Q_{\Xi}/\varepsilon) + \alpha_\mathbf{H}\big) \Big) \\
          &= \widetilde{\mathcal{O}} \left( \frac{\log(1/\delta)}{\varepsilon} T \lb \alpha_\mathbf{L} + \alpha_\mathbf{H} \rb \right)
    \end{align}
    queries to the (controlled) index oracles $O_r^{(\mathcal{M})},O_c^{(\mathcal{M})}$ for $\mathcal{M} \in \{\mathbf{S}_A, \mathbf{S}_B\}$ and (controlled) $\mathcal{O}(\varepsilon/Q_{\mathbf{G}''})$-accurate phase oracles $O_{\mathbf{M}^{-1/2}}, O_{\mathbf{K}^{1/2}}, O_{\boldsymbol{\eta}_0}, O_{\boldsymbol{\Lambda}}$.
    The algorithm requires $\mathcal{O}\big(\log(N \alpha_{\mathbf{L}} T Q_{\Xi}/\varepsilon)\big)$ ancilla qubits and uses $\mathcal{O}\Big(\big(Q_{\mathbf{G}''} + Q_{\Xi}\log^{5/2}(Q_{\Xi}/\varepsilon)\big)\log(N \alpha_{\mathbf{L}} T Q_{\Xi}/\varepsilon)\Big)$ additional two-qubit gates.
\end{theorem}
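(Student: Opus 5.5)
The strategy is to reduce energy estimation to amplitude estimation on an unnormalized state-preparation unitary built from the LCHS block encoding. I would not prepare the normalized state of \Cref{th:_lchs_state_prep_applied}, since that costs a factor $Q_{\boldsymbol{\psi}}$. Instead I would build a single unitary $W$ with
\begin{equation}
W\ket{0}\ket{0} = \frac{\boldsymbol{\psi}(T)}{\gamma(\|\boldsymbol{\psi}(0)\| + \|\boldsymbol{\chi}\|_{L^1})}\ket{0}_{\mathrm{anc}} + \ket{\perp},
\end{equation}
where $\gamma=\mathcal{O}(1)$ is a known constant and $\ket{\perp}$ is orthogonal to the flagged ancilla subspace. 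This is the same linear-combination construction as in \Cref{th:_lchs_state_prep_applied}, without the final amplification.

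The steps for building $W$ are as follows. First, use the state preparation oracles to form the two-term superposition with weights $\propto \|\boldsymbol{\psi}(0)\|$ and $\|\boldsymbol{\chi}\|_{L^1}$. The source branch is prepared through $O_{\boldsymbol{\chi}_\tau}$ and the controlled $O_{\boldsymbol{\chi}_x}$, which together yield the quadrature states $\ket{j}\ket{\bar{\boldsymbol{\chi}}(\tau_j)}$. Second, apply, controlled on the branch and clock registers, the LCHS block encodings $\mathcal{U}_{\mathbf{G}}$ of $\mathbf{G}(T)$ and $\mathbf{G}(T-\tau_j)$ from \Cref{th:_lchs_op_applied}. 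The query count of this step is governed by the largest time $T$, giving $\mathcal{O}(T(\alpha_{\mathbf{L}}\log(1/\varepsilon')+\alpha_{\mathbf{H}}))$. Third, unprepare the selector registers in the usual LCU fashion.

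With $W$ in hand, the probability of finding the flag $\ket{0}$ together with a system index in $\mathcal{S}$ is $p = \|\mathbf{P}_{\mathcal{S}}\boldsymbol{\psi}(T)\|^2/(\gamma^2(\|\boldsymbol{\psi}(0)\|+\|\boldsymbol{\chi}\|_{L^1})^2) = E_{\mathcal{S}}(T)/(\gamma^2E_{\operatorname{tot}})$. The good subspace is marked by the reflection $O_{\mathcal{S}}$ combined with a reflection on the ancilla flags. Running amplitude estimation with $\mathcal{O}(1/\varepsilon)$ applications of $W$, $W^\dagger$ and the marking reflections gives $p$ to additive error $\varepsilon/\gamma^2$ with constant success probability. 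A median-of-means over $\mathcal{O}(\log(1/\delta))$ repetitions boosts this to $1-\delta$, and setting $\Xi_{\mathcal{S}}=\gamma^2\hat p$ finishes the estimator. Multiplying the per-$W$ cost by $Q_{\Xi}$ gives $Q_{\mathbf{G}''}$. The ancilla and gate counts are inherited from \Cref{th:_lchs_op_applied} plus the $\log^{5/2}$ overhead per invocation.

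The main obstacle is the error bookkeeping. Amplitude estimation is sensitive to operator-norm errors in $W$, and these accumulate linearly over $Q_{\Xi}$ calls. Each block encoding and each oracle must therefore be accurate to $\mathcal{O}(\varepsilon/Q_{\Xi})$, which is why the LCHS precision enters as $\log(Q_{\Xi}/\varepsilon)$. For the same reason the phase oracles must be accurate to $\mathcal{O}(\varepsilon/Q_{\mathbf{G}''})$. The time-quadrature error of the source integral also has to be absorbed into the additive budget, using the smoothness assumption on $\boldsymbol{\chi}$. I would also check two constant-factor points. First, the LCHS normalization $\alpha_{\mathbf{G}}=\mathcal{O}(1)$ must bound all $\mathbf{G}(T-\tau)$ uniformly, which holds because $\mathbf{L}\succeq 0$ makes these contractions. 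Second, the triangle inequality $\|\boldsymbol{\psi}(T)\|\le \|\boldsymbol{\psi}(0)\|+\|\boldsymbol{\chi}\|_{L^1}$ guarantees $p\le 1/\gamma^2$, so the ratio being estimated is well defined.
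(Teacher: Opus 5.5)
Your proposal is correct and follows the paper's proof essentially step for step. Both arguments build an unnormalized LCHS state-preparation unitary, which avoids the $Q_{\boldsymbol{\psi}}$ amplification. Both then run amplitude estimation on $p = E_{\mathcal{S}}(T)/(\alpha_{\mathbf{G}}^2 E_{\operatorname{tot}}) \le 1/\alpha_{\mathbf{G}}^2$ to additive error $\mathcal{O}(\varepsilon/\alpha_{\mathbf{G}}^2)$, with per-call accuracies $\mathcal{O}(\varepsilon/Q_{\Xi})$ and $\mathcal{O}(\varepsilon/Q_{\mathbf{G}''})$ so that the linearly accumulated coherent error stays within an $\mathcal{O}(\varepsilon)$ budget.

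The one difference is cosmetic: you get the $\log(1/\delta)$ factor by median amplification rather than quoting it inside the QAE bound. One detail needs stating precisely. The reflection about the good subspace must be $O_{\mathcal{S}}$ controlled on the flag register being $\ket{0}_a$, i.e.\ $\mathbf{I} - 2\ketbra{0}{0}_a \otimes \mathbf{P}_{\mathcal{S}}$. It must not be the product of $O_{\mathcal{S}}$ with a separate flag reflection, since that product would mark the wrong subspace.
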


In the next section we will analyze the advantage potential of these quantum algorithms compared to classical simulation.

\section{Runtime Analysis} \label{s:runtime}
Above we provide estimates of the scaling of the number of quantum gates needed to estimate the mean energy stored in a subsystem; however, it may not be clear whether this problem is computationally challenging.  
At first glance the resources needed to simulate the problem seems to provide a potential exponential advantage as the scaling in \Cref{th:energy_estimation_applied} does not have explicit polynomial scaling with the dimension of the Hilbert space that we are simulating. 
However, it may not be clear whether or not classical systems can perform competitively to this algorithm.  

Our aim here is to compare and contrast the quantum performance to a new classical simulation algorithm based on information locality arguments for differential equations against which the quantum algorithm can provide only a polynomial advantage for finite dimensional lattices; although for three dimensional problems this advantage is a potentially significant quartic speedup.  
Then we will argue for \BQP-completeness for such dissipative simulations through an argument based on a clock-state that explicitly uses dissipation and so shows hardness even in the regime of non-negligible damping.

\subsection{Classical Bounds}

In this subsection we will derive numerical truncation bounds for classical simulations based on causality arguments, i.e., what time it takes a signal to propagate from an input location to some output location, and dissipation arguments, i.e., after which time an input signal has decayed so significantly that its influence on an output is negligible. 
This allows us to upper bound classical simulation costs for problems that exhibit, for example, finite-dimensional locality of interactions or strong dissipation, which consequently limits the potential for quantum advantage in such settings.

We provide an extensive proof in the appendix that local differential equations with non-positive log-norms lead to dynamics that are confined to a small causal cone and the dependence of a time-evolved local observable outside of that lightcone shrinks exponentially with distance from the light cone (see \Cref{fig:lrb}).  
In this sense, this provides an analogue of a Lieb-Robinson bound for inhomogeneous (source-driven) differential equations under a restricted set of conditions.  
We state the main result for this as the following informal result.  See \Cref{app:LRB} and \Cref{lem:exactPartition} in particular for detailed derivation and statement of information locality arguments for inhomogeneous differential equations.

\begin{proposition}[Information Locality Bound (informal)] \label{lem:exactPartitionInformal}
    Let $\mathbf{Q}$ be a Hermitian operator acting on a finite dimensional Hilbert space with a metric $\mathcal{D}(u,v)$ on it.  Assume that $\mathbf{Q}$ is supported only on a subspace $\mathcal{W}$ of diameter at most $\Delta$ and let $\mathcal{V}^\perp$ be a subspace that is separated from $\mathcal{W}$ by at least distance $d\ge \min_{u\in \mathcal{W},v\in \mathcal{V}^\perp}(\mathcal{D}(u,v))$. Further, let the global initial state be partitioned as $\boldsymbol{\psi}(0) = \boldsymbol{\psi}_{\mathcal{V}}(0) + \boldsymbol{\psi}_{\mathcal{V}^\perp}(0)$, where $\boldsymbol{\psi}_{\mathcal{V}^\perp}(0)$ has strict spatial support within $\mathcal{V}^\perp$. Further, let $$\partial_t \boldsymbol{\psi}(t) = -\mathbf{C} \boldsymbol{\psi}(t) + \boldsymbol{\chi}(t) $$ where $\mathbf{C}$ only couples vectors within distance $\Delta$ of each other.

    Let $\boldsymbol{\Upsilon}_{\mathcal{V}}(t) := \int_0^t e^{\mathbf{C} \tau} \boldsymbol{\chi}_{\mathcal{V}}(\tau) \mathrm{d}\tau$ and let $\boldsymbol{\psi}_{\mathcal{V}}(t) := e^{-\mathbf{C} t} \big( \boldsymbol{\psi}_{\mathcal{V}}(0) + \boldsymbol{\Upsilon}_{\mathcal{V}}(t) \big)$ be the exact component of the state driven solely by the local initial state and local source. 
    We then have for time $t \ge 0$ and $v_I t := 2e\Delta\|\mathbf{C}\|t < d - \Delta$ that the difference in the expectation value of the observable $\mathbf{Q}$ from that due solely to the local state components satisfies
    $$
    \begin{aligned}
    &| {\boldsymbol{\psi}}^\dagger(t) \mathbf{Q} {\boldsymbol{\psi}}(t)-\boldsymbol{\psi}_{\mathcal{V}}^\dagger(t) \mathbf{Q} \boldsymbol{\psi}_{\mathcal{V}}(t)| \nonumber\\
    &\qquad\le 12\|\mathbf{Q}\| C_{\max}^2 e^{-\frac{d -\Delta}{\Delta}\log\left(\frac{d-\Delta}{2e\Delta \|\mathbf{C}\| t} \right)},
    \end{aligned}
    $$
    where $C_{\max} := \max(\|\boldsymbol{\psi}(0)\|, \|\boldsymbol{\chi}\|_{L^1})$.
\end{proposition}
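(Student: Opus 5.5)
By linearity, the exact solution splits as $\boldsymbol{\psi}(t)=\boldsymbol{\psi}_{\mathcal{V}}(t)+\boldsymbol{\psi}_{\mathcal{V}^\perp}(t)$, where $\boldsymbol{\psi}_{\mathcal{V}^\perp}(t)=e^{-\mathbf{C}t}\big(\boldsymbol{\psi}_{\mathcal{V}^\perp}(0)+\boldsymbol{\Upsilon}_{\mathcal{V}^\perp}(t)\big)$ is the part driven by the initial data and sources supported in $\mathcal{V}^\perp$. Writing $\mathbf{P}_{\mathcal{W}}$ for the projector onto $\mathcal{W}$, we have $\mathbf{Q}=\mathbf{P}_{\mathcal{W}}\mathbf{Q}\mathbf{P}_{\mathcal{W}}$, so
\begin{equation*}
\boldsymbol{\psi}^\dagger\mathbf{Q}\boldsymbol{\psi}-\boldsymbol{\psi}_{\mathcal{V}}^\dagger\mathbf{Q}\boldsymbol{\psi}_{\mathcal{V}}=2\,\mathrm{Re}\big(\boldsymbol{\psi}_{\mathcal{V}}^\dagger\mathbf{Q}\,\mathbf{P}_{\mathcal{W}}\boldsymbol{\psi}_{\mathcal{V}^\perp}\big)+\boldsymbol{\psi}_{\mathcal{V}^\perp}^\dagger\mathbf{Q}\,\mathbf{P}_{\mathcal{W}}\boldsymbol{\psi}_{\mathcal{V}^\perp}.
\end{equation*}
This difference is therefore bounded by $\|\mathbf{Q}\|\,\|\mathbf{P}_{\mathcal{W}}\boldsymbol{\psi}_{\mathcal{V}^\perp}(t)\|\big(2\|\boldsymbol{\psi}_{\mathcal{V}}(t)\|+\|\mathbf{P}_{\mathcal{W}}\boldsymbol{\psi}_{\mathcal{V}^\perp}(t)\|\big)$. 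The dissipator satisfies $\mathbf{L}\succeq0$, so $\mathbf{C}$ has non-positive log-norm and $\|e^{-\mathbf{C}s}\|\le1$ for $s\ge0$. Hence $\|\boldsymbol{\psi}_{\mathcal{V}}(t)\|\le \|\boldsymbol{\psi}(0)\|+\|\boldsymbol{\chi}\|_{L^1}\le 2C_{\max}$, using the Duhamel form $\int_0^t e^{-\mathbf{C}(t-\tau)}\boldsymbol{\chi}(\tau)\,\mathrm{d}\tau$. The problem thus reduces to bounding the leakage $\|\mathbf{P}_{\mathcal{W}}e^{-\mathbf{C}s}\mathbf{P}_{\mathcal{V}^\perp}\|$ for $0\le s\le t$.

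The core step is a combinatorial light-cone estimate. Expand $e^{-\mathbf{C}s}=\sum_k(-s)^k\mathbf{C}^k/k!$. Since $\mathbf{C}$ only couples sites within distance $\Delta$, $\mathbf{P}_{\mathcal{W}}\mathbf{C}^k\mathbf{P}_{\mathcal{V}^\perp}=0$ whenever $k\Delta<d$, and by the diameter assumption the first possibly nonzero term has $k\ge k_0:=\lceil (d-\Delta)/\Delta\rceil$. Bound the tail with $\|\mathbf{C}^k\|\le\|\mathbf{C}\|^k$ and $k!\ge(k/e)^k$:
\begin{equation*}
\sum_{k\ge k_0}\frac{(\|\mathbf{C}\|s)^k}{k!}\le\sum_{k\ge k_0}\Big(\frac{e\|\mathbf{C}\|s}{k_0}\Big)^k.
\end{equation*}
The hypothesis $2e\Delta\|\mathbf{C}\|t<d-\Delta$ makes the ratio at most $1/2$. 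The geometric sum is then at most $2\,(e\|\mathbf{C}\|t\Delta/(d-\Delta))^{k_0}$, which is at most $2e^{-\frac{d-\Delta}{\Delta}\log\frac{d-\Delta}{2e\Delta\|\mathbf{C}\|t}}$ after a harmless weakening that absorbs the factor of $2$ inside the logarithm.

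Apply this uniformly for $s\in[0,t]$; the bound is monotone in $s$. This gives $\|\mathbf{P}_{\mathcal{W}}\boldsymbol{\psi}_{\mathcal{V}^\perp}(t)\|\le 2\epsilon_{\rm LR}\big(\|\boldsymbol{\psi}(0)\|+\|\boldsymbol{\chi}\|_{L^1}\big)\le 4C_{\max}\epsilon_{\rm LR}$, where $\epsilon_{\rm LR}$ denotes the exponential factor. For the source term I write $\mathbf{P}_{\mathcal{W}}\boldsymbol{\psi}_{\mathcal{V}^\perp}$ via Duhamel and pull the operator bound inside the integral. Inserting both pieces, and using $\epsilon_{\rm LR}\le1$ on the quadratic term, gives a constant times $\|\mathbf{Q}\|C_{\max}^2\epsilon_{\rm LR}$, and tidying the constants yields $12$.

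I expect the main obstacle to be the bookkeeping of constants and the precise exponent. In particular, the series with $\|\mathbf{C}\|^k$ is not itself norm-contractive, so the non-positive log-norm can be used only for the $\mathcal{V}$ part, not for the leakage. One must also verify that the factor $2$ in the velocity $v_I$ exactly absorbs the geometric-series factor and the ceiling in $k_0$.
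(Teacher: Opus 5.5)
Your argument takes a genuinely different route from the paper and is sound in outline. As written, however, the constant bookkeeping does not produce $12$.

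\textbf{How the routes differ.} The paper proves the formal version (Lemma~\ref{lem:exactPartition}) in a Heisenberg-like picture. It expands $e^{-\mathbf{C}^\dagger t}\mathbf{Q}e^{-\mathbf{C}t}$ in the nested generalized brackets $(\mathbf{Q},-\mathbf{C})^{(k)}$, splits the difference into twelve bilinear cross terms, and bounds each by $\|\mathbf{Q}\|C_{\max}^2\epsilon_{\rm LR}$ using Lemmas~\ref{lem:xterms}, \ref{lem:mixedTerms} and \ref{lem:doubleTerm}. There the $12$ is literally the number of cross terms. The factor $2$ in $v_I$ comes from the two-sided bracket, $\|(\mathbf{Q},-\mathbf{C})^{(k)}\|\le 2^k\|\mathbf{C}\|^k\|\mathbf{Q}\|$.

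You stay in the Schr\"odinger picture and reduce everything to a single one-sided leakage norm $\|\mathbf{P}_{\mathcal{W}}e^{-\mathbf{C}s}\mathbf{P}_{\mathcal{V}^\perp}\|$ plus a Cauchy--Schwarz estimate. This is essentially the mechanism the paper only uses later, in Theorem~\ref{thm:stateTruncation}. Your version is shorter and avoids handling each cross-term type separately. Because one-sided spreading only costs velocity $e\Delta\|\mathbf{C}\|$, the $2e$ in the hypothesis is genuine slack. The paper's route in return controls arbitrary bilinear forms $\mathbf{u}^\dagger e^{-\mathbf{C}^\dagger t}\mathbf{Q}e^{-\mathbf{C}t}\mathbf{w}$ and supplies complementary dissipative bounds such as $e^{-2\lambda_{\min}t}$.

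\textbf{The gap in the constant.} With your per-unit leakage $2\epsilon_{\rm LR}$ you get $\|\mathbf{P}_{\mathcal{W}}\boldsymbol{\psi}_{\mathcal{V}^\perp}(t)\|\le 4C_{\max}\epsilon_{\rm LR}$. Combined with $\|\boldsymbol{\psi}_{\mathcal{V}}(t)\|\le 2C_{\max}$, the total is $\|\mathbf{Q}\|\,4C_{\max}\epsilon_{\rm LR}(4C_{\max}+4C_{\max}\epsilon_{\rm LR})\le 32\|\mathbf{Q}\|C_{\max}^2\epsilon_{\rm LR}$, not $12$. You need a leakage of at most $\epsilon_{\rm LR}$ itself. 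Then $\|\mathbf{Q}\|\,2C_{\max}\epsilon_{\rm LR}(4C_{\max}+2C_{\max}\epsilon_{\rm LR})\le 12\|\mathbf{Q}\|C_{\max}^2\epsilon_{\rm LR}$ exactly.

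\textbf{The fix.} Contrary to your closing remark, the log-norm \emph{can} be used for the leakage. Since the polynomial terms with $k<k_0$ vanish,
\[
\mathbf{P}_{\mathcal{W}}e^{-\mathbf{C}s}\mathbf{P}_{\mathcal{V}^\perp}=\mathbf{P}_{\mathcal{W}}\left(\int_0^s\frac{(s-u)^{k_0-1}}{(k_0-1)!}(-\mathbf{C})^{k_0}e^{-\mathbf{C}u}\,\mathrm{d}u\right)\mathbf{P}_{\mathcal{V}^\perp}.
\]
Using $\|e^{-\mathbf{C}u}\|\le 1$, this is bounded for $s\le t$ by $(\|\mathbf{C}\|t)^{k_0}/k_0!\le(y/k_0)^{k_0}$, where $y:=e\|\mathbf{C}\|t$, with no geometric factor. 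Set $p:=(d-\Delta)/\Delta$; the hypothesis then reads $2y<p\le k_0$. Since $k\mapsto(y/k)^k$ decreases for $k>y/e$, you get $(y/k_0)^{k_0}\le(y/p)^p=2^{-p}\epsilon_{\rm LR}\le\epsilon_{\rm LR}$.

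Your series-tail estimate also yields $\le\epsilon_{\rm LR}$, but only if you do not leave the factor $2$ outside. When $p\ge1$ the tail is at most $2(y/p)^p=2^{1-p}\epsilon_{\rm LR}$. When $p<1$ you have $k_0=1$ and the tail is at most $2y<(2y/p)^p$.

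\textbf{A minor point.} The vanishing of $\mathbf{P}_{\mathcal{W}}\mathbf{C}^k\mathbf{P}_{\mathcal{V}^\perp}$ for $k\Delta<d$ follows from the triangle inequality alone. The diameter of $\mathcal{W}$ plays no role, and the first possibly nonzero index is $\lceil d/\Delta\rceil=k_0+1$.
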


This then leads to our main result involving classical simulation. 
We provide a classical simulation algorithm that uses local Taylor-series expansions and sparse matrix-vector multiplication to evolve the differential equation strictly within the effective light cone of a fixed observable. 
This result is stated below.

\begin{corollary}[Informal Statement of \Cref{cor:classical}] \label{cor:classicalInf}
Let us assume that we are only interested in computing the dynamics of a local observable $\mathbf{Q}$ that has unit norm supported only on a compact subspace $\mathcal{W}$ of constant dimension ${\rm dim}(\mathcal{W})$ embedded in a $D$-dimensional lattice. Let us assume that the following conditions hold:
\begin{enumerate}
    \item The generator $-\mathbf{C}$ is $s$-sparse, couples elements of the space that are strictly within distance $\Delta$ on the lattice, and is bounded by $\|\mathbf{C}\|\le \alpha$.
    \item The logarithmic norm of the generator is non-positive, bounded by $\mu(-\mathbf{C}) \le -\lambda_{\min} \le 0$.
    \item The continuous source term $\boldsymbol{\chi}$ obeys $\sup_{\tau \in [0,t]}\|\partial_\tau^k\boldsymbol{\chi}(\tau)\|\le A_{\boldsymbol{\chi}} \omega_{\boldsymbol{\chi}}^k k!$ for all non-negative integers $k$ and the cost of computing it is proportional to the size of the simulated system.
    \item The cost of simulating the non-Markovian dynamics over the initial time $t^*$ is sub-dominant to the simulation duration $t$.
\end{enumerate}
Under the assumption that the cost of computing the local inhomogeneity is efficient, the number of classical bit operations needed  is bounded by
\begin{equation}
    \begin{aligned}
\tilde{\mathcal{O}}&\left( s \Delta^D (\alpha + \omega_{\boldsymbol{\chi}})t\Biggr(\alpha t + \log\left(\frac{\|\boldsymbol{\psi}(0)\|+A_{\boldsymbol{\chi}}t}{\epsilon} \right)\right)^D\nonumber\\
&\qquad\times \log^3\left(\frac{\|\boldsymbol{\psi}(0)\| + A_{\boldsymbol{\chi}}t}{\epsilon}\right)\Biggr).
    \end{aligned}
\end{equation}
\end{corollary}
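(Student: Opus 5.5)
The plan is to combine the information locality bound of \Cref{lem:exactPartitionInformal} with an explicit classical integrator restricted to a truncated region. First, fix the observable $\mathbf{Q}$ supported on $\mathcal{W}$. Define $\mathcal{V}$ as the set of lattice sites within distance $d$ of $\mathcal{W}$, and choose $d$ so that the right-hand side of \Cref{lem:exactPartitionInformal} is at most $\epsilon/2$. Since the bound decays like $\exp\bigl(-\tfrac{d-\Delta}{\Delta}\log\tfrac{d-\Delta}{2e\Delta\alpha t}\bigr)$, it suffices to take
\[
d-\Delta = \max\Bigl(2e^2\Delta\alpha t,\; \Delta\log\bigl(24 C_{\max}^2/\epsilon\bigr)\Bigr).
\]
With this choice the logarithm is at least $1$ and the exponent beats $\log(C_{\max}^2/\epsilon)$. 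Here $C_{\max}\le \|\boldsymbol{\psi}(0)\|+A_{\boldsymbol{\chi}}t$ by assumption 3 with $k=0$. A $D$-dimensional ball of radius $d$ then contains $\mathcal{O}\bigl(\dim(\mathcal{W})\,d^D\bigr)$ sites, that is $\mathcal{O}\bigl(\Delta^D(\alpha t+\log((\|\boldsymbol{\psi}(0)\|+A_{\boldsymbol{\chi}}t)/\epsilon))^D\bigr)$ degrees of freedom. This produces the $D$-th power factor in the claim.

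Second, I would replace the global dynamics by the dynamics of $\boldsymbol{\psi}_{\mathcal{V}}$. Because $\mathbf{C}$ has range $\Delta$, the exact $\boldsymbol{\psi}_{\mathcal{V}}(t)$ is still not confined to $\mathcal{V}$. I would therefore use the companion fact from the appendix proof of \Cref{lem:exactPartition}: evolving with the generator $\mathbf{C}_{\mathcal{V}'}=\mathbf{P}\mathbf{C}\mathbf{P}$ restricted to a slightly enlarged region $\mathcal{V}'$ (enlarged by another light-cone width) changes $\mathbf{Q}$-expectations by at most the same exponentially small tail. Compressing $\mathbf{C}$ preserves the bound $\mu(-\mathbf{C}_{\mathcal{V}'})\le 0$, so the truncated evolution is a contraction and errors do not amplify. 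This is the step I expect to be the main obstacle. It requires that both the boundary truncation of the generator and the discarded sources outside $\mathcal{V}$ remain controlled uniformly in time, and the log-norm condition is exactly what is needed for that.

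Third, I would simulate the truncated inhomogeneous ODE on $\mathcal{V}'$ using a truncated Taylor series, in the form of Duhamel's formula. Split $[0,t]$ into $r=\mathcal{O}((\alpha+\omega_{\boldsymbol{\chi}})t)$ steps of length $h$ with $h(\alpha+\omega_{\boldsymbol{\chi}})\le 1/2$. On each step, expand both $e^{-\mathbf{C}h}$ and the source analytically; assumption 3 gives a geometric tail, so order $K=\mathcal{O}(\log((\|\boldsymbol{\psi}(0)\|+A_{\boldsymbol{\chi}}t)r/\epsilon))$ suffices. Contractivity makes the per-step errors add linearly. Each step costs $K$ sparse matrix-vector products, each costing $s|\mathcal{V}'|$ operations. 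Tracking arithmetic at $\mathcal{O}(\log(\cdot/\epsilon))$ bits of precision contributes the remaining logarithmic factors, so the total is $K^{\mathcal{O}(1)}\cdot$bit-cost, which is $\log^3$. Finally, the non-Markovian initial window $t^*$ is absorbed by assumption 4. Multiplying the number of steps, the region size, the sparsity, and the logarithmic factors gives the stated $\tilde{\mathcal{O}}$ bound.
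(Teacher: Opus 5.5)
Your overall plan is the paper's: a truncation radius $d_*=\mathcal{O}(\Delta\alpha t+\Delta\log(C_{\max}/\epsilon))$, a simulated region of $\mathcal{O}(d_*^D)$ sites, and a truncated-Taylor Duhamel integrator as in \Cref{lem:classicalSim}. Your choice of $d$ from \Cref{lem:exactPartitionInformal} is also arithmetically correct. The gap is your second step, which is where all the content lies. \Cref{lem:exactPartition} compares $\boldsymbol{\psi}(t)$ with $\boldsymbol{\psi}_{\mathcal{V}}(t)=e^{-\mathbf{C}t}\bigl(\boldsymbol{\psi}_{\mathcal{V}}(0)+\boldsymbol{\Upsilon}_{\mathcal{V}}(t)\bigr)$. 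That vector is still propagated by the full $N$-dimensional generator, so the lemma alone gives no reduction in classical cost. Replacing $\mathbf{C}$ by its compression is the claim that actually needs proof, and you only assert it. The proof of \Cref{lem:exactPartition} also contains no such estimate: it only bounds twelve cross terms, all evolved under the full $\mathbf{C}$. Your phrase ``the same exponentially small tail'' is not accurate either. Severing the boundary creates an effective source $-\mathbf{C}_I\boldsymbol{\psi}(\tau)$ with integrated norm of order $C_{\max}\,t\,\|\mathbf{C}_I\|$, so the prefactor grows with $t$. This is harmless, since it enters $d_*$ only through a logarithm.

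The paper closes this with \Cref{thm:stateTruncation}, which also makes your first step unnecessary. Write $\mathbf{C}=\mathbf{C}_{\mathcal{V}}+\mathbf{C}_I$. The difference between the exact state and the state $\tilde{\boldsymbol{\psi}}_{\mathcal{V}}$ evolved under $\mathbf{C}_{\mathcal{V}}$ with local data obeys Duhamel's formula with propagator $e^{-\mathbf{C}_{\mathcal{V}}\tau}$. It is driven by $\boldsymbol{\psi}_{\mathcal{V}^\perp}(0)$, $\boldsymbol{\chi}_{\mathcal{V}^\perp}$ and $-\mathbf{C}_I\boldsymbol{\psi}(\tau)$. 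All three are supported at distance at least $d-\Delta$ from $\mathcal{W}$, with total weight at most $2C_{\max}(1+t\|\mathbf{C}_I\|)$. Applying $\Pi_{\mathcal{W}}$ annihilates the Taylor polynomial of $e^{-\mathbf{C}_{\mathcal{V}}\tau}$ up to degree $K-1$, where $K\approx(d-2\Delta)/\Delta$. Since the compression keeps a non-positive log-norm, $\|e^{-\mathbf{C}_{\mathcal{V}}u}\|\le 1$, and the integral remainder is at most $(e\|\mathbf{C}\|\tau/K)^K$. This bounds $\|\Pi_{\mathcal{W}}(\boldsymbol{\psi}(t)-\tilde{\boldsymbol{\psi}}_{\mathcal{V}}(t))\|$ directly, handling the discarded data and the severed boundary at once, with no enlarged region. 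The observable error then follows from $|\tilde{\boldsymbol{\psi}}^\dagger\mathbf{Q}\tilde{\boldsymbol{\psi}}-\boldsymbol{\psi}^\dagger\mathbf{Q}\boldsymbol{\psi}|\le E(2C_{\max}+E)$ with state budget $\tilde\epsilon=\sqrt{C_{\max}^2+\epsilon}-C_{\max}$.

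If you want to keep your enlarged-region idea, it can be completed with the paper's own lemmas. Apply \Cref{lem:exactPartition} with $\mathbf{Q}$ the projector onto the complement of $\mathcal{V}'$; nothing in its proof needs $\mathcal{W}$ to be small. This shows $\boldsymbol{\psi}_{\mathcal{V}}(\tau)$ has small weight outside $\mathcal{V}'$ for all $\tau\le t$. Then Duhamel's formula for $\mathbf{P}\boldsymbol{\psi}_{\mathcal{V}}-\tilde{\boldsymbol{\psi}}$, with source $-\mathbf{P}\mathbf{C}(\mathbf{I}-\mathbf{P})\boldsymbol{\psi}_{\mathcal{V}}$, together with contractivity of $e^{-\mathbf{P}\mathbf{C}\mathbf{P}\tau}$, bounds the generator-truncation error by $\alpha t$ times that weight. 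Either way, some such argument has to be written down.

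A minor point: in the paper $t^*$ is the dissipative memory horizon $\lambda_{\min}^{-1}\log(4C_{\max}/\epsilon)$, used to start the simulation at $t_0=\max(0,t-t^*)$. It is not a non-Markovian window. Since you integrate over all of $[0,t]$ and the bound is stated in $t$, you do not need it.
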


In terms of the simulation time $t$, any quantum algorithm that scales at least linearly in $t$ provides a speedup factor of at most $\mathcal{O}(t^D)$ relative to this classical algorithm. 
For a $3$-dimensional lattice ($D=3$), this equates to a classical algorithm scaling as $\mathcal{O}(t^4)$ against a linear quantum algorithm, representing a quartic speedup. 
While potentially substantial, this demonstrates that the exponential advantages one may hope for when simulating linear differential equations may not be realizable when computing local properties of generic local differential equations in finite dimensions. 

Further, previous work suggested that inhomogeneities could allow quantum differential equation algorithms to provide an exponential speedup because steady source terms ensure $\lim_{t\rightarrow \infty} \boldsymbol{\psi}(t) \ne 0$. 
Here we observe that this condition alone is not sufficient to lead to an exponential quantum advantage in finite dimensions, as the spatial locality of the observable still permits an efficient light-cone truncation. 
Either a non-local generator or a combination of a globally driven system with a non-local observable may be needed to provide a super-polynomial advantage over classical algorithms.
The latter might be infeasible due to limited expectation variance in systems that have been evolved over short times \cite{sakamoto2025quantum}.

\begin{figure}[t]
\begin{center}
\begin{tikzpicture}[scale=0.8]

    \draw[->] (-5,0) -- (5,0) node[right] {$x$ };
    \draw[->] (0,0) -- (0,5) node[above] {$t$};

    \fill[red!10] (-1.5,0) -- (1.5,0) -- (4.5,3) -- (-4.5,3) -- cycle;

    \draw[thick, red] (1.5,0) -- (4.5,3);
    \draw[thick, red] (-1.5,0) -- (-4.5,3);

    \draw[thick] (-1.5,0.2) rectangle (1.5,-0.2);
    \node at (0,-0.6) {$\mathbf{Q}$};

    \foreach \t in {0,1,2,3} {

        \ifnum\t>0
            \draw[dashed, gray] (-5,\t) -- (5,\t);
        \fi

        \pgfmathsetmacro{\xmin}{-1.5 - \t}
        \pgfmathsetmacro{\xmax}{ 1.5 + \t}

        \foreach \x in {-4,-3,-2,-1,0,1,2,3,4} {

            \pgfmathparse{(\x >= \xmin) && (\x <= \xmax) ? 1 : 0}
            \ifnum\pgfmathresult=1
                \fill[blue] (\x,\t) circle (2.5pt);
            \else
                \fill[black] (\x,\t) circle (2.5pt);
            \fi
        }
    }

    \node[red] at (0.0,2.5) {Causal cone};

\end{tikzpicture}
\end{center}
\caption{Diagram of a light cone for an observable in one-dimension predicted via an information locality bound. The contribution to the evolution of the time-dependent observable $\mathbf{Q}$ from the subsystems outside the shaded area is exponentially small in the distance from the boundary of the light cone.}\label{fig:lrb}
\end{figure}

The polynomial temporal advantage cited above assumes that we have at least one mode in the system that has negligible dissipation. 
However, we must ask whether a polynomial temporal quantum advantage can be retained in the event that the system is strictly dissipative. 
If the damping is extremely small, the temporal memory horizon can exceed the simulation time, allowing the polynomial advantage to persist. 
However, when the damping rate is lower-bounded by a constant ($\lambda_{\min} \in \Omega(1)$), we find that the polynomial temporal scaling fundamentally disappears. 
This occurs because the local dynamics of the system only need to be simulated up to a threshold temporal memory, beyond which the initial states and historical dynamics become irrelevant.

We provide an extensive argument formalizing this in \Cref{app:LRB}. 
By applying Duhamel's principle, the state evaluated at time $t$ can be truncated by discarding history prior to $t_0 = \max(0, t - t^*)$. 
Because the logarithmic norm is bounded by $-\lambda_{\min}$, the error introduced by discarding this history decays exponentially as
\begin{equation}
    \|\boldsymbol{\psi}(t) - \check{\boldsymbol{\psi}}(t)\| \le e^{-\lambda_{\min} t^*} \|\boldsymbol{\psi}(t_0)\| \le 2C_{\max} e^{-\lambda_{\min} t^*},
\end{equation}
where $C_{\max}$ is an upper bound on the global state and source norms. 
Setting this error bound to a fraction of the desired precision $\epsilon$ defines the required memory horizon
\begin{equation}
    t^* = \frac{1}{\lambda_{\min}} \log \left( \frac{4C_{\max}}{\epsilon} \right).
\end{equation}
This implies that instead of tracking the dynamics from the origin, we only need to simulate the interval $[t-t^*, t]$.

Crucially, because information propagates spatially through the lattice at a bounded information velocity $v_I =\mathcal{O}( \Delta\|\mathbf{C}\|)$, a truncated temporal horizon $t^*$ bounds the required spatial simulation volume. 
As detailed in \Cref{thm:stateTruncation}, ensuring a spatial truncation error $\le \epsilon$ requires expanding the simulation radius around the observable to a distance $d_*$ governed by the Lambert $W$ function, which scales asymptotically as $d_* \in \mathcal{O} (\Delta\|\mathbf{C}\| t^*)$.
Thus, the maximum volume of the space that we need to simulate is $\mathcal{O}(d_*^D)$.  The cost of finding the solution is best case scenario for spatially local coupling in the lattice is $\mathcal{O}(d)$ this leads to $\mathcal{O}(d_*^{D+1}) \subseteq \mathcal{O}(t^{D+1})$.  In contrast, the quantum algorithms that we propose can solve the problem in time that scales like $\mathcal{O}(t)$, which implies that there can be as much as quartic advantages in time scaling relative to classical simulations on this reduced space.

Because $C_{\max} \le \|\boldsymbol{\psi}(0)\| + A_{\boldsymbol{\chi}} t$ grows at most linearly with time, for a constant damping rate $\lambda_{\min} \in \Omega(1)$, both the memory horizon $t^*$ and the truncation radius $d_*$ grow only logarithmically with $t$. 
This implies that the classical computational cost to simulate local observables in strictly dissipative systems drops from polynomial to $\tilde{\mathcal{O}}(\log^{D+1}(t))$, retaining only a poly-logarithmic dependence on the evolution time and error tolerance. 
In this regime, the temporal quantum advantage is reduced to at most a poly-logarithmic factor.

Note here that the topology matrix, which describes the coupling between the elements, does not change over time this implies that advantages are unlikely due to modest time-dependent changes.  However, if we considered a model for time-dependence where the topology matrix drifts over time to one that has highly non-local couplings then substantial quantum advantage could be possible.

\subsection{\BQP-Completeness}

Prior work provided strong evidence that simulating exponentially many undamped~\cite{babbush2023exponential} or weakly damped~\cite{krovi2024quantum} coupled classical oscillators is easy for quantum computers but hard for classical computers in the worst case. Specifically, they prove that deciding whether the kinetic energy of a subset of oscillators is exponentially small or polynomially large is \BQP-complete.
Here we extend these hardness results to coupled classical oscillators with strong localized dissipation, where a $1/\mathrm{polylog(M)}$-fraction of the masses is strongly damped and the remaining masses are undamped; though we expect that we can combine our analysis with the one in Ref.~\cite{krovi2024quantum} to extend our result to the case where the remaining masses are weakly damped.

More precisely, we prove \BQP-completeness for the problem of deciding whether the total energy of the oscillator system has decreased significantly after a polynomial amount of time evolution. Below we give a slightly more formal statement of this problem but leave the precise problem statement for Appendix~\ref{app:BQP}.

\begin{problem}[Dissipative oscillators decision problem; simplified]
    Consider a system of $M= 2^n$ coupled classical oscillators as described in~\Cref{sec:embed}. Assume that each mass is connected to at most a constant number of other masses. Further, assume that a $1/\mathrm{polylog(M)}$-fraction of masses is strongly damped with damping coefficients, i.e.~the viscosities of the dashpots, scaling like $\Theta \lb \poly(n) \rb$. The problem is to decide whether after time $T \in \cO \lb \poly(n) \rb$, the total energy of the system is at least $2/3$ or at most $1/3$ of the initial energy, under the promise that one of these holds.
\label{bqp_prob_simplified}
\end{problem}

We prove \BQP-completeness for the above problem by reduction from a standard \BQP-complete problem where we consider a $\poly(q)$-sized quantum circuit $V$ on $q \in \cO \lb n \rb$ qubits with initial state $\ket{0}^{\otimes q}$ and the task is to decide whether the probability of measuring the first qubit in the state $\ket{1}$ is at least $2/3$ or at most $1/3$. We only provide a brief proof sketch here. The detailed proof can be found in Appendix~\ref{app:BQP}. Our overall proof strategy is similar to the original \BQP-completeness proof for undamped oscillators in Ref.~\cite{babbush2023exponential} in the sense that we also utilize the Feynman-Kitaev clock Hamiltonian to reduce the standard circuit \BQP-complete problem to our oscillator problem. To be more specific, consider some quantum circuit $V = U_L \cdots U_1$ where $L \in \cO \lb \poly(q) \rb \subseteq \cO \lb \poly(n) \rb$ and each $U_l$ with $l \in \{1, 2, \dots, L\}$ is a gate from a universal gate set. We choose $\{ X, \text{Hadamard}, \text{Toffoli} \}$ as our universal gate set.
Then the Feynman-Kitaev clock Hamiltonian encoding $V$ is given by
\begin{equation}
    \bH_\cl = \sum_{l=1}^L \lb \ketbra{l+1}{l} + \ketbra{l}{l+1} \rb \otimes U_l,
\end{equation}
where the first qubit register is the clock register and the second qubit register is the computational register. 
Without loss of generality, we can assume that the last gate $U_L$ performs a Toffoli gate targeted onto the last qubit of the computational register, which we will call the ``answer qubit''. Further, we can assume that no other gate acts on this answer qubit. $U_L$ leaves the answer qubit in the state $\ket{0}$ for NO instances (with high probability) and flips it to $\ket{1}$ for YES instances (with high probability). Following Ref.~\cite{babbush2023exponential}, we then show how to encode a slightly modified version of $\bH_\cl$ into a system of coupled classical oscillators. In contrast to Ref.~\cite{babbush2023exponential}, we also add a diagonal dissipator matrix to the generator of time evolution. The dissipator matrix is constructed such that it only damps states whose answer qubit is flipped to $\ket{1}$. It describes purely viscous damping through ground dashpots.
After sufficiently long time evolution, YES instances thus experience significant damping, resulting in
the norm of the evolved vector $\frac{\boldsymbol{\psi}(T)}{\norm{\boldsymbol{\psi}(0)}} = e^{-\bC T} \ket{\boldsymbol{\psi}(0)}$ decaying exponentially to some relatively small constant of our choosing, e.g.~$1/6$.
NO instances on the other hand experience essentially no damping, meaning that the norm of $\frac{\boldsymbol{\psi}(T)}{\norm{\boldsymbol{\psi}(0)}}$ remains close to $1$. Measuring the norm of $\frac{\boldsymbol{\psi}(T)}{\norm{\boldsymbol{\psi}(0)}}$ within constant error can therefore be used to distinguish a YES from a NO instance.

Unlike the undamped case, which involves only normal matrices (such as Hermitian or unitary matrices), we have to analyze the eigenvalues and eigenvectors of non-normal matrices due to the addition of the dissipator matrix which is more challenging. Ultimately, we prove that the real parts of all eigenvalues $\{ \lambda_j\}$ of $-\bC$ are negative in the YES subspace and scale like $\Omega \lb 1/\poly(n) \rb$. This indicates that the required evolution time $T$ needed to distinguish a YES from a NO instance scales only like $\cO \lb \poly(n) \rb$. However, this lower bound on the eigenvalues is not sufficient to guarantee that $T$ does indeed only scale like $\cO \lb \poly(n) \rb$. The issue is that the time evolution operator $e^{-\bC T}$ is not normal which means that its eigenvectors are not orthogonal and can therefore introduce additional $n$-dependent scaling in the spectral decomposition of $e^{-\bC T}$. We therefore also analyze the left and right eigenvectors, $\{\vec{\mathcal{R}}_{j}\}$ and $\{ \vec{\mathcal{L}}_{j}\}$ of $-\bC$ in detail. In particular, we prove lower bounds on $\vec{\mathcal{L}}_j^\top \vec{\mathcal{R}}_j$. Since $\vec{\mathcal{R}}_j$ and $\vec{\mathcal{L}}_j$ are complex valued, the expression $\vec{\mathcal{L}}_j^\top \vec{\mathcal{R}}_j$ is not a proper inner product. However, these quantities appear in the spectral decomposition of $-\bC$ when written in terms of left and right eigenvectors. Specifically,
\begin{equation}
    e^{-\bC t} = \sum_j e^{\lambda_j t} \frac{\vec{\mathcal{R}}_j \vec{\mathcal{L}}_j^\top}{\vec{\mathcal{L}}_j^\top \vec{\mathcal{R}}_j}.
\end{equation}
We pick an easy-to-prepare initial state $\ket{\boldsymbol{\psi}(0)}$ describing a system of coupled oscillators where all masses are initially in their rest positions, the first mass has velocity $+1$, the second mass has velocity $-1$ and all other masses do not have any initial velocity. These are the same initial conditions as in Ref.~\cite{babbush2023exponential}.
We then show that $\norm{e^{-\bC T} \ket{\psi_0}}^2 \leq 1/3$ for YES instances and $\norm{e^{-\bC T} \ket{\psi_0}}^2 \geq 2/3$ for NO instances with $T \in \cO \lb \poly(n) \rb$. Furthermore, it is not difficult to  show that $e^{-\bC T}$ can be efficiently implemented for the $\BQP$-hard oscillator instances, ensuring that the problem is also contained in $\BQP$ and hence is \BQP-complete.

\subsection{Kernel Approximation} \label{s:kernel_approx}

The physical validity of the generalized Maxwell model depends on how well the memory kernel $G(t)$ can be approximated by a finite Prony series (\Cref{eq:prony_series}).
By Bernstein's theorem, any causal, completely monotonic kernel can be represented as a continuous mixture of decaying exponentials, and consequently, it can be systematically approximated by a finite sum of decaying exponentials with strictly positive stiffnesses $k_c > 0$ and relaxation rates $\lambda_c \geq 0$ \cite{bernstein1929fonctions, kammler1976chebyshev, hanyga2005viscous}.
Analogous to our approach, any classical system that simulates this Prony series approximation needs to store $C$ sets of memory variables.
Therefore, based on our results \Cref{th:_lchs_state_prep_applied} and \Cref{th:energy_estimation_applied} one might expect a further up to exponential advantage with respect to the number of parallel channels $C$.
Here we show that a large class of memory kernels can be approximated exceptionally well with only a small number of channels.

Furthermore, there exist other methods of solving the hereditary integral \Cref{eq:law_hereditary_general}, for example, \ac{FFT} based methods \cite{hairer1985fast}.
If the memory kernel is sufficiently smooth, they obtain a cost $\mathcal{O}(T \log^2(T))$ with memory $\mathcal{O}(T)$ \cite{hairer1985fast} instead of Prony's method $\mathcal{O}(T C)$ with memory $\mathcal{O}(C)$.

We restrict our analysis to the class of locally integrable kernels.
Non-integrable kernels, such as $G(t) \propto 1/t^2$, are physically pathological as they yield infinite internal forces and require infinite energy for finite kinematic displacements.
To capture the asymptotic worst-case convergence behavior within this admissible class, we define the kernel $G(t) \propto t^{-\gamma} L(t)$, where $0 < \gamma < 1$, and $L(t)$ is slowly varying \cite{bingham1989regular}, meaning its relative variation is asymptotically slower than any power law $\lim_{t\to\infty} L(at)/L(t) = 1$ for all $a>0$. 
While there is an infinite number of valid choices for $L(t)$, here we discuss two highly relevant cases: fractional kernels with $L(t)=1,\; \gamma \rightarrow 1^-,\; t \rightarrow 0^+$, which physically correspond to constant-$Q$ attenuation where the quality factor is $Q = \cot(\pi\gamma/2)$ \cite{kjartansson1979constant}, and slowly decaying logarithmic kernels with $L(t)=1/\log(j+t/\beta),\; \gamma \rightarrow 0^+,\; t \rightarrow \infty$.
These describe very slow creep processes.
Here $\beta>0$ is a physical short-time cutoff, below which the memory response is effectively instantaneous and is absorbed into the  viscous term $\eta_0 \delta(t)$ as a zeroth-order quasi-static approximation, and $j > 1$ is a dimensionless shift parameter chosen to prevent singularities near $t=0$. 

\paragraph{Fractional Kernels}

Due to the short-time truncation, the convolution is evaluated over the finite temporal range $[\beta, T]$, where $T$ is the total simulation time.
On this domain, the Prony series converges exceptionally fast.
Specifically, the number of terms $C$ required to approximate fractional kernels $t^{-\gamma}$ on the interval $[\beta, T]$ with a maximum relative error $\epsilon_{\text{rel}}$ scales as $C = \mathcal{O}\big(\log^2(1/\epsilon_{\text{rel}}) + \log(1/\epsilon_{\text{rel}}) \log(T/\beta) \big)$ \cite{braess2005approximation, beylkin2005approximation}.
To guarantee a uniform absolute error $\varepsilon$, we relate it to the relative error via the maximum value of the fractional kernel on our interval, which evaluates to $\beta^{-\gamma}$ at the short-time cutoff. 
Thus, the required relative error is $\epsilon_{\text{rel}} = \varepsilon / \beta^{-\gamma} = \varepsilon \beta^{\gamma}$. 
Substituting this into the relative error bound and applying the logarithm yields the required state-space expansion dimension
\begin{equation} \label{eq:braess_bound}
\begin{aligned}
    C = \mathcal{O}\Big(\big[&\log(1/\varepsilon) + \gamma\log(1/\beta)\big]^2 \\+& \big[\log(1/\varepsilon) + \gamma\log(1/\beta)\big] \log(T/\beta) \Big).
\end{aligned}
\end{equation}
We observe that $C$ is at worst polylogarithmic in all parameters. 
This limits the potential for additional quantum advantage to polylogarithmic order.

\paragraph{Logarithmic Kernels}

Applying the universal theoretical bound for the best uniform approximation of completely monotonic functions \cite{kammler1976chebyshev}, the maximum absolute error decays algebraically with the state-space dimension $C$ as $\mathcal{O}(G(\beta)/C)$.
Equating this error bound to the target tolerance $\varepsilon$ and evaluating the shifted origin $G(\beta) \propto 1/\log(j + \beta/\beta)$ yields the required dimension~\cite{kammler1976chebyshev} 
\begin{equation} \label{eq:kammler_bound}
    {C = \mathcal{O}\left(\frac{1}{\varepsilon \log(j+1)} \right).}
\end{equation}
Notably the inverse error is linear in $C$ which might suggest advantages due to a potentially large necessary number of Maxwell bodies for small errors.
However, this bound is potentially very loose, as the constructive proof in \cite{kammler1976chebyshev} relies on a first-order step-function discretization of the underlying spectral measure.
Furthermore, one can always switch to \ac{FFT}-based methods \cite{hairer1985fast} in case $\tfrac{1}{\varepsilon} = \omega(\log^2(T))$ to reduce costs.
As such, the additional runtime advantage is likely at best polylogarithmic. 
In contrast, the same does not trivially hold for memory advantages, which could be exponential.
Algorithms for solving the hereditary integral with logarithmic memory requirements \cite{schadle2006fast} are not trivially applicable due to their requirement of cheap access to the Laplace transform of $G(t)$.

\section{Time-Dependent Materials}

One natural extension of the formalism \ref{s:osc_systems} is to incorporate time-dependent materials.
We show that this leads to changes of the system energy metric that appears equivalently to dissipation or growth.

Let us first consider the fundamental relations that describe the evolution of a single oscillator.
Recall that Newton's law describes the rate of change of momentum of the point mass
\begin{equation}
    \frac{\mathrm{d}(m(t)v(t))}{\mathrm{d} t} = b_0(t)-f(t).
\end{equation}
As $m(t)$ is now time-dependent, it does not commute with the time derivative $\frac{\mathrm{d}}{\mathrm{d} t}$.
Instead, we must use the product rule and rearrange to get the time-dependent form
\begin{equation}
    {\frac{\mathrm{d}v(t)}{\mathrm{d} t} = m^{-1}(t)\Big(b_0(t)-f(t) - \frac{\mathrm{d} m(t)}{\mathrm{d} t}v(t)\Big).}
\end{equation}

Similarly, the internal force $f(t)$ needs to correctly capture time-varying material parameters.
We start with the general time-dependent hereditary integral including body-specific velocity sources $b_c(t)$.
To embed this into a Markovian system, we decompose the kernel into an instantaneous viscous part and a series of relaxation bodies 
\begin{equation}
    G(t, \tau) = \eta_0(t)\delta(t-\tau) + \sum_{c=1}^C G_c(t, \tau).
\end{equation}
This splits the total resisting force into 
\begin{equation}
    f(t) = \eta_0(t)v(t) + \sum_{c=1}^C f_c(t),
\end{equation} 
where
\begin{equation}
    f_c(t) = \int_{-\infty}^{t} G_c(t, \tau) (v(\tau) + b_c(\tau)) \,\mathrm{d}\tau,
\end{equation}
is the force of a single Maxwell body.
Substituting the total force back into Newton's law gives
\begin{equation}
    {\frac{\mathrm{d}v(t)}{\mathrm{d} t} = m^{-1}(t)\bigg(b_0(t) - \hat{\eta}_0(t)v(t) - \sum_{c=1}^Cf_c(t) \bigg),}
\end{equation}
where $\hat{\eta}_0(t) = \eta_0(t) + \frac{\mathrm{d} m(t)}{\mathrm{d} t}$ is the effective viscous drag.

To find the evolution law for the memory terms, we differentiate $f_c(t)$ using the Leibniz integral rule
\begin{equation} \label{eq:leibniz_general}
\begin{aligned}
    \frac{\mathrm{d} f_c(t)}{\mathrm{d} t} &= G_c(t, t) (v(t) + b_c(t)) \\&+ \int_{-\infty}^{t} \frac{\partial G_c(t, \tau)}{\partial t} (v(\tau) + b_c(\tau)) \mathrm{d}\tau.
\end{aligned}
\end{equation}

A valid Markovian embedding exists \emph{only} if the remaining integral can be expressed algebraically in terms of the current state $f_c(t)$. 
This requires the kernel to be separable of the form 
\begin{equation}
    \frac{\partial G_c}{\partial t} = -\hat{\lambda}_c(t) G_c(t, \tau).
\end{equation}
In other words, the evolution of the material's properties must act on the system in a Markovian way.
Otherwise, a nested Markovian embedding is necessary.

Under this Markovian condition, \Cref{eq:leibniz_general} simplifies to the generic evolution equation
\begin{equation}
    \frac{\mathrm{d}f_c(t)}{\mathrm{d} t} = G_c(t, t) (v(t) + b_c(t)) - \hat{\lambda}_c(t) f_c(t).
\end{equation}
For an aging Maxwell element, i.e. a coupling whose stiffnesses and viscosities change over time, 
the form of the kernel is dictated by the constitutive differential equation
\begin{equation}
    v(t) + b_c(t) = \frac{\mathrm{d}}{\mathrm{d}t}\Big(f_c(t)/k_c(t)\Big) + f_c(t)/\eta_c(t).
\end{equation}
Solving this ODE for $f_c$ identifies the unique physically consistent kernel as
\begin{equation}
    G_c(t, \tau) = k_c(t) \exp\left( -\int_{\tau}^{t} \lambda_c(s) \mathrm{d}s \right).
\end{equation}
Differentiating this kernel identifies the effective relaxation rate as $\hat{\lambda}_c(t) = \lambda_c(t) - \frac{\mathrm{d}}{\mathrm{d} t}\ln(k_c(t))$ with $\lambda_c(t)=\tfrac{k_c(t)}{\eta_c(t)}$. 
Here, the intrinsic relaxation is offset by the relative change rate of the stiffness, giving the final form
\begin{equation}
     \frac{\mathrm{d} f_c(t)}{\mathrm{d} t} = k_c(t)(v(t) + b_c(t)) - \hat{\lambda}_c(t) f_c(t).
\end{equation}
The interpretation of this form is revealing.
The time-dependent material properties $m(t),k_c(t)$ introduce relative change rates into the equations of motion that act identical to dissipation or growth, captured by the effective rates $\hat{\eta}_0(t), \hat{\lambda}_c(t)$.
Consequently, a changing material can actively stabilize or destabilize the overall system evolution.

\subsection{Coupled System}

We will now generalize the dynamics of a single oscillator to the simulation of a coupled system of time-dependent damped oscillators using \ac{LCHS}.
We consider the case where the coupling topology does not change with time, but instead assume the material parameters themselves are time dependent.  

We redefine the time-dependent material matrix 
\begin{equation}
    \mathbf B(t) = \diag(\mathbf M^{-1}(t), \mathbf K(t)) \succ 0 \quad \forall t.
\end{equation}
Following the time-dependent single oscillator dynamics, the physical dissipation matrix for the network incorporates the material rates of change
\begin{equation}
    \boldsymbol{\eta}(t) = \diag\left(\boldsymbol{\eta}_{\mathbf{M}}(t) +  \frac{\mathrm{d}\mathbf{M}(t)}{\mathrm{d}t}, \boldsymbol{\eta}_{\mathbf{K}}^{-1}(t) + \frac{\mathrm{d}\mathbf{K}^{-1}(t)}{\mathrm{d}t}\right),
\end{equation}
where $\boldsymbol{\eta}_{\mathbf{M}}(t) = \mathbf{T}_0^\dagger \boldsymbol{\eta}_0(t)\mathbf{T}_0$.

To simulate the time-variant system, we apply the similarity transform $\boldsymbol{\psi}(t)=\mathbf{B}^{-1/2}(t)\boldsymbol{\phi}(t)$.
Unlike the time-invariant case, differentiating $\boldsymbol{\psi}$ requires the product rule, introducing a metric drift term
\begin{equation}
    \frac{\mathrm{d} \boldsymbol{\psi}(t)}{\mathrm{d} t} = \mathbf{B}^{-1/2}(t) \frac{\mathrm{d} \boldsymbol{\phi}(t)}{\mathrm{d} t} + \frac{\mathrm{d}(\mathbf{B}^{-1/2}(t))}{\mathrm{d} t} \boldsymbol{\phi}(t).
\end{equation}
Substituting the physical evolution $\frac{\mathrm{d} \boldsymbol{\phi}}{\mathrm{d} t} = \mathbf{B}(\mathbf{A} - \boldsymbol{\eta})\boldsymbol{\phi} + \mathbf{B}\mathbf{b}$ and expanding the basis derivative $\frac{\mathrm{d}}{\mathrm{d} t}(\mathbf{B}^{-1/2}) = -\frac{1}{2}\mathbf{B}^{-3/2}\frac{\mathrm{d} \mathbf{B}}{\mathrm{d}t}$ (which is valid as $[\mathbf{B},\frac{\mathrm{d}\mathbf{B}}{\mathrm{d}t}]=0$) yields
\begin{equation}
\begin{aligned}
    \frac{\mathrm{d} \boldsymbol{\psi}}{\mathrm{d} t} &= \mathbf{B}^{-1/2} \Big( \mathbf{B}(\mathbf{A}-\boldsymbol{\eta})\boldsymbol{\phi} + \mathbf{B}\mathbf{b} \Big) - \frac{1}{2}\mathbf{B}^{-3/2}\frac{\mathrm{d}\mathbf{B}}{\mathrm{d}t}\boldsymbol{\phi} \\
    &= \left( \mathbf{B}^{1/2}(\mathbf{A} - \boldsymbol{\eta})\mathbf{B}^{1/2} - \frac{1}{2}\frac{\mathrm{d}\ln(\mathbf{B}(t))}{\mathrm{d}t} \right) \boldsymbol{\psi} + \boldsymbol{\chi}.
\end{aligned}
\end{equation}
This preserves the Schrödinger form $\frac{\mathrm{d} \boldsymbol{\psi}}{\mathrm{d} t} = -\mathbf{C}(t)\boldsymbol{\psi} + \boldsymbol{\chi}$, with $\mathbf{C}(t) = i\mathbf{H}(t) + \mathbf{L}(t)$. 

The Hamiltonian $\mathbf{H}(t) = i\mathbf{B}(t)^{1/2}\mathbf{A}\mathbf{B}(t)^{1/2}$ is Hermitian, while the dissipator becomes
\begin{equation}
    \mathbf{L}(t) = \mathbf{B}^{1/2}(t)\boldsymbol{\eta}(t)\mathbf{B}^{1/2}(t) + \frac{1}{2}\frac{\mathrm{d}}{\mathrm{d}t}\ln(\mathbf{B}(t)).
\end{equation}
By substituting the explicit block forms of $\mathbf{B}(t)$ and $\boldsymbol{\eta}(t)$, the full material derivatives partially cancel with the similarity transform's metric drift, elegantly decoupling the intrinsic material dissipation from the geometric changes
\begin{equation} \label{eq:time_dep_L_blocks}
    \mathbf{L}(t) = 
    \begin{bmatrix}
        \mathbf{M}^{-1/2}\boldsymbol{\eta}_{\mathbf{M}}\mathbf{M}^{-1/2} + \frac{1}{2}\mathbf{M}^{(1)} & \mathbf{0} \\
        \mathbf{0} & \boldsymbol{\Lambda} - \frac{1}{2}\mathbf{K}^{(1)}
    \end{bmatrix},
\end{equation}
where $\boldsymbol{\Lambda}(t) = \mathbf{K}(t)\boldsymbol{\eta}_{\mathbf{K}}^{-1}(t)$ is the intrinsic relaxation rate of the Maxwell bodies and where we have defined $\mathbf{M}^{(1)}(t) = \frac{\mathrm{d}\ln(\mathbf{M}(t))}{\mathrm{d}t}$ with diagonal elements $[m^{(1)}(t)]_{pp}$ and $\mathbf{K}^{(1)}(t) = \frac{\mathrm{d}\ln(\mathbf{K}(t))}{\mathrm{d}t}$ with diagonal elements $[k^{(1)}(t)]_{qq}$ as the metric drift terms. 
Crucially, the \ac{LCHS} framework requires this final effective dissipator to be positive semi-definite ($\mathbf{L}(t) \succeq 0$). 
This physically restricts the maximum rate at which the network's materials can lose mass or gain stiffness before the metric drift destabilizes the system.

\subsection{Access Model} \label{s:access_td}

To derive the query complexity of the time-dependent case we need to generalize the oracle access model and change the \ac{LCHS} algorithm to its time-dependent variant \cite{low_optimal_2025}. 
Specifically, the \ac{LCHS} formula needs to simulate a linear combination of evolutions of the form $\mathcal{T}e^{-i \int_0^T (\mathbf{H}(s) + k\mathbf{L}(s)) \mathrm{d}s }$ for a range of $k$.  
To construct the time-ordered operator exponential we need to construct block encodings of $\mathbf{H}(t)$ and $\mathbf{L}(t)$.
As a first step we generalize the access model to the time dependent case.

As derived in the previous section, the Hamiltonian $\mathbf{H}(t)$ preserves its time-independent structural form and only requires a generalization of its diagonal material matrices $\mathbf{M}^{-1/2}(t)$ and $\mathbf{K}^{1/2}(t)$.
However, the dissipator $\mathbf{L}(t)$ (\Cref{eq:time_dep_L_blocks}) acquires new metric drift terms $\frac{1}{2}\mathbf{M}^{(1)}(t)$ and $-\frac{1}{2}\mathbf{K}^{(1)}(t)$ that do not fit into the original structure.
Consequently, we aim to reformulate the system to match the time-independent $\mathbf{L}$.
This transformation allows us to apply only minor modifications to the access model.

First, by recognizing that both the intrinsic relaxation rates $\boldsymbol{\Lambda}(t)$ and the stiffness drift $-\frac{1}{2}\mathbf{K}^{(1)}(t)$ are strictly diagonal, we introduce the combined effective relaxation rate 
\begin{equation}
    \tilde{\boldsymbol{\Lambda}}(t) = \boldsymbol{\Lambda}(t) - \frac{1}{2}\mathbf{K}^{(1)}(t).
\end{equation}
Second, without loss of generality, we can assume that all $M$ masses are connected to the ground padding the topology with zero-viscosity ground connections $\eta_{g,0}=0$ and zero-stiffness Maxwell bodies $K_{g,c}=0$ where necessary (see Appendix \ref{s:be_hamiltonian}). 
By ordering the ground edges to match the mass indices, $N_g=M$ and the ground selection matrix becomes the identity $\mathbf{S}_0 = \mathbf{I}$. 
Because the mass drift $\frac{1}{2}\mathbf{M}^{(1)}(t)$ acts as a purely diagonal geometric dissipation on the nodes, we can directly absorb it into the ground viscosity by defining the effective  ground viscosities
\begin{equation}
\begin{aligned}
    \tilde{\boldsymbol{\eta}}_{g,0}(t) &= \boldsymbol{\eta}_{g,0}(t) + \mathbf{M}^{1/2}(t)\left(\frac{1}{2}\mathbf{M}^{(1)}(t)\right)\mathbf{M}^{1/2}(t)
    \\&= \boldsymbol{\eta}_{g,0}(t) + \frac{1}{2}\mathbf{M}(t)\mathbf{M}^{(1)}(t).
\end{aligned}
\end{equation}
By collecting this into the effective  viscosities $\tilde{\boldsymbol{\eta}}_0(t) = \diag(\boldsymbol{\eta}_{e,0}(t), \tilde{\boldsymbol{\eta}}_{g,0}(t))$, the time-dependent dissipator collapses into the same algebraic form as the time-independent case
\begin{equation} \label{eq:time_dep_L_final}
    \mathbf{L}(t) = 
    \begin{bmatrix}
        \mathbf{M}^{-1/2}\mathbf{T}_0^\dagger \tilde{\boldsymbol{\eta}}_0(t)\mathbf{T}_0\mathbf{M}^{-1/2} & \mathbf{0} \\
        \mathbf{0} & \tilde{\boldsymbol{\Lambda}}(t)
    \end{bmatrix}.
\end{equation}

Because we assume the network coupling topology does not change over time, the structural access model remains identical to the time-independent case. 
Specifically, the index oracles $O_r^{(\mathcal{M})}$ and $O_c^{(\mathcal{M})}$ corresponding to the constant topological selection matrices $\mathcal{M} \in \{\mathbf{S}_A, \mathbf{S}_B\}$ are unchanged. 
Similarly, the state preparation oracles $O_{\boldsymbol{\psi}_0}$, $O_{\boldsymbol{\chi}_\tau}$, and $O_{\boldsymbol{\chi}_x}$ already support temporal discretization to capture time-varying source terms and can be used as defined previously.

The only modifications required are to the phase oracles.
To accommodate the multiplexed block-encoding required for time-dependent \ac{LCHS} \cite{low_optimal_2025}, these oracles must be generalized to controlled unitaries over a discrete time grid. 
We discretize the continuous time interval $t \in [0,T]$ with time stepsize $\Delta t > 0$, where $T/\Delta t \in \mathbb{Z}_{\ge 0}$, yielding $N_t = T/\Delta t + 1$ time points such that $t = j\Delta t$, and the phase oracles are controlled by a discrete clock register $\ket{j}$. 

\begin{definition} [Time-Dependent Phase Oracles]
Let $\mathbf{R}(t) \in \{\mathbf{M}^{-1/2}(t), \mathbf{K}^{1/2}(t), \tilde{\boldsymbol{\eta}}_0(t), \tilde{\boldsymbol{\Lambda}}(t)\}$ be a time-dependent diagonal matrix of dimension $N_{\mathbf{R}} \times N_{\mathbf{R}}$, discretized over time steps $t = j\Delta t$. 
We define the time-dependent phase vector $\boldsymbol{\theta}_{\mathbf{R}}(j\Delta t) \in \mathbb{R}^{N_{\mathbf{R}}}$, where 
\begin{equation}
    (\theta_{\mathbf{R}}(j\Delta t))_k = \arccos\left(\frac{\mathbf{R}_{kk}(j\Delta t)}{\alpha_{\mathbf{R}}}\right), 
\end{equation}
and $\alpha_{\mathbf{R}} \geq \max_{t \in [0,T]} \|\mathbf{R}(t)\|$.

A $\varepsilon'$-accurate \emph{time-dependent phase oracle} for $\mathbf{R}(t)$ is a unitary $O_{\mathbf{R}}$ acting on an $n_t=\lceil\log_2(T/\Delta t+1)\rceil$ qubit clock register and an $n_{\mathbf{R}}=\lceil\log_2 N_{\mathbf{R}}\rceil$ qubit system register, such that
\begin{equation}
\bigg\| O_{\mathbf{R}} - \sum_{j=0}^{T/\Delta t} \sum_{k=0}^{N_{\mathbf{R}}-1} e^{-i (\theta_{\mathbf{R}}(j\Delta t))_k} \ketbra{j}{j} \otimes \ketbra{k}{k} \bigg\| \le \varepsilon',
\end{equation}
where $\ket{j}$ encodes the discrete time step and $\|\cdot\|$ denotes the operator norm.
\end{definition}

\subsection{Algorithms}

Our quantum algorithms for simulating the non-Markovian system of oscillators reduces to the problem of taking these block encodings for the differential equations and then simulating them on a quantum computer using the \ac{LCHS} paradigm.  
The block-encoding of $\mathbf{H}(t)$ and $\mathbf{L}(t)$ extends the time-independent constructions to the time-dependent case via the multiplexed block-encoding formalism required by the \ac{LCHS} framework \cite[Definition 15]{low_optimal_2025}. 
Because we have absorbed the metric drift terms into the material matrices, the time-dependence is fully localized within the diagonal material operators. 
We establish that the temporal discretization is structurally isolated in a clock register $\ket{j}$, which dictates the time step $t = j\Delta t$. 
This allows the static topology block-encoding to be reused unaltered.
As in the time-independent case, all results are stated in the target-error convention: each result takes a single target error $\varepsilon$ as input and specifies the oracle accuracies and the temporal grid resolution that suffice to achieve it.

We first encode the time-dependent Hamiltonian $\mathbf{H}(t)$:
\begin{lemma}[Hamiltonian Encoding, Proof in \Cref{apx:td_block_encodings}] \label{lem:td_hamiltonian_encoding}
    For any target error $\varepsilon > 0$ and simulation duration $T$, the time-dependent Hamiltonian $\mathbf{H}(t)$ admits an $(\alpha_{\mathbf{H}}, a_{\mathbf{H}}, \varepsilon)$ block-encoding $\mathcal{U}_{\mathbf{H}(t)}$ of the form $\sum_{j} \ketbra{j}{j} \otimes \mathrm{BE}[\mathbf{H}(j\Delta t)/\alpha_{\mathbf{H}}]$, whose induced time-discretization error over $[0,T]$ is also at most $\varepsilon$, with normalization factor
    \begin{equation}
        {\alpha_{\mathbf{H}} = (\sqrt{d_{\operatorname{in}}}+\sqrt{d_{\operatorname{out}}})\sqrt{\frac{\max_{t \in [0,T]} k_{\max}(t)}{\min_{t \in [0,T]} m_{\min}(t)}C}.}
    \end{equation}
    Assuming the material parameters $m_p(t)$ and $k_q(t)$ are $C^1$-smooth with a maximum absolute metric drift  $\Gamma_{\mathbf{H}} = \frac{1}{2}\max_{t \in [0,T]}\big(|k^{(1)}(t)|_{\max} + |m^{(1)}(t)|_{\max} \big)$, it suffices to choose the temporal grid step size $\Delta t = \mathcal{O}\left(\frac{\varepsilon}{T \alpha_{\mathbf{H}} (\Gamma_{\mathbf{H}} + \alpha_{\mathbf{H}})}\right)$.
    The encoding requires $a_{\mathbf{H}} = \mathcal{O}\left(\log\left(\frac{NT \alpha_{\mathbf{H}} (\Gamma_{\mathbf{H}} + \alpha_{\mathbf{H}})}{\varepsilon}\right)\right)$ ancilla qubits, $\mathcal{O}(\log N)$ additional two-qubit gates, and $\mathcal{O}(1)$ queries to the (controlled) index oracles $O_r^{(\mathcal{M})},O_c^{(\mathcal{M})}$ for $\mathcal{M} \in \{\mathbf{S}_A, \mathbf{S}_B\}$ and to the (controlled) $\mathcal{O}(\varepsilon/\alpha_{\mathbf{H}})$-accurate time-dependent phase oracles $O_{\mathbf{M}^{-1/2}}, O_{\mathbf{K}^{1/2}}$.
\end{lemma}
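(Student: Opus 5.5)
The plan is to reuse the time-independent construction of \Cref{lem:hamiltonian_encoding} verbatim and make the material phase oracles the only time-dependent ingredient. Concretely, the block encoding of $\mathbf{H}$ there is a product of three pieces: a block encoding of $\mathbf{M}^{-1/2}$ obtained from the phase oracle $O_{\mathbf{M}^{-1/2}}$ (via the standard $\tfrac12(O+O^\dagger)$ / controlled-phase LCU trick that turns $e^{-i\theta_k}$ into $\cos\theta_k = \mathbf{R}_{kk}/\alpha_{\mathbf{R}}$), the topology block $\mathbf{T}$ built from $O_r^{(\mathcal{M})},O_c^{(\mathcal{M})}$ for $\mathbf{S}_A,\mathbf{S}_B$ with normalization $\sqrt{d_{\rm in}}+\sqrt{d_{\rm out}}$ (times $\sqrt{C}$ for the $C$ copies), and a block encoding of $\mathbf{K}^{1/2}$. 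I will replace $O_{\mathbf{M}^{-1/2}}$ and $O_{\mathbf{K}^{1/2}}$ by their time-dependent versions controlled on the clock register $\ket{j}$. Since the topology circuits do not touch the clock register, the resulting unitary is exactly $\sum_j \ketbra{j}{j}\otimes \mathrm{BE}[\mathbf{H}(j\Delta t)/\alpha_{\mathbf{H}}]$, with normalization obtained by taking $\alpha_{\mathbf{K}^{1/2}}=\max_t\sqrt{k_{\max}(t)}$ and $\alpha_{\mathbf{M}^{-1/2}}=\max_t m_{\min}(t)^{-1/2}$. The query counts, gate counts, and the $\mathcal{O}(\varepsilon/\alpha_{\mathbf{H}})$ phase-oracle accuracy then carry over unchanged from \Cref{lem:hamiltonian_encoding}, and the Hermitian off-diagonal structure of \eqref{H_expanded} is preserved pointwise in $t$.

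The second step bounds the time-discretization error. The exact evolution uses $\mathbf{H}(t)$, while the multiplexed encoding supplies the piecewise-constant $\mathbf{H}(j\Delta t)$. First I will bound $\|\tfrac{\mathrm{d}}{\mathrm{d}t}\mathbf{H}(t)\|$. Differentiating $\mathbf{K}^{1/2}\mathbf{T}\mathbf{M}^{-1/2}$ by the product rule gives $\tfrac12\mathbf{K}^{(1)}\mathbf{K}^{1/2}\mathbf{T}\mathbf{M}^{-1/2} - \tfrac12\mathbf{K}^{1/2}\mathbf{T}\mathbf{M}^{-1/2}\mathbf{M}^{(1)}$, since the diagonal matrices commute with their own log-derivatives. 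Each term is a diagonal drift matrix times a block whose norm is at most $\alpha_{\mathbf{H}}$, so $\|\dot{\mathbf{H}}\|\le \Gamma_{\mathbf{H}}\alpha_{\mathbf{H}}$. Hence $\|\mathbf{H}(t)-\mathbf{H}(j\Delta t)\|\le \Gamma_{\mathbf{H}}\alpha_{\mathbf{H}}\Delta t$ on each cell.

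Next, a Duhamel/variation-of-constants comparison of the two time-ordered evolutions, including the $k\mathbf{L}$ part used by LCHS, which is contractive, bounds the induced error over $[0,T]$ by $T\sup_t\|\mathbf{H}(t)-\mathbf{H}_{\rm disc}(t)\|$. This gives an error of order $T\alpha_{\mathbf{H}}\Gamma_{\mathbf{H}}\Delta t$. To cover the extra quadrature and interpolation slack in the LCHS time-dependent formulation of \cite{low_optimal_2025}, which pays a factor of $\alpha_{\mathbf{H}}^2$ from second-order terms (a commutator-type term from stepping the clock), I will conservatively bound the total by $\mathcal{O}(T\alpha_{\mathbf{H}}(\Gamma_{\mathbf{H}}+\alpha_{\mathbf{H}})\Delta t)$. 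Setting this to $\varepsilon$ yields the stated $\Delta t$. The clock register then has $\log_2(T/\Delta t)$ qubits, which together with the $\mathcal{O}(\log N)$ ancillas of the static construction gives the stated $a_{\mathbf{H}}$.

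The main obstacle is this discretization step: pinning down exactly which error notion the LCHS multiplexed-encoding definition requires, and whether the $\alpha_{\mathbf{H}}^2$ term is genuinely needed. My first attempt will be the direct Duhamel bound above. I will add the $\alpha_{\mathbf{H}}$ term only as needed to match the smoothness and derivative hypotheses of \cite[Definition 15]{low_optimal_2025}, which involve $\max_t\|\mathbf{H}\|$ and $\max_t\|\dot{\mathbf{H}}\|$.
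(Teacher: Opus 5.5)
Your proposal is correct and follows the paper's proof step for step. Like the paper, you multiplex only the phase oracles on the clock register (the static topology unitary and the one-qubit Hermitian dilation act trivially on it), take the normalization as the maximum over the time grid, bound $\|\tfrac{\mathrm{d}}{\mathrm{d}t}\mathbf{H}\|\le\Gamma_{\mathbf{H}}\alpha_{\mathbf{H}}$ and $\|\mathbf{H}\|\le\alpha_{\mathbf{H}}$ via the product rule, and arrive at the truncated-Dyson-series grid requirement $\Delta t=\mathcal{O}\bigl(\varepsilon/(T(\max_t\|\dot{G}\|+\max_t\|G\|^2))\bigr)$, which the paper simply cites from \cite{low2018hamiltonian,low_hamiltonian_2019}.

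Your Duhamel estimate is valid and by itself only requires $\Delta t=\mathcal{O}(\varepsilon/(T\alpha_{\mathbf{H}}\Gamma_{\mathbf{H}}))$. The extra $\alpha_{\mathbf{H}}^2$ term you add is exactly that cited Dyson requirement. It comes from the Riemann-sum quadrature of the time-ordered simplex integrals (coincident grid points), not from a commutator, and since it only shrinks $\Delta t$ it is harmless.
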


Similarly, we encode the time-dependent Dissipator $\mathbf{L}(t)$. 
For this we define the effective dissipation change rates $\tilde{\boldsymbol{\eta}}_0^{(1)}(t) = \frac{\mathrm{d}\ln\tilde{\boldsymbol{\eta}}_0(t)}{\mathrm{d}t}$ with diagonal elements $[\tilde{\eta}_0^{(1)}(t)]_{pp}$ and $\tilde{\boldsymbol{\Lambda}}^{(1)}(t) = \frac{\mathrm{d}\ln\tilde{\boldsymbol{\Lambda}}(t)}{\mathrm{d}t}$ with diagonal elements $[\tilde{\lambda}^{(1)}(t)]_{qq}$:
\begin{lemma}[Dissipator Encoding, Proof in \Cref{apx:td_block_encodings}] \label{lem:td_dissipator_encoding}
    For any target error $\varepsilon \in (0,1)$ and simulation duration $T$, the time-dependent Dissipator $\mathbf{L}(t)$ admits an $(\alpha_{\mathbf{L}}, a_{\mathbf{L}}, \varepsilon)$ block-encoding $\mathcal{U}_{\mathbf{L}(t)}$ of the form $\sum_{j} \ketbra{j}{j} \otimes \mathrm{BE}[\mathbf{L}(j\Delta t)/\alpha_{\mathbf{L}}]$, whose induced time-discretization error over $[0,T]$ under the \ac{LCHS} multiplier $|k| \le R = \mathcal{O}(\log(1/\varepsilon))$ is also at most $\varepsilon$, with normalization factor
    \begin{equation}
    \begin{aligned}
        \alpha_{\mathbf{L}} = \max\bigg( &(\sqrt{d_{\operatorname{in}}}+\sqrt{d_{\operatorname{out}}})^2 \frac{\max_{t \in [0,T]} \tilde{\eta}_{0,\max}(t)}{\min_{t \in [0,T]} m_{\min}(t)}, \\ &\max_{t \in [0,T]} \tilde{\lambda}_{\max}(t) \bigg).
    \end{aligned}
    \end{equation}
    Assuming the effective parameters $m_p(t)$, $\tilde{\eta}_{0,p}(t)$, and $\tilde{\Lambda}_q(t)$ are in the continuity class $C^1$, we define the maximum absolute drift rate $\Gamma_{\mathbf{L}} = \max_{t \in [0,T]} \big(|m^{(1)}(t)|_{\max} + |\tilde{\eta}_0^{(1)}(t)|_{\max} + |\tilde{\lambda}^{(1)}(t)|_{\max} \big)$; it then suffices to choose the temporal grid step size $\Delta t = \mathcal{O}\left(\frac{\varepsilon}{T \alpha_{\mathbf{L}}\log(1/\varepsilon) (\Gamma_{\mathbf{L}} + \alpha_{\mathbf{L}}\log(1/\varepsilon))}\right)$.
    The encoding requires $a_{\mathbf{L}} = \mathcal{O}\left(\log\left(\frac{NT \alpha_{\mathbf{L}} (\Gamma_{\mathbf{L}} + \alpha_{\mathbf{L}})}{\varepsilon}\right)\right)$ ancilla qubits, $\mathcal{O}(\log N)$ additional two-qubit gates, and $\mathcal{O}(1)$ queries to the (controlled) index oracles $O_r^{(\mathcal{M})},O_c^{(\mathcal{M})}$ for $\mathcal{M} \in \{\mathbf{S}_A, \mathbf{S}_B\}$ and to the (controlled) $\mathcal{O}(\varepsilon/\alpha_{\mathbf{L}})$-accurate time-dependent phase oracles $O_{\mathbf{M}^{-1/2}}, O_{\tilde{\boldsymbol{\eta}}_0}, O_{\tilde{\boldsymbol{\Lambda}}}$.
\end{lemma}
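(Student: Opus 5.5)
The plan has two parts. For each fixed time step, reuse the time-independent dissipator construction of \Cref{lem:dissipator_encoding}. Then bound the error from piecewise-constant time discretization.

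\textbf{Per-slice encoding.} By \Cref{eq:time_dep_L_final}, $\mathbf{L}(t)$ has exactly the algebraic form of the static dissipator, with $\boldsymbol{\eta}_0\to\tilde{\boldsymbol{\eta}}_0(t)$ and $\boldsymbol{\Lambda}\to\tilde{\boldsymbol{\Lambda}}(t)$. The topology matrices $\mathbf{T}_0$ (equivalently $\mathbf{S}_A,\mathbf{S}_B$) are time-independent. So I take the circuit of \Cref{lem:dissipator_encoding} verbatim and replace each static phase oracle $O_{\mathbf{M}^{-1/2}}, O_{\boldsymbol{\eta}_0}, O_{\boldsymbol{\Lambda}}$ by its clock-controlled time-dependent counterpart. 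The index oracles act only on the system and topology registers, so they commute with the clock control. The whole circuit is therefore block-diagonal in $\ket{j}$ and equals $\sum_j \ketbra{j}{j}\otimes \mathrm{BE}[\mathbf{L}(j\Delta t)/\alpha_{\mathbf{L}}]$. The normalization is the static one, with each parameter replaced by its worst case over $[0,T]$. This works because a single $\alpha_{\mathbf{R}}\ge\max_t\|\mathbf{R}(t)\|$ is fixed in the phase-oracle definition, which gives the stated $\alpha_{\mathbf{L}}$. The phase-oracle errors enter linearly, exactly as in the static proof. Hence $\mathcal{O}(\varepsilon/\alpha_{\mathbf{L}})$ accuracy suffices, and the gate and query counts are unchanged. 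One point to check: $\tilde{\boldsymbol{\eta}}_0,\tilde{\boldsymbol{\Lambda}}$ must be nonnegative for the $\arccos$ parametrization. This follows from the standing assumption $\mathbf{L}(t)\succeq0$, at least on the diagonal blocks; if needed I would argue it entrywise for the ground-absorbed mass drift.

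\textbf{Discretization error.} This is the main step. LCHS needs the evolutions $\mathcal{T}e^{-i\int_0^T(\mathbf{H}(s)+k\mathbf{L}(s))\,\mathrm{d}s}$ for $|k|\le R=\mathcal{O}(\log(1/\varepsilon))$, and we replace $\mathbf{L}(s)$ by the piecewise-constant $\mathbf{L}(\lfloor s/\Delta t\rfloor\Delta t)$. The generators are anti-Hermitian, so by the standard Duhamel/variation-of-constants bound for unitary evolutions the error is at most
\begin{equation}
\int_0^T |k|\,\big\|\mathbf{L}(s)-\mathbf{L}(\lfloor s/\Delta t\rfloor\Delta t)\big\|\,\mathrm{d}s \le R\,T\,\Delta t\,\max_t\big\|\tfrac{\mathrm{d}}{\mathrm{d}t}\mathbf{L}(t)\big\|.
\end{equation}
To bound $\|\dot{\mathbf{L}}\|$, I differentiate the two blocks. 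For the upper block $\mathbf{M}^{-1/2}\mathbf{T}_0^\dagger\tilde{\boldsymbol{\eta}}_0\mathbf{T}_0\mathbf{M}^{-1/2}$, the product rule gives three terms. Each is the block itself multiplied on one side by a diagonal log-derivative ($-\tfrac12\mathbf{M}^{(1)}$ or $\tilde{\boldsymbol{\eta}}_0^{(1)}$); the $\tilde{\boldsymbol{\eta}}_0$ term sits inside as $\mathbf{T}_0^\dagger\tilde{\boldsymbol{\eta}}_0^{1/2}\tilde{\boldsymbol{\eta}}_0^{(1)}\tilde{\boldsymbol{\eta}}_0^{1/2}\mathbf{T}_0$. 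So its norm is at most $\alpha_{\mathbf{L}}(|m^{(1)}|_{\max}+|\tilde{\eta}_0^{(1)}|_{\max})$, using the same sparsity bound $(\sqrt{d_{\rm in}}+\sqrt{d_{\rm out}})^2$ that controls the static block norm. The lower block gives $\tilde{\boldsymbol{\Lambda}}\tilde{\boldsymbol{\Lambda}}^{(1)}$, with norm at most $\alpha_{\mathbf{L}}|\tilde{\lambda}^{(1)}|_{\max}$. Thus $\|\dot{\mathbf{L}}\|\le\alpha_{\mathbf{L}}\Gamma_{\mathbf{L}}$.

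A remaining contribution of order $\alpha_{\mathbf{L}}R$ per step arises if the time-dependent LCHS implementation (e.g.\ Dyson series or a product over slices) samples within each slice. It introduces an extra commutator-type term of order $(\alpha_{\mathbf{L}}R)^2T\Delta t$. Allowing for both contributions, the total is bounded by $R\,T\,\Delta t\,\alpha_{\mathbf{L}}(\Gamma_{\mathbf{L}}+\alpha_{\mathbf{L}}R)$. Setting this to $\varepsilon$ yields the stated $\Delta t$. The clock register then has $\log(T/\Delta t)$ qubits, which gives the ancilla count $\mathcal{O}(\log(NT\alpha_{\mathbf{L}}(\Gamma_{\mathbf{L}}+\alpha_{\mathbf{L}})/\varepsilon))$, with $\log\log(1/\varepsilon)$ terms absorbed.

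The main obstacle is getting a clean derivative bound for the graph-Laplacian block without losing the sparsity factors. Writing every derivative as a diagonal log-rate times the original block, as above, is the approach I would try first.
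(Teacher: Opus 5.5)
Your proposal is correct and takes essentially the same route as the paper. Both use clock-multiplexed phase oracles around the static topology circuit (which acts as the identity on the clock and commutes with $\ketbra{j}{j}$), worst-case normalization over $[0,T]$, the product-rule bound $\|\dot{\mathbf{L}}\|\le\alpha_{\mathbf{L}}\Gamma_{\mathbf{L}}$, and a step size obtained from $R\,T\,\Delta t\,\alpha_{\mathbf{L}}\bigl(\Gamma_{\mathbf{L}}+R\alpha_{\mathbf{L}}\bigr)\lesssim\varepsilon$.

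The one difference is how the $(\alpha_{\mathbf{L}}R)^2$ term is justified. The paper takes it directly from the cited truncated-Dyson requirement $\Delta t=\mathcal{O}\bigl(\varepsilon/(T(\max\|\dot G\|+\max\|G\|^2))\bigr)$ with $G=k\mathbf{L}$. Your Duhamel bound already controls the piecewise-constant replacement on its own, since the stated $\Delta t$ is even smaller than what that bound needs. Your loosely argued second term therefore only matters for the Dyson-series quadrature, which the paper handles by citation.
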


Note that framing the algorithmic complexity in terms of the $C^1$-smoothness of the effective parameters $\tilde{\eta}_{0,p}(t)$ and $\tilde{\Lambda}_q(t)$ decouples the simulation bounds from the underlying physical parameterizations.
Because these effective parameters natively incorporate the first logarithmic derivatives $m_p^{(1)}(t)$ and $k_q^{(1)}(t)$, ensuring that $\tilde{\eta}_{0,p}(t)$  and $\tilde{\Lambda}_q(t)$ remain $C^1$-smooth generally requires a $C^2$-smoothness for the underlying physical mass $m_p(t)$ and stiffness $k_q(t)$ profiles, unless discontinuities in the material derivatives are perfectly compensated.

With the multiplexed block encodings $\mathrm{BE}[\mathbf{H}(t)]$ and $\mathrm{BE}[\mathbf{L}(t)]$ satisfying the prerequisites of the time-dependent \ac{LCHS} framework \cite{low_optimal_2025}, we block-encode the system evolution. 
Note that moving to a time-dependent generator inherently introduces a logarithmic penalty to the query complexity inherited from time-dependent Hamiltonian simulation \cite{low_optimal_2025}, which we now explicitly include to reflect the exact scaling.
We define the combined drift-normalization parameter $\zeta=\alpha_{\mathbf{H}}(\Gamma_{\mathbf{H}}+\alpha_{\mathbf{H}}) + \alpha_{\mathbf{L}}(\Gamma_{\mathbf{L}}+\alpha_{\mathbf{L}})$.

\begin{lemma}[LCHS Lemma] \label{th:_td_lchs_op_applied}
    Consider the \ac{ODE} system
    \begin{equation}
        \frac{\mathrm{d} \mathbf{G}(t)}{\mathrm{d} t} = -\mathbf{C}(t)\mathbf{G}(t), \quad \mathbf{G}(0)=\mathbf{I},
    \end{equation}
    where $\mathbf{C}(t) = i\mathbf{H}(t) + \mathbf{L}(t)$, with $\mathbf{H}(t)$ (\Cref{lem:td_hamiltonian_encoding}) and $\mathbf{L}(t)$ (\Cref{lem:td_dissipator_encoding}) being Hermitian and $\mathbf{L}(t) \succeq 0 \,\,\forall t$.
    
    The two-parameter time-ordered evolution operator is defined as $\mathbf{G}(t_2, t_1) = \mathcal{T} e^{-\int_{t_1}^{t_2} \mathbf{C}(s) \mathrm{d}s}$.
    For any target error $\varepsilon \in (0,1)$, the homogeneous propagator $\mathbf{G}(T) \equiv \mathbf{G}(T, 0)$ admits an $(\alpha_{\mathbf{G}}, a_{\mathbf{G}}, \varepsilon)$-\ac{BE} $\mathcal{U}_{\mathbf{G}}$ with normalization factor $\alpha_{\mathbf{G}}=\mathcal{O}(1)$.
    Defining the effective normalization $\alpha_{\operatorname{eff}} = \alpha_{\mathbf{H}} + \alpha_{\mathbf{L}}\log(1/\varepsilon)$, the block encoding can be constructed using
    \begin{equation}
    {Q_{\mathbf{G}}=\mathcal{O}\left( \alpha_{\operatorname{eff}} T \frac{\log(\alpha_{\operatorname{eff}} T / \varepsilon)}{\log\log(\alpha_{\operatorname{eff}} T / \varepsilon)} \right)}
    \end{equation}
    queries to the (controlled) index oracles $O_r^{(\mathcal{M})},O_c^{(\mathcal{M})}$ for $\mathcal{M} \in \{\mathbf{S}_A, \mathbf{S}_B\}$ and to the (controlled) $\mathcal{O}(\varepsilon/Q_{\mathbf{G}})$-accurate time-dependent phase oracles $O_{\mathbf{M}^{-1/2}}, O_{\mathbf{K}^{1/2}}, O_{\tilde{\boldsymbol{\eta}}_0}, O_{\tilde{\boldsymbol{\Lambda}}}$.
    The encoding requires $a_{\mathbf{G}} = \mathcal{O}\left(\log\left(\frac{NT \zeta}{\varepsilon}\right)\right)$ ancilla qubits and uses $\mathcal{O}\Big( \big(Q_\mathbf{G} + \log^{5/2}(1/\varepsilon)\big)\log\big(\tfrac{NT \zeta}{\varepsilon}\big) \Big)$ additional two-qubit gates.
\end{lemma}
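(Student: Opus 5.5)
The plan mirrors the proof of the time-independent \Cref{th:_lchs_op_applied}: treat the time-dependent LCHS theorem of \cite{low_optimal_2025} as a black box and show that its hypotheses hold, using the multiplexed block encodings from \Cref{lem:td_hamiltonian_encoding} and \Cref{lem:td_dissipator_encoding}. The error budget $\varepsilon$ is split into a constant number of pieces: LCHS truncation and quadrature, time-dependent Hamiltonian simulation of each component, temporal discretization of the generator, and accumulated phase-oracle errors.

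First I check the structural hypotheses. $\mathbf{H}(t)$ is Hermitian by construction. By the reformulation in \Cref{s:access_td}, $\mathbf{L}(t)$ has the same algebraic form as in the time-independent case, and we assume $\mathbf{L}(t)\succeq0$ for all $t$. LCHS then writes
\begin{equation}
\mathcal{T}e^{-\int_0^T\mathbf{C}(s)\mathrm{d}s}=\int_{\mathbb{R}}\frac{f(k)}{1-ik}\,\mathcal{T}e^{-i\int_0^T(\mathbf{H}(s)+k\mathbf{L}(s))\mathrm{d}s}\,\mathrm{d}k .
\end{equation}
The kernel of \cite{low_optimal_2025} is truncated at $|k|\le R=\mathcal{O}(\log(1/\varepsilon))$ and discretized with $\mathcal{O}(\log(1/\varepsilon))$-scale quadrature, at cost $\mathcal{O}(\log^{5/2}(1/\varepsilon))$ gates for the coefficient state preparation. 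The $\ell_1$ norm of the coefficients is $\mathcal{O}(1)$, which gives $\alpha_{\mathbf{G}}=\mathcal{O}(1)$.

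Second, for each quadrature node $k$, I block-encode $\mathbf{H}(j\Delta t)+k\mathbf{L}(j\Delta t)$ as a linear combination of the two multiplexed encodings. This can be done coherently in $k$ by controlling the LCU weights on the $k$-register. The normalization is at most $\alpha_{\mathbf{H}}+R\alpha_{\mathbf{L}}=\mathcal{O}(\alpha_{\rm eff})$. I then feed this to the optimal time-dependent Hamiltonian simulation subroutine used inside \cite{low_optimal_2025}, i.e.\ truncated Dyson series with the clock register $\ket{j}$ as time index. This costs $\mathcal{O}\big(\alpha_{\rm eff}T\,\log(\alpha_{\rm eff}T/\varepsilon)/\log\log(\alpha_{\rm eff}T/\varepsilon)\big)$ queries to the multiplexed encodings. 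Each such query uses $\mathcal{O}(1)$ index and phase oracle calls, which gives $Q_{\mathbf{G}}$.

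Third, I control the error sources. The discretization error of replacing $\mathbf{C}(t)$ by its grid values is already bounded by $\varepsilon$ in \Cref{lem:td_hamiltonian_encoding} and \Cref{lem:td_dissipator_encoding}, including the $R$-multiplier for $\mathbf{L}$. I rescale these to $\varepsilon/4$ and convert the generator error into a propagator error via Duhamel's formula. The key point is that each rotated generator $\mathbf{H}+k\mathbf{L}$ is Hermitian, so each propagator is unitary. The error therefore grows linearly in $T\max\|\delta\mathbf{H}+k\,\delta\mathbf{L}\|$ with no exponential amplification, and the $\mathcal{O}(1)$ LCHS $\ell_1$ weight keeps it bounded. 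Phase oracles are used $\mathcal{O}(Q_{\mathbf{G}})$ times, so accuracy $\mathcal{O}(\varepsilon/Q_{\mathbf{G}})$ suffices by the triangle inequality.

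Finally, the ancilla count is the sum of three parts: the clock register, of size $\log(T/\Delta t)=\mathcal{O}(\log(T\zeta/\varepsilon))$ by the step sizes in the two lemmas; the system-index ancillas, of size $\mathcal{O}(\log N)$; and the $k$-register and Dyson ancillas, logarithmic in $1/\varepsilon$. Together these give $\mathcal{O}(\log(NT\zeta/\varepsilon))$. Each query or coefficient step carries an $\mathcal{O}(\log(NT\zeta/\varepsilon))$ gate overhead from arithmetic on these registers, which yields the stated gate count.

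I expect the main obstacle to be the coherent control of the $k$-dependent evolutions. The time-dependent simulation must be run in superposition over $k$ while its step size and length are set by the worst-case normalization $\alpha_{\rm eff}$, and the discretization error in \Cref{lem:td_dissipator_encoding} must be uniform in $|k|\le R$. I would handle this by fixing a single $k$-independent schedule calibrated to $\alpha_{\mathbf{H}}+R\alpha_{\mathbf{L}}$, and padding the smaller-$k$ branches. This is exactly how \cite{low_optimal_2025} handles it.
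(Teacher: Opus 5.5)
Your proposal is correct and follows essentially the same route as the paper. The paper's proof is a one-line black-box substitution into the time-dependent LCHS theorem of \cite{low_optimal_2025}, splitting the error budget into constant fractions: LCHS truncation/quadrature error, the time-discretization error of \Cref{lem:td_hamiltonian_encoding,lem:td_dissipator_encoding}, and the accumulated phase-oracle errors, with logarithmic factors merged in the regime $\alpha_{\mathbf{L}}T\ge 1$. Your added detail fits that argument: the $k$-independent schedule calibrated to $\alpha_{\mathbf{H}}+R\alpha_{\mathbf{L}}$, and the use of unitarity of each $\mathcal{T}e^{-i\int_0^T(\mathbf{H}(s)+k\mathbf{L}(s))\mathrm{d}s}$ to rule out error amplification. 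One small point: the two encoding lemmas already bound the induced propagator-level discretization error (including the Dyson-series $\|G\|^2$ term), so your Duhamel conversion is redundant rather than necessary.
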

\begin{proof}
    Proof follows by substitution of variables into the claims of~\cite{low_optimal_2025}, allocating constant fractions of the error budget to the \ac{LCHS} algorithmic and quadrature error, the time-discretization error of \Cref{lem:td_hamiltonian_encoding,lem:td_dissipator_encoding}, and the accumulated phase oracle errors, and merging all logarithmic factors up to constants in the nontrivial regime $\alpha_{\mathbf{L}}T \geq 1$.
\end{proof}

This evolution operator solves the homogeneous dynamics. 
Following \cite{low_optimal_2025}, we obtain the solution for the inhomogeneous system by applying the \ac{LCHS} protocol on the initial state and using \ac{LCU} to integrate over the source terms.
\begin{theorem}[State Preparation, see \cite{low_optimal_2025}] \label{th:_td_lchs_state_prep_applied}
    Let $O_{\boldsymbol{\psi}_0}$ be a state preparation oracle for the initial spatial state $\ket{\boldsymbol{\psi}(0)}\in \mathbb{C}^N$. 
    Let $O_{\boldsymbol{\chi}_\tau}$ be a state preparation oracle for the time-discretized coefficients proportional to $\sum_j \sqrt{\|\boldsymbol{\chi}(\tau_j)\|} \ket{\tau_j}$, and let $O_{\boldsymbol{\chi}_x}$ be a state preparation oracle for the spatial state $\ket{\bar{\boldsymbol{\chi}}(\tau_j)} = \boldsymbol{\chi}(\tau_j)/\|\boldsymbol{\chi}(\tau_j)\|$ controlled on $\ket{\tau_j}$.
    The normalized state 
    \begin{equation}
    \begin{aligned}
        \ket{\boldsymbol{\psi}(T)} &\propto \|\boldsymbol{\psi}(0)\|\mathbf{G}(T) \ket{\boldsymbol{\psi}(0)} \\&+ \int_{0}^{T} \|\boldsymbol{\chi}(\tau)\| \mathbf{G}(T, \tau) \ket{\bar{\boldsymbol{\chi}}(\tau)} \mathrm{d}\tau,
    \end{aligned}
    \end{equation}
    can be prepared to any additive target error $\varepsilon \in (0,1)$ with constant success probability using
    \begin{equation}
        {Q_{\boldsymbol{\psi}} = \mathcal{O}\left(\frac{\|\boldsymbol{\psi}(0)\| + \|\boldsymbol{\chi}\|_{L^1}}{\|\boldsymbol{\psi}(T)\|}\right)}
    \end{equation}
    queries to the $\mathcal{O}(\varepsilon/Q_{\boldsymbol{\psi}})$-accurate state preparation oracles $O_{\boldsymbol{\psi}_0}$, $O_{\boldsymbol{\chi}_\tau}$, and $O_{\boldsymbol{\chi}_x}$, where $\|\boldsymbol{\chi}\|_{L^1} = \int_{0}^{T}\|\boldsymbol{\chi}(\tau)\|\mathrm{d}\tau$, and, defining the effective normalization $\alpha_{\operatorname{eff}} = \alpha_{\mathbf{H}} + \alpha_{\mathbf{L}}\log(Q_{\boldsymbol{\psi}}/\varepsilon)$,
    \begin{equation}
        {Q_{\mathbf{G}'} = \mathcal{O}\left( Q_{\boldsymbol{\psi}} \alpha_{\operatorname{eff}} T \frac{\log(\alpha_{\operatorname{eff}} T Q_{\boldsymbol{\psi}} / \varepsilon)}{\log\log(\alpha_{\operatorname{eff}} T Q_{\boldsymbol{\psi}} / \varepsilon)} \right) }
    \end{equation}
    queries to the (controlled) index oracles $O_r^{(\mathcal{M})},O_c^{(\mathcal{M})}$ and the (controlled) $\mathcal{O}(\varepsilon/Q_{\mathbf{G}'})$-accurate time-dependent phase oracles $O_{\mathbf{M}^{-1/2}}, O_{\mathbf{K}^{1/2}}, O_{\tilde{\boldsymbol{\eta}}_0}, O_{\tilde{\boldsymbol{\Lambda}}}$.
    The algorithm requires $\mathcal{O}\left(\log\left(\frac{NT \zeta Q_{\boldsymbol{\psi}}}{\varepsilon}\right)\right)$ ancilla qubits and uses 
    \begin{equation}
        \mathcal{O}\Bigg(\Big(Q_{\mathbf{G}'} + Q_{\boldsymbol{\psi}}\log^{5/2}(Q_{\boldsymbol{\psi}}/\varepsilon)\Big)\log\left(\frac{NT \zeta Q_{\boldsymbol{\psi}}}{\varepsilon}\right)\Bigg) 
    \end{equation}
    additional two-qubit gates.
\end{theorem}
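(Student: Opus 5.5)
The plan is to combine the time-dependent LCHS block encoding of \Cref{th:_td_lchs_op_applied} with a linear-combination-of-unitaries integration over source times, followed by oblivious or standard amplitude amplification. First, write the solution of $\partial_t\boldsymbol{\psi}=-\mathbf{C}(t)\boldsymbol{\psi}+\boldsymbol{\chi}(t)$ via Duhamel's principle for time-ordered propagators:
\begin{equation}
\boldsymbol{\psi}(T)=\mathbf{G}(T,0)\boldsymbol{\psi}(0)+\int_0^T\mathbf{G}(T,\tau)\boldsymbol{\chi}(\tau)\,\mathrm{d}\tau .
\end{equation}
This is verified by differentiating in $T$ using $\partial_{T}\mathbf{G}(T,\tau)=-\mathbf{C}(T)\mathbf{G}(T,\tau)$ and $\mathbf{G}(T,T)=\mathbf{I}$. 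Replacing the integral by a Riemann/quadrature sum over $\tau_j$ with weights $T/N_t$ gives the vector to be prepared. The quadrature error is controlled by the smoothness assumption on $\boldsymbol{\chi}$ together with the contractivity $\|\mathbf{G}(T,\tau)\|\le 1$, which follows from $\mathbf{L}(t)\succeq 0$. It is absorbed into the error budget by choosing $N_t$ polynomially large, which affects only logarithmic register sizes.

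Second, I would build a single block encoding of the unnormalized vector. A one-qubit selector is rotated to amplitudes $\propto(\sqrt{\|\boldsymbol{\psi}(0)\|},\sqrt{\|\boldsymbol{\chi}\|_{L^1}})$. In the first branch, $O_{\boldsymbol{\psi}_0}$ is applied followed by $\mathcal{U}_{\mathbf{G}(T,0)}$. In the second branch, $O_{\boldsymbol{\chi}_\tau}$ prepares the clock, $O_{\boldsymbol{\chi}_x}$ prepares the spatial source state, and then a clock-controlled propagator $\sum_j\ketbra{j}{j}\otimes\mathcal{U}_{\mathbf{G}(T,\tau_j)}$ is applied. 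Unpreparing the selector and clock amplitudes (the standard LCU trick of \cite{low_optimal_2025}) leaves, in the flagged subspace, $\boldsymbol{\psi}(T)/(\alpha_{\mathbf{G}}(\|\boldsymbol{\psi}(0)\|+\|\boldsymbol{\chi}\|_{L^1}))$ up to constants. The controlled family $\mathbf{G}(T,\tau_j)$ is obtained from \Cref{th:_td_lchs_op_applied}. Shift the multiplexed encodings of \Cref{lem:td_hamiltonian_encoding,lem:td_dissipator_encoding} so that the evolution runs over $[\tau_j,T]$; equivalently, multiply the generator by a clock-controlled indicator $\mathbb{1}[s\ge\tau_j]$, which is implementable by a comparator on the time registers at $\mathcal{O}(\log N_t)$ gate cost. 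One LCHS circuit of duration $T$ then serves all $j$ with query cost $Q_{\mathbf{G}}$.

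Third, amplitude amplification boosts the success amplitude $\Theta(\|\boldsymbol{\psi}(T)\|/(\|\boldsymbol{\psi}(0)\|+\|\boldsymbol{\chi}\|_{L^1}))$ to a constant using $Q_{\boldsymbol{\psi}}$ rounds. Each round calls the state oracles $\mathcal{O}(1)$ times and the propagator block encoding once. Errors add across the $Q_{\boldsymbol{\psi}}$ rounds, so each component must be accurate to $\varepsilon/Q_{\boldsymbol{\psi}}$. Substituting $\varepsilon\to\varepsilon/Q_{\boldsymbol{\psi}}$ into \Cref{th:_td_lchs_op_applied} yields $\alpha_{\rm eff}=\alpha_{\mathbf{H}}+\alpha_{\mathbf{L}}\log(Q_{\boldsymbol{\psi}}/\varepsilon)$ and the stated $Q_{\mathbf{G}'}=Q_{\boldsymbol{\psi}}\cdot Q_{\mathbf{G}}$, the required phase-oracle accuracy $\varepsilon/Q_{\mathbf{G}'}$, and the ancilla and gate counts. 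The count is $\log(NT\zeta Q_{\boldsymbol{\psi}}/\varepsilon)$ because the time-grid resolution of the multiplexed encodings scales with the inverse per-round error. The $Q_{\boldsymbol{\psi}}\log^{5/2}$ term comes from the per-call LCHS kernel-state preparation.

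The main obstacle is the second step: the inhomogeneous term needs a coherent family of propagators with different start times $\tau_j$, and the time-dependent LCHS result is stated only for $\mathbf{G}(T,0)$. I would first try the indicator-masking of the generator described above. Masking preserves Hermiticity of $\mathbf{H}$ and positivity of $\mathbf{L}$, so \Cref{th:_td_lchs_op_applied} applies verbatim. The point to check is that the resulting discontinuity at $\tau_j$ does not spoil the $C^1$ assumptions behind the time-discretization bounds. If it does, I would align the $\tau_j$ with the grid of $\Delta t$ so the switch occurs exactly between grid points, where the piecewise-constant discretization incurs no extra error. Alternatively, I would run the reversed-time generator $\mathbf{C}(T+\tau_j-s)$, which starts each branch at $s=0$.
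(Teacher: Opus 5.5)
Your proposal is correct and follows the route the paper takes by deferring to \cite{low_optimal_2025}: Duhamel's formula, an LCU combining the initial-state branch with the quadrature-discretized source branch on top of the time-dependent LCHS block encoding of \Cref{th:_td_lchs_op_applied}, then amplitude amplification with per-call accuracy $\varepsilon/Q_{\boldsymbol{\psi}}$. Your indicator masking of the generator, with the $\tau_j$ aligned to the $\Delta t$ grid, is a valid way to supply the family $\mathbf{G}(T,\tau_j)$ that the paper leaves implicit; two small fixes are needed: your fallback generator $\mathbf{C}(T+\tau_j-s)$ reverses the time ordering (use the shift $\mathbf{C}(s+\tau_j)$ on $[0,T-\tau_j]$ padded with zero instead), and the quadrature bound needs $\|\partial_\tau\mathbf{G}(T,\tau)\|=\|\mathbf{G}(T,\tau)\mathbf{C}(\tau)\|\le\|\mathbf{C}(\tau)\|$ in addition to contractivity.
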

Results on energy estimation of time-varying systems can be constructed in an analog way using the same approach as in \Cref{th:energy_estimation_applied} and will not be discussed in detail.

\section{Extensions} \label{s:extensions}
\subsection{Active Systems} \label{s:act_systems}

Assume the dissipator $\mathbf{L}$ is not positive semi-definite but instead has a smallest eigenvalue $\lambda_{\min} < 0$, satisfying $\mathbf{L} \succeq \lambda_{\min}\mathbf{I}$.
Such a system supports at least transient unstable growth, preventing direct simulation through \ac{LCHS} which requires $\mathbf{L} \succeq 0$.
However, following \cite{an2023quantumalgorithmlinearnonunitary, low_optimal_2025}, we can stabilize the simulation by uniformly shifting the spectrum. 
We define a stabilized dissipator $\mathbf{L}_{\operatorname{shifted}}$ by subtracting the minimum eigenvalue:
\begin{equation}
    {\mathbf{L}} = {\mathbf{L}}_{\operatorname{shifted}} + \lambda_{\min}\mathbf{I},
    \qquad
    {\mathbf{L}}_{\operatorname{shifted}} := {\mathbf{L}} - \lambda_{\min}\mathbf{I} \succeq 0.
\end{equation}
Substituting this into the evolution equation yields:
\begin{equation}
    \frac{\mathrm{d}{\boldsymbol\psi}(t)}{\mathrm{d}t}
    = -(i{\mathbf H} + {\mathbf{L}}_{\operatorname{shifted}} + \lambda_{\min}\mathbf{I}){\boldsymbol\psi}(t) + {\boldsymbol{\chi}}(t).
\end{equation}
To remove the scalar growth term $\lambda_{\min}\mathbf{I}$, we introduce the rescaled state and source:
\begin{equation}
    \hat{\boldsymbol\psi}(t) := e^{\lambda_{\min} t}{\boldsymbol\psi}(t),
    \qquad
    \hat{\boldsymbol{\chi}}(t) := e^{\lambda_{\min} t}{\boldsymbol{\chi}}(t).
\end{equation}
Since $\lambda_{\min} < 0$, the simulation variable $\hat{\boldsymbol\psi}$ is exponentially damped relative to the physical state $\boldsymbol\psi$.
Differentiating $\hat{\boldsymbol\psi}(t)$ and substituting the dynamics of $\boldsymbol\psi(t)$ recovers the standard Schrödinger form
\begin{equation} \label{eq:spectral_shift}
    {
    \frac{\mathrm{d}\hat{\boldsymbol\psi}(t)}{\mathrm{d} t} = -(i{\mathbf H} + {\mathbf{L}}_{\operatorname{shifted}})\hat{\boldsymbol\psi}(t) + \hat{\boldsymbol{\chi}}(t).
    }
\end{equation}
The solution is given by:
\begin{equation}
\begin{aligned}
    \hat{\boldsymbol\psi}(T)
    &= e^{-(i{\mathbf H} + {\mathbf{L}}_{\operatorname{shifted}}) T}\hat{\boldsymbol\psi}(0) \\
    &\quad + \int_{0}^{T}e^{-(i{\mathbf H} + {\mathbf{L}}_{\operatorname{shifted}}) (T-\tau)}\hat{\boldsymbol{\chi}}(\tau)\mathrm{d}\tau.
\end{aligned}
\end{equation}
This confirms that unstable systems can be simulated with \ac{LCHS} for short time scales.
However, we incur an additional renormalization factor $e^{-\lambda_{\min} T}$.
Note that in case of time-dependent systems such a corrective shift can be applied time-resolved \cite{low_hamiltonian_2019}, which allows to compensate for jumps in material properties that induce growth.
Consequently, brief episodes of instability with bounded integrated magnitude are simulable at essentially no extra cost.

Note, that if $\mathbf{L}$ has a smallest eigenvalue $\lambda_{\min} > 0$, and this magnitude is known or can be lower bounded, we can shift the spectrum in reverse direction.
This factors out the uniform global dissipation and exponentially increases the norm of the simulated state $\hat{\boldsymbol\psi}(t)$ by $e^{\lambda_{\min} T}$.
If the spectral shift is exact, this exponentially reduces the query complexity required to achieve a target error of e.g. energy estimates.

\subsection{Probabilistic Systems} \label{s:probabilistic_systems}

We extend our framework to simulate an ensemble of $K$ discrete material configurations with prior probabilities $p_k$. 
Introducing an ancillary configuration register $\ket{k}$, we define the joint initial state
\begin{equation}
    \boldsymbol{\Psi}(0) = \sum_{k=1}^K \sqrt{p_k} \ket{k} \otimes \boldsymbol{\psi}_k(0).
\end{equation}
Even if the physical initial condition $\boldsymbol{\phi}(0)$ is identical across the ensemble, the encoded states $\boldsymbol{\psi}_k(0) = \mathbf{B}_k^{-1/2}\boldsymbol{\phi}(0)$ are configuration-dependent. 
As $\mathbf{B}_k$ is strictly diagonal, the normalized joint initial state can be prepared efficiently using quantum arithmetic and inequality testing \cite{babbush2023exponential}, with algorithmic costs limited only by the worst-case configuration in the ensemble.

Since the configurations do not interact, the joint dynamics are governed by the block-diagonal operators
\begin{equation}
    \mathbf{H} = \sum_{k=1}^K \ketbra{k}{k} \otimes \mathbf{H}_k, \quad \mathbf{L} = \sum_{k=1}^K \ketbra{k}{k} \otimes \mathbf{L}_k.
\end{equation}
Using a controlled quantum oracle, this superposition requires $\lceil \log_2 K \rceil$ additional ancilla qubits. 
Crucially, the block-encoding normalizations are bounded by the worst-case configurations, $\alpha_{\mathbf{H,L}} = \max_k \alpha_{\mathbf{H}_k,\mathbf{L}_k}$. 
Consequently, the \ac{LCHS} query complexity to evolve the entire ensemble asymptotically matches the single most demanding deterministic configuration.

Under \ac{LCHS} evolution, the system assumes the state $\boldsymbol{\Psi}(t) = \sum_{k=1}^K \sqrt{p_k} \ket{k} \otimes \boldsymbol{\psi}_k(t)$. 
Because the squared norm of each subspace reflects its instantaneous physical energy $E_k(t) = \frac{1}{2}\|\boldsymbol{\psi}_k(t)\|^2$, the global state norm exactly captures the ensemble's expected energy: 
\begin{equation}
    \|\boldsymbol{\Psi}(t)\|^2 = 2 \mathbb{E}[E(t)].
\end{equation} 
For a source-free simulation, the \ac{LCHS} state preparation cost of the re-normalized final ensemble evaluates to
\begin{equation}
    Q_{\boldsymbol{\Psi}} = \sqrt{\frac{\mathbb{E}[E(0)]}{\mathbb{E}[E(t)]}}.
\end{equation}
This intrinsically grants the global algorithm immunity against highly dissipative outlier configurations, which simply decay within the superposition without significantly impacting the global state norm.

Macroscopic expected properties can be extracted from the final ensemble $\boldsymbol{\Psi}(t)$ with an evolution query complexity independent of $K$.
For example, $\mathbb{E}[E(t)]$ is proportional to the unnormalized state's squared norm.
Measuring a spatial subsystem projector (as in \Cref{prob:subspace_energy}) directly on $\boldsymbol{\Psi}(t)$ yields the expected energy fraction $\mathbb{E}[E_{\mathcal{S}}(t)]/\mathbb{E}[E(t)]$, using the same method as in \Cref{th:energy_estimation_applied}.

\subsection{Fast-Forwarding} \label{s:fast_forward}

Note, that we can quadratically reduce the evolution cost for purely dissipative problems where $\mathbf{H}=0$ without violating no-fast-forwarding \cite{gilyen_quantum_2019, an2022theory, low_optimal_2025}.
So far it is not established if this quadratic cost reduction of the dissipative component is possible for mixed oscillatory-dissipative systems. 
However, here we show that such an advancement could be fully utilized within our framework.

For a quadratic cost reduction of the dissipative component we need to establish oracle access to a rectangular factorization $\mathbf{F}$, such that
\begin{equation}
    \mathbf{L} = \mathbf{F}^\dagger\mathbf{F}.
\end{equation}
Let us expand the dissipator
\begin{equation}
    \mathbf{L} = \begin{bmatrix}
        \mathbf{M}^{-1/2}\mathbf{T}_0^\dagger \boldsymbol{\eta}_0\mathbf{T}_0\mathbf{M}^{-1/2} & \mathbf{0} \\
        \mathbf{0} & \mathbf{K}^{1/2}\boldsymbol{\eta}^{-1}_{\mathbf{K}}\mathbf{K}^{1/2}
    \end{bmatrix}.
\end{equation}
As $\boldsymbol{\eta}_0\succeq0,\boldsymbol{\eta}^{-1}_{\mathbf{K}}\succeq0$ are diagonal we can factorize
\begin{equation}
\begin{aligned}
    \mathbf{F} &= 
    \begin{bmatrix}
        \boldsymbol{\eta}^{1/2}_0\mathbf{T}_0\mathbf{M}^{-1/2} & \mathbf{0} \\
        \mathbf{0} & \boldsymbol{\eta}^{-1/2}_{\mathbf{K}}\mathbf{K}^{1/2}.
    \end{bmatrix} \\ 
    &= \begin{bmatrix}
        \hat{\mathbf{T}}_0 & \mathbf{0} \\
        \mathbf{0} & \boldsymbol{\Lambda}^{1/2}.
    \end{bmatrix},
    \end{aligned}
\end{equation}
where $\hat{\mathbf{T}}_0=\boldsymbol{\eta}^{1/2}_0\mathbf{T}_0\mathbf{M}^{-1/2}$ is the viscosity-weighted topology matrix.
Note that this enforces the stricter condition $\boldsymbol{\eta}_0 \succeq 0$.
This demonstrates that our framework is fully prepared to leverage any future algorithmic advancements for fast-forwarding mixed systems.

\subsection{Grounding Edges} \label{s:ground_edges}

A key advantage of explicitly separating the ground connectors $\mathbf{S}_0$ from the inter-mass connectors $\mathbf{D}_0$ in our topology formulation arises in the block-encoding normalization. 
Physically, adjacent masses connected by a spring can oscillate entirely out of phase. 
This anti-cyclic behavior effectively doubles the relative displacement and quadruples the maximum energetic contribution compared to a single moving mass. 
In contrast, grounding springs are fixed at one end and cannot support this anti-cyclic amplification. 

Because our framework isolates these grounding edges, we can can optimally capture both characteristics. 
As a result, the simulated system can incorporate ground connections that are up to four times stiffer than the maximum inter-mass spring stiffness without incurring any penalty to the normalization factors $\alpha_{\mathbf{H}}$ or $\alpha_{\mathbf{L}}$. 
However, under this constructive addition, the mapping of the Euclidean inner product to the physical energies becomes non-uniform across the computational basis.
Thus, measuring subsystem energies (\Cref{prob:subspace_energy}) requires weighted projectors. 
Hence, we choose the sub-optimal joint normalization in our base framework.

\section{Discussion} \label{s:discussion}

According to our previous analysis we can point out different regimes of potential quantum advantage over standard classical methods, as well as restrictions of the approach.
First, our \BQP-completeness result suggests that one can obtain exponential advantages even in regimes of strong localized dissipation under the assumption of non-local coupling. 
This extends previous results that showed \BQP-completeness for undamped or weakly damped systems and opens new application ideas, for example, the estimation of global energy loss after a certain evolution time $t$~\cite{babbush2023exponential}.

Second, this result also suggests that in regimes of strong dissipation the information that composes the observed quantities must be sufficiently contained within the weakly damped eigenmodes of the systems.
Otherwise, the system might admit a classical reduction to a smaller system size, or sampling and renormalization costs may prevent quartic speedups.
We expect such hard problems to emerge, for example, in linear time-invariant systems with constant dissipation per oscillation cycle for all modes (constant-Q) with sufficient eigenmode complexity.

Third, we prove that in a topology with only nearest-neighbor interactions, such as one obtains in finite dimensional meshes or lattices, one can not expect exponential runtime advantages due to the slow expansion of causal influence over time.
This holds even if inhomogeneities are considered and can get worsened due to dissipation, which allows to truncate past influence if the dissipation is sufficiently strong.

Damped oscillator models are of particular relevance in describing differential equations: discretizing a continuous space-time damped wave equation using, e.g., the Finite Element or Finite Difference method gives rise to a discrete system that is equivalent to a damped mass spring model \cite{bosch_quantum_2025}. 
In this context, the number of masses is set by the spatial resolution which in turn is dictated by the highest frequency components of the wave field \cite{fichtner2010full}. 
For periodic homogeneous constant-Q material models the effective damping of the eigenmodes scales exponentially with the temporal frequency, as the frequency- and eigenspectrum coincide.
Hence, the highest frequency components are exponentially damped with respect to the system size and estimating them
introduces exponential amplification or estimation cost, which classical computers do not pay.
We see two regimes where quantum advantage might still emerge: (i) the model is sufficiently far from homogeneous and the relevant eigenmodes are essentially undamped (ii) the overall damping is sufficiently weak with respect to the modeled system size so that the amplitude loss is insignificant but the damping is still required to model wave propagation correctly. 
When working with observational data, we expect both cases to be relevant since strong damping of all modes would preclude meaningful measurements. Hence, the presence of high frequency components in measurements acts as a guarantor of some minimum transfer strength of those modes.

In summary, our framework extends prior work on simulating coupled classical oscillators to non-Markovian dissipation, time-dependent materials and external sources. We prove \BQP-completeness and thus up-to exponential quantum advantages in the the case of strong localized dissipation.
In case of local couplings this quantum advantage potential reduces to a polynomial one based on the derived Lieb-Robinson type arguments.
For example, given the presence of sufficiently weakly damped modes, the largest advantage one could expect in a three dimensional lattice topology is quartic, preserving the undamped result \cite{babbush2023exponential, bosch_quantum_2025}.
Note, that interestingly we obtain an additional logarithmic advantage with respect to the number of parallel channels $C$ compared to classical methods that use the same Prony series method. 
Furthermore, our results suggest exponential memory advantages with respect to the number of Maxwell bodies $C$.
However, it is unclear if this advantage is of relevance for practical problems.

Our work leaves open a number of possibilities for future work.
While our framework captures a significant amount of physical mechanisms that govern the dynamical evolution of time-dependent non-Markovian classical oscillator dynamics, there remain mechanisms we can not currently address. 
First our framework does not capture non-linear dynamics. 
While linearity is often a sufficiently accurate approximation for a large set of practical applications, this assumption can severely break down, for example, in the high-energy limit.
Second, we currently do not capture rotating reference frames. 
These will cause fictitious forces such as the Coriolis force, which we can not embed into our current structure of the system operator $\mathbf{C}$.
Third, we are currently limiting ourselves to completely monotonic kernels that do not possess local oscillation. 
For problems that demand complex kernels a generalization of our approach beyond the Maxwell model is necessary.
Last, it would be valuable to understand if \BQP-completeness prevails in the case where a constant fraction of the system size is strongly damped. 

Our current time-integration approach has some known sub-optimalities that demand further research.
First, literature shows that while the \ac{LCHS} algorithm is asymptotically optimal in terms of its query cost to $\mathbf{H}$ and $\mathbf{L}$ it is not clear if a quadratic fast-forwarding of the dissipative dynamics in mixed systems might be possible \cite{low_optimal_2025}.
Our algorithm could directly exploit such advancements, as we have shown that query access to a factorization $\mathbf{F}$ satisfying $\mathbf{L} = \mathbf{F}^\dagger\mathbf{F}$ can be easily established.
Also it is not clear if the simulation complexity of time-dependent systems can be improved \cite{low_hamiltonian_2019, low_optimal_2025} in the general case.
Furthermore, it is not known if the optimality guarantees of \ac{LCHS} translate to our specific application, specifically in the context of real-world realizations.
Another open question is the optimal approximation of arbitrary memory kernels with the smallest number of Maxwell bodies $C$.
As the error might decay only linear with respect to the number of Maxwell bodies $C$ for certain kernels, this might be a major cost determining factor of classical and quantum simulation.

While we show that our framework can efficiently estimate subspace energies of evolved non-Markovian oscillator networks and that these estimates are in the worst case classically hard, it remains to find practical end-to-end applications where a quantum advantage can be established.
We have shown that in a low-dimensional space such advantages can at best be polynomial and are highly dependent on the dissipative regime. 
However, even establishing a super-quadratic end-to-end advantage for practical use-cases in computational mechanics would be of scientific and commercial interest.
Such a discovery would warrant further constant factor analysis and circuit optimization, including application-specific oracle implementations.

\begin{acknowledgments}
We thank Guang-Hao Low and Rolando Somma for valuable feedback on this work.  NW and SS acknowledge support from DOE, Office of Science, National Quantum Information Science Research Centers, Co-design Center for Quantum Advantage (C2QA) under Contract No.~DE-SC0012704 (Basic Energy Sciences, PNNL FWP 76274) and Pacific Northwest National Laboratory's Quantum Algorithms and Architecture for Domain Science (QuAADS) Laboratory Directed Research and Development (LDRD) Initiative as well as from Canada's National Research Council. C.B. was supported by the Swiss National Science Foundation (SNSF) through a Postdoc.Mobility fellowship (P500PT 217673/1). We thank Google Quantum AI (Google LLC) for providing intellectual and monetary support. We declare the use of AI for text revision as well as proof assistance and verification.
\end{acknowledgments}

\bibliography{apssamp}

@article{low2017hamiltonian,
  title={Hamiltonian simulation by uniform spectral amplification},
  author={Low, Guang Hao and Chuang, Isaac L},
  journal={arXiv preprint arXiv:1707.05391},
  year={2017}
}

@article{nachtergaele2006lieb,
  title={Lieb-Robinson bounds and the exponential clustering theorem},
  author={Nachtergaele, Bruno and Sims, Robert},
  journal={Communications in mathematical physics},
  volume={265},
  number={1},
  pages={119--130},
  year={2006},
  publisher={Springer}
}

@article{chen2023speed,
  title={Speed limits and locality in many-body quantum dynamics},
  author={Chen, Chi-Fang and Lucas, Andrew and Yin, Chao},
  journal={Reports on Progress in Physics},
  volume={86},
  number={11},
  pages={116001},
  year={2023},
  publisher={IOP Publishing}
}

@article{simon2024amplified,
  title={Amplified amplitude estimation: exploiting prior knowledge to improve estimates of expectation values},
  author={Simon, Sophia and Degroote, Matthias and Moll, Nikolaj and Santagati, Raffaele and Streif, Michael and Wiebe, Nathan},
  journal={arXiv preprint arXiv:2402.14791},
  year={2024}
}

@article{king2026quantum,
  title={Quantum simulation with sum-of-squares spectral amplification},
  author={King, Robbie and Low, Guang Hao and Babbush, Ryan and Somma, Rolando D and Rubin, Nicholas C},
  journal={Physical Review Letters},
  volume={136},
  number={11},
  pages={110601},
  year={2026},
  publisher={APS}
}

@inproceedings{berry2015hamiltonian,
  title={Hamiltonian simulation with nearly optimal dependence on all parameters},
  author={Berry, Dominic W. and Childs, Andrew M. and Kothari, Robin},
  booktitle={2015 IEEE 56th annual symposium on foundations of computer science},
  pages={792--809},
  year={2015},
  organization={IEEG}
}

@article{aharonov2003simple,
  title={A simple proof that Toffoli and Hadamard are quantum universal},
  author={Aharonov, Dorit},
  journal={arXiv preprint quant-ph/0301040},
  year={2003}
}

@article{sakamoto2025quantum,
  title={On the quantum computational complexity of classical linear dynamics with geometrically local interactions: Dequantization and universality},
  author={Sakamoto, Kazuki and Fujii, Keisuke},
  journal={arXiv preprint arXiv:2505.10445},
  year={2025}
}

@article{babbush2023exponential,
	title = {Exponential quantum speedup in simulating coupled classical oscillators},
	volume = {13},
	issn = {2160-3308},
	url = {http://arxiv.org/abs/2303.13012},
	doi = {10.1103/PhysRevX.13.041041},
	number = {4},
	urldate = {2024-03-15},
	journal = {Physical Review X},
	author = {Babbush, Ryan and Berry, Dominic W. and Kothari, Robin and Somma, Rolando D. and Wiebe, Nathan},
	month = dec,
	year = {2023},
	pages = {041041},
}

@article{costa2019quantum,
	title = {Quantum algorithm for simulating the wave equation},
	volume = {99},
	url = {https://link.aps.org/doi/10.1103/PhysRevA.99.012323},
	doi = {10.1103/PhysRevA.99.012323},
	number = {1},
	urldate = {2024-10-24},
	journal = {Physical Review A},
	author = {Costa, Pedro C. S. and Jordan, Stephen and Ostrander, Aaron},
	month = jan,
	year = {2019},
	pages = {012323},
}

@article{low2019hamiltonian,
	title = {Hamiltonian {Simulation} by {Qubitization}},
	volume = {3},
	url = {https://quantum-journal.org/papers/q-2019-07-12-163/},
	doi = {10.22331/q-2019-07-12-163},
	urldate = {2024-10-24},
	journal = {Quantum},
	author = {Low, Guang Hao and Chuang, Isaac L.},
	month = jul,
	year = {2019},
	pages = {163},
}

@article{zhang2022quantum,
	title = {Quantum {State} {Preparation} with {Optimal} {Circuit} {Depth}: {Implementations} and {Applications}},
	volume = {129},
	issn = {0031-9007, 1079-7114},
	shorttitle = {Quantum {State} {Preparation} with {Optimal} {Circuit} {Depth}},
	url = {https://link.aps.org/doi/10.1103/PhysRevLett.129.230504},
	doi = {10.1103/PhysRevLett.129.230504},
	number = {23},
	urldate = {2024-05-06},
	journal = {Physical Review Letters},
	author = {Zhang, Xiao-Ming and Li, Tongyang and Yuan, Xiao},
	month = nov,
	year = {2022},
	pages = {230504},
}

@article{knill2007optimal,
	title = {Optimal quantum measurements of expectation values of observables},
	volume = {75},
	issn = {1050-2947, 1094-1622},
	url = {https://link.aps.org/doi/10.1103/PhysRevA.75.012328},
	doi = {10.1103/PhysRevA.75.012328},
	number = {1},
	urldate = {2024-03-15},
	journal = {Physical Review A},
	author = {Knill, Emanuel and Ortiz, Gerardo and Somma, Rolando D.},
	month = jan,
	year = {2007},
	pages = {012328},
}

@book{fichtner2010full,
	series = {Advances in {Geophysical} and {Environmental} {Mechanics} and {Mathematics}},
	title = {Full {Seismic} {Waveform} {Modelling} and {Inversion}},
	copyright = {https://www.springernature.com/gp/researchers/text-and-data-mining},
	isbn = {978-3-642-15806-3 978-3-642-15807-0},
	url = {https://link.springer.com/10.1007/978-3-642-15807-0},
	urldate = {2024-10-25},
	publisher = {Springer},
	author = {Fichtner, Andreas},
	year = {2011},
	doi = {10.1007/978-3-642-15807-0},
}

@article{an2022theory,
  title={A theory of quantum differential equation solvers: limitations and fast-forwarding},
  author={An, Dong and Liu, Jin-Peng and Wang, Daochen and Zhao, Qi},
  journal={arXiv preprint arXiv:2211.05246},
  year={2022}
}

@misc{krovi2024quantum,
	title = {Quantum algorithms to simulate quadratic classical {Hamiltonians} and optimal control},
	url = {http://arxiv.org/abs/2404.07303},
	doi = {10.48550/arXiv.2404.07303},
	urldate = {2024-10-25},
	publisher = {arXiv},
	author = {Krovi, Hari},
	month = {04},
	year = {2024},
	note = {arXiv:2404.07303}
}

@unpublished{sato2024quantum,
	title = {Quantum algorithm for partial differential equations of non-conservative systems with spatially varying parameters},
	url = {http://arxiv.org/abs/2407.05019},
	doi = {10.48550/arXiv.2407.05019},
	urldate = {2024-10-24},
	publisher = {arXiv},
	author = {Sato, Yuki and Tezuka, Hiroyuki and Kondo, Ruho and Yamamoto, Naoki},
	month = jul,
	year = {2024},
	note = {arXiv:2407.05019},
}

@inproceedings{gleinig2021efficient,
	title = {An {Efficient} {Algorithm} for {Sparse} {Quantum} {State} {Preparation}},
	url = {https://ieeexplore.ieee.org/abstract/document/9586240},
	doi = {10.1109/DAC18074.2021.9586240},
	urldate = {2024-10-24},
	booktitle = {2021 58th {ACM}/{IEEE} {Design} {Automation} {Conference} ({DAC})},
	author = {Gleinig, Niels and Hoefler, Torsten},
	month = dec,
	year = {2021},
	pages = {433--438},
}

@article{low2017optimal,
	title = {Optimal {Hamiltonian} {Simulation} by {Quantum} {Signal} {Processing}},
	volume = {118},
	issn = {0031-9007, 1079-7114},
	url = {https://link.aps.org/doi/10.1103/PhysRevLett.118.010501},
	doi = {10.1103/PhysRevLett.118.010501},
	number = {1},
	urldate = {2023-10-23},
	journal = {Phys. Rev. Lett.},
	author = {Low, Guang Hao and Chuang, Isaac L.},
	month = jan,
	year = {2017},
	pages = {010501},
}

@unpublished{childs2012hamiltonian,
	title = {Hamiltonian {Simulation} {Using} {Linear} {Combinations} of {Unitary} {Operations}},
	url = {http://arxiv.org/abs/1202.5822},
	doi = {10.48550/arXiv.1202.5822},
	urldate = {2024-10-25},
	publisher = {arXiv},
	author = {Childs, Andrew M. and Wiebe, Nathan},
	month = feb,
	year = {2012},
	note = {arXiv:1202.5822},
}

@article{de2022double,
	title = {Double sparse quantum state preparation},
	volume = {21},
	issn = {1573-1332},
	url = {https://doi.org/10.1007/s11128-022-03549-y},
	doi = {10.1007/s11128-022-03549-y},
	number = {6},
	urldate = {2024-05-06},
	journal = {Quantum Information Processing},
	author = {de Veras, Tiago M. L. and da Silva, Leon D. and da Silva, Adenilton J.},
	month = jun,
	year = {2022},
	pages = {204},
}

@article{koukoutsis2023,
	title = {Dyson maps and unitary evolution for {Maxwell} equations in tensor dielectric media},
	volume = {107},
	url = {https://link.aps.org/doi/10.1103/PhysRevA.107.042215},
	doi = {10.1103/PhysRevA.107.042215},
	number = {4},
	urldate = {2024-12-19},
	journal = {Physical Review A},
	author = {Koukoutsis, Efstratios and Hizanidis, Kyriakos and Ram, Abhay K. and Vahala, George},
	month = apr,
	year = {2023},
	pages = {042215},
}

@unpublished{bosch_quantum_2025,
	title = {Quantum {Wave} {Simulation} with {Sources} and {Loss} {Functions}},
	url = {http://arxiv.org/abs/2411.17630},
	doi = {10.48550/arXiv.2411.17630},
	urldate = {2025-02-27},
	publisher = {arXiv},
	author = {Bösch, Cyrill and Schade, Malte and Aloisi, Giacomo and Keating, Scott D. and Fichtner, Andreas},
	month = feb,
	year = {2025},
	note = {arXiv:2411.17630 [quant-ph]},
}

@inproceedings{gilyen_quantum_2019,
    address = {New York, NY, USA},
    series = {{STOC} 2019},
    title = {Quantum singular value transformation and beyond: exponential improvements for quantum matrix arithmetics},
    isbn = {978-1-4503-6705-9},
    shorttitle = {Quantum singular value transformation and beyond},
    url = {https://dl.acm.org/doi/10.1145/3313276.3316366},
    doi = {10.1145/3313276.3316366},
    urldate = {2024-10-24},
    booktitle = {Proceedings of the 51st {Annual} {ACM} {SIGACT} {Symposium} on {Theory} of {Computing}},
    publisher = {Association for Computing Machinery},
    author = {Gilyén, András and Su, Yuan and Low, Guang Hao and Wiebe, Nathan},
    month = jun,
    year = {2019},
    pages = {193--204},
}

@article{breuer2016colloquium,
  title={Colloquium: Non-Markovian dynamics in open quantum systems},
  author={Breuer, Heinz-Peter and Laine, Elsi-Mari and Piilo, Jyrki and Vacchini, Bassano},
  journal={Reviews of Modern Physics},
  volume={88},
  number={2},
  pages={021002},
  year={2016},
  publisher={APS}
}

@misc{low_hamiltonian_2019,
    title = {Hamiltonian {Simulation} in the {Interaction} {Picture}},
    url = {http://arxiv.org/abs/1805.00675},
    doi = {10.48550/arXiv.1805.00675},
    urldate = {2025-03-03},
    publisher = {arXiv},
    author = {Low, Guang Hao and Wiebe, Nathan},
    month = jun,
    year = {2019},
    note = {arXiv:1805.00675 [quant-ph]},
}

@inproceedings{berry_exponential_2014,
    title = {Exponential improvement in precision for simulating sparse {Hamiltonians}},
    url = {http://arxiv.org/abs/1312.1414},
    doi = {10.1145/2591796.2591854},
    urldate = {2024-03-31},
    booktitle = {Proceedings of the forty-sixth annual {ACM} symposium on {Theory} of computing},
    author = {Berry, Dominic W. and Childs, Andrew M. and Cleve, Richard and Kothari, Robin and Somma, Rolando D.},
    month = may,
    year = {2014},
    note = {arXiv:1312.1414 [quant-ph]},
    pages = {283--292},
}

@misc{rosenkranz_quantum_2024,
    title = {Quantum state preparation for multivariate functions},
    url = {http://arxiv.org/abs/2405.21058},
    doi = {10.48550/arXiv.2405.21058},
    urldate = {2024-10-24},
    publisher = {arXiv},
    author = {Rosenkranz, Matthias and Brunner, Eric and Marin-Sanchez, Gabriel and Fitzpatrick, Nathan and Dilkes, Silas and Tang, Yao and Kikuchi, Yuta and Benedetti, Marcello},
    month = may,
    year = {2024},
    note = {arXiv:2405.21058},
}

@article{zylberman_efficient_2024,
    title = {Efficient quantum state preparation with {Walsh} series},
    volume = {109},
    url = {https://link.aps.org/doi/10.1103/PhysRevA.109.042401},
    doi = {10.1103/PhysRevA.109.042401},
    number = {4},
    urldate = {2024-10-24},
    journal = {Physical Review A},
    author = {Zylberman, Julien and Debbasch, Fabrice},
    month = apr,
    year = {2024},
    note = {Publisher: American Physical Society},
    pages = {042401},
}

@misc{an2023quantumalgorithmlinearnonunitary,
      title={Quantum algorithm for linear non-unitary dynamics with near-optimal dependence on all parameters}, 
      author={Dong An and Andrew M. Childs and Lin Lin},
      year={2023},
      eprint={2312.03916},
      archivePrefix={arXiv},
      primaryClass={quant-ph},
      url={https://arxiv.org/abs/2312.03916}, 
}

@article{robertsson_viscoelastic_1994,
    title = {Viscoelastic finite‐difference modeling},
    volume = {59},
    issn = {0016-8033},
    url = {https://library.seg.org/doi/abs/10.1190/1.1443701},
    doi = {10.1190/1.1443701},
    number = {9},
    urldate = {2025-03-05},
    journal = {GEOPHYSICS},
    author = {Robertsson, Johan O. A. and Blanch, Joakim O. and Symes, William W.},
    month = sep,
    year = {1994},
    note = {Publisher: Society of Exploration Geophysicists},
    pages = {1444--1456},
}

@article{coleman_foundations_1961,
    title = {Foundations of {Linear} {Viscoelasticity}},
    volume = {33},
    url = {https://link.aps.org/doi/10.1103/RevModPhys.33.239},
    doi = {10.1103/RevModPhys.33.239},
    number = {2},
    urldate = {2025-05-28},
    journal = {Reviews of Modern Physics},
    author = {Coleman, Bernard D. and Noll, Walter},
    month = apr,
    year = {1961},
    note = {Publisher: American Physical Society},
    pages = {239--249},
}

@article{blanch_modeling_1995,
    title = {Modeling of a constant {Q}; methodology and algorithm for an efficient and optimally inexpensive viscoelastic technique},
    volume = {60},
    issn = {0016-8033},
    url = {https://doi.org/10.1190/1.1443744},
    doi = {10.1190/1.1443744},
    number = {1},
    urldate = {2025-03-05},
    journal = {Geophysics},
    author = {Blanch, Joakim O. and Robertsson, Johan O. A. and Symes, William W.},
    month = feb,
    year = {1995},
    pages = {176--184},
}

@book{tschoegl_phenomenological_2012,
    title = {The {Phenomenological} {Theory} of {Linear} {Viscoelastic} {Behavior}: {An} {Introduction}},
    isbn = {978-3-642-73602-5},
    shorttitle = {The {Phenomenological} {Theory} of {Linear} {Viscoelastic} {Behavior}},
    publisher = {Springer Science \& Business Media},
    author = {Tschoegl, Nicholas W.},
    month = dec,
    year = {2012},
    note = {Google-Books-ID: 7Kf7CAAAQBAJ},
}

@article{carcione_seismic_1993,
    title = {Seismic modeling in viscoelastic media},
    volume = {58},
    issn = {0016-8033},
    url = {https://library.seg.org/doi/abs/10.1190/1.1443340},
    doi = {10.1190/1.1443340},
    number = {1},
    urldate = {2025-05-28},
    journal = {GEOPHYSICS},
    author = {Carcione, José M.},
    month = jan,
    year = {1993},
    note = {Publisher: Society of Exploration Geophysicists},
    pages = {110--120},
}

@book{lakes_viscoelastic_2009,
    title = {Viscoelastic {Materials}},
    isbn = {978-0-521-88568-3},
    publisher = {Cambridge University Press},
    author = {Lakes, Roderic S.},
    month = apr,
    year = {2009},
    note = {Google-Books-ID: BH6f2hWWBkAC},
}

@article{alves_numerical_2021,
    title = {Numerical {Methods} for {Viscoelastic} {Fluid} {Flows}},
    volume = {53},
    issn = {0066-4189, 1545-4479},
    url = {https://www.annualreviews.org/content/journals/10.1146/annurev-fluid-010719-060107},
    doi = {10.1146/annurev-fluid-010719-060107},
    number = {Volume 53, 2021},
    urldate = {2025-05-28},
    journal = {Annual Review of Fluid Mechanics},
    author = {Alves, M. A. and Oliveira, P. J. and Pinho, F. T.},
    month = jan,
    year = {2021},
    note = {Publisher: Annual Reviews},
    pages = {509--541},
}

@misc{low_optimal_2025,
    title = {Optimal quantum simulation of linear non-unitary dynamics},
    url = {http://arxiv.org/abs/2508.19238},
    doi = {10.48550/arXiv.2508.19238},
    urldate = {2025-09-29},
    publisher = {arXiv},
    author = {Low, Guang Hao and Somma, Rolando D.},
    month = sep,
    year = {2025},
    note = {arXiv:2508.19238 [quant-ph]},
}

@article{gaur_novel_2024,
    title = {Novel {Optimized} {Designs} of {Modulo} 2n+1 {Adder} for {Quantum} {Computing}},
    volume = {32},
    issn = {1557-9999},
    url = {https://ieeexplore.ieee.org/document/10599288/},
    doi = {10.1109/TVLSI.2024.3418930},
    number = {9},
    urldate = {2025-09-30},
    journal = {IEEE Transactions on Very Large Scale Integration (VLSI) Systems},
    author = {Gaur, Bhaskar and Thapliyal, Himanshu},
    month = sep,
    year = {2024},
    pages = {1759--1763},
}

@inproceedings{gaur_logarithmic_2023,
    address = {Knoxville TN USA},
    title = {A {Logarithmic} {Depth} {Quantum} {Carry}-{Lookahead} {Modulo} (2n - 1) {Adder}},
    isbn = {979-8-4007-0125-2},
    url = {https://dl.acm.org/doi/10.1145/3583781.3590205},
    doi = {10.1145/3583781.3590205},
    urldate = {2025-09-30},
    booktitle = {Proceedings of the {Great} {Lakes} {Symposium} on {VLSI} 2023},
    publisher = {ACM},
    author = {Gaur, Bhaskar and Muñoz-Coreas, Edgard and Thapliyal, Himanshu},
    month = jun,
    year = {2023},
    pages = {125--130},
}

@article{gidney_halving_2018,
    title = {Halving the cost of quantum addition},
    volume = {2},
    url = {https://quantum-journal.org/papers/q-2018-06-18-74/},
    doi = {10.22331/q-2018-06-18-74},
    urldate = {2025-09-30},
    journal = {Quantum},
    author = {Gidney, Craig},
    month = jun,
    year = {2018},
    note = {Publisher: Verein zur Förderung des Open Access Publizierens in den Quantenwissenschaften},
    pages = {74},
}

@misc{takahashi_quantum_2009,
    title = {Quantum {Addition} {Circuits} and {Unbounded} {Fan}-{Out}},
    url = {http://arxiv.org/abs/0910.2530},
    doi = {10.48550/arXiv.0910.2530},
    urldate = {2025-10-03},
    publisher = {arXiv},
    author = {Takahashi, Yasuhiro and Tani, Seiichiro and Kunihiro, Noboru},
    month = oct,
    year = {2009},
    note = {arXiv:0910.2530 [quant-ph]},
}

@article{shukla_efficient_2024,
    title = {An efficient quantum algorithm for preparation of uniform quantum superposition states},
    volume = {23},
    issn = {1573-1332},
    url = {http://arxiv.org/abs/2306.11747},
    doi = {10.1007/s11128-024-04258-4},
    number = {2},
    urldate = {2025-10-06},
    journal = {Quantum Information Processing},
    author = {Shukla, Alok and Vedula, Prakash},
    month = jan,
    year = {2024},
    note = {arXiv:2306.11747 [quant-ph]},
    pages = {38},
}

@article{zylberman_efficient_2025,
    title = {Efficient {Quantum} {Circuits} for {Non}-{Unitary} and {Unitary} {Diagonal} {Operators} with {Space}-{Time}-{Accuracy} {Trade}-{Offs}},
    volume = {6},
    url = {https://dl.acm.org/doi/10.1145/3718348},
    doi = {10.1145/3718348},
    number = {2},
    urldate = {2025-10-20},
    journal = {ACM Transactions on Quantum Computing},
    author = {Zylberman, Julien and Nzongani, Ugo and Simonetto, Andrea and Debbasch, Fabrice},
    month = apr,
    year = {2025},
    pages = {15:1--15:43},
}

@misc{christensen2025nonmarkovian,
      title={Ancilla-train quantum algorithm for simulating non-Markovian open quantum systems}, 
      author={Hans Michael Christensen and Johannes Agerskov and Frederik Nathan},
      year={2025},
      eprint={2509.12717},
      archivePrefix={arXiv},
      primaryClass={quant-ph},
      url={https://arxiv.org/abs/2509.12717}, 
}

@article{walters2024nonmarkovian,
  title = {Path integral quantum algorithm for simulating non-Markovian quantum dynamics in open quantum systems},
  author = {Walters, Peter L. and Wang, Fei},
  journal = {Phys. Rev. Res.},
  volume = {6},
  issue = {1},
  pages = {013135},
  numpages = {11},
  year = {2024},
  month = {Feb},
  publisher = {American Physical Society},
  doi = {10.1103/PhysRevResearch.6.013135},
  url = {https://link.aps.org/doi/10.1103/PhysRevResearch.6.013135}
}

@article{li2204nonmarkovian,
  title = {Toward quantum simulation of non-Markovian open quantum dynamics: A universal and compact theory},
  author = {Li, Xiang and Lyu, Su-Xiang and Wang, Yao and Xu, Rui-Xue and Zheng, Xiao and Yan, YiJing},
  journal = {Phys. Rev. A},
  volume = {110},
  issue = {3},
  pages = {032620},
  numpages = {13},
  year = {2024},
  month = {Sep},
  publisher = {American Physical Society},
  doi = {10.1103/PhysRevA.110.032620},
  url = {https://link.aps.org/doi/10.1103/PhysRevA.110.032620}
}

@article{brassard2000quantum,
  title={Quantum amplitude amplification and estimation},
  author={Brassard, Gilles and Hoyer, Peter and Mosca, Michele and Tapp, Alain},
  journal={arXiv preprint quant-ph/0005055},
  year={2000}
}

@misc{villanyi2025oscillators_dissipation,
      title={Exponential Quantum Advantage for Simulating Open Classical Systems}, 
      author={Agi Villanyi and Yariv Yanay and Ari Mizel},
      year={2025},
      eprint={2503.11483},
      archivePrefix={arXiv},
      primaryClass={quant-ph},
      url={https://arxiv.org/abs/2503.11483}, 
}

@article{low2018hamiltonian,
  title={Hamiltonian simulation in the interaction picture},
  author={Low, Guang Hao and Wiebe, Nathan},
  journal={arXiv preprint arXiv:1805.00675},
  year={2018}
}

@article{kammler1976chebyshev,
  title={Chebyshev approximation of completely monotonic functions by sums of exponentials},
  author={Kammler, David W},
  journal={SIAM Journal on Numerical Analysis},
  volume={13},
  number={5},
  pages={761--774},
  year={1976},
  publisher={SIAM}
}

@article{braess2005approximation,
  title={Approximation of 1/x by exponential sums in [1,∞},
  author={Braess, Dietrich and Hackbusch, Wolfgang},
  journal={IMA journal of numerical analysis},
  volume={25},
  number={4},
  pages={685--697},
  year={2005},
  publisher={Oxford University Press}
}

@article{beylkin2005approximation,
  title={On approximation of functions by exponential sums},
  author={Beylkin, Gregory and Monz{\'o}n, Lucas},
  journal={Applied and Computational Harmonic Analysis},
  volume={19},
  number={1},
  pages={17--48},
  year={2005},
  publisher={Elsevier}
}

@article{bernstein1929fonctions,
  title={Sur les fonctions absolument monotones},
  author={Bernstein, Serge},
  journal={Acta Mathematica},
  volume={52},
  number={1},
  pages={1--66},
  year={1929},
  publisher={Springer}
}

@article{hanyga2005viscous,
  title={Viscous dissipation and completely monotonic relaxation moduli},
  author={Hanyga, Andrzej},
  journal={Rheologica Acta},
  volume={44},
  number={6},
  pages={614--621},
  year={2005},
  publisher={Springer}
}

@article{yano2026quantum,
  title={Quantum framework for parameterizing partial differential equations via diagonal block-encoding},
  author={Yano, Hiroshi and Sato, Yuki},
  journal={arXiv preprint arXiv:2603.01358},
  year={2026}
}

@article{sun_asymptotically_2023,
    title = {Asymptotically {Optimal} {Circuit} {Depth} for {Quantum} {State} {Preparation} and {General} {Unitary} {Synthesis}},
    volume = {42},
    issn = {1937-4151},
    url = {https://ieeexplore.ieee.org/abstract/document/10044235},
    doi = {10.1109/TCAD.2023.3244885},
    number = {10},
    urldate = {2024-10-24},
    journal = {IEEE Transactions on Computer-Aided Design of Integrated Circuits and Systems},
    author = {Sun, Xiaoming and Tian, Guojing and Yang, Shuai and Yuan, Pei and Zhang, Shengyu},
    month = oct,
    year = {2023},
    note = {Conference Name: IEEE Transactions on Computer-Aided Design of Integrated Circuits and Systems},
    pages = {3301--3314},
}

@book{bingham1989regular,
  title={Regular variation},
  author={Bingham, Nicholas H and Goldie, Charles M and Teugels, Jef L},
  volume={27},
  year={1989},
  publisher={Cambridge university press}
}

@article{kjartansson1979constant,
  title={Constant Q-wave propagation and attenuation},
  author={Kjartansson, Einar},
  journal={Journal of Geophysical Research: Solid Earth},
  volume={84},
  number={B9},
  pages={4737--4748},
  year={1979},
  publisher={Wiley Online Library}
}

@article{hairer1985fast,
  title={Fast numerical solution of nonlinear Volterra convolution equations},
  author={Hairer, Ernst and Lubich, Ch and Schlichte, M},
  journal={SIAM journal on scientific and statistical computing},
  volume={6},
  number={3},
  pages={532--541},
  year={1985},
  publisher={SIAM}
}

@article{schadle2006fast,
  title={Fast and oblivious convolution quadrature},
  author={Sch{\"a}dle, Achim and L{\'o}pez-Fern{\'a}ndez, Mar{\'\i}a and Lubich, Christian},
  journal={SIAM Journal on Scientific Computing},
  volume={28},
  number={2},
  pages={421--438},
  year={2006},
  publisher={SIAM}
}

@book{griewank2008evaluating,
  title={Evaluating derivatives: principles and techniques of algorithmic differentiation},
  author={Griewank, Andreas and Walther, Andrea},
  year={2008},
  publisher={SIAM}
}

@article{suau_practical_2021,
    title = {Practical {Quantum} {Computing}: {Solving} the {Wave} {Equation} {Using} a {Quantum} {Approach}},
    volume = {2},
    shorttitle = {Practical {Quantum} {Computing}},
    url = {https://dl.acm.org/doi/10.1145/3430030},
    doi = {10.1145/3430030},
    number = {1},
    urldate = {2024-10-24},
    journal = {ACM Transactions on Quantum Computing},
    author = {Suau, Adrien and Staffelbach, Gabriel and Calandra, Henri},
    month = feb,
    year = {2021},
    pages = {2:1--2:35},
}

@article{nachtergaele2007lieb,
  title={Lieb-Robinson bounds for harmonic and anharmonic lattice systems},
  author={Nachtergaele, Bruno and Raz, Hillel and Schlein, Benjamin and Sims, Robert},
  journal={arXiv preprint arXiv:0712.3820},
  year={2007}
}

@article{ameri2026quantum,
  title={Quantum simulation of non-Markovian dynamical systems},
  author={Ameri, Abtin and Dutt, Arkopal and Krovi, Hari},
  journal={arXiv preprint arXiv:2608.13533},
  year={2026}
}

\onecolumngrid
\appendix

\section{Prerequisites} \label{apx:prerequisites}

The aim of this appendix is to review background material that is necessary to construct our simulation scheme for damped harmonic oscillators.
Block encoding is a standard technique used in quantum algorithm design wherein a general matrix is embedded as a sub-space of a higher dimensional matrix.  
For brevity, we will at times refer to a block encoding as a $\ac{BE}$.
A \ac{BE} of a matrix $\mathbf{C}\in \mathbb{C}^{N\times N}$ is a unitary matrix $\mathcal{U}_\mathbf{C}$ acting on an enlarged Hilbert space, such that
\begin{equation} \label{eq:block_encoding}
    {\mathcal{U}_\mathbf{C}
    = \begin{bmatrix}
    \mathbf{C}/\alpha_{\mathbf{C}} & \cdot \\
    \cdot & \cdot
    \end{bmatrix},}
\end{equation}
where $\alpha_{\mathbf{C}}$ is a normalization constant satisfying $\|\mathbf{C}/\alpha_{\mathbf{C}}\|\leq1$. 
Here $\|\cdot\|$ denotes the spectral norm. 
Equivalently, this relation can be written in terms of projection onto $\ketbra{0}{0}$, namely
\begin{equation} \label{eq:block_encoding_proj}
    \mathbf{C}/\alpha_{\mathbf{C}} = \bigl(\bra{0}^{\otimes a} \otimes \mathbf{I}_N\bigr)\mathcal{U}_\mathbf{C}\bigl(\ket{0}^{\otimes a} \otimes \mathbf{I}_N\bigr),
\end{equation}
where $\mathbf{I}_N$ is the $N \times N$ identity.
When $\mathcal{U}_\mathbf{C}$ uses $a$ ancillary qubits, we call it an $a$-qubit \ac{BE} of $\mathbf{C}$.

The probability of successfully measuring $\ket{0}^{\otimes a}$ for an input state $\ket{\psi}$ is
\begin{equation}
    \Pr[a=0] = \|\mathbf{C}\ket{\psi}\|^2/\alpha_{\mathbf{C}}^2.
\end{equation}
One can employ amplitude amplification techniques \cite{berry_exponential_2014} to boost this success probability to near unity using $\mathcal{O}(\alpha_{\mathbf{C}}/\|\mathbf{C}\ket{\psi}\|)$ queries. 
We define a general \ac{BE} notation:
\begin{definition}[Block encoding; see \cite{gilyen_quantum_2019}]
    Suppose that $\mathbf{C} \in \mathbb{C}^{N \times N}$ is an operator, $\alpha_{\mathbf{C}}, \varepsilon \in \mathbb{R}^+$, and $a\in\mathbb{N}$.
    Then we say that the $(a + \lceil\log_2 N\rceil)$-qubit unitary $\mathcal{U}_\mathbf{C}$ is an $(\alpha_{\mathbf{C}},a,\varepsilon)$-\ac{BE} of $\mathbf{C}$, if 
    \begin{equation}
        {\|\mathbf{C} - \alpha_{\mathbf{C}}\bigl(\bra{0}^{\otimes a} \otimes \mathbf{I}_N\bigr)\mathcal{U}_\mathbf{C}\bigl(\ket{0}^{\otimes a} \otimes \mathbf{I}_N\bigr) \| \leq \varepsilon.}
    \end{equation}
\end{definition}

One prominent algorithm that implements block encodings is the \ac{LCU} framework \cite{childs2012hamiltonian}.
The implementation of an operator $\mathbf{C}$ through \ac{LCU} requires a decomposition as a linear combination of unitary matrices
\begin{equation} \label{eq:lcu}
    {\mathbf{C}=\sum_{j=1}^J\alpha_j \mathcal{U}_j,}
\end{equation}
where $\alpha_j > 0$ are real, positive coefficients and $\mathcal{U}_{j}$ are unitaries. 
We can formalize the \ac{BE} construction as follows:

\begin{lemma}[LCU Block Encoding; see \cite{childs2012hamiltonian}] \label{lem:lcu}
    Let $\mathbf{C}$ be decomposed as in \Cref{eq:lcu} with $L^1$-norm $\alpha_{\mathbf{C}} = \sum_j \alpha_j$. 
    Given a prepare unitary $\operatorname{PREP}$ acting on $a = \lceil \log_2 J \rceil$ ancilla qubits such that
    \begin{equation}
        \operatorname{PREP} \ket{0}^{\otimes a} = \sum_{j=1}^{J} \sqrt{\frac{\alpha_j}{\alpha_{\mathbf{C}}}} \ket{j},
    \end{equation}
    and a select unitary $\operatorname{SEL} = \sum_{j=1}^{J} \ketbra{j}{j} \otimes \mathcal{U}_j$, 
    the operator 
    \begin{equation}
        {\mathcal{U}_{\mathbf{C}} = (\operatorname{PREP}^\dagger \otimes \mathbf{I}_N) \operatorname{SEL} (\operatorname{PREP} \otimes \mathbf{I}_N),}
    \end{equation}
    is an $(\alpha_{\mathbf{C}}, a, 0)$-\ac{BE} of $\mathbf{C}$.
\end{lemma}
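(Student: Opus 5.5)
The plan is to verify the LCU identity directly, by computing the top-left $N\times N$ block of $\mathcal{U}_{\mathbf{C}}$ and comparing it with $\mathbf{C}/\alpha_{\mathbf{C}}$. The lemma has an exact error of $0$, so no approximation argument is needed, only linear algebra. First, I would note that $\mathcal{U}_{\mathbf{C}}$ is unitary, since it is a product of the unitaries $\operatorname{PREP}^\dagger\otimes\mathbf{I}_N$, $\operatorname{SEL}$ and $\operatorname{PREP}\otimes\mathbf{I}_N$. Second, I would show that $\operatorname{SEL}$ is unitary. Writing $\operatorname{SEL}=\sum_j \ketbra{j}{j}\otimes\mathcal{U}_j$, we get
\begin{equation}
\operatorname{SEL}^\dagger\operatorname{SEL}=\sum_{j,j'}\ket{j'}\langle j'|j\rangle\bra{j}\otimes\mathcal{U}_{j'}^\dagger\mathcal{U}_j=\sum_j\ketbra{j}{j}\otimes\mathbf{I}_N .
\end{equation}
This is the identity provided $\operatorname{SEL}$ acts as the identity on the unused basis states $j>J$ of the $a$-qubit register, which I would stipulate.

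Next I compute the block. For arbitrary $\ket{\psi}\in\mathbb{C}^N$, apply $\operatorname{PREP}\otimes\mathbf{I}_N$ to obtain $\sum_j\sqrt{\alpha_j/\alpha_{\mathbf{C}}}\,\ket{j}\ket{\psi}$. Applying $\operatorname{SEL}$ then gives $\sum_j\sqrt{\alpha_j/\alpha_{\mathbf{C}}}\,\ket{j}\,\mathcal{U}_j\ket{\psi}$. Finally, project with $\bra{0}^{\otimes a}(\operatorname{PREP}^\dagger\otimes\mathbf{I}_N)=\big((\operatorname{PREP}\ket{0}^{\otimes a})^\dagger\otimes\mathbf{I}_N\big)$, i.e.\ with $\sum_{j'}\sqrt{\alpha_{j'}/\alpha_{\mathbf{C}}}\bra{j'}\otimes\mathbf{I}_N$, where I use that the $\alpha_j$ are real and positive so the conjugated amplitudes equal the originals. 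Orthonormality $\langle j'|j\rangle=\delta_{jj'}$ then collapses the double sum:
\begin{equation}
\bigl(\bra{0}^{\otimes a}\otimes\mathbf{I}_N\bigr)\mathcal{U}_{\mathbf{C}}\bigl(\ket{0}^{\otimes a}\otimes\mathbf{I}_N\bigr)\ket{\psi}=\sum_j\frac{\alpha_j}{\alpha_{\mathbf{C}}}\,\mathcal{U}_j\ket{\psi}=\frac{\mathbf{C}}{\alpha_{\mathbf{C}}}\ket{\psi}.
\end{equation}

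Since $\ket{\psi}$ was arbitrary, the difference in the block-encoding definition vanishes identically, so $\mathcal{U}_{\mathbf{C}}$ is an $(\alpha_{\mathbf{C}},a,0)$-block encoding. Consistency with the normalization requirement is automatic: the triangle inequality gives $\|\mathbf{C}\|\le\sum_j\alpha_j\|\mathcal{U}_j\|=\alpha_{\mathbf{C}}$, so $\|\mathbf{C}/\alpha_{\mathbf{C}}\|\le 1$.

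There is no real obstacle here. The only points needing care are that $\operatorname{PREP}$ must be defined on all $2^a$ basis states and $\operatorname{SEL}$ must be extended by the identity on indices $j>J$, and that the coefficients must be real and positive so that the bra obtained from $\operatorname{PREP}^\dagger$ carries the same amplitudes $\sqrt{\alpha_j/\alpha_{\mathbf{C}}}$.
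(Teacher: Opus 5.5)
Your proposal is correct. It is the standard direct verification: use $\bra{0}^{\otimes a}\operatorname{PREP}^\dagger = (\operatorname{PREP}\ket{0}^{\otimes a})^\dagger$, the realness of the coefficients, and orthonormality of the $\ket{j}$ to collapse the block to $\sum_j (\alpha_j/\alpha_{\mathbf{C}})\,\mathcal{U}_j$. The paper gives no proof of its own and only cites Childs--Wiebe, where this same computation is the argument; your remarks about extending $\operatorname{SEL}$ by the identity on unused ancilla labels are the right bookkeeping.
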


\section{Block Encodings} \label{apx:block_encodings}

Next we use these definitions to construct block encodings of the matrices that appear in the damped oscillator problem under consideration.
We need to block encode diagonal operators as an important subroutine in our algorithms.  
Our first algorithmic claim is a specific statement about the fact that one-sparse (i.e. diagonal) matrices can be block encoded at potentially low cost (assuming each diagonal element is efficiently computable). 
Similar results have been given extensively in the literature, such as in the logarithmic block encoding presented in~\cite{gilyen_quantum_2019}, but we provide a concise statement below for simplicity

\begin{lemma}[Real Diagonal Matrix Encoding] \label{lemma:real_diag}
    Any real diagonal matrix $\mathbf{R}=\diag(R_1,\dots,R_N) \in \mathbb{R}^{N \times N}$ with normalization factor $\alpha_{\mathbf{R}} = \|\mathbf{R}\|_{\infty}$ admits an $(\alpha_{\mathbf{R}}, 1, \alpha_{\mathbf{R}}\varepsilon')$-\ac{BE}, denoted $\mathcal{U}_{\mathbf{R}}$, given access to an $\varepsilon'$-approximate phase oracle $O_{\mathbf{R}}$ of the form $O_{\mathbf{R}}:\ket{i} \mapsto e^{-i\arccos(R_i/\alpha_{\bf R})}\ket{i}$ using $\mathcal{O}(1)$ applications of this subroutine and additional two-qubit gates.
\end{lemma}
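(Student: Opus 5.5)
The plan is to use the standard ``linear combination of two phases'' trick. Since $R_i/\alpha_{\mathbf{R}} = \cos\theta_i$ with $\theta_i = \arccos(R_i/\alpha_{\mathbf{R}}) \in [0,\pi]$, we have
\begin{equation}
    \frac{R_i}{\alpha_{\mathbf{R}}} = \tfrac{1}{2}\bigl(e^{-i\theta_i} + e^{i\theta_i}\bigr).
\end{equation}
This writes $\mathbf{R}/\alpha_{\mathbf{R}}$ as the equal-weight average of the two diagonal unitaries $D = \sum_i e^{-i\theta_i}\ketbra{i}{i}$ and $D^\dagger$. The normalization $\alpha_{\mathbf{R}} = \|\mathbf{R}\|_\infty$ guarantees $|R_i/\alpha_{\mathbf{R}}|\le 1$, so every $\theta_i$ is well defined and real.

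Next we apply \Cref{lem:lcu} with $J=2$ and $\alpha_1=\alpha_2=1/2$. PREP is a single Hadamard on one ancilla. SEL applies $D$ controlled on $\ket{0}$ and $D^\dagger$ controlled on $\ket{1}$. This costs one controlled use of $O_{\mathbf{R}}$ and one controlled use of $O_{\mathbf{R}}^\dagger$, plus $\mathcal{O}(1)$ two-qubit gates (the Hadamards and the control wiring). The result is an exact $(\alpha_{\mathbf{R}},1,0)$-\ac{BE} of $\mathbf{R}$ when the oracle is exact. An equivalent variant uses SEL $=\ketbra{0}{0}\otimes D + \ketbra{1}{1}\otimes D^\dagger$, realized as $X$-conjugated controlled-$D$ together with controlled-$D^\dagger$; any such arrangement uses $\mathcal{O}(1)$ queries.

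The remaining step, and the only one requiring care, is error propagation when $O_{\mathbf{R}}$ is only $\varepsilon'$-accurate. Let $\tilde D$ be the implemented oracle with $\|\tilde D - D\|\le\varepsilon'$. Then $\|\tilde D^\dagger - D^\dagger\|\le\varepsilon'$, and controlled versions inherit the same operator-norm bound. The encoded block becomes $\tfrac12(\tilde D + \tilde D^\dagger)$, which differs from $\tfrac12(D+D^\dagger)$ by at most $\tfrac12(\varepsilon'+\varepsilon') = \varepsilon'$ in operator norm. This holds because projecting onto the ancilla $\ket{0}$ block cannot increase the norm. Multiplying by $\alpha_{\mathbf{R}}$ gives an approximation error of at most $\alpha_{\mathbf{R}}\varepsilon'$, matching the claimed $(\alpha_{\mathbf{R}},1,\alpha_{\mathbf{R}}\varepsilon')$-\ac{BE}.

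One subtlety must be checked: the block encoding requires the full circuit to be unitary even though the oracle is approximate. This holds automatically, since the oracle is by assumption a unitary $O_{\mathbf{R}}$ that is merely close to the ideal phase map.
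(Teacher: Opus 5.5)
Your proposal is correct and follows essentially the same approach as the paper: both write $\mathbf{R}/\alpha_{\mathbf{R}}$ as the equal-weight average of the ideal phase oracle and its adjoint, implement it as a one-ancilla LCU with a Hadamard PREP, and bound the encoding error by $\bigl\|\mathbf{R}/\alpha_{\mathbf{R}} - \tfrac12(O_{\mathbf{R}}+O_{\mathbf{R}}^\dagger)\bigr\|\le\varepsilon'$, which gives $\alpha_{\mathbf{R}}\varepsilon'$ after rescaling. You additionally spell out the controlled-SEL construction and the triangle-inequality error step, both of which the paper leaves implicit.
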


\begin{proof}
    We choose target phases $\boldsymbol{\theta}$ such that $\theta_j = \arccos(R_j/\alpha_{\mathbf{R}})$.
    The \ac{LCU} construction yields $\mathcal{U}_{\mathbf{R}}$ using the oracle $O_{\mathbf{R}}$ \cite{zylberman_efficient_2024, zylberman_efficient_2025}.
    The precision of this encoding is bounded by the oracle error
    \begin{equation} \label{eq:real_diag}
    {
    \bigg\| \mathbf{R}/\alpha_{\mathbf{R}} - \frac{O_{\mathbf{R}}+O_{\mathbf{R}}^\dagger}{2} \bigg\| \le \varepsilon'.}
    \end{equation}
\end{proof}

Next in order to achieve some of the quadratic speedups seen in this paper we use optimizations similar to the sum of squares formalism~\cite{king2026quantum,simon2024amplified,babbush2023exponential}.  A key idea behind these works is the idea that projector like matrices can be fast-square-rooted by extending them to a rectangular matrix.  Specifically if we take $\mathbf{A}$ to be a square matrix and $\mathbf{B}$ to be a rectangular matrix such that $\mathbf{B}^\dagger \mathbf{B} = \mathbf{A}$ then we can use a block encoding of $\mathbf{B}$ to reduce the overall block encoding constant.  This result is stated below for completeness for the special case of a selection matrix (which can be thought of as a linear isometry).
\begin{lemma}
    Let $\mathbf{S}$ be a selection matrix with row sparsity $1$ such that 
    \begin{equation} \label{eq:selection_mat}
        \mathbf{S} = \sum_{i=0}^{N_{\operatorname{row}}-1} \ket{i}\bra{j_i} \in \mathbb{R}^{N_{\operatorname{row}} \times M},
    \end{equation}
    where $j_i$ is the column index of the single nonzero matrix element in row $i$ and $M \leq N_{\operatorname{row}}$.
    Let $s_c$ denote the maximum column sparsity of $\mathbf{S}$, and let these matrix elements be yielded by the row and column oracles $O_r,O_c$.
    Further, define the ancilla register size as $a_{\mathbf{S}} := \lceil \log_2 N_{\operatorname{row}} \rceil$.
    Then a $\lb \sqrt{s_c}, a_{\mathbf{S}}, 0 \rb$-\ac{BE} of $\mathbf{S}$ can be constructed using one query to $O_r^{-1} = O_r^\dagger$ and one query to $O_c$, requiring a total of $\mathcal{O}(\log N_{\operatorname{row}} )$ two-qubit gates.
\end{lemma}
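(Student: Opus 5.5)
We block-encode $\mathbf{S}$ as the product of two state-preparation isometries, in the standard sparse-access way: one for the row side and one for the column side. Their inner product then reproduces each matrix element divided by $\sqrt{s_c}$.

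\textbf{Column-side isometry.} For a column index $j \in \{0,\dots,M-1\}$ we define
\[
\mathcal{C}\ket{0}^{\otimes a_{\mathbf{S}}}\ket{j} \;=\; \frac{1}{\sqrt{s_c}}\sum_{l=0}^{s_c-1}\ket{c_{lj}}\ket{j}.
\]
We implement it in two steps. First, prepare a uniform superposition $\frac{1}{\sqrt{s_c}}\sum_l \ket{l}$ on the ancilla register using $\mathcal{O}(\log N_{\operatorname{row}})$ gates, for example by the method of Shukla and Vedula. Then apply $O_c$, which maps $\ket{l}\ket{j}\mapsto\ket{c_{lj}}\ket{j}$. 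If column $j$ has fewer than $s_c$ nonzeros, we pad with dummy row labels that lie outside the range $0,\dots,N_{\operatorname{row}}-1$ of actual rows, or equivalently are flagged by one extra qubit. These dummies then contribute nothing to the overlap below. I expect this padding convention to be the only delicate point, and I would state it as an assumption on $O_c$.

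\textbf{Row-side isometry.} For a row index $i$, i.e. an input $\ket{0}\ket{i}$ on the system register, we define
\[
\mathcal{R}\ket{0}^{\otimes a_{\mathbf{S}}}\ket{i} = \ket{i}\ket{j_i}.
\]
The target is that the ancilla register ends up holding the row index and the system register holds the unique column index. Since the row sparsity is $1$, $O_r\ket{i}\ket{0}=\ket{i}\ket{j_i}$, so $\mathcal{R}$ is obtained from one application of $O_r$ after swapping $\ket{i}$ into the ancilla register. The swap costs $\mathcal{O}(\log N_{\operatorname{row}})$ gates. The construction uses $\mathcal{R}^\dagger$, which is where the query to $O_r^{-1}=O_r^\dagger$ enters.

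\textbf{Overlap computation.} Set $\mathcal{U}_{\mathbf{S}} = \mathcal{R}^\dagger\mathcal{C}$, with a swap of the registers in between if needed so that the labels line up. Then
\[
\bra{0}\bra{i}\,\mathcal{U}_{\mathbf{S}}\,\ket{0}\ket{j}
= \frac{1}{\sqrt{s_c}}\sum_l \braket{i,j_i \mid c_{lj},j}.
\]
This is nonzero only if $j=j_i$ and $i=c_{lj}$ for some $l$. Because $i$ occurs exactly once among the nonzero row indices of column $j_i$, exactly one term survives, and the overlap equals $\frac{1}{\sqrt{s_c}}\,\mathbf{S}_{ij}$.

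\textbf{Conclusion.} The result is exact, so the error is $0$ and the normalization is $\sqrt{s_c}$. The assumption $M\le N_{\operatorname{row}}$ ensures that column indices fit in the $\lceil\log_2 N_{\operatorname{row}}\rceil$-qubit registers, so $a_{\mathbf{S}}$ ancillas suffice. The construction uses one query each to $O_c$ and $O_r^\dagger$, plus $\mathcal{O}(\log N_{\operatorname{row}})$ two-qubit gates.
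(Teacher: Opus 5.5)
Your proposal is correct and is the paper's construction: with $\mathcal{R}=O_r\cdot\mathtt{SWAP}$ and $\mathcal{C}=O_c(W_c\otimes\one)$, the product $\mathcal{R}^\dagger\mathcal{C}$ is exactly $\mathtt{SWAP}\cdot O_r^\dagger\cdot O_c\cdot(W_c\otimes\one)$, no extra swap is needed in between, and your overlap check $\bra{i}\bra{j_i}$ against $\ket{c_{lj}}\ket{j}$ is the same matrix-element computation the paper carries out. Your padding caveat for columns with fewer than $s_c$ nonzeros goes beyond what the paper addresses, but it is not needed for the entries of $\mathbf{S}$: the condition $j=j_i$ already forces $i$ to be a genuine nonzero row of column $j$, so by injectivity of $l\mapsto c_{lj}$ exactly one term survives whatever labels the dummy $l$ values carry.
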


\begin{proof}
Since $\mathbf{S}$ is a selection matrix, its row sparsity is strictly $s_r = 1$. 
Define the state preparation unitary $W_c$ on the column ancilla:
\begin{equation}
W_c \ket{0} = \frac{1}{\sqrt{s_c}}\sum_{l=0}^{s_c - 1} \ket{l}.
\end{equation}
Preparing this uniform superposition requires $ \mathcal{O}(a_{\bf s})$ two-qubit gates \cite{shukla_efficient_2024}.
We then use this preparation routine to construct the following unitary $\mathcal{U}_S \in \mathbb{C}^{2^{a_{\bf s}}\times 2^{a_{\bf s}}}\otimes \mathbb{C}^{2^{a_{\bf s}}\times 2^{a_{\bf s}}}$:
\begin{equation}
\mathcal{U}_S := \mathtt{SWAP} \cdot O_r^\dagger \cdot O_c \cdot \lb W_c \otimes \one \rb
\end{equation}
Here $\mathcal{U}_S$ is a $\lb \sqrt{s_c}, a_{\mathbf{S}}, 0 \rb$-block-encoding of $\mathbf{S}$.
This can be verified by direct computation.
Let $\ket{\psi} = \sum_{j=0}^{M-1} \alpha_j \ket{j}$ be an arbitrary pure quantum state in the domain of $S$. Note that while $\ket{\psi}$ is in $\mathbb{C}^{2^{a_{\bf S}}}$, we do not care about its support outside of this subspace because of the definition of the block encoding.  By observing that $s_r = 1$ implies the row index $k=0$, which means that we do not have to sum over all other non-zero matrix elements in our results.  With this simplification, we have
\begin{equation}
\begin{split}
\lb \bra{0} \otimes \one \rb \mathcal{U}_S \lb \ket{0} \otimes \ket{\psi} \rb &= \lb \bra{0} \otimes \one \rb \mathtt{SWAP} \cdot O_r^\dagger \cdot O_c \cdot \lb W_c \otimes \one \rb \lb \ket{0} \otimes \ket{\psi} \rb \\
&= \lb \bra{0} \otimes \sum_{i=0}^{2^{a_{\bf s}}-1} \ketbra{i}{i} \rb \mathtt{SWAP} \cdot O_r^\dagger \cdot O_c \frac{1}{\sqrt{s_c}} \sum_{l=0}^{s_c-1} \ket{l} \otimes \sum_{j=0}^{2^{a_{\bf s}}-1} \alpha_j \ket{j} \\
&= \lb \sum_{i=0}^{2^{a_{\bf s}}-1} \bra{i} \otimes \ketbra{i}{0} \rb O_r^\dagger \cdot O_c \frac{1}{\sqrt{s_c}} \sum_{l=0}^{s_c-1} \ket{l} \otimes \sum_{j=0}^{2^{a_{\bf s}}-1} \alpha_j \ket{j} \\
&= \frac{1}{\sqrt{s_c}} \sum_{i=0}^{2^{a_{\bf s}}-1} \sum_{l=0}^{s_c-1} \sum_{j=0}^{2^{a_{\bf s}}-1} \alpha_j \lb \bra{i} \otimes \ketbra{i}{r_{i0}} \rb \lb \ket{c_{lj}} \otimes \ket{j} \rb \\
&= \frac{1}{\sqrt{s_c}} \sum_{i=0}^{2^{a_{\bf s}}-1} \alpha_{r_{i0}} \ket{i} \\
&= \frac{1}{\sqrt{s_c}} \mathbf{S} \ket{\psi}.
\end{split}
\end{equation}
As this operation is a linear operator and it has the correct action on an arbitrary input state, it follows by linearity that it must have the correct action on any input.  Thus the block encoding is correct.

The runtime analysis for the algorithm is simple. The $\mathtt{SWAP}$ operation requires additional $\mathcal{O}(a_{\bf S})$ two-qubit gates to implement and only two queries are made to the oracles $O_r$ and $O_c$ in the procedure, which proves our claim.
\end{proof}

With these results we can proceed to block encode the topology matrix, which serves to provide the structure of the couplings in our graph.

\begin{lemma}[Topology Matrix Encoding]
    The topology matrix $\mathbf{T}_0=[\mathbf{D}_0;\mathbf{S}_0] \in \{-1,0,1\}^{(N_e+N_g) \times M}$, composed of the directed incidence matrix $\mathbf{D}_0\in \{-1,0,1\}^{N_e \times M}$ and ground selector $\mathbf{S}_0\in \{0,1\}^{N_g \times M}$, admits a $(\alpha_{\mathbf{T}_0}, 
     \mathcal{O}(\log_2(N_e + N_g)),0)$-\ac{BE} with normalization factor
    \begin{equation} \label{eq:triangle}
        {\alpha_{\mathbf{T}_0} = \sqrt{d_{\operatorname{in}}}+\sqrt{d_{\operatorname{out}}},}
    \end{equation}
    where $d_{\operatorname{in}}=\max_j(s^+_{j}+s_{0,j})$ and $d_{\operatorname{out}}=\max_j(s^-_{j}+s_{0,j})$ are the maximum input and output connectivities, respectively.
    The circuit uses $\mathcal{O}(\log(N_e + N_g))$ two-qubit gates.
\end{lemma}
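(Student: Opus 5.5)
The plan is to write $\mathbf{T}_0$ as a difference of the two augmented selection matrices and then combine their block encodings by a two-term \ac{LCU}. The ground selector must enter with weight $+1$, while $\mathbf{D}_0=\mathbf{S}^+-\mathbf{S}^-$. So I would use the split
\begin{equation}
    \mathbf{T}_0 = \begin{bmatrix} \mathbf{S}^+ \\ \mathbf{S}_0 \end{bmatrix} - \begin{bmatrix} \mathbf{S}^- \\ \mathbf{0} \end{bmatrix},
\end{equation}
or, more symmetrically, write $\mathbf{T}_0$ in terms of $\mathbf{S}_A$ and $\mathbf{S}_B$ acting on the shared row register of size $N_e+N_g$.

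To handle the ground rows with the correct sign, I would attach a sign to the row index. Concretely, let $\mathbf{P}_e$ and $\mathbf{P}_g$ be the diagonal projectors onto edge rows and ground rows, and let $Z_g=\mathbf{P}_e-\mathbf{P}_g$. Then $\mathbf{T}_0=\mathbf{S}_A - Z_g\mathbf{S}_B$, since on ground rows this gives $\mathbf{S}_0+\mathbf{S}_0$. Because that doubles the ground contribution, the split is instead $\mathbf{T}_0=\tfrac{1}{2}\big[(\mathbf{S}_A - Z_g\mathbf{S}_B)+\ldots\big]$. The cleanest choice, which I commit to, is
\begin{equation}
    \mathbf{T}_0 = \mathbf{P}_e\mathbf{S}_A - \mathbf{P}_e\mathbf{S}_B + \mathbf{P}_g\mathbf{S}_A ,
\end{equation}
and I would realize it as $\mathbf{S}_A - Z'\mathbf{S}_B$, where $Z'$ is the diagonal unitary that acts as $+1$ on edge rows and as a reflection on ground rows. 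The reflection should make the $\mathbf{S}_B$ block encoding contribute zero in the ground block, for example by flipping an ancilla flag so that the zero-ancilla projection vanishes there. Since ground rows are a contiguous index range, this is a comparator followed by a controlled flip, costing $\mathcal{O}(\log(N_e+N_g))$ gates.

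Next I apply the preceding selection-matrix lemma. It gives a $(\sqrt{d_{\rm in}},a_{\mathbf{S}},0)$-\ac{BE} of $\mathbf{S}_A$ and a $(\sqrt{d_{\rm out}},a_{\mathbf{S}},0)$-\ac{BE} of $\mathbf{S}_B$, each using one query to $O_r^\dagger$ and one to $O_c$ plus $\mathcal{O}(\log(N_e+N_g))$ gates; the constants $\sqrt{d_{\rm in}}$ and $\sqrt{d_{\rm out}}$ are the maximum column sparsities by definition. I then combine the two with \Cref{lem:lcu}, using a single-qubit PREP with amplitudes $\sqrt{\sqrt{d_{\rm in}}/\alpha}$ and $\sqrt{\sqrt{d_{\rm out}}/\alpha}$ and a controlled SELECT of the two block encodings, with a $Z$ on the control for the minus sign. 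Applying \Cref{lem:lcu} to block encodings rather than unitaries, the weights multiply the subnormalizations. The result is an exact block encoding of $\mathbf{S}_A - (\text{masked})\mathbf{S}_B=\mathbf{T}_0$ with normalization $\alpha_{\mathbf{T}_0}=\sqrt{d_{\rm in}}+\sqrt{d_{\rm out}}$. The ancilla count is $2a_{\mathbf{S}}+\mathcal{O}(1)=\mathcal{O}(\log(N_e+N_g))$, the gate count is $\mathcal{O}(\log(N_e+N_g))$, and the error is zero.

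The main obstacle is the sign and masking of the ground rows. These rows appear in both $\mathbf{S}_A$ and $\mathbf{S}_B$, which is precisely why $d_{\rm in}$ and $d_{\rm out}$ each include $s_{0,j}$, yet they must survive with coefficient $+1$ exactly once. I would first try a flag qubit set by the row-range comparator and included in the block-encoding projector only for the $\mathbf{S}_B$ branch, and then check that this keeps the projected block exactly $\mathbf{T}_0/\alpha$.

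If instead one wants the ground rows to receive equal halves from both branches, the natural weighted split is $\mathbf{T}_0 = \mathbf{S}_A - \mathbf{S}_B + \mathbf{P}_g(\mathbf{S}_A+\mathbf{S}_B)\cdot$, which would use a sign $+$ on ground rows for the $\mathbf{S}_B$ branch. In that case the weights must be matched to the branch normalizations, and the ground block picks up $(\sqrt{d_{\rm out}}/\alpha)$-dependent factors. I would therefore check which of the two conventions reproduces exactly $\mathbf{T}_0$, and keep the masking version, which gives exactly $\mathbf{T}_0$.
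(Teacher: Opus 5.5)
Your committed construction is correct, but it takes a different route from the paper. You write $\mathbf{T}_0=\mathbf{S}_A-\mathbf{P}_e\mathbf{S}_B$ and remove the ground rows of the $\mathbf{S}_B$ branch with a flag qubit, set by a comparator on the output row index. Read precisely, this is a $(1,1,0)$ block encoding of the projector $\mathbf{P}_e$. It is not a diagonal unitary on the row register, since no unitary can annihilate rows. With that reading, multiplying it onto the $\mathbf{S}_B$ encoding and then taking a real $\pm$ LCU gives exactly $\mathbf{T}_0/(\sqrt{d_{\operatorname{in}}}+\sqrt{d_{\operatorname{out}}})$. This stays within the stated gate budget and costs one extra flag qubit.

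The paper instead keeps the ground rows in both branches. It multiplies $\mathbf{S}_A$ and $\mathbf{S}_B$ by the normalization-one diagonal unitaries $\operatorname{diag}(\mathbf{I}_{N_e},e^{i\pi/3}\mathbf{I}_{N_g})$ and $\operatorname{diag}(-\mathbf{I}_{N_e},e^{-i\pi/3}\mathbf{I}_{N_g})$. The ground block then receives $(e^{i\pi/3}+e^{-i\pi/3})\mathbf{S}_0=\mathbf{S}_0$, with no projector qubit.

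Your version is purely real and conceptually minimal. Its cost is that the $s_{0,j}$ part of $d_{\operatorname{out}}$ is charged to the normalization but never used, since $\mathbf{P}_e\mathbf{S}_B=[\mathbf{S}^-;\mathbf{0}]$ has norm only $\sqrt{\max_j s^-_j}$. The paper's symmetric split is tunable: phases $e^{\pm i\theta}$ realize $[\mathbf{D}_0;2\cos\theta\,\mathbf{S}_0]$ at the same normalization. The choice $\theta=0$ gives the constructive variant $[\mathbf{D}_0;2\mathbf{S}_0]$, which the paper notes permits up to four times stiffer ground springs at no normalization cost.

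This also settles the worry in your last paragraph. The LCU weights $\sqrt{d_{\operatorname{in}}}/\alpha$ and $\sqrt{d_{\operatorname{out}}}/\alpha$ exactly cancel the branch subnormalizations, so no $\sqrt{d_{\operatorname{out}}}/\alpha$-dependent factors appear on the ground block. The only obstruction to an equal split is that real signs give $0$ or $2\mathbf{S}_0$ on the ground rows, and the unit-modulus phases $e^{\pm i\pi/3}$ fix precisely that.
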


\begin{proof}
    We decompose $\mathbf{D}_0$ using spring heads $\mathbf{S}^+\in \{0,1\}^{N_e \times M}$ and tails $\mathbf{S}^-\in \{0,1\}^{N_e \times M}$ selection matrices.
    Using \ac{LCU} with one auxiliary qubit, we compose the two selection matrices $\mathbf{S}_A = [\mathbf{S}^+; \mathbf{S}_0]\in \{0,1\}^{(N_e+N_g) \times M}$ and $\mathbf{S}_B = [\mathbf{S}^-; \mathbf{S}_0]\in \{0,1\}^{(N_e+N_g) \times M}$ provided  as follows:
    \begin{equation} \label{eq:single_topology}
    \mathbf{T}_0 =
    \begin{bmatrix}
    \mathbf{I}_{N_e} & \mathbf{0} \\ \mathbf{0} & e^{i\pi/3}\mathbf{I}_{N_g}
    \end{bmatrix}
    \underbrace{
    \begin{bmatrix}
    \mathbf{S}^+ \\ \mathbf{S}_0
    \end{bmatrix}}_{\mathbf{S}_A} +
    \begin{bmatrix}
    -\mathbf{I}_{N_e} & \mathbf{0} \\ \mathbf{0} & e^{-i\pi/3}\mathbf{I}_{N_g}
    \end{bmatrix}
    \underbrace{
    \begin{bmatrix}
    \mathbf{S}^- \\ \mathbf{S}_0
    \end{bmatrix}}_{\mathbf{S}_B}
    =
    \begin{bmatrix}
    \mathbf{D}_0 \\ \mathbf{S}_0
    \end{bmatrix}.
    \end{equation}
    Here, the matrix $\mathbf{S}_A$ is constructed by vertically stacking $\mathbf{S}^+$ and $\mathbf{S}_0$. 
    Because both $\mathbf{S}^+$ and $\mathbf{S}_0$ are selection matrices containing exactly a single $1$ per row, the rows of $\mathbf{S}_A$ are standard basis vectors. 
    Consequently, the product $\mathbf{S}_A^\dagger \mathbf{S}_A$ is a diagonal matrix whose entries are the column sums of $\mathbf{S}_A$. 
    Since the maximum column sum of $\mathbf{S}_A$ is $\max_j (s_j^+ + s_{0,j})$ by definition, its spectral norm evaluates to $\|\mathbf{S}_A\| = \sqrt{\|\mathbf{S}_A^\dagger \mathbf{S}_A\|} = \sqrt{\max_j (s_j^+ + s_{0,j})}$. 
    Then, invoking the previous lemma alongside the techniques of~\cite{low2017hamiltonian}, a block encoding of $\mathbf{S}_A$ can be constructed with normalization constant $\sqrt{\max_j (s_j^+ + s_{0,j})}$.
    
    The factor multiplying $\mathbf S_A$ in \eqref{eq:single_topology} is a unitary matrix.  
    Thus we can block encode it with a normalization constant of $1$.  Then using the multiplicative property of block encodings~\cite{gilyen_quantum_2019}, the product of the block encodings has the product of the block encoding constants.  The sum of block encodings can be implemented using the sum of the block encoding constants, and thus by repeating the same argument for $\mathbf S_B$ we have that ${\bf T_0}$ has a $\sqrt{\max_j(s_j^++s_{0,j})}+\sqrt{\max_j(s_j^-+s_{0,j})}$ block encoding as claimed.
    
    We define $\alpha_{\mathbf{S}_A}=\sqrt{\max_j(s^+_{j}+s_{0,j})}=\sqrt{d_{\operatorname{in}}}$ and $\alpha_{\mathbf{S}_B}=\sqrt{\max_j(s^-_{j}+s_{0,j})}=\sqrt{d_{\operatorname{out}}}$.
    The total normalization $\alpha_{\mathbf{T}_0}$ follows from the \ac{LCU} sum of the components and the construction introduces $\mathcal{O}(1)$ single qubit phase gates and inherits $\mathcal{O}(\log(N_e + N_g))$ two-qubit gates from $\mathcal{U}_{\mathbf{S}}$.  The block encoding construction requires a single ancillary qubit to perform the rotations, but the block encoding itself of each of the summands requires a further $\lceil\log(N_e+N_g)\rceil$ qubits to hold the output of the column index for the sparse Hamiltonian oracle.  This is the dominant contribution to the space overheads which justifies the $\mathcal{O}(\log(N_e+N_g))$ scaling.
\end{proof}

In \Cref{eq:single_topology}, we chose the sub-optimal phase cancellation implemented by two phase gates for notational simplicity.
We point out though that by adding the terms constructively, i.e. $\mathbf{T}^\prime_0=[\mathbf{D}_0;2\mathbf{S}_0]$, one can implement grounding springs with a stiffness $\mathbf{K}^\prime_{g,c}$ up to $4$ times larger without negatively impacting the normalization constant.
This accounts for the lower maximum eigen-frequencies of springs that are fixed to ground.
However, in this notion the clear interpretation of equally weighted energies in the computational basis vanishes.

The normalization in \Cref{eq:triangle} is optimal for $d$-regular bipartite graphs (for even $d$) under a balanced edge orientation gauge. 
By orienting half of the incident edges inward and half outward at every node, we achieve $d_{\operatorname{in}} = d_{\operatorname{out}} = d/2$. 
This yields $\alpha_{\mathbf{T}_0} = \sqrt{d/2} + \sqrt{d/2} = 2\sqrt{d/2} = \sqrt{2d} = \|\mathbf{T}_0\|$.
Similarly, the normalization is asymptotically optimal for star graphs with $N$ leaves.
For such graphs, we have $d_{\operatorname{in}}=N$ and $d_{\operatorname{out}}=1$, yielding $\alpha_{\mathbf{T}_0} = \sqrt{N}+1$.
Given the spectral norm $\|\mathbf{T}_0\|=\sqrt{N+1}$, we observe that the normalization tightens relatively in the large limit:
\begin{equation}
\lim_{N\rightarrow \infty} \frac{\alpha_{\mathbf{T}_0}}{\|\mathbf{T}_0\|} = \lim_{N\rightarrow \infty} \frac{\sqrt{N}+1}{\sqrt{N+1}} = 1.
\end{equation}
This encoding provides a normalization advantage of up to $\sqrt{2}$ for certain topologies compared to previous work \cite{babbush2023exponential}.

\begin{lemma}[Body Matrix Encoding]
    Assuming all coupling blocks, as described in \Cref{fig:general_visco_2mass}, possess exactly $C$ parallel springs, the multi-body topology matrix $\mathbf{T} = \mathbf{1}_C \otimes \mathbf{T}_0$ admits a $(\alpha_{\mathbf{T}}, a_{\mathbf{T}}, 0)$-\ac{BE} with normalization factor
    \begin{equation}
    {\alpha_{\mathbf{T}} = \alpha_{\mathbf{T}_0}\sqrt{C},}
    \end{equation}
    and with $a_{\mathbf{T}}=a_{\mathbf{T}_0}+\lceil \log_2(C)\rceil$.
    The circuit requires a single query to $\mathcal{U}_{\mathbf{T}_0}$ and uses $\mathcal{O}(\log C + \log(N_e + N_g))$ two-qubit gates.
\end{lemma}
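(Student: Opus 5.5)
We block-encode the tall matrix $\mathbf{T} = \mathbf{1}_C \otimes \mathbf{T}_0 = [\mathbf{T}_0;\dots;\mathbf{T}_0]$ as a product of two block encodings. The first factor is an isometry-like stacking matrix $\mathbf{1}_C \in \mathbb{R}^{C\times 1}$ acting on a fresh register of $\lceil\log_2 C\rceil$ qubits. The second factor is the existing encoding $\mathcal{U}_{\mathbf{T}_0}$ from the Topology Matrix Encoding lemma. Since $\mathbf{T} = (\mathbf{1}_C \otimes \mathbf{I}_{N_e+N_g})\,\mathbf{T}_0$, the multiplicative property of block encodings \cite{gilyen_quantum_2019} gives a normalization equal to the product of the two factors' normalizations. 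It therefore suffices to encode $\mathbf{1}_C$ with normalization $\sqrt{C}$ and error $0$.

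For the stacking factor, $\mathbf{1}_C$ is a selection matrix with row sparsity $1$ and column sparsity $C$. The selection-matrix lemma above applies directly: $s_c = C$ gives a $(\sqrt{C},\lceil\log_2 C\rceil,0)$-\ac{BE}. Concretely, the oracles are trivial here. The column oracle maps $\ket{l}\ket{0}\mapsto\ket{l}\ket{0}$ and the row oracle is the identity, so the construction reduces to the uniform-superposition preparation $W_c\ket{0} = C^{-1/2}\sum_{c}\ket{c}$ on the new register. Checking $(\bra{c}\otimes\one)(W_c\ket{0}\otimes\ket{\mathbf{x}}) = C^{-1/2}\ket{\mathbf{x}}$ for every $c$ confirms that this unitary, read with input ancilla $\ket{0}$ and the output register free, encodes $\mathbf{1}_C/\sqrt{C}$.

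One subtlety must be handled carefully. In a rectangular block encoding the output index set is larger than the input index set, so the new register serves as an output ancilla rather than being projected onto $\ket{0}$. We follow the same convention as the selection-matrix lemma: pad $\mathbf{T}$ to a square matrix of dimension $2^{\lceil\log_2 C\rceil}2^{n_{\mathbf S}}$, where the rows of $\mathbf{T}$ are indexed by (copy $c$, edge $i$). With this padding, the padded $\mathbf{T}$ equals the appropriate block of $(W_c\otimes\one)\,\mathcal{U}_{\mathbf{T}_0}$ up to a register swap. The normalization is $\sqrt{C}\cdot\alpha_{\mathbf{T}_0}$ and the error remains $0$, because both factors are exact. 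As a sanity check, $\mathbf{T}^\dagger\mathbf{T} = C\,\mathbf{T}_0^\dagger\mathbf{T}_0$, so $\|\mathbf{T}\| = \sqrt{C}\|\mathbf{T}_0\|$ and the normalization is consistent.

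Resource counting follows directly. We make one query to $\mathcal{U}_{\mathbf{T}_0}$. Preparing a uniform superposition over $C$ (not necessarily a power of two) elements costs $\mathcal{O}(\log C)$ two-qubit gates \cite{shukla_efficient_2024}. The swap/relabeling of registers costs $\mathcal{O}(\log C+\log(N_e+N_g))$ gates. Together with the gates inside $\mathcal{U}_{\mathbf{T}_0}$, this gives the stated $\mathcal{O}(\log C+\log(N_e+N_g))$ count, and the ancilla count is $a_{\mathbf{T}_0}+\lceil\log_2 C\rceil$.

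The main obstacle is the rectangular-encoding bookkeeping: making precise which registers are projected onto $\ket{0}$ on input versus output so that the product rule applies. We would resolve this by padding to square matrices exactly as in the selection-matrix lemma. The case where $C$ is not a power of two is handled by the exact uniform-state preparation, which keeps the error at zero.
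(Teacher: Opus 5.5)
Your proposal is correct and is essentially the paper's construction: $\mathbf{T}=\sqrt{C}\,\ket{+_C}\otimes\mathbf{T}_0$, a uniform-superposition preparation on a fresh body register, and a single query to $\mathcal{U}_{\mathbf{T}_0}$, which yields normalization $\sqrt{C}\,\alpha_{\mathbf{T}_0}$ with zero error and $a_{\mathbf{T}_0}+\lceil\log_2 C\rceil$ ancillae. The only difference is the gadget that forces the body register's input to be $\ket{0}$ in the square (padded) encoding---you inherit the $\mathtt{SWAP}$ with a $\ket{0}$-initialized $\lceil\log_2 C\rceil$-qubit ancilla from the selection-matrix lemma, while the paper uses a transversal CNOT into the auxiliary register $\mathsf{R}_B$---and you should state explicitly that your ``register swap'' is this gadget rather than a relabeling, because $(W_c\otimes\one)\,\mathcal{U}_{\mathbf{T}_0}$ alone has block $W_c\otimes\mathbf{T}_0/\alpha_{\mathbf{T}_0}$, not the padded $\mathbf{T}$.
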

\begin{proof}
    Assuming all coupling blocks possess $C$ parallel springs, the multi-body topology matrix is defined by
    \begin{equation}
        \mathbf{T} = \mathbf{1}_C \otimes \mathbf{T}_0 \in \{-1,0,1\}^{C(N_e+N_g)\times M},
    \end{equation}
    where $\mathbf{1}_C$ denotes the column vector of all ones. By observing that $\mathbf{1}_C$ corresponds to the scaled uniform superposition state, i.e., $\mathbf{1}_C = \sqrt{C}\ket{+_C}$ with $\ket{+_C} = \frac{1}{\sqrt{C}}\sum_{j=0}^{C-1}\ket{j}$, we may express the expansion as
    \begin{equation}
        \mathbf{T} = \sqrt{C}\ket{+_C} \otimes \mathbf{T}_0.
    \end{equation}
    The implementation utilizes a body register $\mathsf{R}_{A}$ and an auxiliary register $\mathsf{R}_{B}$, each of size $c= \lceil \log_2(C)\rceil$. 
    To block-encode the expansion, we apply a transversal $\mathtt{CNOT}$ operation $\mathcal{U}_{\mathtt{CNOT}} = \bigotimes_{k=1}^c \mathtt{CNOT}^{(k)}_{\mathsf{R}_{A} \to \mathsf{R}_{B}}$.
    We then prepare the uniform superposition $\ket{+_C}$ on $\mathsf{R}_{A}$ with $\mathcal{U}_{\operatorname{prep}}$ using $\mathcal{O}(c)$ gates \cite{shukla_efficient_2024} followed by the query to $\mathcal{U}_{\mathbf{T}_0}$ in the data register $\mathsf{R}_{D}$ to block encode $\mathbf{T}$ in the $\ket{0}\!\bra{0}_{\mathsf{R}_{B}}$ block.
    More specifically
    \begin{equation}
        \begin{aligned}
            &(\bra{0}_{\mathsf{R}_B} \otimes \mathbb{1}_{\mathsf{R}_A} \otimes \mathbb{1}_{\mathsf{R}_D}) (\mathbb{1}_{\mathsf{R}_B} \otimes \mathbb{1}_{\mathsf{R}_A} \otimes \mathcal{U}_{\mathbf{T}_0}) (\mathbb{1}_{\mathsf{R}_B} \otimes \mathcal{U}_{\operatorname{prep}} \otimes \mathbb{1}_{\mathsf{R}_D}) (\mathcal{U}_{\mathtt{CNOT}}^{\mathsf{R}_A \to \mathsf{R}_B} \otimes \mathbb{1}_{\mathsf{R}_D}) (\ket{0}_{\mathsf{R}_B} \otimes \mathbb{1}_{\mathsf{R}_A} \otimes \mathbb{1}_{\mathsf{R}_D}) \\
            &= (\mathbb{1}_{\mathsf{R}_A} \otimes \mathcal{U}_{\mathbf{T}_0}) (\mathcal{U}_{\operatorname{prep}} \otimes \mathbb{1}_{\mathsf{R}_D}) \left( \sum_{k=0}^{2^c-1} \braket{0|k} \ket{k}\!\bra{k}_{\mathsf{R}_{A}} \otimes \mathbb{1}_{\mathsf{R}_D} \right) \\
            &= \mathcal{U}_{\operatorname{prep}} \ket{0}\!\bra{0}_{\mathsf{R}_{A}} \otimes \mathcal{U}_{\mathbf{T}_0} \\
            &= \ket{+_C}\!\bra{0}_{\mathsf{R}_{A}} \otimes \mathcal{U}_{\mathbf{T}_0} \\
            &= \frac{1}{\sqrt{C}} \sum_{j=0}^{C-1} \ket{j}\!\bra{0}_{\mathsf{R}_{A}} \otimes \mathcal{U}_{\mathbf{T}_0}
        \end{aligned}
    \end{equation}
    Since $\mathcal{U}_{\mathbf{T}_0}$ block-encodes $\mathbf{T}_0$ with normalization $\alpha_{\mathbf{T}_0}$, this final operator encodes $\mathbf{1}_C \otimes \mathbf{T}_0 = \mathbf{T}$ with the overall normalization factor $\alpha_{\mathbf{T}} = \alpha_{\mathbf{T}_0}\sqrt{C}$.
    The circuit requires a single query to $\mathcal{U}_{\mathbf{T}_0}$, introduces $2c$ additional qubits where $c$ are auxiliaries, and introduces $\mathcal{O}(\log C)$ two-qubit gates, while inheriting $\mathcal{O}(\log(N_e + N_g))$ two-qubit gates.
\end{proof}

\subsection{Hamiltonian} \label{s:be_hamiltonian}

\begin{lemma}[Weighted Topology Matrix Encoding]\label{lem:weightTopo}
    The weighted topology matrix $\hat{\mathbf{T}} = \mathbf{K}^{1/2}\mathbf{T}\mathbf{M}^{-1/2}$ admits a $(\alpha_{\hat{\mathbf{T}}}, a_{\hat{\mathbf{T}}}, \varepsilon)$-\ac{BE} with normalization factor
    \begin{equation}
    {\alpha_{\hat{\mathbf{T}}} = (\sqrt{d_{\operatorname{in}}}+\sqrt{d_{\operatorname{out}}})\sqrt{\frac{k_{\max}}{m_{\min}}C}.}
    \end{equation}
    The encoding requires one query to $\mathcal{U}_\mathbf{T}$, and two queries each to the $\varepsilon'$-accurate oracles $O_{\mathbf{K}^{1/2}}$, and $O_{\mathbf{M}^{-1/2}}$ and uses $a_{\hat{\mathbf{T}}} = 3 + \lceil \log_2(C)\rceil +\lceil\log_2(N_e + N_g)\rceil$ ancillae, $\mathcal{O}(\log C + \log(N_e + N_g))$ two-qubit gates, and has additive error $\varepsilon = 2\alpha_{\hat{\mathbf{T}}}\varepsilon'$.
\end{lemma}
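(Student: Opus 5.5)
The plan is to realize $\hat{\mathbf{T}} = \mathbf{K}^{1/2}\mathbf{T}\mathbf{M}^{-1/2}$ as a product of three block encodings and then use the multiplicativity of block encodings \cite{gilyen_quantum_2019}.

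First, I would invoke Lemma~\ref{lemma:real_diag} twice: once for the diagonal matrix $\mathbf{M}^{-1/2}$, with normalization $\alpha_{\mathbf{M}^{-1/2}} = \|\mathbf{M}^{-1/2}\|_\infty = m_{\min}^{-1/2}$, and once for $\mathbf{K}^{1/2}$, with normalization $\alpha_{\mathbf{K}^{1/2}} = k_{\max}^{1/2}$. Each uses one ancilla. Each realizes $(O_{\mathbf{R}}+O_{\mathbf{R}}^\dagger)/2$ through a Hadamard-sandwiched controlled phase oracle, which accounts for ``two queries'' per oracle when the controlled-$O$ and controlled-$O^\dagger$ branches are counted. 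The lemma gives operator-norm error at most $\varepsilon'$ on the normalized blocks $\mathbf{R}/\alpha_{\mathbf{R}}$. The third factor is the Body Matrix Encoding lemma, which gives an exact $(\alpha_{\mathbf{T}_0}\sqrt{C}, a_{\mathbf{T}}, 0)$-BE of $\mathbf{T}$ with $\alpha_{\mathbf{T}_0} = \sqrt{d_{\rm in}}+\sqrt{d_{\rm out}}$.

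Second, I would compose the three encodings in the order $\mathcal{U}_{\mathbf{M}^{-1/2}}$, then $\mathcal{U}_{\mathbf{T}}$, then $\mathcal{U}_{\mathbf{K}^{1/2}}$, each acting on its own fresh ancilla register. Projecting all ancillas onto $\ket{0}$ yields the block $\hat{\mathbf{T}}/(\alpha_{\mathbf{K}^{1/2}}\alpha_{\mathbf{T}}\alpha_{\mathbf{M}^{-1/2}})$, whose normalization is exactly the stated $\alpha_{\hat{\mathbf{T}}}$. One detail needs care here: $\mathbf{M}^{-1/2}$ acts on the $M$-dimensional mass space, while $\mathbf{K}^{1/2}$ acts on the $C(N_e+N_g)$-dimensional output space of $\mathbf{T}$. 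I would therefore embed the mass register into the larger system register, padding it with zeros. This is harmless because the block encoding of $\mathbf{T}$ only reads the support on the first $M$ basis states. The ancilla count is then the two diagonal-encoding qubits, plus the one LCU qubit of $\mathbf{T}_0$, plus the $\lceil\log_2 C\rceil$ body qubits, plus the $\lceil \log_2(N_e+N_g)\rceil$ column register, which gives $3+\lceil\log_2 C\rceil+\lceil\log_2(N_e+N_g)\rceil$. The gate count is inherited additively from the components, giving $\mathcal{O}(\log C + \log(N_e+N_g))$.

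Third, the error bound. Write the implemented normalized blocks as $\tilde{\mathbf{K}} = \mathbf{K}^{1/2}/\alpha_{\mathbf{K}^{1/2}} + E_K$ and $\tilde{\mathbf{M}} = \mathbf{M}^{-1/2}/\alpha_{\mathbf{M}^{-1/2}} + E_M$, with $\|E_K\|,\|E_M\|\le\varepsilon'$. The normalized $\mathbf{T}$-block is exact and has norm at most $1$. I would telescope the product:
\begin{equation}
\tilde{\mathbf{K}}\,\tfrac{\mathbf{T}}{\alpha_{\mathbf{T}}}\,\tilde{\mathbf{M}} - \tfrac{\mathbf{K}^{1/2}}{\alpha_{\mathbf{K}^{1/2}}}\tfrac{\mathbf{T}}{\alpha_{\mathbf{T}}}\tfrac{\mathbf{M}^{-1/2}}{\alpha_{\mathbf{M}^{-1/2}}} = E_K\tfrac{\mathbf{T}}{\alpha_{\mathbf{T}}}\tilde{\mathbf{M}} + \tfrac{\mathbf{K}^{1/2}}{\alpha_{\mathbf{K}^{1/2}}}\tfrac{\mathbf{T}}{\alpha_{\mathbf{T}}}E_M .
\end{equation}
Each term has norm at most $\varepsilon'$, using that $\tilde{\mathbf{M}}$ is a sub-block of a unitary and therefore a contraction. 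Multiplying back by $\alpha_{\hat{\mathbf{T}}}$ gives total error $2\alpha_{\hat{\mathbf{T}}}\varepsilon'$.

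I expect the main obstacle to be the dimension and ancilla bookkeeping: ensuring that the projected block of the composed circuit really equals the product restricted to the correct subspaces, given the padding between the $M$-, $(N_e+N_g)$- and $C(N_e+N_g)$-dimensional spaces. By contrast, the norm and error arithmetic is routine.
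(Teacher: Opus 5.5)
Your proposal is correct and follows the paper's proof. The paper likewise multiplies the phase-oracle diagonal encodings $\mathcal{U}_{\mathbf{K}^{1/2}}$ and $\mathcal{U}_{\mathbf{M}^{-1/2}}$ with the exact encoding of $\mathbf{T}$ via multiplicativity of block encodings, and obtains $\varepsilon = \alpha_{\mathbf{M}^{-1/2}}\alpha_{\mathbf{T}}(\alpha_{\mathbf{K}^{1/2}}\varepsilon') + \alpha_{\mathbf{K}^{1/2}}\alpha_{\mathbf{T}}(\alpha_{\mathbf{M}^{-1/2}}\varepsilon') = 2\alpha_{\hat{\mathbf{T}}}\varepsilon'$, which is exactly your two-term telescoping; your ancilla tally and zero-padding of the mass register simply make explicit the bookkeeping the paper leaves implicit, consistently with the stated counts.
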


\begin{proof}
    We introduce the weighted topology matrix
    \begin{equation} \label{eq:weighted_topology}
    \hat{\mathbf{T}} = \mathbf{K}^{1/2}\mathbf{T}\mathbf{M}^{-1/2},
    \end{equation}
    where $\mathbf{K} = \diag\bigl(\mathbf{K}_{1}, \dots, \mathbf{K}_C)\succ0$ with $\mathbf{K}_c=\diag\bigl(\mathbf{K}_{e,c},\mathbf{K}_{g,c}\bigr)$ is the diagonal matrix of spring constants and $\mathbf{M}\succ0$ is the diagonal matrix of masses.
    
    As $\mathbf{K}^{1/2},\mathbf{M}^{-1/2}$ are diagonal and positive definite, we avoid computing the inverse square-root, for example, through inequality testing \cite{babbush2023exponential}, and directly access the real diagonal matrices $\mathcal{U}_{\mathbf{R}}\rightarrow \mathcal{U}_{\mathbf{K}^{1/2}}$ and $\mathcal{U}_{\mathbf{R}} \rightarrow \mathcal{U}_{\mathbf{M}^{-1/2}}$ defined in \Cref{eq:real_diag} through their respective phase oracles.
    By the multiplicative property of block encodings \cite{gilyen_quantum_2019} the product of two block encodings can be implemented with multiplicative normalization constant and summed number of ancilla qubits.
    Consequently, applied to the product $\mathbf{K}^{1/2}\mathbf{T}\mathbf{M}^{-1/2}$, we obtain the stated overall normalization constant. 
    Because the block encoding of $\mathbf{T}$ is exact, the additive errors of the diagonal oracles cross-multiply with the respective normalizations, yielding a total encoding error of 
    \begin{equation}
        \varepsilon = \alpha_{\mathbf{M}^{-1/2}}\alpha_{\mathbf{T}}(\alpha_{\mathbf{K}^{1/2}}\varepsilon') + (\alpha_{\mathbf{K}^{1/2}}\alpha_{\mathbf{T}})(\alpha_{\mathbf{M}^{-1/2}}\varepsilon') = 2\alpha_{\hat{\mathbf{T}}}\varepsilon' = \mathcal{O}(\alpha_{\hat{\mathbf{T}}}\varepsilon').
    \end{equation}
    The circuit cost is from the queried \acp{BE}: $\mathcal{O}(1)+\mathcal{O}(\log C + \log(N_e + N_g)) +\mathcal{O}(1)$.
\end{proof}
This block encoding of the weighted topology matrix allows us to immediate construct a block encoding of the Hamiltonian ${\bf H}$ which is defined in~\Cref{eq:hamiltonian_dissipator}.

\begin{lemma}[Hamiltonian Encoding (Induced Error)]
    The Hamiltonian $\mathbf{H}$ admits a $(\alpha_{\mathbf{H}}, a_{\mathbf{H}}, \varepsilon)$ Hermitian block encoding with normalization factor
    \begin{equation}
    {\alpha_{\mathbf{H}} =(\sqrt{d_{\operatorname{in}}}+\sqrt{d_{\operatorname{out}}})\sqrt{\frac{k_{\max}}{m_{\min}}C},}
    \end{equation}
    and $a_{\mathbf{H}} = 3 + \lceil \log_2(C)\rceil +\lceil\log_2(N_e + N_g)\rceil$, where $N=M+C(N_e+N_g)$.
    The encoding requires four queries each to the $\varepsilon'$-accurate phase oracles $O_{\mathbf{K}^{1/2}}$, $O_{\mathbf{M}^{-1/2}}$, and one query each to the index oracles $O_r^{(\mathcal{M})},O_c^{(\mathcal{M})}$ for $\mathcal{M} \in \{\mathbf{S}_A, \mathbf{S}_B\}$ and to each of their adjoints.
    The circuit uses $\mathcal{O}(\log C + \log(N_e + N_g))$ two-qubit gates and has
    additive error $\varepsilon = 2\alpha_{\mathbf{H}} \varepsilon'$.
\end{lemma}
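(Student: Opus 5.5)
The plan is to write $\mathbf{H}$ as a Hermitian dilation of the weighted topology matrix $\hat{\mathbf{T}} = \mathbf{K}^{1/2}\mathbf{T}\mathbf{M}^{-1/2}$ and then reuse \Cref{lem:weightTopo}. From the expanded form \eqref{H_expanded},
\begin{equation*}
\mathbf{H} = i\,\ketbra{1}{0}\otimes \hat{\mathbf{T}} - i\,\ketbra{0}{1}\otimes \hat{\mathbf{T}}^\dagger ,
\end{equation*}
where the qubit labels the velocity block ($\ket{0}$, dimension $M$) and the memory-force block ($\ket{1}$, dimension $C(N_e+N_g)$). We embed both blocks in a common register of $\lceil\log_2(C(N_e+N_g))\rceil$ qubits, which is possible because $M\le N_e+N_g$.

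The standard construction takes a block encoding $\mathcal{U}_{\hat{\mathbf{T}}}$ from \Cref{lem:weightTopo} with normalization $\alpha_{\hat{\mathbf{T}}}$. We then form
\begin{equation*}
\mathcal{U}_{\mathbf{H}} = (S\otimes \one)\big(\ketbra{1}{1}\otimes \mathcal{U}_{\hat{\mathbf{T}}} + \ketbra{0}{0}\otimes \mathcal{U}_{\hat{\mathbf{T}}}^\dagger\big)(X\otimes\one),
\end{equation*}
where $S=\mathrm{diag}(-i,i)$ (up to ordering) supplies the $\pm i$ phases. This is a controlled-$\mathcal{U}_{\hat{\mathbf{T}}}$ / controlled-$\mathcal{U}_{\hat{\mathbf{T}}}^\dagger$ selected by the block qubit, followed by a bit flip on that qubit. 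Projecting the ancillas onto $\ket{0}$ gives $\ketbra{1}{0}\otimes i\hat{\mathbf{T}}/\alpha_{\hat{\mathbf{T}}}$ plus its adjoint term $-i\ketbra{0}{1}\otimes\hat{\mathbf{T}}^\dagger/\alpha_{\hat{\mathbf{T}}}$, i.e.\ $\mathbf{H}/\alpha_{\hat{\mathbf{T}}}$. We use the fact that $\mathcal{U}^\dagger$ block-encodes $\hat{\mathbf{T}}^\dagger$ with the same normalization and ancillas.

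One detail needs care. $\hat{\mathbf{T}}$ is rectangular, so input and output live on registers of different sizes. The encoding of $\mathbf{S}$ handles this by swapping the index register with the ancilla register, so the ancilla projection must also include the padding qubits that distinguish $\mathbf{v}$ from $\mathbf{f}$ indices. I would make this explicit by checking the top-left block directly on basis states $\ket{0}\ket{j}$ and $\ket{1}\ket{e}$.

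Because the dilation takes no LCU sum, the normalization stays $\alpha_{\mathbf{H}}=\alpha_{\hat{\mathbf{T}}}=(\sqrt{d_{\rm in}}+\sqrt{d_{\rm out}})\sqrt{k_{\max}C/m_{\min}}$. The ancilla count equals that of $\hat{\mathbf{T}}$: the block qubit is part of the system register, since $N$ already counts both blocks.

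Query counting follows from \Cref{lem:weightTopo}. Implementing the controlled pair as a single select costs one $\mathcal{U}_{\hat{\mathbf{T}}}$ or $\mathcal{U}_{\hat{\mathbf{T}}}^\dagger$ per branch, i.e.\ two controlled uses. Each use costs two queries to each of $O_{\mathbf{K}^{1/2}}$ and $O_{\mathbf{M}^{-1/2}}$, giving four each. Each index oracle $O_r^{(\mathcal M)},O_c^{(\mathcal M)}$ appears once in $\mathcal{U}_{\hat{\mathbf{T}}}$ and once as an adjoint in $\mathcal{U}_{\hat{\mathbf{T}}}^\dagger$, matching ``one query each and to their adjoints''. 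The gate count is inherited: $\mathcal{O}(\log C+\log(N_e+N_g))$.

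For the error, the off-diagonal blocks carry errors of $\hat{\mathbf{T}}$ and $\hat{\mathbf{T}}^\dagger$ of size $2\alpha_{\hat{\mathbf{T}}}\varepsilon'$ each. The norm of an off-diagonal block-antidiagonal operator is the maximum of the block norms, not their sum, so the total error stays $2\alpha_{\mathbf{H}}\varepsilon'$.

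The main obstacle I expect is bookkeeping rather than analysis. We must make sure the rectangular-encoding swap trick and the padding between the dimension-$M$ and dimension-$C(N_e+N_g)$ blocks yield exactly $\mathbf{H}/\alpha$ in the ancilla-zero block with no spurious contributions from padded indices. We must also check that the resulting unitary is Hermitian (a Hermitian block encoding, which is convenient for qubitization). If the simple select does not give Hermiticity directly, I would instead use the symmetric form $\ketbra{1}{0}\otimes\mathcal{U}+\ketbra{0}{1}\otimes\mathcal{U}^\dagger$ conjugated by phase gates, which is manifestly self-adjoint.
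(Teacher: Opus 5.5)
Your proposal is correct and is essentially the paper's proof. Your unitary $(S\otimes\one)\big(\ketbra{1}{1}\otimes\mathcal{U}_{\hat{\mathbf{T}}}+\ketbra{0}{0}\otimes\mathcal{U}_{\hat{\mathbf{T}}}^\dagger\big)(X\otimes\one)$ coincides with the paper's $(\mathbf{Y}\otimes\mathbf{I})\big(\ketbra{0}{0}\otimes\mathcal{U}_{\hat{\mathbf{T}}}+\ketbra{1}{1}\otimes\mathcal{U}_{\hat{\mathbf{T}}}^\dagger\big)$: both equal $i\ketbra{1}{0}\otimes\mathcal{U}_{\hat{\mathbf{T}}}-i\ketbra{0}{1}\otimes\mathcal{U}_{\hat{\mathbf{T}}}^\dagger$, and the normalization, the doubled query count and the $2\alpha_{\mathbf{H}}\varepsilon'$ error are inherited from \Cref{lem:weightTopo} exactly as in the paper; that operator is already self-adjoint, so your Hermiticity fallback is unnecessary, and the padding bookkeeping you flag is likewise left implicit in the paper.
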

\begin{proof}
    We construct $\check{\mathbf{H}}\in \mathbb{C}^{2C(N_e+N_g) \times 2C(N_e+N_g)}$ as the Hermitian dilation of $\hat{\mathbf{T}}$ to embed the physical Hamiltonian $\mathbf{H} \in \mathbb{C}^{N \times N}$  \cite{babbush2023exponential}.
    By definition in~\Cref{eq:hamiltonian_dissipator}, this matrix is 
    \begin{equation}
    {\bf H} = i\sqrt{\bf B} \begin{bmatrix}
        {\bf 0} & -{\bf T}^\dagger\\
        \bf{T} & {\bf 0}
    \end{bmatrix}    \sqrt{\bf B}
    \end{equation}
    This can be done unitarily with constant gate overhead by introducing a single state qubit, one call to controlled-$\mathcal{U}_{\hat{\mathbf{T}}}$ and its adjoint, and one Pauli-$\mathbf{Y}$ gate as
    \begin{equation}
    \check{\mathbf{H}} = (\mathbf{Y}\otimes \mathbf{I})\begin{pmatrix} \hat{\mathbf{T}} & \mathbf{0} \\ \mathbf{0} & \hat{\mathbf{T}}^\dagger \end{pmatrix}
    = i\begin{bmatrix}
    \mathbf{0}&-\hat{\mathbf{T}}^{\dagger}\\
    \hat{\mathbf{T}}&\mathbf{0}
    \end{bmatrix},
    \end{equation}
    where $\mathbf{Y}= \begin{bmatrix} 0 & -i \\ i & 0 \end{bmatrix}$ and $\hat{\mathbf{T}}$ is the weighted topology matrix defined in~\Cref{eq:weighted_topology}.

     \Cref{lem:weightTopo} provides a bound we use for the unitary block encoding for $\hat{\bf T}$, and $\hat{\bf T}^\dagger$ can be generated by changing our oracle to generate the adjoint of $\hat{\bf T}$.
     For this we invoke the adjoint identity 
    \begin{equation}
     (\hat{\mathbf{T}})^\dagger = (\mathbf{K}^{1/2}\mathbf{T}\mathbf{M}^{-1/2})^\dagger = (\mathbf{M}^{-1/2})^\dagger(\mathbf{T})^\dagger(\mathbf{K}^{1/2})^\dagger = \mathbf{M}^{-1/2}(\mathbf{T})^\dagger\mathbf{K}^{1/2},
     \end{equation}
     which follows from the material matrices being real and diagonal.
    As we assume index oracles, we can implement $(\hat{\mathbf{T}})^\dagger$ using an adjoint implementation $\mathbf{S}^\dagger$, followed by the previously described topology matrix encoding.
    The unitary block encoding for $\hat{\bf T}$ and $\hat{\bf T}^\dagger$ can thus be achieved using at most twice the resources as generating $\hat{\bf T}$ because we can build a controlled adjoint circuit and a controlled ordinary unitary circuit to implement $\ketbra{0}{0}\otimes \hat{\bf T}+\ketbra{1}{1}\otimes \hat{\bf T}^\dagger$. 
    
    The block encoding constant for such a direct sum is simply the maximum of the two block encoding constants.  We have from~\Cref{lem:weightTopo}
    that the normalization factor $\alpha_{\mathbf{H}}$ is  $\alpha_{\hat{\mathbf{T}}}$.
    The block encoding unitary also requires $3 + \lceil \log_2(C)\rceil +\lceil\log_2(N_e + N_g)\rceil$ ancillae and uses $\mathcal{O}(\log C + \log(N_e + N_g))$ two-qubit gates from \Cref{lem:weightTopo}.
    
    Each query to the controllect block encoding, controlled-$\mathcal{U}_{\hat{\mathbf{T}}}$, decomposes into two queries to $O_{\mathbf{K}^{1/2}}$, $O_{\mathbf{M}^{-1/2}}$, $O_c$, $O_r^\dagger$ from~\Cref{lem:weightTopo}, and one query to controlled-$\mathcal{U}_{\hat{\mathbf{T}}^\dagger}$, which decomposes as two queries to $O_{\mathbf{K}^{1/2}}$, $O_{\mathbf{M}^{-1/2}}$, $O_c^\dagger$, $O_r$.

    The block encoding constant of a matrix does not change when multiplying by an exact unitary from the block encoding lemma of~\cite{gilyen_quantum_2019}.  Thus the block encoding constant using this approach is simply the block encoding constant of $\hat{\bf T}$.  This block encoding constant is from~\Cref{lem:weightTopo}:
    \begin{equation}
         {\alpha_{\mathbf{H}} =(\sqrt{d_{\operatorname{in}}}+\sqrt{d_{\operatorname{out}}})\sqrt{\frac{k_{\max}}{m_{\min}}C},}
    \end{equation}
    Similarly, the error in the product of block encoding and a fixed unitary matrix with normalization is simply the error in the block encoding of $\hat{\bf T}$ using the results of~\cite{gilyen_quantum_2019}.
    Thus the total error in the Hermitian block encoding $\hat{\bf H}$ obeys $\varepsilon = 2\alpha_{\hat{\mathbf{T}}}\varepsilon'$ is inherited from $\mathcal{U}_{\hat{\mathbf{T}}}$.
\end{proof}

As $M \geq 0$ and $N = M + C(N_e+N_g)$, we inherently have $C(N_e+N_g) \leq N$, which yields $\log C + \log(N_e+N_g) \leq \log N$.
Furthermore, as $M \leq N_e + N_g$, we have by substitution
\begin{equation}
    N \leq (C+1)(N_e+N_g),
\end{equation}
and by using the product rule of the logarithm
\begin{equation}
    \log N \leq \log \Big((C+1)(N_e+N_g) \Big) = \log (C+1) + \log (N_e+N_g).
\end{equation}
These bounds establish that $3 + \lceil \log_2(C)\rceil +\lceil\log_2(N_e + N_g)\rceil = \mathcal{O}(\log C + \log(N_e + N_g)) = \Theta(\log N)$, and \Cref{lem:hamiltonian_encoding} follows.
Note, that as an edge case for ungrounded global tree topologies we have $M = N_e + N_g + 1$.
Such a graph can still be simulated by introducing a single dummy ground connection with $k=\eta=0$ (and appropriate relaxation of the oracle to include dummy rows), which requires relaxing the parameter constraint to $k \geq0$ and to restrict the corresponding oracles and inverses to the non-zero subspace.
By attaching the ground connection to one of the leaf nodes, the normalization remains unaffected.
The same idea can be generalized to multiple disconnected trees.
Also note, that the 1-ancilla embedding requires a padding to $\check{\mathbf{H}} \in i\mathbb{R}^{2C(N_e+N_g) \times 2C(N_e+N_g)}$.
For $M=C(N_e+N_g)$ this is optimal.
Otherwise it can be a qubit overhead of up to $1$.
In the interest of notational brevity we do not explicitly denote this padding in the main text.

The previous result fixes the oracle accuracy and reports the induced encoding error. 
Inverting the relation gives the equivalent statement in terms of a target encoding error.
\begin{corollary} [Hamiltonian Encoding (Target Error)]
    The Hamiltonian $\mathbf{H}$ admits a $(\alpha_{\mathbf{H}}, a_{\mathbf{H}}, \varepsilon)$ Hermitian block encoding with normalization factor
    \begin{equation}
    {\alpha_{\mathbf{H}} =(\sqrt{d_{\operatorname{in}}}+\sqrt{d_{\operatorname{out}}})\sqrt{\frac{k_{\max}}{m_{\min}}C},}
    \end{equation}
    and $a_{\mathbf{H}} = 3 + \lceil \log_2(C)\rceil +\lceil\log_2(N_e + N_g)\rceil$, where $N=M+C(N_e+N_g)$ and $\varepsilon \in (0,1)$ is the target error.
    The encoding requires four queries each to the $\varepsilon/(2\alpha_{\mathbf{H}})$-accurate phase oracles $O_{\mathbf{K}^{1/2}}$, $O_{\mathbf{M}^{-1/2}}$, and one query each to the index oracles $O_r^{(\mathcal{M})},O_c^{(\mathcal{M})}$ for $\mathcal{M} \in \{\mathbf{S}_A, \mathbf{S}_B\}$ and to each of their adjoints.
    The circuit uses $\mathcal{O}(\log C + \log(N_e + N_g))$ two-qubit gates.
\end{corollary}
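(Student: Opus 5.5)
The corollary is the preceding induced-error lemma (Hamiltonian Encoding, Induced Error) with the accuracy relation inverted, so I would not redo any circuit construction. The plan is to apply that lemma verbatim with a specific choice of phase-oracle accuracy $\varepsilon'$ as a function of the target error $\varepsilon$. I then check that every other quantity in the statement is independent of $\varepsilon'$ and therefore carries over unchanged.

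\emph{First step.} The induced-error lemma gives an $(\alpha_{\mathbf{H}}, a_{\mathbf{H}}, 2\alpha_{\mathbf{H}}\varepsilon')$ Hermitian block encoding for any $\varepsilon'$-accurate phase oracles $O_{\mathbf{K}^{1/2}}, O_{\mathbf{M}^{-1/2}}$. Given a target $\varepsilon \in (0,1)$, I set $\varepsilon' := \varepsilon/(2\alpha_{\mathbf{H}})$. The induced additive error is then exactly $2\alpha_{\mathbf{H}}\cdot \varepsilon/(2\alpha_{\mathbf{H}}) = \varepsilon$. Since $\alpha_{\mathbf{H}} > 0$ whenever $\mathbf{H}\neq 0$, this choice is well defined. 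In the degenerate case $k_{\max}=0$ we have $\mathbf{H}=0$, which is trivially block encoded, and I would mention this only in passing.

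\emph{Second step.} I would verify that the remaining resource counts do not depend on $\varepsilon'$, so they transfer verbatim from the induced-error lemma. These are the normalization $\alpha_{\mathbf{H}}=(\sqrt{d_{\operatorname{in}}}+\sqrt{d_{\operatorname{out}}})\sqrt{k_{\max}C/m_{\min}}$, the ancilla count $3+\lceil\log_2 C\rceil+\lceil\log_2(N_e+N_g)\rceil$, the four queries each to $O_{\mathbf{K}^{1/2}}$ and $O_{\mathbf{M}^{-1/2}}$, the single queries to $O_r^{(\mathcal{M})}, O_c^{(\mathcal{M})}$ and their adjoints, and the $\mathcal{O}(\log C+\log(N_e+N_g))$ two-qubit gates. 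This is immediate, because the circuit itself is unchanged; only the accuracy requirement on the oracles differs.

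\emph{Main obstacle.} There is no real obstacle. The only point needing care is that the error bound $2\alpha_{\hat{\mathbf{T}}}\varepsilon'$ from \Cref{lem:weightTopo} is a first-order bound. The cross term $\alpha_{\mathbf{T}}\,\alpha_{\mathbf{K}^{1/2}}\alpha_{\mathbf{M}^{-1/2}}\,\varepsilon'^2$, which arises from multiplying two approximate diagonal encodings, was dropped there. If one wants the statement to hold exactly rather than up to constants, I would note that for $\varepsilon<1$ this term is at most $\alpha_{\mathbf{H}}\varepsilon'^2 \le \alpha_{\mathbf{H}}\varepsilon'$. It can therefore be absorbed by choosing $\varepsilon'=\varepsilon/(3\alpha_{\mathbf{H}})$ instead, which is still $\Theta(\varepsilon/\alpha_{\mathbf{H}})$ and consistent with the $\mathcal{O}(\varepsilon/\alpha_{\mathbf{H}})$ accuracy quoted in \Cref{lem:hamiltonian_encoding}. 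Otherwise I would simply accept the induced-error lemma as stated and conclude.
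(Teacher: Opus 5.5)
Your proposal is correct and is exactly the paper's argument: substitute $\varepsilon'=\varepsilon/(2\alpha_{\mathbf{H}})$ into the induced-error lemma and carry over every other resource count unchanged. Your cross-term caveat is unnecessary. Write $\widetilde{\mathbf{K}}$ and $\widetilde{\mathbf{M}}$ for the rescaled blocks that the diagonal encodings actually implement, and telescope as $(\mathbf{K}^{1/2}-\widetilde{\mathbf{K}})\mathbf{T}\mathbf{M}^{-1/2}+\widetilde{\mathbf{K}}\mathbf{T}(\mathbf{M}^{-1/2}-\widetilde{\mathbf{M}})$. Using $\|\widetilde{\mathbf{K}}\|\le\alpha_{\mathbf{K}^{1/2}}$ and $\|\mathbf{M}^{-1/2}\|\le\alpha_{\mathbf{M}^{-1/2}}$, this gives exactly $2\alpha_{\mathbf{H}}\varepsilon'$ with no $\varepsilon'^2$ term. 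So your fallback choice $\varepsilon/(3\alpha_{\mathbf{H}})$ is not needed, and adopting it would actually prove a slightly weaker statement than the one claimed.
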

\begin{proof}
    Trivially by inversion of error relation.
\end{proof}

\subsection{Dissipator} \label{s:be_dissipator}

\begin{lemma}[Mass Dissipation Encoding]
    The mass-weighted dissipation operator $\hat{\boldsymbol{\eta}}_{\mathbf{M}} = \mathbf{M}^{-1/2} \mathbf{T}_0^\dagger \boldsymbol{\eta}_0 \mathbf{T}_0 \mathbf{M}^{-1/2}$ admits a $(\alpha_{\hat{\boldsymbol{\eta}}_{\mathbf{M}}}, a_{\hat{\boldsymbol{\eta}}_{\mathbf{M}}}, \varepsilon)$-\ac{BE} with normalization factor
    \begin{equation}
    {\alpha_{\hat{\boldsymbol{\eta}}_{\mathbf{M}}}=(\sqrt{d_{\operatorname{in}}}+\sqrt{d_{\operatorname{out}}})^2\frac{\|{\boldsymbol{\eta}}_0\|_{\max}}{m_{\min}}.}
    \end{equation}
    The encoding requires four queries to the $\varepsilon'$-accurate phase oracle $O_{\mathbf{M}^{-1/2}}$, two queries to the $\varepsilon'$-accurate phase oracle $O_{\boldsymbol{\eta}_0}$, and one query each to the index oracles $O_r^{(\mathcal{M})},O_c^{(\mathcal{M})}$ for $\mathcal{M} \in \{\mathbf{S}_A, \mathbf{S}_B\}$ and to each of their adjoints.
    It uses $a_{\hat{\boldsymbol{\eta}}_{\mathbf{M}}} = 5 + 2\lceil\log_2(N_e + N_g)\rceil$ ancillae, $\mathcal{O}(\log(N_e + N_g))$ two-qubit gates, and has additive error $\varepsilon = 3\alpha_{\hat{\boldsymbol{\eta}}_{\mathbf{M}}}\varepsilon'$.
\end{lemma}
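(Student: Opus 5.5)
The plan is to write $\hat{\boldsymbol{\eta}}_{\mathbf{M}}$ as a product of block-encoded factors, $\hat{\boldsymbol{\eta}}_{\mathbf{M}} = \mathbf{M}^{-1/2}\mathbf{T}_0^\dagger \, \boldsymbol{\eta}_0 \, \mathbf{T}_0\mathbf{M}^{-1/2}$, and then use the multiplicative property of block encodings \cite{gilyen_quantum_2019}. This is the same route as in \Cref{lem:weightTopo}. Concretely, I would first form the block encoding of $\mathbf{T}_0\mathbf{M}^{-1/2}$ from the Topology Matrix Encoding lemma and \Cref{lemma:real_diag} applied to $\mathbf{M}^{-1/2}$. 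Its normalization is $(\sqrt{d_{\operatorname{in}}}+\sqrt{d_{\operatorname{out}}})\,m_{\min}^{-1/2}$ and its error is at most $\alpha\varepsilon'$. The adjoint factor $\mathbf{M}^{-1/2}\mathbf{T}_0^\dagger$ is obtained by running this circuit in reverse, exactly as in the Hamiltonian encoding proof. Because $\mathbf{M}^{-1/2}$ and $\boldsymbol{\eta}_0$ are real and diagonal, this uses the adjoint index oracles $O_r^{(\mathcal{M})\dagger}$, $O_c^{(\mathcal{M})\dagger}$ together with the same phase oracles. The middle factor $\boldsymbol{\eta}_0$ is block-encoded by \Cref{lemma:real_diag} with normalization $\|\boldsymbol{\eta}_0\|_{\max}$.

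Multiplying the three normalizations gives $(\sqrt{d_{\operatorname{in}}}+\sqrt{d_{\operatorname{out}}})^2\|\boldsymbol{\eta}_0\|_{\max}/m_{\min}$, which is the claimed value. For the resource count, note that each $\mathcal{U}_{\mathbf{M}^{-1/2}}$ built via the LCU of \Cref{lemma:real_diag} uses $O_{\mathbf{M}^{-1/2}}$ and its adjoint. With two such factors this gives four queries, and the single $\boldsymbol{\eta}_0$ factor gives two queries to $O_{\boldsymbol{\eta}_0}$. The index oracles appear once forward (in $\mathbf{T}_0$) and once in adjoint form (in $\mathbf{T}_0^\dagger$). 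The ancilla count is additive: two copies of the $\lceil\log_2(N_e+N_g)\rceil$ column registers for $\mathbf{T}_0$ and $\mathbf{T}_0^\dagger$, plus a constant number of single qubits. These are the LCU qubit for the phase combination in each $\mathbf{T}_0$ factor and the one ancilla for each of the three diagonal encodings, which totals $5 + 2\lceil\log_2(N_e+N_g)\rceil$. The gate cost is inherited from the two topology encodings, $\mathcal{O}(\log(N_e+N_g))$.

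For the error, I would use a telescoping bound on the product $\tilde A\tilde B\tilde C - ABC$. The three inexact factors are $\mathbf{M}^{-1/2}$ (twice) and $\boldsymbol{\eta}_0$; $\mathbf{T}_0$ is exact. Each factor's error is $\alpha_{\mathbf{R}}\varepsilon'$, and when multiplied by the normalizations of the other factors it contributes $\alpha_{\hat{\boldsymbol{\eta}}_{\mathbf{M}}}\varepsilon'$. Summing over the three inexact factors gives $3\alpha_{\hat{\boldsymbol{\eta}}_{\mathbf{M}}}\varepsilon'$ to first order. Exactly as in \Cref{lem:weightTopo}, I would absorb the higher-order cross terms, or bound them using $\|\tilde{\cdot}\|\le 1$ on the normalized blocks.

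The main obstacle is making the block-encoded product actually equal $\mathbf{T}_0^\dagger\boldsymbol{\eta}_0\mathbf{T}_0$, as opposed to something like $\mathbf{T}_0^\dagger\boldsymbol{\eta}_0\mathbf{T}_0$ with the ground rows acquiring phases. The encoding of $\mathbf{T}_0$ uses the phases $e^{\pm i\pi/3}$ on the ground block, so I must check that the reversed circuit produces the genuine adjoint $\mathbf{T}_0^\dagger$, with conjugated phases. Since that circuit is literally the inverse of the unitary block encoding, its top-left block is $\mathbf{T}_0^\dagger/\alpha_{\mathbf{T}_0}$. The remaining check is bookkeeping: confirm that the diagonal $\boldsymbol{\eta}_0$ acts on the $(N_e+N_g)$-dimensional row index register. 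That register is the one output by $O_c$ after the SWAP.
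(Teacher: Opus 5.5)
Your proposal is correct and follows essentially the same route as the paper: factor $\hat{\boldsymbol{\eta}}_{\mathbf{M}}$ into the two diagonal $\mathbf{M}^{-1/2}$ encodings, the exact $\mathbf{T}_0$ and $\mathbf{T}_0^\dagger$ encodings, and the diagonal $\boldsymbol{\eta}_0$ encoding; multiply the normalizations; count disjoint ancillae as $1+(1+\lceil\log_2(N_e+N_g)\rceil)+1+(1+\lceil\log_2(N_e+N_g)\rceil)+1$; and sum the three relative phase-oracle errors to obtain $3\alpha_{\hat{\boldsymbol{\eta}}_{\mathbf{M}}}\varepsilon'$. The phase concern you raise does not arise, because $e^{i\pi/3}+e^{-i\pi/3}=1$ makes the $\mathbf{T}_0$ encoding exact and its inverse circuit encodes exactly $\mathbf{T}_0^\dagger$; likewise the telescoping bound, which uses $\|\mathbf{B}\|\le\alpha_{\mathbf{B}}$ for the exact factors and norm at most one for the encoded blocks, gives $3\alpha_{\hat{\boldsymbol{\eta}}_{\mathbf{M}}}\varepsilon'$ exactly, with no higher-order cross terms left to absorb.
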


\begin{proof}
    We express the mass dissipation operator as
    \begin{equation}
    \boldsymbol{\eta}_{\mathbf{M}} = \mathbf{D}_0^{\dagger}\boldsymbol{\eta}_{e,0}\mathbf{D}_0 + \mathbf{S}_0^{\dagger}\boldsymbol{\eta}_{g,0}\mathbf{S}_0
    = \mathbf{T}_0^\dagger \boldsymbol{\eta}_0\mathbf{T}_0,
    \end{equation}
    where $\boldsymbol{\eta}_0=\diag(\boldsymbol{\eta}_{e,0},\boldsymbol{\eta}_{g,0})\succeq0$.
    The corresponding mass-weighted operator is
    \begin{equation}
    \hat{\boldsymbol{\eta}}_{\mathbf{M}} = \mathbf{M}^{-1/2} \boldsymbol{\eta}_{\mathbf{M}} \mathbf{M}^{-1/2} = \mathbf{M}^{-1/2} \mathbf{T}_0^\dagger \boldsymbol{\eta}_0 \mathbf{T}_0 \mathbf{M}^{-1/2} \succeq 0.
    \end{equation}
    As $\mathbf{M}^{-1/2}$ and $\boldsymbol{\eta}_0$ are diagonal and $\mathbf{M}^{-1/2}$ is positive definite, we avoid computing the inverse square-root through arithmetic and directly access the real diagonal matrices $\mathcal{U}_{\mathbf{R}}\rightarrow \mathcal{U}_{\mathbf{M}^{-1/2}}$ and $\mathcal{U}_{\mathbf{R}}\rightarrow \mathcal{U}_{\boldsymbol{\eta}_0}$ defined in \Cref{eq:real_diag} through their respective phase oracles.
    By the multiplicative property of block encodings \cite{gilyen_quantum_2019} applied to the sequential product $\mathbf{M}^{-1/2} \mathbf{T}_0^\dagger \boldsymbol{\eta}_0 \mathbf{T}_0 \mathbf{M}^{-1/2}$, we obtain the stated overall normalization constant $\alpha_{\hat{\boldsymbol{\eta}}_{\mathbf{M}}} = \alpha_{\mathbf{M}^{-1/2}}^2 \alpha_{\mathbf{T}_0}^2 \alpha_{\boldsymbol{\eta}_0}$. Substituting $\alpha_{\mathbf{M}^{-1/2}} = m_{\min}^{-1/2}$, $\alpha_{\boldsymbol{\eta}_0} = \|{\boldsymbol{\eta}}_0\|_{\max}$, and $\alpha_{\mathbf{T}_0} = \sqrt{d_{\operatorname{in}}}+\sqrt{d_{\operatorname{out}}}$ directly yields the stated $\alpha_{\hat{\boldsymbol{\eta}}_{\mathbf{M}}}$.
    
    Because the block encodings of $\mathbf{T}_0$ and $\mathbf{T}_0^\dagger$ are exact ($\varepsilon = 0$), the additive errors originate exclusively from the three diagonal phase oracles. Following the sub-multiplicativity of block-encoding errors for products, the total unnormalized additive error evaluates to the sum of the relative errors scaled by the total normalization:
    \begin{equation}
        \varepsilon = \alpha_{\hat{\boldsymbol{\eta}}_{\mathbf{M}}} \left( \frac{\alpha_{\mathbf{M}^{-1/2}}\varepsilon'}{\alpha_{\mathbf{M}^{-1/2}}} + 0 + \frac{\alpha_{\boldsymbol{\eta}_0}\varepsilon'}{\alpha_{\boldsymbol{\eta}_0}} + 0 + \frac{\alpha_{\mathbf{M}^{-1/2}}\varepsilon'}{\alpha_{\mathbf{M}^{-1/2}}} \right) = 3\alpha_{\hat{\boldsymbol{\eta}}_{\mathbf{M}}}\varepsilon'.
    \end{equation}
    
    The circuit cost is inherited from the queried \acp{BE}, requiring $\mathcal{O}(\log(N_e + N_g))$ two-qubit gates primarily from the topology components.
    It combines the disjoint ancillae of the constituent encodings: $1$ (from $\mathcal{U}_{\mathbf{M}^{-1/2}}$) $+ (1 + \lceil\log_2(N_e + N_g)\rceil)$ (from $\mathcal{U}_{\mathbf{T}_0^\dagger}$) $+ 1$ (from $\mathcal{U}_{\boldsymbol{\eta}_0}$) $+ (1 + \lceil\log_2(N_e + N_g)\rceil)$ (from $\mathcal{U}_{\mathbf{T}_0}$) $+ 1$ (from $\mathcal{U}_{\mathbf{M}^{-1/2}}$) $= 5 + 2\lceil\log_2(N_e + N_g)\rceil$ ancillae.
    The query counts exactly decompose into two queries per real diagonal phase oracle evaluation, and two queries per sparse oracle evaluation for each of the $\mathbf{T}_0$ and $\mathbf{T}_0^\dagger$ components, yielding the stated quantities.
\end{proof}

\begin{lemma}[Spring Dissipation Encoding]
    The material-weighted spring dissipation operator $\boldsymbol{\Lambda} = \mathbf{K}^{1/2}\boldsymbol{\eta}^{-1}_{\mathbf{K}}\mathbf{K}^{1/2}$ admits a $(\alpha_{\boldsymbol{\Lambda}}, a_{\boldsymbol{\Lambda}}, \varepsilon)$-\ac{BE} with normalization factor
    \begin{equation}
    {\alpha_{\boldsymbol{\Lambda}} = \|\boldsymbol{\Lambda}\|_{\infty}.}
    \end{equation}
    The encoding requires two queries to the $\varepsilon'$-accurate phase oracle $O_{\boldsymbol{\Lambda}}$, uses $a_{\boldsymbol{\Lambda}} = 1$ ancilla qubit, $\mathcal{O}(1)$ additional two-qubit gates, and has additive error $\varepsilon = \alpha_{\boldsymbol{\Lambda}}\varepsilon'$.
\end{lemma}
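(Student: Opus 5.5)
The plan is to treat $\boldsymbol{\Lambda}$ as a real diagonal matrix and apply \Cref{lemma:real_diag} directly, rather than building it as a product of block encodings of $\mathbf{K}^{1/2}$, $\boldsymbol{\eta}_{\mathbf{K}}^{-1}$ and $\mathbf{K}^{1/2}$. Since $\mathbf{K}$ and $\boldsymbol{\eta}_{\mathbf{K}}$ are diagonal and positive, $\boldsymbol{\Lambda}=\mathbf{K}^{1/2}\boldsymbol{\eta}_{\mathbf{K}}^{-1}\mathbf{K}^{1/2}=\mathbf{K}\boldsymbol{\eta}_{\mathbf{K}}^{-1}$ is diagonal. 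Its entries are the nonnegative relaxation rates $\lambda_q = k_q/\eta_q$. The phase oracle $O_{\boldsymbol{\Lambda}}$ is defined in the access model exactly for this diagonal, with phases $\arccos(\lambda_q/\alpha_{\boldsymbol{\Lambda}})$. This is well defined once we fix $\alpha_{\boldsymbol{\Lambda}}=\|\boldsymbol{\Lambda}\|_\infty=\max_q \lambda_q \ge \|\boldsymbol{\Lambda}\|$, because then every ratio lies in $[0,1]$.

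The construction is the one-ancilla LCU of \Cref{lemma:real_diag}. We write $\boldsymbol{\Lambda}/\alpha_{\boldsymbol{\Lambda}}=\tfrac12(U+U^\dagger)$ with $U=\sum_q e^{-i\theta_q}\ketbra{q}{q}$. Then we prepare $\ket{+}$ on one ancilla, apply controlled-$O_{\boldsymbol{\Lambda}}$ on $\ket{0}$ and controlled-$O_{\boldsymbol{\Lambda}}^\dagger$ on $\ket{1}$, and unprepare. Projecting the ancilla onto $\ket{0}$ yields $\tfrac12(O_{\boldsymbol{\Lambda}}+O_{\boldsymbol{\Lambda}}^\dagger)$. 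This accounts for the two queries, one ancilla, and $\mathcal{O}(1)$ extra gates (two Hadamards plus control wiring).

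For the error, if $\|O_{\boldsymbol{\Lambda}}-U\|\le\varepsilon'$ then $\|O_{\boldsymbol{\Lambda}}^\dagger-U^\dagger\|\le\varepsilon'$ as well. The triangle inequality then gives $\|\tfrac12(O_{\boldsymbol{\Lambda}}+O_{\boldsymbol{\Lambda}}^\dagger)-\boldsymbol{\Lambda}/\alpha_{\boldsymbol{\Lambda}}\|\le\varepsilon'$. Multiplying by $\alpha_{\boldsymbol{\Lambda}}$ gives the unnormalized additive error $\alpha_{\boldsymbol{\Lambda}}\varepsilon'$, as stated.

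There is no substantial obstacle; the only points needing care are bookkeeping ones. First, we must check that the register of $O_{\boldsymbol{\Lambda}}$ (dimension $C(N_e+N_g)$) matches the lower block of $\mathbf{L}$, padding with zero phases-to-$\pi/2$ if needed so the padded entries encode $0$. Second, $\boldsymbol{\Lambda}\succeq 0$ (possibly with zero entries for dummy springs) keeps the $\arccos$ well defined. Third, the query count is literally two: one controlled $O_{\boldsymbol{\Lambda}}$ and one controlled adjoint.
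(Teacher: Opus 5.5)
Your proposal is correct and follows the paper's proof. Both use the commutation of the diagonal matrices $\mathbf{K}$ and $\boldsymbol{\eta}_{\mathbf{K}}^{-1}$ to rewrite $\boldsymbol{\Lambda}=\mathbf{K}\boldsymbol{\eta}_{\mathbf{K}}^{-1}$ as a real diagonal matrix, then invoke the one-ancilla LCU encoding of \Cref{lemma:real_diag}, which realizes $\tfrac12(O_{\boldsymbol{\Lambda}}+O_{\boldsymbol{\Lambda}}^\dagger)$ with two oracle calls and additive error $\alpha_{\boldsymbol{\Lambda}}\varepsilon'$; your explicit circuit and triangle-inequality error bound just spell out steps the paper leaves implicit.
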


\begin{proof}
    The spring dissipation operator is diagonal and positive definite
    \begin{equation}
    \boldsymbol{\eta}_{\mathbf{K}} = \operatorname{diag}(\boldsymbol{\eta}_1, \dots, \boldsymbol{\eta}_C),
    \end{equation}
    with $\boldsymbol{\eta}_c=\diag(\boldsymbol{\eta}_{e,c},\boldsymbol{\eta}_{g,c})$.
    Since $\mathbf{K}\succ 0$ is diagonal, it strictly commutes with $\boldsymbol{\eta}^{-1}_{\mathbf{K}} \succeq 0$.
    We can therefore write the material-weighted operator exactly as
    \begin{equation}
    \boldsymbol{\Lambda} = \mathbf{K}^{1/2}\boldsymbol{\eta}^{-1}_{\mathbf{K}}\mathbf{K}^{1/2} = \mathbf{K}\boldsymbol{\eta}^{-1}_{\mathbf{K}}.
    \end{equation}
    Because the relaxation rate matrix $\boldsymbol{\Lambda}$ is a real diagonal matrix with elements $\lambda_c = k_c/\eta_c$, we directly access it as $\mathcal{U}_{\mathbf{R}}\rightarrow \mathcal{U}_{\boldsymbol{\Lambda}}$ via the real diagonal matrix encoding scheme defined in \Cref{eq:real_diag}. 
    The normalization factor natively evaluates to $\alpha_{\boldsymbol{\Lambda}} = \|\boldsymbol{\Lambda}\|_{\infty}$, and the precision is strictly bounded by $\alpha_{\boldsymbol{\Lambda}}\varepsilon'$.
    The circuit utilizes exactly $1$ ancilla qubit, $\mathcal{O}(1)$ additional two-qubit gates to evaluate the 1-qubit \ac{LCU}, and decomposes identically into two queries to $O_{\boldsymbol{\Lambda}}$.
\end{proof}

\begin{lemma}[Dissipator Encoding (Induced Error)]
    The Dissipator $\mathbf{L}$ admits an $(\alpha_{\mathbf{L}}, a_{\mathbf{L}}, \varepsilon)$-\ac{BE} with normalization factor
    \begin{equation}
    {\alpha_{\mathbf{L}} =\max\Big((\sqrt{d_{\operatorname{in}}}+\sqrt{d_{\operatorname{out}}})^2\frac{\|{\boldsymbol{\eta}}_0\|_{\max}}{m_{\min}}, \|\boldsymbol{\Lambda}\|_{\infty}\Big),}
    \end{equation}
    and $a_{\mathbf{L}} = 5 + 2\lceil\log_2(N_e + N_g)\rceil$, where $N=M+C(N_e+N_g)$.
    The encoding requires four queries to the $\varepsilon'$-accurate phase oracle $O_{\mathbf{M}^{-1/2}}$, two queries each to the $\varepsilon'$-accurate phase oracles $O_{\boldsymbol{\eta}_0}$ and $O_{\boldsymbol{\Lambda}}$, and one query each to the index oracles $O_r^{(\mathcal{M})},O_c^{(\mathcal{M})}$ for $\mathcal{M} \in \{\mathbf{S}_A, \mathbf{S}_B\}$ and to each of their adjoints.
    The circuit uses $\mathcal{O}(\log(N_e + N_g))$ two-qubit gates and has additive error $\varepsilon = 3\alpha_{\mathbf{L}} \varepsilon'$.
\end{lemma}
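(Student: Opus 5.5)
The plan is to assemble $\mathbf{L}=\diag(\hat{\boldsymbol{\eta}}_{\mathbf{M}},\boldsymbol{\Lambda})$ as a direct sum of the two previously constructed block encodings, the Mass Dissipation Encoding for $\hat{\boldsymbol{\eta}}_{\mathbf{M}}$ and the Spring Dissipation Encoding for $\boldsymbol{\Lambda}$. Since $\mathbf{L}$ is block diagonal with respect to the splitting of the state register into the velocity subspace (dimension $M$) and the memory-force subspace (dimension $C(N_e+N_g)$), I would use a flag qubit, or equivalently a comparison of the system index against $M$, to select the block. I then apply the controlled unitary $\ketbra{0}{0}\otimes\mathcal{U}_{\hat{\boldsymbol{\eta}}_{\mathbf{M}}}+\ketbra{1}{1}\otimes\mathcal{U}_{\boldsymbol{\Lambda}}$, padding the ancilla register of $\mathcal{U}_{\boldsymbol{\Lambda}}$ (1 qubit) up to the $5+2\lceil\log_2(N_e+N_g)\rceil$ qubits of the larger encoding. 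The ancilla count of the direct sum is therefore the maximum of the two, which gives $a_{\mathbf{L}}$.

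The normalizations differ, so I first rescale both to a common $\alpha_{\mathbf{L}}=\max(\alpha_{\hat{\boldsymbol{\eta}}_{\mathbf{M}}},\alpha_{\boldsymbol{\Lambda}})$. For $\boldsymbol{\Lambda}$ this is simplest: I choose its phase oracle with normalization $\alpha_{\mathbf{L}}\ge\|\boldsymbol{\Lambda}\|$, which the phase-oracle definition permits since only $\alpha_{\mathbf{R}}\ge\|\mathbf{R}\|$ is required. For $\hat{\boldsymbol{\eta}}_{\mathbf{M}}$ I likewise inflate $\alpha_{\boldsymbol{\eta}_0}$ inside its product encoding. Alternatively, for either block I can multiply by a one-qubit rotation block encoding of the scalar $\alpha_i/\alpha_{\mathbf{L}}\le 1$. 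With both blocks at the same normalization, the direct sum is an $(\alpha_{\mathbf{L}},a_{\mathbf{L}},\cdot)$ block encoding of $\mathbf{L}$.

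\emph{Query and gate counts.} These follow additively from the two lemmas: four $O_{\mathbf{M}^{-1/2}}$ and two $O_{\boldsymbol{\eta}_0}$ queries plus the index oracles and their adjoints come from the mass part, and two $O_{\boldsymbol{\Lambda}}$ queries come from the spring part, all now controlled by the flag. Controlling costs only $\mathcal{O}(1)$ overhead per oracle (the oracles are assumed efficiently controllable), and the index comparison costs $\mathcal{O}(\log N)$ gates, which is absorbed into $\mathcal{O}(\log(N_e+N_g))$ under the standing assumption $M\le N_e+N_g$.

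\emph{Error.} The error of a direct sum is the maximum of the blockwise errors in operator norm. The mass block has error $3\alpha_{\hat{\boldsymbol{\eta}}_{\mathbf{M}}}\varepsilon'\le 3\alpha_{\mathbf{L}}\varepsilon'$, and the spring block has error $\alpha_{\mathbf{L}}\varepsilon'$ after rescaling. Hence the total error is $\varepsilon\le 3\alpha_{\mathbf{L}}\varepsilon'$.

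The main subtlety I expect is the rescaling step. I need to check that inflating the normalization inside the mass product encoding does not enlarge its error beyond $3\alpha_{\mathbf{L}}\varepsilon'$. The relative error of each phase oracle stays $\varepsilon'$, so the unnormalized error scales with the new normalization exactly as claimed. The remaining concern is that the index-range selection between the two blocks be exact, which I would ensure by padding dimensions as in the Hamiltonian construction.
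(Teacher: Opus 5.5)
Your proposal is correct and follows essentially the same route as the paper. A single padded system qubit controls $\mathcal{U}_{\hat{\boldsymbol{\eta}}_{\mathbf{M}}}$ versus $\mathcal{U}_{\boldsymbol{\Lambda}}$. Both sub-encodings are brought to the common normalization $\alpha_{\mathbf{L}}$, the ancilla count is the maximum of the two, and the error is the maximum blockwise error $\max(3\alpha_{\mathbf{L}}\varepsilon',\alpha_{\mathbf{L}}\varepsilon')$. Your explicit rescaling, by enlarging $\alpha_{\mathbf{R}}$ in the phase-oracle definition, spells out a step the paper only asserts.

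Use the padded top system qubit rather than your comparator-against-$M$ alternative. The comparator needs an extra flag ancilla, so it exceeds the stated $a_{\mathbf{L}}$. It also costs $\mathcal{O}(\log N)$ gates, and the $\log C$ term in $\log N$ is not absorbed into $\mathcal{O}(\log(N_e+N_g))$.
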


\begin{proof}
    We use a block-dilation $\check{\mathbf{L}} \in \mathbb{R}^{2C(N_e+N_g) \times 2C(N_e+N_g)}$ to encode the physical dissipator
    \begin{equation}
    \mathbf{L} =
    \begin{bmatrix}
    \hat{\boldsymbol{\eta}}_{\mathbf{M}} & \mathbf{0} \\ \mathbf{0} & \boldsymbol{\Lambda}
    \end{bmatrix} \in \mathbb{R}^{N \times N}.
    \end{equation}
    To implement this, we introduce a single system qubit. 
    To ensure a uniform block-encoding normalization, we construct the sub-block encodings to natively equal the joint maximum normalization $\alpha_{\mathbf{L}} = \max(\alpha_{\hat{\boldsymbol{\eta}}_{\mathbf{M}}}, \alpha_{\boldsymbol{\Lambda}})$.
    Conditioned on the qubit being $\ket{0}$, we apply $\mathcal{U}_{\hat{\boldsymbol{\eta}}_{\mathbf{M}}}$, and conditioned on $\ket{1}$, we apply $\mathcal{U}_{\boldsymbol{\Lambda}}$. 
    The block-encoding constant is the maximum of the two and both are queried once.

    The total ancilla requirement scales as the maximum of the disjoint sub-block ancillae, giving $a_{\mathbf{L}} = \max(a_{\hat{\boldsymbol{\eta}}_{\mathbf{M}}}, a_{\boldsymbol{\Lambda}}) = 5 + 2\lceil\log_2(N_e+N_g)\rceil$. 
    The total two-qubit gate complexity inherits the component topologies $\mathcal{O}(\log(N_e + N_g))$.
    Finally, the total additive error is bounded by the maximum error of the rescaled sub-blocks: $\varepsilon \leq \max(3\alpha_{\mathbf{L}}\varepsilon', \alpha_{\mathbf{L}}\varepsilon') \leq 3\alpha_{\mathbf{L}}\varepsilon'$.
\end{proof}

As established in \Cref{s:be_hamiltonian}, bounding $N \leq (C+1)(N_e+N_g)$ inherently establishes $a_{\mathbf{L}} = \mathcal{O}(\log N)$.
Hence, the previous proof can be directly reduced to the asymptotic bounds stated in \Cref{lem:dissipator_encoding}. 
Note that we can furthermore halve the number of ancillae used for the mass dissipation encoding.
We do this by eliminating the double registers in the mass dissipation through establishing query access to the factorization block encoding $\mathcal{U}_{\mathbf{F}}$ (see \Cref{s:fast_forward}) and constructing the quantum walk operator $\mathcal{W} = \mathcal{U}_{\mathbf{F}}^\dagger (2\ketbra{0}{0} - \mathbf{I}) \mathcal{U}_{\mathbf{F}}$. 
Together with a $1$-qubit \ac{LCU}, this natively extracts $\mathbf{F}^\dagger \mathbf{F}$ \cite{gilyen_quantum_2019}.
This approach introduces only a logarithmic gate overhead and preserves the normalization.
However, as such a change requires switching the access model $O_{\boldsymbol{\eta}_0} \rightarrow O_{\boldsymbol{\eta}_0^{{1/2}}}$ and brings no asymptotic benefit, we choose to prove the simpler approach.
Similar to the Hamiltonian, the 1-ancilla embedding requires a padding to $\check{\mathbf{L}} \in \mathbb{R}^{2C(N_e+N_g) \times 2C(N_e+N_g)}$, with the same associated overhead. 
In the interest of notational brevity we do not explicitly denote this padding in the main text.

The previous result fixes the oracle accuracy and reports the induced encoding error. 
Inverting the relation gives the equivalent statement in terms of a target encoding error.
\begin{corollary} [Dissipator Encoding (Target Error)]
    The Dissipator $\mathbf{L}$ admits an $(\alpha_{\mathbf{L}}, a_{\mathbf{L}}, \varepsilon)$-\ac{BE} with normalization factor
    \begin{equation}
    {\alpha_{\mathbf{L}} =\max\Big((\sqrt{d_{\operatorname{in}}}+\sqrt{d_{\operatorname{out}}})^2\frac{\|{\boldsymbol{\eta}}_0\|_{\max}}{m_{\min}}, \|\boldsymbol{\Lambda}\|_{\infty}\Big),}
    \end{equation}
    and $a_{\mathbf{L}} = 5 + 2\lceil\log_2(N_e + N_g)\rceil$, where $N=M+C(N_e+N_g)$ and $\varepsilon \in (0,1)$ is the target error.
    The encoding requires four queries to the $\varepsilon/(3\alpha_{\mathbf{L}})$-accurate phase oracle $O_{\mathbf{M}^{-1/2}}$, two queries each to the $\varepsilon/(3\alpha_{\mathbf{L}})$-accurate phase oracles $O_{\boldsymbol{\eta}_0}$ and $O_{\boldsymbol{\Lambda}}$, and one query each to the index oracles $O_r^{(\mathcal{M})},O_c^{(\mathcal{M})}$ for $\mathcal{M} \in \{\mathbf{S}_A, \mathbf{S}_B\}$ and to each of their adjoints.
    The circuit uses $\mathcal{O}(\log(N_e + N_g))$ two-qubit gates.
\end{corollary}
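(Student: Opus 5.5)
The plan is to derive the corollary directly from the preceding Dissipator Encoding (Induced Error) lemma by inverting its error relation, exactly as was done for the Hamiltonian. That lemma shows that with $\varepsilon'$-accurate phase oracles $O_{\mathbf{M}^{-1/2}}$, $O_{\boldsymbol{\eta}_0}$, $O_{\boldsymbol{\Lambda}}$, the construction yields an $(\alpha_{\mathbf{L}}, a_{\mathbf{L}}, 3\alpha_{\mathbf{L}}\varepsilon')$-block encoding. It also fixes the normalization, the ancilla count $5+2\lceil\log_2(N_e+N_g)\rceil$, the query counts, and the gate count. None of these depend on $\varepsilon'$.

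Given a target error $\varepsilon\in(0,1)$, I would set $\varepsilon' := \varepsilon/(3\alpha_{\mathbf{L}})$ and instantiate every phase oracle at this accuracy. The induced error is then $3\alpha_{\mathbf{L}}\varepsilon' = \varepsilon$, while all other resource counts carry over unchanged.

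The one point to check is that a single common $\varepsilon'$ is legitimate for both sub-blocks. The mass-dissipation branch has error $3\alpha_{\hat{\boldsymbol{\eta}}_{\mathbf{M}}}\varepsilon'$ and the spring branch has error $\alpha_{\boldsymbol{\Lambda}}\varepsilon'$. After both are rescaled to the joint normalization $\alpha_{\mathbf{L}}=\max(\alpha_{\hat{\boldsymbol{\eta}}_{\mathbf{M}}},\alpha_{\boldsymbol{\Lambda}})$, each is at most $3\alpha_{\mathbf{L}}\varepsilon'$.

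The error of the controlled direct sum is the maximum of the branch errors, since the two branches act on orthogonal control subspaces. So the overall bound holds with the stated oracle accuracy.

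There is no real obstacle here. The only subtlety is confirming that rescaling a sub-block encoding up to the larger normalization $\alpha_{\mathbf{L}}$ does not amplify the additive error. It does not: the additive error is measured on the unnormalized operator, and padding the normalization (for example by a 1-qubit \ac{LCU} with a zero or identity component) leaves the encoded unnormalized matrix and its error unchanged.
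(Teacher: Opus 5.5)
Your proposal is correct and matches the paper's proof, which consists only of inverting the induced-error lemma by choosing $\varepsilon' = \varepsilon/(3\alpha_{\mathbf{L}})$. Your extra check that the controlled direct sum has error equal to the maximum of the rescaled branch errors repeats the argument already made in that lemma. One small correction to your aside: raise a branch's normalization to $\alpha_{\mathbf{L}}$ with an ancilla rotation, or with an LCU partner whose encoded block vanishes, not with an identity component, since an identity component would add a multiple of $\mathbf{I}$ to the encoded operator.
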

\begin{proof}
    Trivially by inversion of error relation.
\end{proof}
Note that the error budgeting can be relaxed to non-uniformity. 
However in the interest of notational brevity we enforce slighter tighter error budgets by constant factors.

\section{Energy Estimation} \label{apx:measurements}

For general systems where the energy is proportional to the squared Euclidean norm and where the energy metric is diagonal in the computational basis, we prove the following result.
This directly extends prior work \cite{babbush2023exponential} to dissipative systems:

\begin{lemma}[Subspace Energy Estimation (Target Error); see \cite{babbush2023exponential}] \label{th:energy_estimation}
    Let $\varepsilon \in (0,1)$ be a target error and $\delta \in (0,1)$ a failure probability.
    Then \Cref{prob:subspace_energy} can be solved with probability at least $1-\delta$ using
    \begin{equation}
        {Q_{\Xi} = \mathcal{O}\bigg(\frac{\log(1/\delta)}{\varepsilon}\bigg)}
    \end{equation}
    queries to the (controlled) marking oracle $O_{\mathcal{S}}$ and to the $\varepsilon/(6Q_{\Xi})$-accurate state preparation oracles $O_{\boldsymbol{\psi}_0}$, $O_{\boldsymbol{\chi}_\tau}$, and $O_{\boldsymbol{\chi}_x}$, and
    \begin{equation}
        {Q_{\mathbf{G}''} = \mathcal{O}\Big(Q_{\Xi} \big(T(\alpha_\mathbf{L}\log(Q_{\Xi}/\varepsilon) + \alpha_\mathbf{H}) + \log(Q_{\Xi}/\varepsilon) \big) \Big)}
    \end{equation}
    queries to the (controlled) index oracles $O_r^{(\mathcal{M})},O_c^{(\mathcal{M})}$ for $\mathcal{M} \in \{\mathbf{S}_A, \mathbf{S}_B\}$ and to the (controlled) $\varepsilon/(6Q_{\mathbf{G}''})$-accurate phase oracles $O_{\mathbf{M}^{-1/2}}, O_{\mathbf{K}^{1/2}}, O_{\boldsymbol{\eta}_0}, O_{\boldsymbol{\Lambda}}$.
    The algorithm requires $\mathcal{O}\big(\log(N \alpha_{\mathbf{L}} T Q_{\Xi}/\varepsilon)\big)$ ancilla qubits and uses
    \begin{equation}
        \mathcal{O}\Big(\big(Q_{\mathbf{G}''} + Q_{\Xi}\log^{5/2}(Q_{\Xi}/\varepsilon)\big)\log(N \alpha_{\mathbf{L}} T Q_{\Xi}/\varepsilon)\Big)
    \end{equation}
    additional two-qubit gates.
\end{lemma}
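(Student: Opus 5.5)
Our approach is to reduce the energy estimate to amplitude estimation, using a block encoding of the unnormalized, source-driven solution whose normalization is exactly $\sqrt{2E_{\operatorname{tot}}}=\|\boldsymbol{\psi}(0)\|+\|\boldsymbol{\chi}\|_{L^1}$.

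\textbf{Step 1: build $W$.} We construct a unitary $W$ with
\begin{equation}
(\bra{0}\otimes\mathbf{I})\,W\,(\ket{0}\otimes\ket{0}) \approx \frac{\boldsymbol{\psi}(T)}{\|\boldsymbol{\psi}(0)\|+\|\boldsymbol{\chi}\|_{L^1}} .
\end{equation}
This is the unamplified core of \Cref{th:_lchs_state_prep_applied}, and we assemble it by LCU as follows.
\begin{enumerate}
\item A one-qubit selector carries weights $\|\boldsymbol{\psi}(0)\|$ and $\|\boldsymbol{\chi}\|_{L^1}$ on its two branches.
\item On the first branch we apply $O_{\boldsymbol{\psi}_0}$ and then the LCHS block encoding $\mathcal{U}_{\mathbf{G}}$ of $\mathbf{G}(T)$ from \Cref{th:_lchs_op_applied}.
\item On the second branch we apply $O_{\boldsymbol{\chi}_\tau}$ on the clock register, then $O_{\boldsymbol{\chi}_x}$, then a clock-controlled $\mathcal{U}_{\mathbf{G}(T-\tau_j)}$ for the Duhamel quadrature.
\end{enumerate}
Because $\alpha_{\mathbf{G}}=\mathcal{O}(1)$ (we may take the LCHS normalization $\le 1$ up to constants), the success amplitude is $\|\boldsymbol{\psi}(T)\|/\sqrt{2E_{\operatorname{tot}}}$ up to a constant. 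That constant is known and is divided out classically. The source smoothness assumption places the quadrature error inside the budget.

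\textbf{Step 2: amplitude estimation.} We define the good subspace as the flag ancillae in $\ket{0}$ together with the system register in $\mathcal{S}$; its projector is implemented with $O_{\mathcal{S}}$ plus a reflection on the ancillas. The good-state probability is then
\begin{equation}
p=\|\mathbf{P}_{\mathcal{S}}\boldsymbol{\psi}(T)\|^2/(2E_{\operatorname{tot}})=E_{\mathcal{S}}(T)/E_{\operatorname{tot}},
\end{equation}
up to the known constant. Standard amplitude estimation \cite{brassard2000quantum,knill2007optimal} returns $p$ to additive error $\varepsilon$ with constant success probability using $\mathcal{O}(1/\varepsilon)$ calls to $W$, $W^\dagger$ and $O_{\mathcal{S}}$. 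A median of $\mathcal{O}(\log(1/\delta))$ independent runs raises the success probability to $1-\delta$, which gives $Q_{\Xi}=\mathcal{O}(\log(1/\delta)/\varepsilon)$.

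Each call to $W$ uses $\mathcal{O}(1)$ state-preparation oracle calls. It also uses $Q_{\mathbf{G}}$ index and phase queries, taken from \Cref{th:_lchs_op_applied} at LCHS accuracy $\varepsilon/Q_{\Xi}$. Multiplying the two gives $Q_{\mathbf{G}''}=Q_{\Xi}\,Q_{\mathbf{G}}(\varepsilon/Q_\Xi)$, which is the stated bound. The ancilla and gate counts follow by the same multiplication, adding the $\mathcal{O}(\log(1/\delta)/\varepsilon)$ phase-estimation overhead, which is subsumed.

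\textbf{Step 3: error propagation (main obstacle).} Amplitude estimation uses $W$ coherently $Q_{\Xi}$ times, so every implementation error accumulates additively in operator norm. We split the budget $\varepsilon$ into constant fractions, sized so the total is at most $\varepsilon/2$:
\begin{itemize}
\item the amplitude-estimation statistical error;
\item the LCHS truncation and quadrature error per call;
\item the state-preparation oracle error, set to $\varepsilon/(6Q_\Xi)$;
\item the phase-oracle error, set to $\varepsilon/(6Q_{\mathbf{G}''})$.
\end{itemize}
This is where the inaccurate oracles enter, and it is where the logarithmic factors $\log(Q_\Xi/\varepsilon)$ come from.

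The amplitude $\sqrt{p}$ must be converted to $p$. The map $a\mapsto a^2$ is $2$-Lipschitz on $[0,1]$, so an additive error of order $\varepsilon$ on the amplitude becomes an error of order $\varepsilon$ on $p$ and hence on $\Xi_{\mathcal{S}}$.

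A further subtlety is that the LCHS block encoding only approximates $\mathbf{G}$. This is tolerable because both the encoded operator and $\mathbf{G}$ have norm $\mathcal{O}(1)$ (since $\mathbf{L}\succeq 0$), so the perturbation to the good-state amplitude is linear in the encoding error. The regime $T\alpha_{\mathbf{L}}\in\Omega(1)$ allows the standalone $\log(Q_\Xi/\varepsilon)$ term to be merged.
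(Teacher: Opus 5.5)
Your proposal is correct and follows the paper's proof essentially step for step. The paper also uses the unnormalized LCHS state-preparation unitary, which you make explicit as an LCU over the initial-state and Duhamel-source branches. It runs amplitude estimation on the good subspace $\ketbra{0}{0}_a\otimes\mathbf{P}_{\mathcal{S}}$, so that $p=E_{\mathcal{S}}(T)/(\alpha_{\mathbf{G}}^2E_{\operatorname{tot}})$. It splits the error into a statistical $\varepsilon/2$ and a coherent $\varepsilon/2$, with per-call accuracies $\varepsilon/(6Q_{\Xi})$ and $\varepsilon/(6Q_{\mathbf{G}''})$, giving $Q_{\mathbf{G}''}=Q_{\Xi}Q_{\mathbf{G}}$.

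Two points to tighten. First, the good-subspace reflection is $O_{\mathcal{S}}$ controlled on the flag ancillas being $\ket{0}$; a tensor product of $O_{\mathcal{S}}$ with an ancilla reflection would reflect about the wrong subspace. Second, only the gate count is multiplied by $Q_{\Xi}$: the ancillas are reused across iterations, which is why the ancilla count stays $\mathcal{O}(\log(N\alpha_{\mathbf{L}}TQ_{\Xi}/\varepsilon))$.
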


\begin{proof}
    We allocate a statistical budget of $\varepsilon/2$ to \ac{QAE} and a systematic budget of $\varepsilon/2$ to coherent implementation errors.
    The systematic budget is split evenly over the three error sources by choosing, up to constants fixed at the end of the proof, the \ac{LCHS} algorithmic and time-discretization error $\bar{\varepsilon} = \varepsilon/(6Q_{\Xi})$, the state preparation accuracy $\hat{\varepsilon} = \varepsilon/(6Q_{\Xi})$, and the phase oracle accuracy $\varepsilon' = \varepsilon/(6Q_{\mathbf{G}''})$.

    Using the state preparation oracles $O_{\boldsymbol{\psi}_0}$, $O_{\boldsymbol{\chi}_\tau}$, and $O_{\boldsymbol{\chi}_x}$, we construct the unnormalized \ac{LCHS} state preparation unitary \cite{low_optimal_2025}.
    It prepares the global \ac{LCHS} state $\ket{\Phi}$, avoiding the renormalization cost $Q_{\boldsymbol{\psi}}$.
    Projecting $\ket{\Phi}$ onto the all-zero success ancilla state $\ket{0}_a$ extracts the unnormalized physical state $\boldsymbol{\psi}(T)$ encoded in the system register $s$
    \begin{equation}
        (\bra{0}_a \otimes \mathbf{I}_s) \ket{\Phi} = \frac{1}{\alpha_{\mathbf{G}}(\|\boldsymbol{\psi}(0)\| + \|\boldsymbol{\chi}\|_{L^1})} \boldsymbol{\psi}(T),
    \end{equation}
    where $\alpha_{\mathbf{G}} = \mathcal{O}(1)$ is the normalization constant of the \ac{LCHS} \ac{BE}.

    To estimate the energy, we define the \ac{LCHS} success projector $\boldsymbol{\Pi}_a = \ketbra{0}{0}_a \otimes \mathbf{I}_s$ and the diagonal system subspace projector $\boldsymbol{\Pi}_s = \mathbf{I}_a \otimes \mathbf{P}_{\mathcal{S}}$.
    Because they act on disjoint registers, they strictly commute, and we define the combined projector as their product $\boldsymbol{\Pi} = \boldsymbol{\Pi}_a \boldsymbol{\Pi}_s = \ketbra{0}{0}_a \otimes \mathbf{P}_{\mathcal{S}}$.
    \ac{QAE} \cite{brassard2000quantum, knill2007optimal} requires the Grover reflection operator over the target subspace, given by $\mathbf{I}_{a \otimes s} - 2\boldsymbol{\Pi}$.
    Because the underlying projectors commute, this reflection is exactly implemented by applying the exact marking oracle $O_{\mathcal{S}} = \mathbf{I}_s - 2\mathbf{P}_{\mathcal{S}}$ strictly controlled on the ancilla register $a$ being in the state $\ket{0}_a$.

    The probability $p$ of measuring $\ket{\Phi}$ in this target subspace is the expectation value of the combined projector.
    Substituting the subsystem energy $E_{\mathcal{S}}(T) = \frac{1}{2}\|\mathbf{P}_{\mathcal{S}}\boldsymbol{\psi}(T)\|^2$ and the maximum theoretical energy $E_{\operatorname{tot}} = \frac{1}{2}(\|\boldsymbol{\psi}(0)\| + \|\boldsymbol{\chi}\|_{L^1})^2$ yields the exact relation
    \begin{equation}
        p = \braket{\Phi|\boldsymbol{\Pi}|\Phi} = \frac{E_{\mathcal{S}}(T)}{\alpha_{\mathbf{G}}^2 E_{\operatorname{tot}}}.
    \end{equation}
    To estimate the ratio $E_{\mathcal{S}}(T)/E_{\operatorname{tot}}$ within the statistical budget $\varepsilon/2$, we must estimate $p$ to additive error $\tilde{\varepsilon} = \varepsilon/(2\alpha_{\mathbf{G}}^2)$.
    Because our system is passive ($\mathbf{L} \succeq 0$), the energy in any subsystem can never exceed the total injected energy, enforcing $E_{\mathcal{S}}(T) \leq E_{\operatorname{tot}}$ and hence $\sqrt{p} \leq 1/\alpha_{\mathbf{G}}$.
    Substituting into the \ac{QAE} complexity $\mathcal{O}\big(\tfrac{\sqrt{p}}{\tilde{\varepsilon}}\log(1/\delta)\big)$ bounds the number of iterations by
    \begin{equation}
        Q_{\Xi} = \mathcal{O}\left( \frac{\alpha_{\mathbf{G}}\log(1/\delta)}{\varepsilon} \right) = \mathcal{O}\left(\frac{\log(1/\delta)}{\varepsilon}\right).
    \end{equation}
    Each iteration applies the reflection once and the \ac{LCHS} state preparation unitary and its inverse, which requires $\mathcal{O}(1)$ queries to the state preparation oracles, giving the stated query counts for $O_{\mathcal{S}}$, $O_{\boldsymbol{\psi}_0}$, $O_{\boldsymbol{\chi}_\tau}$, and $O_{\boldsymbol{\chi}_x}$.

    A single application of the \ac{LCHS} state preparation with internal accuracy $\bar{\varepsilon}$ requires $Q_{\mathbf{G}} = \mathcal{O}\big(T(\alpha_\mathbf{L}\log(1/\bar{\varepsilon}) + \alpha_\mathbf{H}) + \log(1/\bar{\varepsilon})\big)$ queries to the (controlled) index and phase oracles by \Cref{lem:hamiltonian_encoding,lem:dissipator_encoding,th:_lchs_op_applied}.
    Substituting $\bar{\varepsilon} = \varepsilon/(6Q_{\Xi})$ yields $\log(1/\bar{\varepsilon}) = \mathcal{O}(\log(Q_{\Xi}/\varepsilon))$, and the total number of oracle queries is the product $Q_{\mathbf{G}''} = Q_{\Xi} Q_{\mathbf{G}}$ as stated.
    Each iteration accumulates the coherent error $\bar{\varepsilon} + \hat{\varepsilon} + \mathcal{O}(Q_{\mathbf{G}}\varepsilon')$.
    Over $Q_{\Xi}$ iterations, the total systematic deviation is $\mathcal{O}\big(Q_{\Xi}(\bar{\varepsilon} + \hat{\varepsilon}) + Q_{\mathbf{G}''}\varepsilon'\big)$, which the chosen accuracies bound by $\varepsilon/2$ for suitable constants.
    Combined with the statistical budget, the total estimation error is at most $\varepsilon$.
    Finally, a single application of the \ac{LCHS} preparation requires $\mathcal{O}(\log N + V + \log(T/\bar{\varepsilon}))$ ancilla qubits, where $V = \mathcal{O}(\log(\alpha_{\mathbf{L}}T + \log(1/\bar{\varepsilon})))$, which are reused across all \ac{QAE} iterations, and $\mathcal{O}\big(Q_{\mathbf{G}}(\log N + V + \log(T/\bar{\varepsilon})) + V\log^{5/2}(1/\bar{\varepsilon})\big)$ two-qubit gates.
    Substituting $\bar{\varepsilon} = \varepsilon/(6Q_{\Xi})$ and absorbing all logarithmic factors into the single term $\log(N \alpha_{\mathbf{L}} T Q_{\Xi}/\varepsilon)$, which is valid up to constant factors in the nontrivial regime $\alpha_{\mathbf{L}}T \geq 1$ (and otherwise with the arguments of the logarithms offset by constants), yields the stated ancilla count and, after multiplying the per-iteration gate cost by $Q_{\Xi}$, the stated gate count.
\end{proof}

\section{Time-Dependent Block Encodings} \label{apx:td_block_encodings}
Here we provide proofs for the \ac{BE} of $\mathbf{H}(t)$ and $\mathbf{L}(t)$ for time-varying materials.
We use the access model specified in \Cref{s:access_td}.
Consistent with the target-error convention of the main text, each proof splits the single target error $\varepsilon$ evenly between the per-step encoding error, which fixes the phase oracle accuracies, and the time-discretization error, which fixes the grid step size $\Delta t$; the factor-$2$ overheads are absorbed into the $\mathcal{O}(\cdot)$ notation.

\begin{proof}[Proof of \Cref{lem:td_hamiltonian_encoding}]
    In the time-independent case (\Cref{s:be_hamiltonian}), the weighted topology matrix is constructed via the product $\mathcal{U}_{\hat{\mathbf{T}}} = \mathcal{U}_{\mathbf{K}^{1/2}} \cdot \mathcal{U}_{\mathbf{T}} \cdot \mathcal{U}_{\mathbf{M}^{-1/2}}$. 
    For the time-dependent case, the time-dependent phase oracles natively produce the multiplexed block encodings $\mathcal{U}_{\mathbf{R}} = \sum_{j} \ketbra{j}{j} \otimes \mathcal{U}_{\mathbf{R}(j\Delta t)}$. 
    Because the topology unitary $\mathcal{U}_{\mathbf{T}}$ encodes a static, time-invariant graph, its natural extension acts as the identity on the clock register $\mathbf{I} \otimes \mathcal{U}_{\mathbf{T}}$.
    
    Evaluating the product of these block encodings over the joint space yields
    \begin{equation}
    \begin{aligned}
        \mathcal{U}_{\hat{\mathbf{T}}(t)} &= \mathcal{U}_{\mathbf{K}^{1/2}(t)} \cdot (\mathbf{I} \otimes \mathcal{U}_{\mathbf{T}}) \cdot \mathcal{U}_{\mathbf{M}^{-1/2}(t)} \\
        &= \left( \sum_{j} \ketbra{j}{j} \otimes \mathcal{U}_{\mathbf{K}^{1/2}(j\Delta t)} \right) (\mathbf{I} \otimes \mathcal{U}_{\mathbf{T}}) \left( \sum_{l} \ketbra{l}{l} \otimes \mathcal{U}_{\mathbf{M}^{-1/2}(l\Delta t)} \right) \\
        &= \sum_{j} \ketbra{j}{j} \otimes \left( \mathcal{U}_{\mathbf{K}^{1/2}(j\Delta t)} \mathcal{U}_{\mathbf{T}} \mathcal{U}_{\mathbf{M}^{-1/2}(j\Delta t)} \right).
    \end{aligned}
    \end{equation}
    Because the clock register $\ket{j}$ acts on a disjoint Hilbert space from the \ac{LCU} ancillae, the target block-encoding projection $\bra{0}^{\otimes a}$ commutes with the temporal projectors $\ketbra{j}{j}$. 
    Thus, the matrix multiplication distributes perfectly, yielding the multiplexed weighted topology matrix $\sum_j \ketbra{j}{j} \otimes \mathrm{BE}[\hat{\mathbf{T}}(j\Delta t)/\alpha_{\hat{\mathbf{T}}}]$.
    
    The subsequent 1-ancilla Hermitian dilation $\check{\mathbf{H}}(t) = (\mathbf{Y} \otimes \mathbf{I}) \operatorname{diag}(\hat{\mathbf{T}}(t), \hat{\mathbf{T}}^\dagger(t))$ introduces an external control qubit synchronous to the time-independent case. 
    This static control operation trivially commutes with the internal clock multiplexing.
    The overall normalization $\alpha_{\mathbf{H}}$ is governed by the product of the individual block-encoding maximums across all time steps. 
    Additive error sub-multiplicativity holds block-wise, so the per-step encoding error is uniformly bounded by the phase oracle errors scaled by $\alpha_{\mathbf{H}}$.
    Choosing the phase oracle accuracy as $\mathcal{O}(\varepsilon/\alpha_{\mathbf{H}})$, in exact analogy to the target-error corollary of the time-independent case (\Cref{s:be_hamiltonian}), bounds the per-step encoding error by $\varepsilon$.

    Finally, for a time-dependent generator $G(t)$, the grid stepsize $\Delta t$ required to bound the time-discretization error to $\varepsilon$ over simulation duration $T$ is governed by the truncated Dyson series discretization requirements \cite{low2018hamiltonian, low_hamiltonian_2019}, satisfying $\Delta t = \mathcal{O}\left(\frac{\varepsilon}{T \big( \max_t \|\frac{\mathrm{d}G(t)}{\mathrm{d} t}\| + \max_t \|G(t)\|^2 \big)}\right)$. 
    By applying the product rule to the weighted topology matrix $\hat{\mathbf{T}}(t) = \mathbf{K}^{1/2} \mathbf{T} \mathbf{M}^{-1/2}$, we obtain
    \begin{equation}
        \frac{\mathrm{d}\hat{\mathbf{T}}}{\mathrm{d}t} = \frac{\mathrm{d}\mathbf{K}^{1/2}}{\mathrm{d}t} \mathbf{T} \mathbf{M}^{-1/2} + \mathbf{K}^{1/2} \mathbf{T} \frac{\mathrm{d}\mathbf{M}^{-1/2}}{\mathrm{d}t}.
    \end{equation}
    Recall that the metric drift terms are defined as $\mathbf{K}^{(1)} = \frac{\mathrm{d} \ln \mathbf{K}}{\mathrm{d}t}$, which implies $\frac{\mathrm{d}\mathbf{K}^{1/2}}{\mathrm{d}t} = \frac{1}{2}\mathbf{K}^{(1)} \mathbf{K}^{1/2}$. 
    Similarly, $\mathbf{M}^{(1)} = \frac{\mathrm{d} \ln \mathbf{M}}{\mathrm{d}t}$, giving $\frac{\mathrm{d}\mathbf{M}^{-1/2}}{\mathrm{d}t} = - \frac{1}{2}\mathbf{M}^{-1/2} \mathbf{M}^{(1)}$. 
    Substituting these identities and applying the triangle inequality alongside the sub-multiplicativity of the spectral norm yields
    \begin{equation}
    \begin{aligned}
        \Big\|\frac{\mathrm{d}\hat{\mathbf{T}}}{\mathrm{d}t}\Big\| &\le \frac{1}{2}\|\mathbf{K}^{(1)}\| \|\mathbf{K}^{1/2} \mathbf{T} \mathbf{M}^{-1/2}\| + \frac{1}{2}\|\mathbf{M}^{(1)}\| \|\mathbf{K}^{1/2} \mathbf{T} \mathbf{M}^{-1/2}\| \\
        &\le \frac{1}{2}\big(|k^{(1)}|_{\max} + |m^{(1)}|_{\max}\big) \alpha_{\mathbf{H}}.
    \end{aligned}
    \end{equation}
    Since $\mathbf{H}(t)$ is the Hermitian dilation of $\hat{\mathbf{T}}(t)$, the derivative is bounded as $\|\frac{\mathrm{d}\mathbf{H}(t)}{\mathrm{d} t}\| \le \Gamma_{\mathbf{H}} \alpha_{\mathbf{H}}$, while its norm is bounded as $\|\mathbf{H}(t)\| \le \alpha_{\mathbf{H}}$. 
    Substituting $G(t) = \mathbf{H}(t)$ into the discretization step requirement yields $\Delta t = \mathcal{O}\left(\frac{\varepsilon}{T \alpha_{\mathbf{H}} (\Gamma_{\mathbf{H}} + \alpha_{\mathbf{H}})}\right)$, which dictates that the clock register size scales as $n_t = \mathcal{O}\left(\log\left(\frac{T \alpha_{\mathbf{H}} (\Gamma_{\mathbf{H}} + \alpha_{\mathbf{H}})}{\varepsilon}\right)\right)$.
    Combining $n_t$ with the $\mathcal{O}(\log N)$ ancillae of the time-independent case gives the stated ancilla count $a_{\mathbf{H}}$, and the remaining costs are inherited unchanged.
\end{proof}

\begin{proof}[Proof of \Cref{lem:td_dissipator_encoding}]
    The time-dependent dissipator $\mathbf{L}(t)$ is formed by the block-diagonal composition of $\hat{\boldsymbol{\eta}}_{\mathbf{M}}(t) = \mathbf{M}^{-1/2}(t) \mathbf{T}_0^\dagger \tilde{\boldsymbol{\eta}}_0(t) \mathbf{T}_0 \mathbf{M}^{-1/2}(t)$ and $\tilde{\boldsymbol{\Lambda}}(t)$ (\Cref{s:be_dissipator}).
    As established in \Cref{lem:td_hamiltonian_encoding}, accessing the real diagonal matrices through their time-dependent phase oracles natively produces the multiplexed block encodings, for instance $\mathcal{U}_{\mathbf{M}^{-1/2}(t)} = \sum_{j} \ketbra{j}{j} \otimes \mathcal{U}_{\mathbf{M}^{-1/2}(j\Delta t)}$. 
    Because the nested applications of the static topology unitary $\mathcal{U}_{\mathbf{T}_0}$ act as the identity on the clock register, they commute with the temporal projectors $\ketbra{j}{j}$.
    Evaluating the product for the mass dissipation block over the joint space yields the multiplexed encoding $\mathcal{U}_{\hat{\boldsymbol{\eta}}_{\mathbf{M}}(t)} = \sum_j \ketbra{j}{j} \otimes \mathrm{BE}[\hat{\boldsymbol{\eta}}_{\mathbf{M}}(j\Delta t)/\alpha_{\hat{\boldsymbol{\eta}}_{\mathbf{M}}}]$.
    Likewise, the time-dependent phase oracle $O_{\tilde{\boldsymbol{\Lambda}}}$ directly implements the multiplexed encoding of the spring dissipation body $\mathcal{U}_{\tilde{\boldsymbol{\Lambda}}(t)}$.
    
    The final \ac{LCU} operation that embeds these two sub-blocks onto the diagonal of $\check{\mathbf{L}}(t)$ introduces an external control qubit synchronous to the time-independent case. 
    As this operation acts on a disjoint subspace, it commutes with the internal clock multiplexing. 
    Thus, the block-wise sum is preserved across all time steps simultaneously, generating $\sum_j \ketbra{j}{j} \otimes \mathrm{BE}[\mathbf{L}(j\Delta t)/\alpha_{\mathbf{L}}]$.
    The overall normalization $\alpha_{\mathbf{L}}$ is governed by the global supremum of the individual block-encoding maximums across all time steps $t \in [0,T]$.
    Additive error sub-multiplicativity holds block-wise, so choosing the phase oracle accuracy as $\mathcal{O}(\varepsilon/\alpha_{\mathbf{L}})$, in exact analogy to the target-error corollary of the time-independent case (\Cref{s:be_dissipator}), bounds the per-step encoding error by $\varepsilon$.

    Finally, inside the LCHS simulation routine, the simulated generator is $G(t) = k \mathbf{L}(t)$ with $|k| \le R = \mathcal{O}(\log(1/\varepsilon))$. By the truncated Dyson series discretization requirements \cite{low2018hamiltonian, low_optimal_2025}, the required grid stepsize is $\Delta t = \mathcal{O}\left(\frac{\varepsilon}{T \big( \max_t \|\frac{\mathrm{d}G(t)}{\mathrm{d} t}\| + \max_t \|G(t)\|^2 \big)}\right)$.
    By applying the product rule to the mass dissipation block, we obtain
    \begin{equation}
    \begin{aligned}
        \frac{\mathrm{d}\hat{\boldsymbol{\eta}}_{\mathbf{M}}}{\mathrm{d}t} &= \frac{\mathrm{d}\mathbf{M}^{-1/2}}{\mathrm{d}t} \mathbf{T}_0^\dagger \tilde{\boldsymbol{\eta}}_0 \mathbf{T}_0 \mathbf{M}^{-1/2} +\mathbf{M}^{-1/2} \mathbf{T}_0^\dagger \frac{\mathrm{d}\tilde{\boldsymbol{\eta}}_0}{\mathrm{d}t} \mathbf{T}_0 \mathbf{M}^{-1/2} +\mathbf{M}^{-1/2} \mathbf{T}_0^\dagger \tilde{\boldsymbol{\eta}}_0 \mathbf{T}_0 \frac{\mathrm{d}\mathbf{M}^{-1/2}}{\mathrm{d}t}.
    \end{aligned}
    \end{equation}
    Recall that $\frac{\mathrm{d}\mathbf{M}^{-1/2}}{\mathrm{d}t} = - \frac{1}{2} \mathbf{M}^{-1/2} \mathbf{M}^{(1)}$ and the relative rates $\tilde{\boldsymbol{\eta}}_0^{(1)} = \frac{\mathrm{d}\ln\tilde{\boldsymbol{\eta}}_0}{\mathrm{d}t}$ and $\tilde{\boldsymbol{\Lambda}}^{(1)} = \frac{\mathrm{d}\ln\tilde{\boldsymbol{\Lambda}}}{\mathrm{d}t}$. 
    Substituting $\frac{\mathrm{d}\tilde{\boldsymbol{\eta}}_0}{\mathrm{d}t} = \tilde{\boldsymbol{\eta}}_0^{(1)}\tilde{\boldsymbol{\eta}}_0$, and noting that $\tilde{\boldsymbol{\eta}}_0^{(1)}$ is a diagonal matrix bounded by its maximum entry, we can factor its norm out of the positive semi-definite product.
    Applying the triangle inequality alongside the sub-multiplicativity of the spectral norm yields
    \begin{equation}
    \begin{aligned}
        \Big\|\frac{\mathrm{d}\hat{\boldsymbol{\eta}}_{\mathbf{M}}}{\mathrm{d}t}\Big\| &\le 2 \Big\|\frac{1}{2}\mathbf{M}^{(1)}\Big\| \|\hat{\boldsymbol{\eta}}_{\mathbf{M}}\| + \|\tilde{\boldsymbol{\eta}}_0^{(1)}\| \|\mathbf{M}^{-1/2} \mathbf{T}_0^\dagger \tilde{\boldsymbol{\eta}}_0 \mathbf{T}_0 \mathbf{M}^{-1/2}\| \\
        &\le \big(|m^{(1)}|_{\max} + |\tilde{\eta}_0^{(1)}|_{\max}\big) \alpha_{\hat{\boldsymbol{\eta}}_{\mathbf{M}}}.
    \end{aligned}
    \end{equation}
    Similarly, the derivative of the spring dissipation block is bounded by $\|\frac{\mathrm{d}\tilde{\boldsymbol{\Lambda}}}{\mathrm{d}t}\| \le \|\tilde{\boldsymbol{\Lambda}}^{(1)}\| \alpha_{\tilde{\boldsymbol{\Lambda}}} \le |\tilde{\lambda}^{(1)}|_{\max} \alpha_{\tilde{\boldsymbol{\Lambda}}}$.
    Since $\mathbf{L}$ is block-diagonal, its derivative is bounded by $\|\frac{\mathrm{d}\mathbf{L}(t)}{\mathrm{d} t}\| \le \alpha_{\mathbf{L}} \Gamma_{\mathbf{L}}$, where $\Gamma_{\mathbf{L}} = \max_{t \in [0,T]} \big(|m^{(1)}(t)|_{\max} + |\tilde{\eta}_0^{(1)}(t)|_{\max} + |\tilde{\lambda}^{(1)}(t)|_{\max} \big)$.
    Consequently, for $G(t) = k \mathbf{L}(t)$, we have $\|\frac{\mathrm{d}G(t)}{\mathrm{d} t}\| \le R \alpha_{\mathbf{L}} \Gamma_{\mathbf{L}}$ and $\|G(t)\| \le R \alpha_{\mathbf{L}}$. Substituting $R = \mathcal{O}(\log(1/\varepsilon))$ into the discretization step requirement yields $\Delta t = \mathcal{O}\left(\frac{\varepsilon}{T \alpha_{\mathbf{L}}\log(1/\varepsilon) (\Gamma_{\mathbf{L}} + \alpha_{\mathbf{L}}\log(1/\varepsilon))}\right)$, which dictates that the clock register size scales as $n_t = \mathcal{O}\left(\log\left(\frac{T \alpha_{\mathbf{L}}\log(1/\varepsilon) (\Gamma_{\mathbf{L}} + \alpha_{\mathbf{L}}\log(1/\varepsilon))}{\varepsilon}\right)\right)$.
    Because $\log\log(1/\varepsilon) = \mathcal{O}(\log(1/\varepsilon))$, this simplifies to $n_t = \mathcal{O}\left(\log\left(\frac{T \alpha_{\mathbf{L}} (\Gamma_{\mathbf{L}} + \alpha_{\mathbf{L}})}{\varepsilon}\right)\right)$ up to constant factors.
    Combining $n_t$ with the $\mathcal{O}(\log N)$ ancillae of the time-independent case gives the stated ancilla count $a_{\mathbf{L}}$, and the remaining costs are inherited unchanged.
\end{proof}

\section{Information Locality Bounds for Differential Equations}\label{app:LRB}

Our aim in this section is to build an understanding of when dissipative classical dynamics of a local observable cannot be simulated exponentially faster using a quantum computer than a classical computer by providing efficient classical simulation algorithms in such cases.  We will also provide a discussion of similar conditions for the simulation of a low-dimensional subspace of the solution.
We provide analogues of Lieb-Robinson bounds for generic linear time-invariant inhomogeneous differential equations and thereby show that we can simulate a small subspace of a system, such as our coupled oscillators, without needing to simulate the entire vector space because of information locality arguments.  
This is challenging for three reasons.  
First, commutators are no longer relevant for dynamics for generic differential equations.  
This means that other notions of independence of observables will be needed in this context.
Second, inhomogeneities can create the illusion of signaling through correlations between their time dependent values and this effect needs to be addressed in the analysis.
Third, unlike the Lieb-Robinson bounds, the Hilbert space under consideration may not necessarily have a natural sub-system decomposition.
For example, for the case of the damped non-Markovian oscillators considered above, each oscillator is effectively a one-dimensional subsystem.
This makes the analysis techniques used in Lieb-Robinson analysis not directly applicable. 

We consider inhomogeneous linear differential equations of the following form:
\begin{align}
    \frac{\mathrm{d} {\boldsymbol{\psi}(t)}}{\mathrm{d}t} = -\mathbf{C} {\boldsymbol{\psi}(t)} + {\boldsymbol{\chi}(t)},
\end{align}
where $\mathbf{C} \in \mathbb{C}^{N \times N}$.
The solution to this is
\begin{equation}
{\boldsymbol{\psi}(t)} = e^{-\mathbf{C}t} {\boldsymbol{\psi}(0)}+e^{-\mathbf{C}t} \int_0^t e^{\mathbf{C}\tau} {\boldsymbol{\chi}(\tau)} \mathrm{d}\tau:=e^{-\mathbf{C}t} {\boldsymbol{\psi}(0)} + e^{-\mathbf{C}t} {\boldsymbol{\Upsilon}(t)}.\label{eq:diffDef}
\end{equation}
Here we further have that
\begin{equation}
    \frac{\mathrm{d} }{\mathrm{d}  t}{\boldsymbol{\Upsilon}(t)} ~= e^{\mathbf{C}t} {\boldsymbol{\chi}(t)},
\end{equation}
which immediately allows one to verify by substitution that the above solution is valid.


Our aim is to now examine how a classical simulation could be performed for this differential equation under the assumption that the generator $-\mathbf{C}$ has a non-positive log-norm, i.e. the system is strictly dissipative or conservative.
For simplicity, we evaluate both the inhomogeneity and the function over a uniformly spaced grid; however, other choices may be made given prior knowledge of the distribution.
We provide this result in the following technical lemma.

\begin{lemma}[Classical Simulation]\label{lem:classicalSim}
    Let $-\mathbf{C}\in \mathbb{C}^{N\times N}$ be an $s$-sparse matrix with a non-positive log-norm and spectral norm $\|\mathbf{C}\|\le \alpha$, and let $\epsilon\in (0,1]$ be a desired error tolerance.
    Assume we are provided a list of size $sN$ of the matrix elements of $\mathbf{C}$ and 
    {that $\boldsymbol{\chi}:[0,T] \rightarrow \mathbb{C}^{N}$ is an analytic function that obeys $\sup_{\tau \in [0,T]}\|\partial_{\tau}^q \boldsymbol{\chi}(\tau)\| \le A_{\boldsymbol{\chi}} \omega_{\boldsymbol{\chi}}^q q!$}
    for an amplitude bound $A_{\boldsymbol{\chi}} \ge 0$ and frequency bound $\omega_{\boldsymbol{\chi}} \ge 0$.
    Assume we are provided a subroutine that computes $\boldsymbol{\chi}$ and its first $K_{\boldsymbol{\chi}}-1$ derivatives at any given time $t$ to the working precision using $\mathsf{T}_{\boldsymbol{\chi}}$ arithmetic operations, where $K_{\boldsymbol{\chi}} = \left\lceil \log_2\left( \max\left(2, \frac{4 e A_{\boldsymbol{\chi}} T}{\epsilon} \right) \right) \right\rceil$.
    Then a solution $\tilde{\boldsymbol{\psi}}(T) \in \mathbb{C}^{N}$ to the inhomogeneous differential equation $\frac{\mathrm{d}\boldsymbol{\psi}(t)}{\mathrm{d}t} = -\mathbf{C}\boldsymbol{\psi}(t) + \boldsymbol{\chi}(t)$ with initial condition $\boldsymbol{\psi}(0)$ can be computed on a classical computer such that $\|\tilde{\boldsymbol{\psi}}(T) - \boldsymbol{\psi}(T)\|\le \epsilon$ using a number of bit operations that scales as
    $$
    \tilde{\mathcal{O}}\left( (Ns + \mathsf{T}_{\boldsymbol{\chi}}) (\alpha+\omega_{\boldsymbol{\chi}}) T \log^2\left(\frac{\|\boldsymbol{\psi}(0)\| + A_{\boldsymbol{\chi}} T}{\epsilon}\right) \right).
    $$
\end{lemma}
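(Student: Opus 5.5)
We propose a time-stepping scheme built on truncated Taylor series. Split $[0,T]$ into $r=\lceil (\alpha+\omega_{\boldsymbol{\chi}})T\rceil$ steps of length $h=T/r\le 1/(\alpha+\omega_{\boldsymbol{\chi}})$. On step $[t_j,t_j+h]$, Duhamel's formula gives
\begin{equation}
\boldsymbol{\psi}(t_j+h)=e^{-\mathbf{C}h}\boldsymbol{\psi}(t_j)+\int_0^h e^{-\mathbf{C}(h-\sigma)}\boldsymbol{\chi}(t_j+\sigma)\,\mathrm{d}\sigma .
\end{equation}
Expand $\boldsymbol{\chi}(t_j+\sigma)=\sum_{q<K}\partial^q\boldsymbol{\chi}(t_j)\sigma^q/q!+R_K$. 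Each monomial term integrates in closed form against the exponential: $\int_0^h e^{-\mathbf{C}(h-\sigma)}\sigma^q\,\mathrm{d}\sigma/q!=h^{q+1}\varphi_{q+1}(-\mathbf{C}h)$, with $\varphi_\ell(z)=\sum_{m\ge0}z^m/(m+\ell)!$. The update is therefore $\sum_{\ell=0}^{K}h^\ell\varphi_\ell(-\mathbf{C}h)\mathbf{w}_\ell$, where $\mathbf{w}_0=\boldsymbol{\psi}(t_j)$ and $\mathbf{w}_\ell=\partial^{\ell-1}\boldsymbol{\chi}(t_j)$. This is a single truncated power series in $-\mathbf{C}h$. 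We evaluate it by a Horner-type recursion (the standard augmented-matrix/phi-function trick), truncating the $m$-sum at $K$ terms. Each step then costs $\mathcal{O}(K)$ sparse matrix–vector products, i.e.\ $\mathcal{O}(KNs)$ arithmetic operations, plus one call $\mathsf{T}_{\boldsymbol{\chi}}$ to obtain the $K$ source derivatives.

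\textbf{Error analysis.} Since $\mu(-\mathbf{C})\le0$, we have $\|e^{-\mathbf{C}\tau}\|\le1$ for all $\tau\ge0$. Local errors therefore do not amplify: the global error is at most the sum of the $r$ local errors, plus the propagated local errors, each with norm at most one. Three local errors arise.
\begin{enumerate}
\item \emph{Truncation of the exponential/$\varphi$ series.} Because $\|\mathbf{C}h\|\le1$, the tail is bounded by $\sum_{m\ge K}1/m!\le e/K!$, times $\|\boldsymbol{\psi}(t_j)\|+h\sup\|\boldsymbol{\chi}\|$. The norm $\|\boldsymbol{\psi}(t_j)\|$ is itself at most $\|\boldsymbol{\psi}(0)\|+A_{\boldsymbol{\chi}}T$ by the same contractivity.
\item \emph{Source Taylor remainder.} Using the analyticity bound $\|\partial^q\boldsymbol{\chi}\|\le A_{\boldsymbol{\chi}}\omega_{\boldsymbol{\chi}}^q q!$, we get $\|R_K\|\le A_{\boldsymbol{\chi}}\sum_{q\ge K}(\omega_{\boldsymbol{\chi}}\sigma)^q$. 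Integrated over a step, this gives $\mathcal{O}(A_{\boldsymbol{\chi}}h\,2^{-K})$ provided $\omega_{\boldsymbol{\chi}}h\le1/2$. We achieve that condition by doubling $r$. Summing over steps yields $\mathcal{O}(A_{\boldsymbol{\chi}}T2^{-K})$, which is exactly why $K_{\boldsymbol{\chi}}=\lceil\log_2(4eA_{\boldsymbol{\chi}}T/\epsilon)\rceil$ suffices.
\item \emph{Round-off.} Using $b=\mathcal{O}(\log((\|\boldsymbol{\psi}(0)\|+A_{\boldsymbol{\chi}}T)rN/\epsilon))$ bits of fixed-point precision keeps the accumulated round-off below $\epsilon/3$.
\end{enumerate}
We split $\epsilon$ into thirds across these three sources. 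It suffices that $r\,e/K!\,(\|\boldsymbol{\psi}(0)\|+A_{\boldsymbol{\chi}}T)\le\epsilon/3$, giving $K=\mathcal{O}(\log((\|\boldsymbol{\psi}(0)\|+A_{\boldsymbol{\chi}}T)r/\epsilon)/\log\log(\cdot))$, which is logarithmic.

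\textbf{Cost.} The total is $r\cdot K\cdot(Ns+\mathsf{T}_{\boldsymbol{\chi}})$ arithmetic operations. Each operation on $b$-bit numbers costs $\tilde{\mathcal{O}}(b)$ bit operations. Combining the factors $K$ and $b$ gives the stated $\log^2$ factor, with the dependence on $\log r$ and $\log N$ absorbed into $\tilde{\mathcal{O}}$.

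\textbf{Main obstacle.} The hardest part is showing that the non-normal $\mathbf{C}$ does not spoil stability. The bound $\|\varphi_\ell(-\mathbf{C}h)\|\le1/\ell!$ does not follow from spectral considerations. We would derive it from the integral representation $\varphi_\ell(z)=\int_0^1 e^{(1-\theta)z}\theta^{\ell-1}/(\ell-1)!\,\mathrm{d}\theta$ combined with the log-norm contractivity. We would also have to check carefully that the truncated polynomial approximations preserve this control. Here we would use only the crude bound $\|\mathbf{C}h\|\le1$ on the tail, which is why $h\le1/\alpha$ is imposed.
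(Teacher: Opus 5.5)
Your proposal is correct and is essentially the paper's argument. The paper's recurrence $\boldsymbol{u}_k=-\mathbf{C}\boldsymbol{u}_{k-1}+\boldsymbol{\chi}^{(k-1)}(t_i)$, summed as $\sum_{k\le K_{\boldsymbol{\psi}}}(\Delta t)^k\boldsymbol{u}_k/k!$, is a truncation of your $\sum_\ell h^\ell\varphi_\ell(-\mathbf{C}h)\mathbf{w}_\ell$, and it uses the same step-size conditions $\alpha\Delta t\le 1$ and $\omega_{\boldsymbol{\chi}}\Delta t\le 1/2$, the same $2^{-K_{\boldsymbol{\chi}}}$ source-remainder estimate, and the same origin of the $\log^2$ factor (truncation order times bit precision).

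The only real difference is in the stability argument. The paper forces $\|e^{-\mathbf{C}\Delta t}-\tilde{\mathbf{G}}\|\le 1/N_t$, so the numerical propagator amplifies errors by at most $(1+1/N_t)^{N_t}\le e$. You instead propagate errors with the exact contraction $e^{-\mathbf{C}h}$, which is equally valid provided you evaluate each local error at the numerical iterate (whose norm is at most $\|\boldsymbol{\psi}(0)\|+A_{\boldsymbol{\chi}}T+\epsilon$ by induction) rather than at the exact state. This also shows that the bound $\|\varphi_\ell(-\mathbf{C}h)\|\le 1/\ell!$, which you list as the main obstacle, is never needed: the series tails only use $\|\mathbf{C}h\|\le 1$, and the Duhamel remainder only uses $\|e^{-\mathbf{C}\tau}\|\le 1$.
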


\begin{proof}
Our classical algorithm steps forward in time by partitioning the integration interval $[0, T]$ into $N_t = \max(1, \lceil (\alpha + 2\omega_{\boldsymbol{\chi}}) T \rceil)$ sub-intervals of length $\Delta t = T / N_t$. 
By construction, this ensures that $\alpha \Delta t \le 1$ and $\omega_{\boldsymbol{\chi}} \Delta t \le 1/2$.

The exact solution to the inhomogeneous differential equation over a single time step from $t_i$ to $t_{i+1} = t_i + \Delta t$ is given by Duhamel's principle
\begin{equation} \label{eq:duhamel_step}
    \boldsymbol{\psi}(t_{i+1}) = e^{-\mathbf{C}\Delta t} \boldsymbol{\psi}(t_i) + \int_0^{\Delta t} e^{-\mathbf{C}(\Delta t - \tau)} \boldsymbol{\chi}(t_i + \tau) \mathrm{d}\tau.
\end{equation}
Because $-\mathbf{C}$ has a non-positive log-norm, the homogeneous propagator is bounded by $\|e^{-\mathbf{C}\tau}\| \le 1$ for all $\tau \ge 0$. 
This implies the state norm grows at most linearly 
due to the source $\|\boldsymbol{\psi}(t_i)\| \le \|\boldsymbol{\psi}(0)\| + T \sup_{\tau \in [0, T]}\|\boldsymbol{\chi}(\tau)\| \le \|\boldsymbol{\psi}(0)\| + A_{\boldsymbol{\chi}} T$.

We expand the analytic source function $\boldsymbol{\chi}(t_i + \tau)$ in a Taylor series up to degree $K_{\boldsymbol{\chi}}-1$. 
The remainder is bounded by 
\begin{equation}
    \epsilon_{\boldsymbol{\chi}}(\tau):=\sup_{h \in[0, \tau]} \|\boldsymbol{\chi}^{(K_{\boldsymbol{\chi}})}(t_i + h)\| \frac{\tau^{K_{\boldsymbol{\chi}}}}{K_{\boldsymbol{\chi}}!} \le A_{\boldsymbol{\chi}} \omega_{\boldsymbol{\chi}}^{K_{\boldsymbol{\chi}}} \tau^{K_{\boldsymbol{\chi}}}.
\end{equation}

Substituting the remainder of the Taylor series into the integral in \Cref{eq:duhamel_step}, taking its norm, and applying the triangle inequality, the local source truncation error at a single time step is bounded by
\begin{equation}
    \int_0^{\Delta t} A_{\boldsymbol{\chi}} \omega_{\boldsymbol{\chi}}^{K_{\boldsymbol{\chi}}} \tau^{K_{\boldsymbol{\chi}}} \mathrm{d}\tau = \frac{A_{\boldsymbol{\chi}} \omega_{\boldsymbol{\chi}}^{K_{\boldsymbol{\chi}}} (\Delta t)^{K_{\boldsymbol{\chi}}+1}}{K_{\boldsymbol{\chi}}+1}.
\end{equation}
If we then define $\tilde{\boldsymbol{\chi}}$ to be the piecewise local Taylor approximation of the inhomogeneity at each time step, we have that
\begin{align}
    \epsilon_{\Upsilon}:=\left\|\int_0^t e^{-\mathbf{C}(t-\tau)}\boldsymbol{\chi}(\tau)\mathrm{d}\tau - \int_0^t e^{-\mathbf{C}(t-\tau)}\tilde{\boldsymbol{\chi}}(\tau)\mathrm{d}\tau\right\| &\le \sum_{i}\|e^{-\mathbf{C}(t-t_{i+1})}\|\int_{t_i}^{t_i +\Delta t} \|e^{-\mathbf{C}(t_{i+1}-\tau)}\| \|\boldsymbol{\chi}(\tau)-\tilde{\boldsymbol{\chi}}(\tau)\| \mathrm{d}\tau \nonumber\\
    &\le \sum_i \int_0^{\Delta t} A_{\boldsymbol{\chi}} \omega_{\boldsymbol{\chi}}^{K_{\boldsymbol{\chi}}} \tau^{K_{\boldsymbol{\chi}}} \mathrm{d}\tau \nonumber\\
    &\le 
    \frac{A_{\boldsymbol{\chi}} t}{K_{\boldsymbol{\chi}}+1} (\omega_{\boldsymbol{\chi}} \Delta t)^{K_{\boldsymbol{\chi}}} \nonumber\\&\le \frac{A_{\boldsymbol{\chi}} T}{K_{\boldsymbol{\chi}}+1} (\omega_{\boldsymbol{\chi}} \Delta t)^{K_{\boldsymbol{\chi}}}.
\end{align}

Because we chose $\Delta t$ such that $\omega_{\boldsymbol{\chi}} \Delta t \le 1/2$, we can bound $\epsilon_{\Upsilon}$  by $\frac{\epsilon}{4e}$ by selecting a truncation degree
\begin{equation}
    K_{\boldsymbol{\chi}} = \left\lceil \log_2\left( \max\left(2, \frac{4 e A_{\boldsymbol{\chi}} T}{\epsilon} \right) \right) \right\rceil \in \mathcal{O}\left( \log\left(\frac{A_{\boldsymbol{\chi}} T}{\epsilon}\right) \right).
\end{equation}

We extract the updated state at a fixed time step $t_i$ by defining the sequence of vectors $\boldsymbol{u}_k$ via the recurrence relation
\begin{align}
    \boldsymbol{u}_0 &= \tilde{\boldsymbol{\psi}}(t_i) \\
    \boldsymbol{u}_k &= -\mathbf{C} \boldsymbol{u}_{k-1} + \boldsymbol{\chi}^{(k-1)}(t_i) \quad \text{for } 1 \le k \le K_{\boldsymbol{\chi}} \\
    \boldsymbol{u}_k &= -\mathbf{C} \boldsymbol{u}_{k-1} \qquad\qquad\qquad\quad \text{for } k > K_{\boldsymbol{\chi}}.
\end{align}
For the purpose of analyzing the local truncation error of one step of the recurrence, we assume an exact initial condition, letting $\boldsymbol{u}_0 = \boldsymbol{\psi}(t_i)$.
By applying a truncated Taylor series of degree $K_{\boldsymbol{\psi}}$, the updated state evaluates to
\begin{equation} \label{eq:beta_sum}
    \tilde{\boldsymbol{\psi}}(t_{i+1}) = \sum_{k=0}^{K_{\boldsymbol{\psi}}} \frac{(\Delta t)^k}{k!} \boldsymbol{u}_k.
\end{equation}

Let $\tilde{\mathbf{G}} = \sum_{k=0}^{K_{\boldsymbol{\psi}}} \frac{(-\mathbf{C}\Delta t)^k}{k!}$ be the numerical propagation operator. 
To guarantee a global discretization error $\le \epsilon/2$ without exponential amplification of local errors, we enforce an operator distance $\|e^{-\mathbf{C}\Delta t} - \tilde{\mathbf{G}}\| \le \sum_{k=K_{\boldsymbol{\psi}}+1}^\infty \frac{(\|\mathbf{C}\|\Delta t)^k}{k!} \le \frac{1}{N_t}$. 
Since $-\mathbf{C}$ has a non-positive log-norm this bounds the numerical operator norm by $\|\tilde{\mathbf{G}}\| \le 1 + \frac{1}{N_t}$, capping the maximum global amplification over $N_t$ steps to $(1 + 1/N_t)^{N_t} \le e$. 
Therefore, to guarantee a global discretization error $\le \epsilon/2$, it suffices to bound the local series truncation error as $\|\tilde{\boldsymbol{\psi}}(t_{i+1}) - \hat{\boldsymbol{\psi}}(t_{i+1})\| \le \epsilon / (2 e N_t)$, where $\hat{\boldsymbol{\psi}}(t_{i+1})$ is the exact solution, starting from $\hat{\boldsymbol{\psi}}(t_i) = \boldsymbol{\psi}(t_i)$, of the ODE driven by the truncated source $\tilde{\boldsymbol{\chi}}$. 
Expanding the recurrence analytically, this local discretization error $\epsilon_L$ decomposes into a homogeneous $\epsilon_H$ and an inhomogeneous component $\epsilon_{I}$:
\begin{align}
    \epsilon_L &\le \max_i \underbrace{\sum_{k=K_{\boldsymbol{\psi}}+1}^\infty \frac{(\alpha\Delta t)^k}{k!} \|\boldsymbol{\psi}(t_i)\|}_{\epsilon_H} + \underbrace{\sum_{k=K_{\boldsymbol{\psi}}+1}^\infty \frac{(\Delta t)^k}{k!} \sum_{q=0}^{\min(k-1, K_{\boldsymbol{\chi}}-1)} \alpha^{k-1-q} A_{\boldsymbol{\chi}} \omega_{\boldsymbol{\chi}}^q q!}_{\epsilon_{I}}.
\end{align}

For the inhomogeneous error, we simplify notation by introducing a new dummy variable $m$ such that $m = k - 1 - q$, or equivalently, $k = m + q + 1$. 
Choosing $K_{\boldsymbol{\psi}} \ge K_{\boldsymbol{\chi}} - 1$ ensures $m \ge 0$, allowing us to use the factorial bound $\frac{q!}{k!} = \frac{1}{(q+1)\cdots(q+m+1)} \le \frac{1}{m!}$. 
Thus, the inhomogeneous error is bounded by
\begin{equation}
    \epsilon_I \le A_{\boldsymbol{\chi}} \Delta t \sum_{q=0}^{K_{\boldsymbol{\chi}}-1} (\omega_{\boldsymbol{\chi}}\Delta t)^q \left( \sum_{m=K_{\boldsymbol{\psi}}-q}^\infty \frac{(\alpha\Delta t)^m}{m!} \right).
\end{equation}
Since $q \le K_{\boldsymbol{\chi}} - 1$, the inner sum over $m$ starts at minimum $m = K_{\boldsymbol{\psi}} - K_{\boldsymbol{\chi}} + 1$. 
Because $\alpha\Delta t \le 1$, the inner Taylor remainder sum is bounded by $\frac{e}{(K_{\boldsymbol{\psi}} - K_{\boldsymbol{\chi}} + 1)!}$. 
The remaining sum over $q$ is a geometric series bounded by $2$, as $\omega_{\boldsymbol{\chi}}\Delta t \le 1/2$. 
Thus, the total inhomogeneous error is bounded by
\begin{equation}
    \epsilon_{I} \le \frac{2 e A_{\boldsymbol{\chi}} \Delta t}{(K_{\boldsymbol{\psi}} - K_{\boldsymbol{\chi}} + 1)!}.
\end{equation}
To ensure this error is at most $\frac{\epsilon}{4 \|\tilde{\mathbf{G}}\|^{N_t} N_t}\ge \frac{\epsilon}{4 e N_t}$ locally, we require $(K_{\boldsymbol{\psi}} - K_{\boldsymbol{\chi}} + 1)! \ge \frac{8 e^2 A_{\boldsymbol{\chi}} T}{\epsilon}$, meaning $K_{\boldsymbol{\psi}} - K_{\boldsymbol{\chi}} \in \mathcal{O}\left(\frac{\log(A_{\boldsymbol{\chi}} T / \epsilon)}{\log\log(A_{\boldsymbol{\chi}} T / \epsilon)}\right)$.

By the Taylor remainder theorem, the homogeneous portion is similarly bounded by
\begin{equation}
    \epsilon_H \le \frac{e (\|\boldsymbol{\psi}(0)\| + A_{\boldsymbol{\chi}} T)}{(K_{\boldsymbol{\psi}}+1)!}.
\end{equation}
Setting this to be $ \epsilon_H\le \frac{\epsilon}{4 e N_t}$ alongside the operator bound $\|e^{-\mathbf{C}\Delta t} - \tilde{\mathbf{G}}\| \le \frac{1}{N_t}$ dictates 
\begin{equation}
    K_{\boldsymbol{\psi}} \in \mathcal{O}\left(\frac{\log(N_t)}{\log\log(N_t)} + \frac{\log\left(N_t (\|\boldsymbol{\psi}(0)\| + A_{\boldsymbol{\chi}} T) / \epsilon \right)}{\log\log\left(N_t (\|\boldsymbol{\psi}(0)\| + A_{\boldsymbol{\chi}} T) / \epsilon \right)}\right).
\end{equation}
Since $\log(N_t)$ is logarithmic in the simulation parameters, both errors and the operator norm are suitably suppressed by choosing a truncation order $K_{\boldsymbol{\psi}} \in \mathcal{O}\left(\log\big((\alpha + \omega_{\boldsymbol{\chi}}) T\big) + \log\left(\frac{\|\boldsymbol{\psi}(0)\| + A_{\boldsymbol{\chi}} T}{\epsilon}\right)\right)$.
Because the numerical operator amplifies prior errors by at most $\|\tilde{\mathbf{G}}\|^{N_t} \le e$, the unamplified global source error contributes at most $e \times \frac{\epsilon}{4e} = \frac{\epsilon}{4}$ to the final state, which, when combined with the $\frac{\epsilon}{2}$ global discretization error, strictly bounds the total error by $\epsilon$.

At each of the $N_t$ time steps, generating the sequence $\boldsymbol{u}_k$ up to $K_{\boldsymbol{\psi}}$ requires exactly $K_{\boldsymbol{\psi}}$ sparse matrix-vector multiplications with $\mathbf{C}$ and the evaluation of the source derivatives. 
Therefore, the computational cost to step forward in time is $\mathcal{O}(K_{\boldsymbol{\psi}} N s + \mathsf{T}_{\boldsymbol{\chi}})$.

Multiplying this step cost by the total number of time steps $N_t = \max(1, \lceil (\alpha + 2\omega_{\boldsymbol{\chi}}) T \rceil)$, the global arithmetic complexity evaluates to
\begin{equation}
    \mathsf{T}_{\text{ari}} = \tilde{\mathcal{O}}\left( (Ns + \mathsf{T}_{\boldsymbol{\chi}})(\alpha+\omega_{\boldsymbol{\chi}})T \log\left(\frac{\|\boldsymbol{\psi}(0)\| + A_{\boldsymbol{\chi}} T}{\epsilon}\right) \right),
\end{equation}
where we assume a non-trivial regime.
Bounding the accumulated round-off error to $\mathcal{O}(\epsilon)$ across the simulation requires executing the arithmetic operations with a floating-point precision of $\mathcal{O}\big(\log\big(\frac{\|\boldsymbol{\psi}(0)\| + A_{\boldsymbol{\chi}} T}{\epsilon}\big)\big)$ bits.
Since the classical cost of basic arithmetic operations scales quasi-linearly with the number of bits, the overall bit complexity incurs an additional logarithmic factor, yielding a final Boolean complexity of
\begin{equation}
    \mathsf{T}_{\text{bit}} = \tilde{\mathcal{O}}\left( (Ns + \mathsf{T}_{\boldsymbol{\chi}})(\alpha+\omega_{\boldsymbol{\chi}})T\log^2\left(\frac{\|\boldsymbol{\psi}(0)\| + A_{\boldsymbol{\chi}} T}{\epsilon}\right) \right).
\end{equation}
\end{proof}

The arithmetic cost $\mathsf{T}_{\boldsymbol{\chi}}$ to evaluate the $K_{\boldsymbol{\chi}}$ derivatives depends on the nature of the source. 
For most physical systems driven by standard analytic forms, the closed-form derivatives are known and can be evaluated in $\mathcal{O}(1)$ operations per matrix dimension, yielding a scaling of $\mathsf{T}_{\boldsymbol{\chi}} = \mathcal{O}(K_{\boldsymbol{\chi}} N)$. 
If the source is instead defined by an arbitrary analytic algorithmic procedure, its derivatives can be computed automatically using Taylor-mode Automatic Differentiation \cite{griewank2008evaluating}. 
In this algorithmic regime, the cost evaluates to $\mathsf{T}_{\boldsymbol{\chi}} = \mathcal{O}(K_{\boldsymbol{\chi}}^2 T_0)$ using standard polynomial arithmetic, where $T_0$ is the classical arithmetic cost of evaluating the un-differentiated function once. 
In either case, the overall complexity remains efficient.


\begin{figure}
    \centering

\begin{tikzpicture}[thick]

\draw (-4,-3) rectangle (4,3);
\node[anchor=north west] at (-3.8,2.8) {$\mathcal{H}$};

\draw (0,0) ellipse (2.6cm and 1.7cm);
\node at (1.6,1.0) {$\mathcal{V}$};

\draw (0,0) ellipse (1.1cm and 0.7cm);
\node at (0,0) {$\mathcal{W}$};

\node at (-3.0,-2.1) {$\mathcal{V}^{\perp}$};

\end{tikzpicture}
    \caption{Schematic diagram showing the ambient Hilbert space $\mathcal{H}$ along with the subspace of support for the observable $\mathbf{Q}$, $\mathcal{W}$ the region included in the simulation $\mathcal{V}$ and the region excluded from the simulation $\mathcal{V}^\perp$}
    \label{fig:regions}
\end{figure}

\subsection{Information Locality and Truncation Bounds}

The above result shows that we can efficiently compute the solution to a dissipative dynamical system in time that is quasi-linear in the Hilbert space dimension $N$.  
Our next step involves introducing a variant of a Lieb-Robinson bound that is appropriate for inhomogeneous differential equations which will then allow us to truncate the value of $N$ for a short-time evolution of the dynamical system or for strictly dissipative systems.

A Lieb-Robinson bound provides a bound for the commutator norm between a time-evolved observable $e^{i\mathbf{H}t}\mathbf{Q} e^{-i\mathbf{H}t}$ and another observable $\mathbf{V}$ such that the subsystems that support the observables $\mathbf{Q}$ and $\mathbf{V}$ are separated by a distance of at least 
$d$.  
It shows that the commutator between these observables, for a Hamiltonian that is sufficiently local \cite{nachtergaele2006lieb,chen2023speed}, is of the form
\begin{equation}
    \|[e^{i\mathbf{H}t}\mathbf{Q} e^{-i\mathbf{H}t},\mathbf{V}]\| \le b \|\mathbf{Q}\| \|\mathbf{V}\| e^{-c(d -v_It)}
\end{equation}
for constants $b,c$ and information propagation velocity $v_I$, which is often called the Lieb-Robinson velocity.  
As a result two commuting observables remain approximately commuting if the evolution time is sufficiently small and they are spatially separated. 
Consequently we can treat the evolution of $\mathbf{Q}$ to be independent of $\mathbf{V}$ up to exponentially small error.

The Lieb-Robinson bound has proven to be a major leap for classical simulations of quantum dynamics because it shows that we do not necessarily need to simulate the entirety of the Hilbert space to understand the dynamics of a local observable: we only need to evolve the portions of it that are important for the dynamics.  
Specifically, the dynamical equation for an observable without explicit, inherent time-dependence is
\begin{equation}
    \partial_t (e^{i\mathbf{H}t}\mathbf{Q} e^{-i\mathbf{H}t}) = i[\mathbf{H},e^{i\mathbf{H}t}\mathbf{Q} e^{-i\mathbf{H}t}],
\end{equation}
where $\mathbf{H}$ is the Hamiltonian of the system.
Thus, if $\mathbf{H}=\mathbf{H}_1+\mathbf{H}_2$ and $\sup_{\tau \in [0, t]}\|[\mathbf{H}_1,e^{i\mathbf{H}\tau}\mathbf{Q} e^{-i\mathbf{H}\tau}]\|\le \epsilon/t$, then, from the fundamental theorem of calculus applied to the interaction picture evolution of $e^{i\mathbf{H}t}\mathbf{Q} e^{-i\mathbf{H}t}$, removing $\mathbf{H}_1$ from the Hamiltonian yields an error of at most $\epsilon$.

Our aim here is to generalize the idea behind a Lieb-Robinson bound to apply to differential equations.  
This will require a number of changes to the setting and further involves a fundamentally new derivation.  
The central idea behind our approach is to consider expectation values of the form 
\begin{equation}
    \mathbf{u}^\dagger(t) \mathbf{Q}\mathbf{w}(t) = \mathbf{u}^\dagger e^{-\mathbf{C}^\dagger t}\mathbf{Q} e^{-\mathbf{C} t} \mathbf{w},
\end{equation}
where $\mathbf{u}(t)=e^{-\mathbf{C} t} \mathbf{u}$ and $\mathbf{w}(t) = e^{-\mathbf{C}t} \mathbf{w}$ are different input vectors in $\mathbb{C}^N$.   
By showing that such an expectation value remains exponentially small when the initial vectors $\mathbf{u}$ and $\mathbf{w}$ are supported on spatially separated subsystems, we can prove that information propagates at a bounded velocity.

\subsubsection{Observable Expansions and Locality}

The first step in this derivation involves finding differential equations for the analogues of expectation values of the observables under the evolution.
These expressions, as we will see, resemble the commutators that appear in the Heisenberg equation of motion but take the form of more general brackets here because the operator $\mathbf{C}$ is not necessarily anti-Hermitian.  
We define the generalized bracket operation for two operators as $(\mathbf{X},\mathbf{Y}) := \mathbf{Y}^\dagger \mathbf{X} + \mathbf{XY}$. 
For notational convenience, we denote the $k$-fold nested bracket recursively as $(\mathbf{X}, \mathbf{Y})^{(k)} := \big( (\mathbf{X}, \mathbf{Y})^{(k-1)}, \mathbf{Y} \big)$ for any integer $k \ge 1$, with the base case defined as the bare operator $(\mathbf{X}, \mathbf{Y})^{(0)} := \mathbf{X}$.

\begin{lemma}[Generalized Bracket Derivative]\label{lem:diffLem}
Let $\mathbf{Q}, \mathbf{C} \in \mathbb{C}^{N \times N}$ be linear operators acting on an $N$-dimensional Hilbert space. 
We then have that
\begin{align}
    \frac{\mathrm{d}}{\mathrm{d}t} \left( e^{-\mathbf{C}^\dagger t} \mathbf{Q} e^{-\mathbf{C}t} \right) &= -\mathbf{C}^\dagger e^{-\mathbf{C}^\dagger t} \mathbf{Q} e^{-\mathbf{C}t} - e^{-\mathbf{C}^\dagger t} \mathbf{Q} e^{-\mathbf{C}t} \mathbf{C} \nonumber \\
    &= e^{-\mathbf{C}^\dagger t} (\mathbf{Q}, -\mathbf{C}) e^{-\mathbf{C}t}.
\end{align}
\end{lemma}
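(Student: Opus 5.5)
The plan is a direct computation with the product rule, followed by matching the result to the bracket definition $(\mathbf{X},\mathbf{Y}) = \mathbf{Y}^\dagger\mathbf{X} + \mathbf{X}\mathbf{Y}$. Since $N$ is finite, $t\mapsto e^{-\mathbf{C}t}$ is given by an absolutely convergent power series. Differentiating it termwise gives $\frac{\mathrm{d}}{\mathrm{d}t}e^{-\mathbf{C}t} = -\mathbf{C}e^{-\mathbf{C}t} = -e^{-\mathbf{C}t}\mathbf{C}$, because $\mathbf{C}$ commutes with every power of itself. Taking adjoints termwise gives $(e^{-\mathbf{C}t})^\dagger = e^{-\mathbf{C}^\dagger t}$ for real $t$. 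Hence $\frac{\mathrm{d}}{\mathrm{d}t}e^{-\mathbf{C}^\dagger t} = -\mathbf{C}^\dagger e^{-\mathbf{C}^\dagger t} = -e^{-\mathbf{C}^\dagger t}\mathbf{C}^\dagger$.

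Next I apply the product rule to the three-factor product $e^{-\mathbf{C}^\dagger t}\,\mathbf{Q}\,e^{-\mathbf{C}t}$, with $\mathbf{Q}$ constant in time. This yields $-\mathbf{C}^\dagger e^{-\mathbf{C}^\dagger t}\mathbf{Q}e^{-\mathbf{C}t} - e^{-\mathbf{C}^\dagger t}\mathbf{Q}e^{-\mathbf{C}t}\mathbf{C}$, which is the first line of the claim. In that line I place $\mathbf{C}^\dagger$ to the left of its exponential and $\mathbf{C}$ to the right of its exponential.

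For the second line I commute $\mathbf{C}^\dagger$ past $e^{-\mathbf{C}^\dagger t}$ and $\mathbf{C}$ past $e^{-\mathbf{C}t}$, which is allowed because each operator commutes with its own exponential. The derivative then equals $e^{-\mathbf{C}^\dagger t}\big(-\mathbf{C}^\dagger\mathbf{Q} - \mathbf{Q}\mathbf{C}\big)e^{-\mathbf{C}t}$. Setting $\mathbf{X}=\mathbf{Q}$ and $\mathbf{Y}=-\mathbf{C}$ in the bracket gives $(\mathbf{Q},-\mathbf{C}) = (-\mathbf{C})^\dagger\mathbf{Q} + \mathbf{Q}(-\mathbf{C}) = -\mathbf{C}^\dagger\mathbf{Q} - \mathbf{Q}\mathbf{C}$, which is exactly the sandwiched operator.

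There is no substantive obstacle. The only points needing care are the commutation of each generator with its own exponential and the adjoint identity for the exponential, both immediate from the power series. Nothing from earlier in the paper is needed beyond the definition of the bracket.
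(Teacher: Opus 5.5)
Your proposal is correct and follows the same route as the paper, which also just invokes the product rule and the fact that each generator commutes with its own exponential. Your explicit check that $(\mathbf{Q},-\mathbf{C}) = -\mathbf{C}^\dagger\mathbf{Q} - \mathbf{Q}\mathbf{C}$ is the same matching to the bracket definition that the paper leaves implicit.
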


\begin{proof}
    This follows directly from the product rule of differentiation and the commutativity of the matrix exponentials with their respective generators.
\end{proof}

This notation allows us to express differentiation in a form that is similar to a commutator structure, but does not have the same anti-symmetry properties as the standard commutator.
It further allows us to compactly represent the expectation value of any time-evolved observable for a generic homogeneous differential equation in an analogous manner to the commutator expansion that appears due to the Schr\"{o}dinger equation.  
Specifically, the following gives a version of a Taylor series expansion that is analogous to the commutator series $e^{\mathbf{C}t} \mathbf{Q} e^{-\mathbf{C}t} = \mathbf{Q} + t[\mathbf{C}, \mathbf{Q}] + \frac{t^2}{2!} [\mathbf{C},[\mathbf{C}, \mathbf{Q}]]+\cdots$, and reduces to it in the case where $\mathbf{C}$ is anti-Hermitian.
\begin{corollary}[Homogeneous Observable Expansion] \label{cor:homo} For any positive integer $K$ and operator $\mathbf{Q} \in \mathbb{C}^{N \times N}$,
\begin{align}
e^{-\mathbf{C}^\dagger t} \mathbf{Q} e^{-\mathbf{C} t} =& \sum_{p=0}^{K-1} (\mathbf{Q},-\mathbf{C})^{(p)} \frac{t^{p}}{{p!}} \nonumber\\
&+\int_0^t\int_0^{\tau_1}\cdots \int_0^{\tau_{K-1}}{ e^{-\mathbf{C}^\dagger \tau_K} (\mathbf{Q},-\mathbf{C})^{(K)} e^{-\mathbf{C} \tau_K}} \mathrm{d}^K \tau.
\end{align}
where $(\mathbf{Q},-\mathbf{C})^{(p)} = (\mathbf{Q},(\mathbf{Q},\cdots (\mathbf{Q},\mathbf{C})\cdots))$
\end{corollary}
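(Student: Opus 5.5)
This is Taylor's theorem with integral remainder, applied to the operator-valued function $\mathbf{F}(t) := e^{-\mathbf{C}^\dagger t}\mathbf{Q} e^{-\mathbf{C}t}$. We argue by induction on $K$, using \Cref{lem:diffLem} as the differentiation rule at each step.

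First we record a generalization of \Cref{lem:diffLem}. The lemma holds for an arbitrary operator $\mathbf{Q}$, so it may be applied with $\mathbf{Q}$ replaced by $(\mathbf{Q},-\mathbf{C})^{(p)}$. Since the bracket is linear in its first argument, this gives
\begin{equation}
\frac{\mathrm{d}}{\mathrm{d}\tau}\Bigl(e^{-\mathbf{C}^\dagger \tau}(\mathbf{Q},-\mathbf{C})^{(p)}e^{-\mathbf{C}\tau}\Bigr) = e^{-\mathbf{C}^\dagger \tau}(\mathbf{Q},-\mathbf{C})^{(p+1)}e^{-\mathbf{C}\tau}.
\end{equation}
Applying the fundamental theorem of calculus to the left-hand side, and using that the exponentials equal the identity at $\tau=0$, yields
\begin{equation}
e^{-\mathbf{C}^\dagger \tau}(\mathbf{Q},-\mathbf{C})^{(p)}e^{-\mathbf{C}\tau} = (\mathbf{Q},-\mathbf{C})^{(p)} + \int_0^{\tau} e^{-\mathbf{C}^\dagger s}(\mathbf{Q},-\mathbf{C})^{(p+1)}e^{-\mathbf{C}s}\,\mathrm{d}s .
\end{equation}

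For the base case $K=1$, take $p=0$ and $\tau=t$ in this identity. The result is exactly the claim: the sum consists of the single term $\mathbf{Q}$, and the remainder is a single integral. For the inductive step, assume the claim holds for $K$. Apply the identity with $p=K$ to the integrand of the $K$-fold remainder, with $\tau=\tau_K$. This splits the remainder into two pieces. The first is $(\mathbf{Q},-\mathbf{C})^{(K)}$ integrated over the simplex $0\le\tau_K\le\cdots\le\tau_1\le t$. The second is a $(K+1)$-fold nested integral with integrand $e^{-\mathbf{C}^\dagger\tau_{K+1}}(\mathbf{Q},-\mathbf{C})^{(K+1)}e^{-\mathbf{C}\tau_{K+1}}$, which is the new remainder. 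The simplex has volume $t^K/K!$, so the first piece is precisely the missing $p=K$ term of the sum. This completes the induction.

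No step is a genuine obstacle. The main points needing care are bookkeeping. The bracket is not antisymmetric, but only its linearity in the first slot is used, so this causes no difficulty. The limits of the nested integrals must be kept consistent, in particular that the innermost variable $\tau_K$ is the one appearing in the exponentials. As a cross-check, setting $\mathbf{C}=-\mathbf{C}^\dagger$ gives $(\mathbf{X},-\mathbf{C})=[\mathbf{C},\mathbf{X}]$, which recovers the commutator series. The expression for $(\mathbf{Q},-\mathbf{C})^{(p)}$ written in the statement should be read as the recursive definition given before \Cref{lem:diffLem}, namely $(\cdots((\mathbf{Q},-\mathbf{C}),-\mathbf{C})\cdots)$.
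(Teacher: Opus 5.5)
Your proof is correct and follows the paper's argument essentially step for step: induction on $K$, a base case from the fundamental theorem of calculus together with \Cref{lem:diffLem}, and an inductive step that applies the same identity to the integrand of the $K$-fold remainder, so that the constant piece integrates over the simplex to give the $t^K/K!$ term. Your reading of $(\mathbf{Q},-\mathbf{C})^{(p)}$ as the recursive bracket $((\mathbf{Q},-\mathbf{C})^{(p-1)},-\mathbf{C})$ is the one the paper's proof actually uses, since the nested expression written in the statement is a typo. You also do not need linearity of the bracket, because \Cref{lem:diffLem} holds with any operator in place of $\mathbf{Q}$.
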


\begin{proof}
    By applying the fundamental theorem of calculus, we have
    \begin{equation}
        e^{-\mathbf{C}^\dagger t} \mathbf{Q} e^{-\mathbf{C} t} = \mathbf{Q} + \int_0^t \frac{\mathrm{d}}{\mathrm{d}\tau} \left(e^{-\mathbf{C}^\dagger \tau} \mathbf{Q} e^{-\mathbf{C} \tau}\right) \mathrm{d}\tau = \mathbf{Q} + \int_0^t e^{-\mathbf{C}^\dagger \tau}(\mathbf{Q}, -\mathbf{C}) e^{-\mathbf{C}\tau} \mathrm{d} \tau.
    \end{equation}
    This demonstrates the claim for $K=1$.  
    Now assume that the expression holds for some $K\ge 1$.  
    By expanding the integrand of the remainder term using the fundamental theorem of calculus and Lemma~\ref{lem:diffLem}, we have
    \begin{align} \label{eq:importantLemma}
        e^{-\mathbf{C}^\dagger t} \mathbf{Q} e^{-\mathbf{C} t} =& \sum_{p=0}^{K-1} (\mathbf{Q},-\mathbf{C})^{(p)}\frac{t^{p}}{{p!}} \nonumber\\
        &+\int_0^t\int_0^{\tau_1}\cdots \int_0^{\tau_{K-1}} { e^{-\mathbf{C}^\dagger \tau_K} (\mathbf{Q},-\mathbf{C})^{(K)} e^{-\mathbf{C} \tau_K}} \mathrm{d}^K \tau\nonumber\\
        =& \sum_{p=0}^{K-1} (\mathbf{Q},-\mathbf{C})^{(p)} \frac{t^{p}}{{p!}} \nonumber\\
        &+\int_0^t\int_0^{\tau_1}\cdots \int_0^{\tau_{K-1}} (\mathbf{Q},-\mathbf{C})^{(K)} \mathrm{d}^K \tau \nonumber\\
        &+\int_0^t\int_0^{\tau_1}\cdots \int_0^{\tau_{K}}  e^{-\mathbf{C}^\dagger \tau_{K+1}} (\mathbf{Q},-\mathbf{C})^{(K+1)} e^{-\mathbf{C} \tau_{K+1}} \mathrm{d}^{K+1} \tau\nonumber\\
        =& \sum_{p=0}^{K} (\mathbf{Q},-\mathbf{C})^{(p)} \frac{t^{p}}{{p!}} \nonumber\\
        &+\int_0^t\int_0^{\tau_1}\cdots \int_0^{\tau_{K}} { e^{-\mathbf{C}^\dagger \tau_{K+1}} (\mathbf{Q},-\mathbf{C})^{(K+1)} e^{-\mathbf{C} \tau_{K+1}}} \mathrm{d}^{K+1} \tau. 
    \end{align}
    This demonstrates the induction step and the proof then follows. 
\end{proof}

This gives us an expression that can readily be used to evaluate the case of homogeneous differential equations.
The inhomogeneous case is a little more involved because of the presence of derivatives of the state.

\begin{corollary}[Inhomogeneous Cross-Term Expansion] \label{cor:inhomo}
    For any positive integer $K$, bounded operator $\mathbf{Q} \in \mathbb{C}^{N \times N}$, and $\boldsymbol{\psi}(0) \in \mathbb{C}^N$, let $\boldsymbol{\psi}(t)$ be the solution to the differential equation $\partial_t \boldsymbol{\psi}(t) = -\mathbf{C}\boldsymbol{\psi}(t) +\boldsymbol{\chi}(t)$. 
    Defining $\boldsymbol{\Upsilon}(t) := \int_0^t e^{\mathbf{C}\tau}\boldsymbol{\chi}(\tau)\mathrm{d}\tau$, we have
    \begin{align}
         \boldsymbol{\psi}(0)^\dagger e^{-\mathbf{C}^\dagger t} \mathbf{Q} e^{-\mathbf{C}t}\boldsymbol{\Upsilon}(t)=& \sum_{p=1}^{K}\int_0^t \int_0^{\tau_1}\cdots \int_0^{\tau_{p-1}} \boldsymbol{\psi}(0)^\dagger e^{-\mathbf{C}^\dagger \tau_{p}} (\mathbf{Q},-\mathbf{C})^{(p-1)} \boldsymbol{\chi}(\tau_p)\mathrm{d}^p\tau\nonumber\\
        &+\int_0^t\int_0^{\tau_1}\cdots \int_0^{\tau_{K-1}} \boldsymbol{\psi}(0)^\dagger e^{-\mathbf{C}^\dagger \tau_K} (\mathbf{Q},-\mathbf{C})^{(K)} e^{-\mathbf{C}\tau_K}\boldsymbol{\Upsilon}(\tau_K)\mathrm{d}^K\tau,\\
        \boldsymbol{\Upsilon}(t)^\dagger e^{-\mathbf{C}^\dagger t} \mathbf{Q} e^{-\mathbf{C} t}\boldsymbol{\psi}(0)=& \sum_{p=1}^{K}\int_0^t \int_0^{\tau_1}\cdots \int_0^{\tau_{p-1}} \boldsymbol{\chi}(\tau_{p})^\dagger (\mathbf{Q},-\mathbf{C})^{(p-1)} e^{-\mathbf{C}\tau_{p}}\boldsymbol{\psi}(0)\mathrm{d}^p\tau\nonumber\\
        &+\int_0^t\int_0^{\tau_1}\cdots \int_0^{\tau_{K-1}} \boldsymbol{\Upsilon}(\tau_{K})^\dagger e^{-\mathbf{C}^\dagger \tau_K} (\mathbf{Q},-\mathbf{C})^{(K)} e^{-\mathbf{C}\tau_K}\boldsymbol{\psi}(0)\mathrm{d}^K\tau.
    \end{align}
\end{corollary}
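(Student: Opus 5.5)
We prove the first identity by induction on $K$, in the same way as Corollary~\ref{cor:homo}, but now the time-dependence of $\boldsymbol{\Upsilon}$ also has to be differentiated. Define the scalar function
\[
f_0(t) := \boldsymbol{\psi}(0)^\dagger e^{-\mathbf{C}^\dagger t} \mathbf{Q} e^{-\mathbf{C}t}\boldsymbol{\Upsilon}(t).
\]
Since $\boldsymbol{\Upsilon}(0)=0$, we have $f_0(0)=0$, so $f_0(t)=\int_0^t \partial_{\tau_1} f_0(\tau_1)\,\mathrm{d}\tau_1$.

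The product rule gives two contributions. The first comes from differentiating the conjugated observable, which by Lemma~\ref{lem:diffLem} yields $e^{-\mathbf{C}^\dagger \tau}(\mathbf{Q},-\mathbf{C})e^{-\mathbf{C}\tau}$. The second comes from $\partial_\tau \boldsymbol{\Upsilon}(\tau) = e^{\mathbf{C}\tau}\boldsymbol{\chi}(\tau)$; here the factor $e^{-\mathbf{C}\tau}$ cancels $e^{\mathbf{C}\tau}$, leaving $\boldsymbol{\psi}(0)^\dagger e^{-\mathbf{C}^\dagger \tau}\mathbf{Q}\boldsymbol{\chi}(\tau)$. Together these give
\[
f_0(t)=\int_0^t \boldsymbol{\psi}(0)^\dagger e^{-\mathbf{C}^\dagger \tau_1}\mathbf{Q}\,\boldsymbol{\chi}(\tau_1)\,\mathrm{d}\tau_1 + \int_0^t f_1(\tau_1)\,\mathrm{d}\tau_1,
\]
where $f_1$ has the same form as $f_0$ with $\mathbf{Q}$ replaced by $(\mathbf{Q},-\mathbf{C})$. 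This is exactly the $K=1$ statement.

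For the induction step, assume the identity holds for $K$ and look at the remainder integrand $f_K(\tau_K) = \boldsymbol{\psi}(0)^\dagger e^{-\mathbf{C}^\dagger \tau_K}(\mathbf{Q},-\mathbf{C})^{(K)}e^{-\mathbf{C}\tau_K}\boldsymbol{\Upsilon}(\tau_K)$. It again vanishes at $\tau_K=0$ because $\boldsymbol{\Upsilon}(0)=0$. Apply the same step with $\mathbf{Q}$ replaced by $(\mathbf{Q},-\mathbf{C})^{(K)}$, introducing a new variable $\tau_{K+1}\in[0,\tau_K]$. This produces the source term with $p=K+1$, whose nested bracket $(\mathbf{Q},-\mathbf{C})^{(K)}$ sits next to $\boldsymbol{\chi}(\tau_{K+1})$, plus a new remainder of order $K+1$. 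Because the outer $K$ integrations are simply carried along, the claimed $K+1$ formula follows.

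The second identity needs no separate induction: it is the complex conjugate of the first with $\mathbf{Q}$ replaced by $\mathbf{Q}^\dagger$. To see this, check that $((\mathbf{X},\mathbf{Y}))^\dagger = \mathbf{Y}^\dagger\mathbf{X}^\dagger + \mathbf{X}^\dagger \mathbf{Y} = (\mathbf{X}^\dagger,\mathbf{Y})$, so that by induction $\big((\mathbf{Q}^\dagger,-\mathbf{C})^{(p)}\big)^\dagger = (\mathbf{Q},-\mathbf{C})^{(p)}$. Alternatively, one can simply repeat the product-rule argument with the roles of the two vectors swapped.

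The only delicate step is keeping the bookkeeping consistent. The $\boldsymbol{\chi}$-terms must appear evaluated at the innermost time $\tau_p$, with the exponential only on the $\boldsymbol{\psi}(0)$ side. Also, at each level the boundary term must vanish, and this relies on $\boldsymbol{\Upsilon}(0)=0$. That is why, unlike in Corollary~\ref{cor:homo}, there is no polynomial $t^p/p!$ term.
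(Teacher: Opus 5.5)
Your proposal is correct and follows the paper's proof essentially step for step. Both arguments use the fundamental theorem of calculus with $\boldsymbol{\Upsilon}(0)=0$, the product rule via Lemma~\ref{lem:diffLem} together with the cancellation $e^{-\mathbf{C}\tau}e^{\mathbf{C}\tau}=\mathbf{I}$ for the $K=1$ case, and induction on the remainder integrand; the second identity is then obtained from the first (applied with $\mathbf{Q}^\dagger$) via the adjoint rule $(\mathbf{X},-\mathbf{C})^\dagger=(\mathbf{X}^\dagger,-\mathbf{C})$ applied recursively through the nested brackets.
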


\begin{proof}
    Following the argument given previously, we can use the fundamental theorem of calculus and the product rule. Using the fact that $\boldsymbol{\Upsilon}(0)=0$, we write:
    \begin{align}
        \boldsymbol{\psi}(0)^\dagger e^{-\mathbf{C}^\dagger t} \mathbf{Q} e^{-\mathbf{C} t}\boldsymbol{\Upsilon}(t) &= \int_0^t \frac{\mathrm{d}}{\mathrm{d}\tau} \left( \boldsymbol{\psi}(0)^\dagger e^{-\mathbf{C}^\dagger \tau} \mathbf{Q} e^{-\mathbf{C} \tau}\boldsymbol{\Upsilon}(\tau) \right) \mathrm{d}\tau \nonumber \\
        &= \int_0^t \boldsymbol{\psi}(0)^\dagger e^{-\mathbf{C}^\dagger \tau} (\mathbf{Q},-\mathbf{C}) e^{-\mathbf{C} \tau}\boldsymbol{\Upsilon}(\tau)\mathrm{d}\tau \nonumber \\
        &\quad + \int_0^t \boldsymbol{\psi}(0)^\dagger e^{-\mathbf{C}^\dagger \tau} \mathbf{Q} e^{-\mathbf{C} \tau} \frac{\mathrm{d}\boldsymbol{\Upsilon}(\tau)}{\mathrm{d}\tau}\mathrm{d}\tau.
    \end{align}
    From the definition of $\boldsymbol{\Upsilon}(\tau)$, we have $\frac{\mathrm{d}}{\mathrm{d}\tau}\boldsymbol{\Upsilon}(\tau) = e^{\mathbf{C}\tau}\boldsymbol{\chi}(\tau)$. Substituting this into the second term, the forward and backward propagators acting on the source state cancel ($e^{-\mathbf{C}\tau} e^{\mathbf{C}\tau} = \mathbf{I}$). This allows the operator $\mathbf{Q}$ to act directly on the bare source $\boldsymbol{\chi}(\tau)$:
    \begin{equation}
        \boldsymbol{\psi}(0)^\dagger e^{-\mathbf{C}^\dagger t} \mathbf{Q} e^{-\mathbf{C} t}\boldsymbol{\Upsilon}(t)= \int_0^t \boldsymbol{\psi}(0)^\dagger e^{-\mathbf{C}^\dagger \tau} \mathbf{Q} \boldsymbol{\chi}(\tau)\mathrm{d}\tau+\int_0^t \boldsymbol{\psi}(0)^\dagger e^{-\mathbf{C}^\dagger \tau} (\mathbf{Q},-\mathbf{C}) e^{-\mathbf{C} \tau}\boldsymbol{\Upsilon}(\tau)\mathrm{d}\tau.
    \end{equation}
    This demonstrates the base case ($K=1$) for our argument. Next, let us assume that the result holds for some $K\ge 1$:
    \begin{align}
        \boldsymbol{\psi}(0)^\dagger e^{-\mathbf{C}^\dagger t} \mathbf{Q} e^{-\mathbf{C} t}\boldsymbol{\Upsilon}(t)=& \sum_{p=1}^{K}\int_0^t \int_0^{\tau_1}\cdots \int_0^{\tau_{p-1}} \boldsymbol{\psi}(0)^\dagger e^{-\mathbf{C}^\dagger \tau_{p}} (\mathbf{Q},-\mathbf{C})^{(p-1)} \boldsymbol{\chi}(\tau_p)\mathrm{d}^p\tau\nonumber\\
        &+\int_0^t\int_0^{\tau_1}\cdots \int_0^{\tau_{K-1}} \boldsymbol{\psi}(0)^\dagger e^{-\mathbf{C}^\dagger \tau_K} (\mathbf{Q},-\mathbf{C})^{(K)} e^{-\mathbf{C}\tau_K}\boldsymbol{\Upsilon}(\tau_K)\mathrm{d}^K\tau.
    \end{align}
    By applying Lemma~\ref{lem:diffLem} and the fundamental theorem of calculus, the exact same cancellation of propagators occurs on the derivative of $\boldsymbol{\Upsilon}(\tau_K)$ in the $K$-th order remainder term, yielding:
    \begin{align}
        &\int_0^t\cdots \int_0^{\tau_{K-1}} \boldsymbol{\psi}(0)^\dagger e^{-\mathbf{C}^\dagger \tau_K} (\mathbf{Q},-\mathbf{C})^{(K)} e^{-\mathbf{C}\tau_K}\boldsymbol{\Upsilon}(\tau_K)\mathrm{d}^K\tau \nonumber\\
        =& \int_0^t\cdots \int_0^{\tau_{K}} \boldsymbol{\psi}(0)^\dagger e^{-\mathbf{C}^\dagger \tau_{K+1}} (\mathbf{Q},-\mathbf{C})^{(K)} \boldsymbol{\chi}(\tau_{K+1})\mathrm{d}^{K+1}\tau \nonumber\\
        &+ \int_0^t\cdots \int_0^{\tau_{K}} \boldsymbol{\psi}(0)^\dagger e^{-\mathbf{C}^\dagger \tau_{K+1}} (\mathbf{Q},-\mathbf{C})^{(K+1)} e^{-\mathbf{C}\tau_{K+1}}\boldsymbol{\Upsilon}(\tau_{K+1})\mathrm{d}^{K+1}\tau,
    \end{align}
    which demonstrates the inductive step.

    Next, note that for any operator $\mathbf{X}$, $(\mathbf{X},-\mathbf{C})^\dagger=-\mathbf{C}^\dagger \mathbf{X}^\dagger - \mathbf{X}^\dagger \mathbf{C} = (\mathbf{X}^\dagger,-\mathbf{C})$.  
    Further, we have that 
    \begin{equation}
        \boldsymbol{\Upsilon}(t)^\dagger e^{-\mathbf{C}^\dagger t} \mathbf{Q} e^{-\mathbf{C} t}\boldsymbol{\psi}(0) = \left(\boldsymbol{\psi}(0)^\dagger e^{-\mathbf{C}^\dagger t} \mathbf{Q}^\dagger e^{-\mathbf{C} t}\boldsymbol{\Upsilon}(t) \right)^\dagger.
    \end{equation}
    Thus, by recursively applying the conjugation property of adjoints through the nested brackets, 
     \begin{align}
        \boldsymbol{\Upsilon}(t)^\dagger e^{-\mathbf{C}^\dagger t} \mathbf{Q} e^{-\mathbf{C} t}\boldsymbol{\psi}(0)=& \sum_{p=1}^K\int_0^t \int_0^{\tau_1}\cdots \int_0^{\tau_{p-1}} \boldsymbol{\chi}(\tau_{p})^\dagger (\mathbf{Q},-\mathbf{C})^{(p-1)} e^{-\mathbf{C} \tau_p}\boldsymbol{\psi}(0)\mathrm{d}^p\tau\nonumber\\
        &+\int_0^t\int_0^{\tau_1}\cdots \int_0^{\tau_{K-1}} \boldsymbol{\Upsilon}(\tau_{K})^\dagger e^{-\mathbf{C}^\dagger \tau_K} (\mathbf{Q},-\mathbf{C})^{(K)} e^{-\mathbf{C}\tau_K}\boldsymbol{\psi}(0)\mathrm{d}^K\tau.
    \end{align}
\end{proof}

In order to understand how information propagates through some differential equation system we will need to make some assumptions about the generator $\mathbf{C}$.  
The key assumptions qualitatively are that the dynamics generated by $\mathbf{C}$ is dissipative and that the operator is local with respect to a lattice.  
We summarize the key assumptions that we need to make below.

\begin{definition}[$\Delta$-locality]
    Let $-\mathbf{C} \in \mathbb{C}^{N \times N}$ be a matrix representing an operator acting on a $D$-dimensional lattice, such that the total system dimension is $N = L^D$.  
    We say that $\mathbf{C}$ is physically $\Delta$-local with respect to an $L_p$-norm distance metric $\|x-y\|_p$ on this lattice if $\bra{x} \mathbf{C}\ket{y}=0$ for all computational basis states $\ket{x},\ket{y}$ such that $\|x-y\|_p > \Delta$ with $p \geq 1$, where the state labels $x,y$ are interpreted as coordinate vectors in $\mathbb{Z}^{D}$.
\end{definition}

We can then define the homogeneous information locality bound:
\begin{lemma}[Homogeneous Information Locality Bound] \label{lem:xterms}
    Under the assumptions of \Cref{lem:diffLem}, further assume that $-\mathbf{C}$ is physically $\Delta$-local with respect to a $D$-dimensional lattice and assume that the logarithmic norm of $-\mathbf{C}$ is at most $-\lambda_{\min}\le 0$.
    Further, let $\mathcal{W},\mathcal{V}^\perp \subset (\mathbb{C}^{L})^{\otimes D}$ be disjoint subspaces such that for all basis states $\ket{x}\in \mathcal{W}, \ket{y}\in \mathcal{V}^\perp$ we have $\|x-y\|_p \ge d$, and let the initial state $\boldsymbol{\psi}(0)\in \mathcal{V}^\perp$ and $\mathbf{Q}$ be an operator such that both $\mathbf{Q} \ket{v} =0$ and $\mathbf{Q}^\dagger \ket{v} =0$ for all basis states $\ket{v} \not\in \mathcal{W}$.  
    We then have for time $t\ge 0$ and $v_I t := 2e\Delta \|\mathbf{C}\| t < d -\Delta$ that
$$
|\boldsymbol{\psi}(0)^\dagger e^{-\mathbf{C}^\dagger t} \mathbf{Q} e^{-\mathbf{C} t} \boldsymbol{\psi}(0)|\le \|\boldsymbol{\psi}(0)\|^2 \|\mathbf{Q}\| \min\left(e^{-\frac{d-\Delta}{\Delta} \log\left(\frac{d-\Delta}{2e\Delta \|\mathbf{C}\| t} \right)}, e^{-2\lambda_{\min} t}\right).
$$  
\end{lemma}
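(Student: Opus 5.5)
We prove the two bounds inside the minimum separately and then take the smaller one.

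\emph{Dissipative bound.} Since the logarithmic norm of $-\mathbf{C}$ is at most $-\lambda_{\min}$, we have $\|e^{-\mathbf{C}t}\|\le e^{-\lambda_{\min}t}$, and the same bound holds for the adjoint. Cauchy--Schwarz then gives
\[
|\boldsymbol{\psi}(0)^\dagger e^{-\mathbf{C}^\dagger t}\mathbf{Q}e^{-\mathbf{C}t}\boldsymbol{\psi}(0)|\le \|\mathbf{Q}\|\,\|e^{-\mathbf{C}t}\boldsymbol{\psi}(0)\|^2\le \|\boldsymbol{\psi}(0)\|^2\|\mathbf{Q}\|e^{-2\lambda_{\min}t}.
\]

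\emph{Locality bound.} We use the homogeneous observable expansion of \Cref{cor:homo} with truncation order $K$, to be chosen later.

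The key combinatorial step is a support claim. Every matrix element $\bra{x}(\mathbf{Q},-\mathbf{C})^{(p)}\ket{y}$ vanishes unless both $x$ and $y$ lie within distance $p\Delta$ of $\mathcal{W}$. We argue by induction on $p$. The bracket is $(\mathbf{X},-\mathbf{C})=-\mathbf{C}^\dagger\mathbf{X}-\mathbf{X}\mathbf{C}$. Multiplying by $\mathbf{C}$ or $\mathbf{C}^\dagger$ (both $\Delta$-local) on either side enlarges the row and column support by at most $\Delta$. The base case uses the assumption that $\mathbf{Q}\ket{v}=\mathbf{Q}^\dagger\ket{v}=0$ for $v\notin\mathcal{W}$, which gives both row and column support in $\mathcal{W}$.

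Consequently, for every $p$ with $p\Delta<d$, the term $\boldsymbol{\psi}(0)^\dagger(\mathbf{Q},-\mathbf{C})^{(p)}\boldsymbol{\psi}(0)=0$, because $\boldsymbol{\psi}(0)$ is supported in $\mathcal{V}^\perp$, at distance at least $d$ from $\mathcal{W}$. We therefore take $K=\lfloor (d-\Delta)/\Delta\rfloor+1\ge (d-\Delta)/\Delta$, so all Taylor terms with $p\le K-1$ vanish and only the remainder survives.

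To bound the remainder, we again use $\|e^{-\mathbf{C}\tau}\|\le1$ from the non-positive log-norm. The bracket satisfies $\|(\mathbf{X},-\mathbf{C})\|\le 2\|\mathbf{C}\|\|\mathbf{X}\|$, so $\|(\mathbf{Q},-\mathbf{C})^{(K)}\|\le (2\|\mathbf{C}\|)^K\|\mathbf{Q}\|$. The nested time integrals contribute $t^K/K!$. Using $K!\ge (K/e)^K$, the remainder is at most
\[
\|\boldsymbol{\psi}(0)\|^2\|\mathbf{Q}\|\left(\frac{2e\|\mathbf{C}\|t}{K}\right)^K .
\]
Under the hypothesis $2e\Delta\|\mathbf{C}\|t<d-\Delta$, the base $2e\|\mathbf{C}\|t/K$ is less than $1$. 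The function $K\mapsto K\log(K/(2e\|\mathbf{C}\|t))$ is increasing once its argument exceeds $1$, so we may replace $K$ by its lower bound $(d-\Delta)/\Delta$. This yields $\exp\!\left(-\frac{d-\Delta}{\Delta}\log\frac{d-\Delta}{2e\Delta\|\mathbf{C}\|t}\right)$.

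The main obstacle is the bookkeeping in the support induction. One must get the off-by-one between $p\Delta<d$ and the choice of $K$ exactly right, and justify the monotonicity step that replaces the integer $K$ by $(d-\Delta)/\Delta$. If the monotonicity argument is delicate, we would instead choose $K$ as the smallest integer with $K\Delta\ge d-\Delta$ and check directly that $(a/K)^K\le(a\Delta/(d-\Delta))^{(d-\Delta)/\Delta}$ whenever $a\Delta/(d-\Delta)<1$, where $a=2e\|\mathbf{C}\|t$.
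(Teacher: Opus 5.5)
Your proposal is correct and takes essentially the same route as the paper: expand via \Cref{cor:homo}, use $\Delta$-locality inductively to show that the nested brackets $(\mathbf{Q},-\mathbf{C})^{(p)}$ vanish against $\boldsymbol{\psi}(0)$ whenever $p\Delta<d$, truncate at $K=\lfloor d/\Delta\rfloor$ (your $\lfloor (d-\Delta)/\Delta\rfloor+1$ is the same integer), bound the remainder by $\|\boldsymbol{\psi}(0)\|^2\|\mathbf{Q}\|(2\|\mathbf{C}\|t)^K/K!$ using $K!\ge (K/e)^K$, and take the minimum with the trivial log-norm bound $e^{-2\lambda_{\min}t}$. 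Two of your steps are more careful than the paper's write-up, which passes over both: your row/column-support induction explicitly handles both the $\mathbf{C}^\dagger\mathbf{X}$ and the $\mathbf{X}\mathbf{C}$ terms of the bracket, and your monotonicity argument justifies replacing $K$ by $(d-\Delta)/\Delta$.
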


\begin{proof}
  From~\Cref{cor:homo} we have that for any $K\ge 1$ 
\begin{align}
\boldsymbol{\psi}(0)^\dagger e^{-\mathbf{C}^\dagger t} \mathbf{Q} e^{-\mathbf{C} t} \boldsymbol{\psi}(0) =& \sum_{p=0}^{K-1} \boldsymbol{\psi}(0)^\dagger (\mathbf{Q},-\mathbf{C})^{(p)} \boldsymbol{\psi}(0) \frac{t^{p}}{{p!}} \nonumber\\
&+\int_0^t\int_0^{\tau_1}\cdots \int_0^{\tau_{K-1}}{ \boldsymbol{\psi}(0)^\dagger e^{-\mathbf{C}^\dagger \tau_K} (\mathbf{Q},-\mathbf{C})^{(K)} e^{-\mathbf{C} \tau_K}}\boldsymbol{\psi}(0) \mathrm{d}^K \tau.
\end{align}
Next consider $(\mathbf{Q},-\mathbf{C})\boldsymbol{\psi}(0)$ and assume for contradiction that the result is non-zero while $d >\Delta$.  
If this is true then either $\mathbf{C} \boldsymbol{\psi}(0)$ has non-zero support on $\mathcal{W}$ or $\mathbf{Q} \boldsymbol{\psi}(0) \ne 0$.  
We assumed that $\boldsymbol{\psi}(0) \in \mathcal{V}^\perp$ (which is disjoint from $\mathcal{W}$) and hence by assumption $\mathbf{Q} \boldsymbol{\psi}(0) =0$. 
Thus $\mathbf{C} \boldsymbol{\psi}(0)$ has non-zero support on $\mathcal{W}$ and if this is true then $\mathcal{V}^\perp$ and $\mathcal{W}$ must contain two basis states that are distance at most $\Delta$ from each other.  
This is impossible if $d > \Delta$ so by contradiction this cannot hold.  
Now let us assume that for some integer $k$, the $k$-fold nested bracket satisfies $(\mathbf{Q},-\mathbf{C})^{(k)}\boldsymbol{\phi}=0$ for any state $\boldsymbol{\phi}$ whose support is separated from $\mathcal{W}$ by a distance strictly greater than $k\Delta$. 
Because $\mathbf{C}$ is $\Delta$-local, the expanded state $\mathbf{C}\boldsymbol{\psi}(0)$ is separated from $\mathcal{W}$ by at least $d-\Delta$.
If $d > (k+1)\Delta$, then $d-\Delta > k\Delta$, meaning by the inductive hypothesis the $(k+1)$-fold nested bracket acting on $\boldsymbol{\psi}(0)$ must also evaluate to zero.
Therefore by induction if we set $K = \lfloor d/\Delta \rfloor$ then
\begin{align}
\boldsymbol{\psi}(0)^\dagger e^{-\mathbf{C}^\dagger t} \mathbf{Q} e^{-\mathbf{C} t} \boldsymbol{\psi}(0) =&\int_0^t\int_0^{\tau_1}\cdots \int_0^{\tau_{K-1}}{ \boldsymbol{\psi}(0)^\dagger e^{-\mathbf{C}^\dagger \tau_K} (\mathbf{Q},-\mathbf{C})^{(K)} e^{-\mathbf{C} \tau_K}}\boldsymbol{\psi}(0) \mathrm{d}^K \tau.
\end{align}
Then from the triangle inequality
\begin{align}
    |\boldsymbol{\psi}(0)^\dagger e^{-\mathbf{C}^\dagger t} \mathbf{Q} e^{-\mathbf{C} t} \boldsymbol{\psi}(0)| \le \|\boldsymbol{\psi}(0)\|^2 \|\mathbf{Q}\| \frac{(2\|\mathbf{C}\|t)^{K}}{K!}.
\end{align}
Using Stirling's approximation it is easy to see that $K! \ge (K/e)^K = e^{K \log(K/e)}$ and thus for $K \ge (d-\Delta)/\Delta$
\begin{align}
    |\boldsymbol{\psi}(0)^\dagger e^{-\mathbf{C}^\dagger t} \mathbf{Q} e^{-\mathbf{C} t} \boldsymbol{\psi}(0)| \le \|\boldsymbol{\psi}(0)\|^2 \|\mathbf{Q}\|  e^{ - \frac{(d-\Delta)}{\Delta} \log\left(\frac{d-\Delta}{2e\Delta \|\mathbf{C}\| t} \right)}.
\end{align}
Next we can argue that the following trivial bound holds for the expectation value due to the dissipative dynamics:
\begin{equation}
    |\boldsymbol{\psi}(0)^\dagger e^{-\mathbf{C}^\dagger t} \mathbf{Q} e^{-\mathbf{C} t} \boldsymbol{\psi}(0)| \le \|\boldsymbol{\psi}(0)\|^2 \|\mathbf{Q}\| e^{-2\lambda_{\min} t},
\end{equation}
from which the claim of the lemma follows by choosing the upper bound to be the minimum of these two bounds.
\end{proof}
This shows that for the homogeneous case, information can only travel at a finite velocity in these systems, which leads to the expectation value of the operator $\mathbf{Q}$ outside a ball of radius $d > v_I t$ shrinking exponentially with $d$.
In subsequent results, we will invert these bounds to determine the truncation distance $d$ required to guarantee a target simulation error $\epsilon$.

Next we will provide analogous bounds for the remainder of the terms that appear in expectation values for the inhomogeneous differential equation $\partial_t \boldsymbol{\psi}(t) = -\mathbf{C} \boldsymbol{\psi}(t) +\boldsymbol{\chi}(t)$ with solution $\boldsymbol{\psi}(t) = e^{-\mathbf{C} t} \boldsymbol{\psi}(0) + e^{-\mathbf{C} t} \int_0^t e^{\mathbf{C} \tau} \boldsymbol{\chi}(\tau) \mathrm{d}\tau=e^{-\mathbf{C} t} \boldsymbol{\psi}(0) + e^{-\mathbf{C} t} \boldsymbol{\Upsilon}(t)$.
For notational convenience in the following bounds, we define the time-integrated $L^1$-norm of the source vector as $\|\boldsymbol{\chi}\|_{L^1} := \int_0^t \|\boldsymbol{\chi}(\tau)\| \mathrm{d}\tau$.

\begin{lemma} [Mixed-Term Information Locality Bound] \label{lem:mixedTerms}
Under the assumptions of \Cref{lem:diffLem}, further assume that $-\mathbf{C}$ is physically $\Delta$-local with respect to a $D$-dimensional lattice and assume that the logarithmic norm of $-\mathbf{C}$ is at most $-\lambda_{\min}\le 0$.  Further, let $\mathcal{W},\mathcal{V}^\perp \subset (\mathbb{C}^{L})^{\otimes D}$ be disjoint sets of vectors such that for all $\ket{x}\in \mathcal{W}, \ket{y}\in \mathcal{V}^\perp$ we have $\|x-y\|_p \ge d$. Let the initial state $\boldsymbol{\psi}(0)\in \mathcal{V}^\perp$, assume the source vector $\boldsymbol{\chi}(t)$ has strict spatial support within $\mathcal{V}^\perp$ for all $t$, and let $\mathbf{Q}$ be an operator such that both $\mathbf{Q} \ket{v} =0$ and $\mathbf{Q}^\dagger \ket{v} =0$ for all $\ket{v} \not\in \mathcal{W}$. We then have for time $t\ge 0$ and $v_I t := 2e\Delta \|\mathbf{C}\| t < d -\Delta$ that
\begin{align*}
|\boldsymbol{\Upsilon}^\dagger(t) e^{-\mathbf{C}^\dagger t} \mathbf{Q} e^{-\mathbf{C} t} \boldsymbol{\psi}(0)|&\le \|\mathbf{Q}\|\|\boldsymbol{\psi}(0)\| \|\boldsymbol{\chi}\|_{L^1} \min\left(e^{- \frac{d-\Delta}{\Delta} \log\left( \frac{d-\Delta}{2e \Delta \|\mathbf{C}\| t}\right)},e^{-\lambda_{\min} t}\right)\\
|\boldsymbol{\psi}(0)^\dagger e^{-\mathbf{C}^\dagger t} \mathbf{Q} e^{-\mathbf{C} t} \boldsymbol{\Upsilon}(t)|&\le \|\mathbf{Q}\|\|\boldsymbol{\psi}(0)\| \|\boldsymbol{\chi}\|_{L^1} \min \left(e^{- \frac{d-\Delta}{\Delta} \log\left( \frac{d-\Delta}{2e \Delta \|\mathbf{C}\| t}\right)},e^{-\lambda_{\min} t}\right).
\end{align*}
\end{lemma}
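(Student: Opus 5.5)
The plan is to mirror the proof of \Cref{lem:xterms}, using \Cref{cor:inhomo} in place of \Cref{cor:homo}. It suffices to treat the first inequality, $\boldsymbol{\Upsilon}^\dagger(t) e^{-\mathbf{C}^\dagger t} \mathbf{Q} e^{-\mathbf{C} t}\boldsymbol{\psi}(0)$. The second is its complex conjugate with $\mathbf{Q}$ replaced by $\mathbf{Q}^\dagger$. The hypotheses are symmetric in $\mathbf{Q}$ and $\mathbf{Q}^\dagger$, and the bracket satisfies $(\mathbf{X},-\mathbf{C})^\dagger=(\mathbf{X}^\dagger,-\mathbf{C})$ (noted in the proof of \Cref{cor:inhomo}), so the same bound transfers verbatim.

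\textbf{Dissipative branch.} We have $\|e^{-\mathbf{C}\tau}\|\le e^{-\lambda_{\min}\tau}\le 1$ and
\[
\|\boldsymbol{\Upsilon}(t)\|\le\int_0^t \|e^{\mathbf{C}\tau}\boldsymbol{\chi}(\tau)\|\mathrm{d}\tau .
\]
This is awkward, because $e^{\mathbf{C}\tau}$ is not contractive. Instead I would write
\[
e^{-\mathbf{C}t}\boldsymbol{\Upsilon}(t)=\int_0^t e^{-\mathbf{C}(t-\tau)}\boldsymbol{\chi}(\tau)\mathrm{d}\tau ,
\]
whose norm is at most $\|\boldsymbol{\chi}\|_{L^1}$. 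Combined with $\|e^{-\mathbf{C}t}\boldsymbol{\psi}(0)\|\le e^{-\lambda_{\min}t}\|\boldsymbol{\psi}(0)\|$, Cauchy--Schwarz gives $\|\mathbf{Q}\|\|\boldsymbol{\psi}(0)\|\|\boldsymbol{\chi}\|_{L^1}e^{-\lambda_{\min}t}$. Only one factor of the decay survives because the source contribution need not decay.

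\textbf{Locality branch.} I would apply \Cref{cor:inhomo} with $K=\lfloor d/\Delta\rfloor$. The key combinatorial fact, proved by the same induction as in \Cref{lem:xterms}, is that $(\mathbf{Q},-\mathbf{C})^{(p)}\boldsymbol{\phi}=0$ and $\boldsymbol{\phi}^\dagger(\mathbf{Q},-\mathbf{C})^{(p)}=0$ whenever the support of $\boldsymbol{\phi}$ lies at distance greater than $p\Delta$ from $\mathcal{W}$. To see this, expand $(\mathbf{Q},-\mathbf{C})^{(p)}$ as a sum of words $(-\mathbf{C}^\dagger)^a\mathbf{Q}(-\mathbf{C})^b$ with $a+b=p$. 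Since $\mathbf{Q}$ is supported on $\mathcal{W}$ on both sides, each word annihilates vectors supported farther than $b\Delta\le p\Delta$ from $\mathcal{W}$, and likewise on the left. In the explicit terms $\boldsymbol{\chi}(\tau_p)^\dagger(\mathbf{Q},-\mathbf{C})^{(p-1)}e^{-\mathbf{C}\tau_p}\boldsymbol{\psi}(0)$ the left vector $\boldsymbol{\chi}(\tau_p)$ lies in $\mathcal{V}^\perp$, at distance $\ge d>(p-1)\Delta$ for $p\le K$. Hence every explicit term vanishes regardless of the non-local vector $e^{-\mathbf{C}\tau_p}\boldsymbol{\psi}(0)$ on the right. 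This is the crucial point: one side being far suffices.

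Only the remainder then survives,
\[
\int\!\cdots\!\int \boldsymbol{\Upsilon}(\tau_K)^\dagger e^{-\mathbf{C}^\dagger\tau_K}(\mathbf{Q},-\mathbf{C})^{(K)}e^{-\mathbf{C}\tau_K}\boldsymbol{\psi}(0)\,\mathrm{d}^K\tau .
\]
I would bound it by
\[
\|(\mathbf{Q},-\mathbf{C})^{(K)}\|\le(2\|\mathbf{C}\|)^K\|\mathbf{Q}\|,\qquad \|e^{-\mathbf{C}\tau_K}\boldsymbol{\Upsilon}(\tau_K)\|\le\|\boldsymbol{\chi}\|_{L^1},\qquad \|e^{-\mathbf{C}\tau_K}\boldsymbol{\psi}(0)\|\le\|\boldsymbol{\psi}(0)\| .
\]
The simplex volume $t^K/K!$ then gives $\|\mathbf{Q}\|\|\boldsymbol{\psi}(0)\|\|\boldsymbol{\chi}\|_{L^1}(2\|\mathbf{C}\|t)^K/K!$. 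Applying $K!\ge (K/e)^K$ with $K\ge (d-\Delta)/\Delta$ and the condition $2e\Delta\|\mathbf{C}\|t<d-\Delta$ (so that the base is below $1$ and monotonicity in $K$ holds) yields the stated exponent. Taking the minimum of the two branches finishes the argument.

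\textbf{Main obstacle.} The step I expect to be delicate is justifying the vanishing of the explicit terms despite the propagated vector $e^{-\mathbf{C}\tau_p}\boldsymbol{\psi}(0)$ having spread support. I would handle it through the two-sided support statement for the words in the bracket expansion, as above. A related subtlety is making sure the remainder is bounded using $e^{-\mathbf{C}\tau}\boldsymbol{\Upsilon}(\tau)$ as a single contractive object rather than $\|\boldsymbol{\Upsilon}\|$ alone.
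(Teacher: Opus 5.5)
Your proposal is correct and follows essentially the same route as the paper. Both expand via \Cref{cor:inhomo} with $K=\lfloor d/\Delta\rfloor$, kill every explicit term because the source vector sits in $\mathcal{V}^\perp$, bound the remainder by $\|\mathbf{Q}\|\|\boldsymbol{\psi}(0)\|\|\boldsymbol{\chi}\|_{L^1}(2\|\mathbf{C}\|t)^K/K!$ through the contractive quantity $e^{-\mathbf{C}\tau}\boldsymbol{\Upsilon}(\tau)$, apply $K!\ge (K/e)^K$, and take the minimum with the trivial dissipative bound. The differences are cosmetic: the paper treats the $\boldsymbol{\psi}(0)^\dagger\cdots\boldsymbol{\Upsilon}(t)$ term first and cites ``the exact same argument'' for the other, and your expansion of the nested bracket into words $(-\mathbf{C}^\dagger)^a\mathbf{Q}(-\mathbf{C})^b$ makes explicit the vanishing step that the paper asserts in one line.
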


\begin{proof}
From \Cref{cor:inhomo} and the triangle inequality:
    \begin{align}
        |\boldsymbol{\psi}(0)^\dagger e^{-\mathbf{C}^\dagger t} \mathbf{Q} e^{-\mathbf{C} t} \boldsymbol{\Upsilon}(t)|\le & \sum_{p=1}^{K}\int_0^t \int_0^{\tau_1}\cdots \int_0^{\tau_{p-1}} \left| \boldsymbol{\psi}(0)^\dagger e^{-\mathbf{C}^\dagger \tau_{p}} (\mathbf{Q},-\mathbf{C})^{(p-1)} \boldsymbol{\chi}(\tau_p) \right|\mathrm{d}^p\tau\nonumber\\
        &+\int_0^t\int_0^{\tau_1}\cdots \int_0^{\tau_{K-1}} \left| \boldsymbol{\psi}(0)^\dagger e^{-\mathbf{C}^\dagger \tau_K} (\mathbf{Q},-\mathbf{C})^{(K)} e^{-\mathbf{C}\tau_K} \boldsymbol{\Upsilon}(\tau_K) \right|\mathrm{d}^K\tau.
    \end{align}
    Because $\boldsymbol{\chi}(\tau_p)$ has strict support in $\mathcal{V}^\perp$, the nested bracket acting on it evaluates exactly to the zero vector for all terms in the sum, provided we choose $K = \lfloor d/\Delta \rfloor$. Thus, assuming that we make this choice and that $d -\Delta > 2e\Delta \|\mathbf{C}\| t$:
\begin{align}
        |\boldsymbol{\psi}(0)^\dagger e^{-\mathbf{C}^\dagger t} \mathbf{Q} e^{-\mathbf{C} t} \boldsymbol{\Upsilon}(t)|\le 
        &\int_0^t\int_0^{\tau_1}\cdots \int_0^{\tau_{K-1}} \left| \boldsymbol{\psi}(0)^\dagger e^{-\mathbf{C}^\dagger \tau_K} (\mathbf{Q},-\mathbf{C})^{(K)} e^{-\mathbf{C}\tau_K} \boldsymbol{\Upsilon}(\tau_K) \right|\mathrm{d}^K\tau \nonumber\\
        &\le \|\mathbf{Q}\|\|\boldsymbol{\psi}(0)\| \max_{\tau\in [0,t]} \|e^{-\mathbf{C} \tau} \boldsymbol{\Upsilon}(\tau)\|  \frac{(2\|\mathbf{C}\| t)^K}{K!}\nonumber\\
        &\le \|\mathbf{Q}\|\|\boldsymbol{\psi}(0)\| \|\boldsymbol{\chi}\|_{L^1} \left(\frac{(2e\|\mathbf{C}\| t)}{K}\right)^K\nonumber\\
        &= \|\mathbf{Q}\|\|\boldsymbol{\psi}(0)\| \|\boldsymbol{\chi}\|_{L^1} e^{- K \log\left( \frac{K}{2e \|\mathbf{C}\| t}\right)}\nonumber\\
        &\le \|\mathbf{Q}\|\|\boldsymbol{\psi}(0)\| \|\boldsymbol{\chi}\|_{L^1} e^{ - \frac{d-\Delta}{\Delta} \log\left( \frac{d-\Delta}{2e \Delta \|\mathbf{C}\| t}\right)}.
    \end{align}
    Then by the exact same argument it follows that:
    \begin{align}
        |\boldsymbol{\Upsilon}^\dagger(t) e^{-\mathbf{C}^\dagger t} \mathbf{Q} e^{-\mathbf{C} t} \boldsymbol{\psi}(0)|\le \|\mathbf{Q}\| \|\boldsymbol{\psi}(0)\| \|\boldsymbol{\chi}\|_{L^1} e^{ - \frac{d-\Delta}{\Delta} \log\left( \frac{d-\Delta}{2e \Delta \|\mathbf{C}\| t}\right)}.
    \end{align}
    Then in both cases we can also see that the alternative trivial bound holds. For example:
    \begin{equation}
        |\boldsymbol{\Upsilon}^\dagger(t) e^{-\mathbf{C}^\dagger t} \mathbf{Q} e^{-\mathbf{C} t} \boldsymbol{\psi}(0)|\le \|\mathbf{Q}\|\|\boldsymbol{\psi}(0)\| \|\boldsymbol{\chi}\|_{L^1} e^{-\lambda_{\min} t}.
    \end{equation}
    The conjugate case holds using the same reasoning, and just as in the prior lemma, we can find the smallest upper bound by choosing the minimum of the trivial bound and the information locality bound.
\end{proof}
While \Cref{lem:mixedTerms} assumes the source $\boldsymbol{\chi}(t)$ is strictly localized to $\mathcal{V}^\perp$, this naturally extends to generic global sources. 
Because the differential equation is linear, any global source can be decomposed into a sum of local sources. 
The components of the source within the effective light-cone of $\mathcal{W}$ will influence the observable dynamically, while the components outside the light-cone are exponentially suppressed by the bound above.

Finally, we will show the analogous result for expectation values involving two copies of $\boldsymbol{\Upsilon}$.  
In order to apply this to local simulation, we need to ultimately argue that we can safely truncate the inhomogeneity to a finite region around the observable $\mathbf{Q}$, because the contribution from the source in any distant region is exponentially suppressed.  
Specifically, let us partition the total source into $\boldsymbol{\chi}(t)= \boldsymbol{\chi}_{\mathcal V}(t) + \boldsymbol{\chi}_{\mathcal{V}^\perp}(t)$, where $\boldsymbol{\chi}_{\mathcal{V}^\perp}(t)$ is the component strictly supported on a distant subspace $\mathcal{V}^\perp$ for all $t$ and $\boldsymbol{\chi}_{\mathcal{V}}(t)$ is supported on its complement.  
We then provide the following result which shows that the cross-term expectation value shrinks exponentially with distance for any term that involves at least one copy of the distant source $\boldsymbol{\chi}_{\mathcal{V}^\perp}(t)$.

\begin{lemma}[Inhomogeneous Source Locality Bound] \label{lem:doubleTerm}
Under the assumptions of \Cref{lem:diffLem}, further assume that $-\mathbf{C}$ is physically $\Delta$-local with respect to a $D$-dimensional lattice and assume that the logarithmic norm of $-\mathbf{C}$ is at most $-\lambda_{\min}\le 0$. Further, let $\mathcal{W},\mathcal{V}^\perp \subset (\mathbb{C}^{L})^{\otimes D}$ be disjoint subspaces such that for all basis states $\ket{x}\in \mathcal{W}, \ket{y}\in \mathcal{V}^\perp$ we have $\|x-y\|_p \ge d$, and let $\mathbf{Q}$ be an operator such that both $\mathbf{Q} \ket{v} =0$ and $\mathbf{Q}^\dagger \ket{v} =0$ for all $\ket{v} \not\in \mathcal{W}$. We then have for time $t\ge 0$ and $v_I t := 2e\Delta \|\mathbf{C}\| t < d -\Delta$ that
$$
|\boldsymbol{\Upsilon}^\dagger(t) e^{-\mathbf{C}^\dagger t} \mathbf{Q} e^{-\mathbf{C} t} \boldsymbol{\Upsilon}_{\mathcal{V}^\perp}(t)|\le \|\mathbf{Q}\|\min  \left(\|\boldsymbol{\chi}\|_{L^1} \|\boldsymbol{\chi}_{\mathcal{V}^\perp}\|_{L^1} e^{-\frac{d -\Delta}{\Delta}\log\left(\frac{d-\Delta}{2e\Delta \|\mathbf{C}\| t} \right)}, A_{\boldsymbol{\chi}} A_{\boldsymbol{\chi}_{\mathcal{V}^\perp}} \left( \frac{1 - e^{-\lambda_{\min} t}}{\lambda_{\min}} \right)^2 \right)
$$
where $\boldsymbol{\Upsilon}_{\mathcal{V}^\perp}(t) = \int_0^t e^{\mathbf{C} \tau} \boldsymbol{\chi}_{\mathcal{V}^\perp}(\tau) \mathrm{d}\tau$, $\boldsymbol{\chi}_{\mathcal{V}^\perp}$ is the projection of the inhomogeneity $\boldsymbol{\chi}(t)$ onto the subspace $\mathcal{V}^\perp$, and $A_{\boldsymbol{\chi}}, A_{\boldsymbol{\chi}_{\mathcal{V}^\perp}}$ are the respective amplitude bounds defined as in \Cref{lem:classicalSim}.
\end{lemma}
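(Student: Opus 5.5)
We follow the template of \Cref{lem:mixedTerms}, now with both vectors built from sources. The quantity is a bilinear form whose right argument carries only the distant source $\boldsymbol{\chi}_{\mathcal{V}^\perp}$. We prove the two entries of the minimum separately and then take the smaller one.

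\textbf{Locality bound.} Write $\boldsymbol{\Upsilon}(t)=\int_0^t e^{\mathbf{C}s}\boldsymbol{\chi}(s)\,\mathrm{d}s$. Then
\begin{equation*}
e^{-\mathbf{C}t}\boldsymbol{\Upsilon}(t)=\int_0^t e^{-\mathbf{C}(t-s)}\boldsymbol{\chi}(s)\,\mathrm{d}s .
\end{equation*}
Because the log-norm is non-positive, this gives $\max_{\tau\le t}\|e^{-\mathbf{C}\tau}\boldsymbol{\Upsilon}(\tau)\|\le\|\boldsymbol{\chi}\|_{L^1}$. We then apply the first identity of \Cref{cor:inhomo} with $\boldsymbol{\psi}(0)$ replaced by a left vector. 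That identity was derived only for a constant left vector, so we first redo the fundamental-theorem-of-calculus expansion with both arguments time dependent. Differentiating
\begin{equation*}
\boldsymbol{\Upsilon}(\tau)^\dagger e^{-\mathbf{C}^\dagger\tau}\,\mathbf{X}\,e^{-\mathbf{C}\tau}\boldsymbol{\Upsilon}_{\mathcal{V}^\perp}(\tau)
\end{equation*}
produces three kinds of terms. The first is the bracket term $(\mathbf{X},-\mathbf{C})$. The second is $\boldsymbol{\chi}(\tau)^\dagger\mathbf{X}\,e^{-\mathbf{C}\tau}\boldsymbol{\Upsilon}_{\mathcal{V}^\perp}(\tau)$. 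The third is $\boldsymbol{\Upsilon}(\tau)^\dagger e^{-\mathbf{C}^\dagger\tau}\mathbf{X}\,\boldsymbol{\chi}_{\mathcal{V}^\perp}(\tau)$.

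We iterate this $K=\lfloor d/\Delta\rfloor$ times, starting from $\mathbf{X}=\mathbf{Q}$. The third kind of term vanishes at every order $p<K$, since the support argument in \Cref{lem:xterms} shows $(\mathbf{Q},-\mathbf{C})^{(p)}\boldsymbol{\chi}_{\mathcal{V}^\perp}=0$. The second kind is the genuinely new obstacle, because the left source $\boldsymbol{\chi}$ is not assumed to be distant. We handle it by observing that this term is a mixed term of the type already treated: a fixed vector $(\mathbf{Q},-\mathbf{C})^{(p)\dagger}\boldsymbol{\chi}(\tau)$ pairs with $e^{-\mathbf{C}\tau}\boldsymbol{\Upsilon}_{\mathcal{V}^\perp}(\tau)$. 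Expanding $\boldsymbol{\Upsilon}_{\mathcal{V}^\perp}$ once more via \Cref{cor:inhomo}, with the remaining order $K-p$, shows it contributes only at total bracket order $\ge K$.

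The cleaner alternative, which we would try first, is to bypass the nested expansion. We write the quantity as a double time integral,
\begin{equation*}
\int_0^t\!\!\int_0^t \boldsymbol{\chi}(s)^\dagger\, e^{-\mathbf{C}^\dagger(t-s)}\,\mathbf{Q}\,e^{-\mathbf{C}(t-r)}\,\boldsymbol{\chi}_{\mathcal{V}^\perp}(r)\,\mathrm{d}s\,\mathrm{d}r .
\end{equation*}
For fixed $s$ and $r$ we use the operator-level locality estimate
\begin{equation*}
\|\mathbf{Q}\,e^{-\mathbf{C}\tau}\mathbf{P}_{\mathcal{V}^\perp}\|\le\|\mathbf{Q}\|\frac{(\|\mathbf{C}\|\tau)^K}{K!}\,e^{\|\mathbf{C}\|\tau}.
\end{equation*}
This follows by Taylor-expanding $e^{-\mathbf{C}\tau}$ and noting that $\mathbf{Q}\mathbf{C}^k\mathbf{P}_{\mathcal{V}^\perp}=0$ for $k<K$. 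We bound the left propagator by $1$. Integrating over $s$ and $r$ then gives $\|\mathbf{Q}\|\|\boldsymbol{\chi}\|_{L^1}\|\boldsymbol{\chi}_{\mathcal{V}^\perp}\|_{L^1}$ times a factor of the form $(c\|\mathbf{C}\|t)^K/K!$. Stirling's bound $K!\ge(K/e)^K$ together with $K\ge(d-\Delta)/\Delta$ converts this into the stated exponential, exactly as in \Cref{lem:xterms}. Reconciling the constant, which must be $2e$ rather than something involving $e^{\|\mathbf{C}\|t}$, is the delicate point. For this we would instead use the remainder form of \Cref{cor:homo}, which keeps the propagators inside the integral, bounds them by $1$ via the log-norm, and bounds the $K$-fold bracket norm by $(2\|\mathbf{C}\|)^K$.

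\textbf{Dissipative bound.} Set $\mathbf{a}(t)=e^{-\mathbf{C}t}\boldsymbol{\Upsilon}(t)$ and $\mathbf{b}(t)=e^{-\mathbf{C}t}\boldsymbol{\Upsilon}_{\mathcal{V}^\perp}(t)$. Cauchy--Schwarz gives $|\mathbf{a}^\dagger\mathbf{Q}\mathbf{b}|\le\|\mathbf{Q}\|\|\mathbf{a}\|\|\mathbf{b}\|$. We then use $\|e^{-\mathbf{C}(t-s)}\|\le e^{-\lambda_{\min}(t-s)}$ and $\|\boldsymbol{\chi}(s)\|\le A_{\boldsymbol{\chi}}$ (the $q=0$ case of the amplitude assumption) to get
\begin{equation*}
\|\mathbf{a}(t)\|\le A_{\boldsymbol{\chi}}\int_0^t e^{-\lambda_{\min}(t-s)}\,\mathrm{d}s=A_{\boldsymbol{\chi}}\,\frac{1-e^{-\lambda_{\min}t}}{\lambda_{\min}} .
\end{equation*}
The same bound holds for $\|\mathbf{b}(t)\|$ with $A_{\boldsymbol{\chi}_{\mathcal{V}^\perp}}$, where the ratio is interpreted as $t$ when $\lambda_{\min}=0$. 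Multiplying the two bounds gives the second entry of the minimum, and taking the smaller of the two bounds concludes the proof.
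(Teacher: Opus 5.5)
Your dissipative bound is the paper's. For the locality bound, your primary route (the double time integral with a one-sided locality estimate) differs from the paper's and it closes. The ``delicate point'' you flag is not an obstacle, so you can drop the fallback to \Cref{cor:homo}; that fallback would not fit cleanly anyway, since your two propagators carry different times $t-s$ and $t-r$.

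There are two easy ways to finish. The first is to write the Taylor remainder of the single propagator in integral form, using that the terms of degree $<K$ vanish by locality:
\[
\mathbf{Q}e^{-\mathbf{C}\tau}\mathbf{P}_{\mathcal{V}^\perp}=\int_0^\tau \frac{(\tau-u)^{K-1}}{(K-1)!}\,\mathbf{Q}(-\mathbf{C})^K e^{-\mathbf{C}u}\,\mathbf{P}_{\mathcal{V}^\perp}\,\mathrm{d}u .
\]
Then bound $\|e^{-\mathbf{C}u}\|\le 1$ by the log-norm. This gives $\|\mathbf{Q}e^{-\mathbf{C}\tau}\mathbf{P}_{\mathcal{V}^\perp}\|\le\|\mathbf{Q}\|(\|\mathbf{C}\|\tau)^K/K!$, with no $e^{\|\mathbf{C}\|\tau}$ factor. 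The second is to keep your cruder estimate and note that the hypothesis gives $\|\mathbf{C}\|t<\frac{d-\Delta}{2e\Delta}\le\frac{K}{2e}<K\ln 2$, so $e^{\|\mathbf{C}\|t}<2^K$.

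In the first case you get $\|\mathbf{Q}\|\|\boldsymbol{\chi}\|_{L^1}\|\boldsymbol{\chi}_{\mathcal{V}^\perp}\|_{L^1}(e\|\mathbf{C}\|t/K)^K$. In the second you get the same prefactor times $(2e\|\mathbf{C}\|t/K)^K$. Since $K\ge(d-\Delta)/\Delta>2e\|\mathbf{C}\|t$, and $K\mapsto(b/K)^K$ is decreasing for $K>b/e$, either one is at most the claimed exponential.

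The paper instead differentiates the bilinear form and iterates the bracket recursion, which is essentially your first sketch. This produces $K$ extra terms $\boldsymbol{\chi}^\dagger(\tau_p)(\mathbf{Q},-\mathbf{C})^{(p-1)}e^{-\mathbf{C}\tau_p}\boldsymbol{\Upsilon}_{\mathcal{V}^\perp}(\tau_p)$. The paper controls them not through \Cref{cor:inhomo}, since there is no left propagator, but by exactly the one-sided integral-remainder estimate above. It applies that estimate to $e^{-\mathbf{C}(\tau_p-\tau)}$ at order $K-p+1$. It then uses a binomial bound on the simplex integral, a geometric sum giving $2^{K+1}-1$, and Stirling's $\sqrt{2\pi K}>2$ to absorb the leftover factor $2$.

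Your route uses the observation that only the right vector needs to be localized. The left source may sit anywhere, because the left propagator is a contraction. So one application of that estimate suffices. Your argument is shorter and gives a slightly sharper bound: $e$ in place of $2e$ inside the logarithm, with no prefactor. What the paper's route buys is mainly uniformity with the bracket machinery of \Cref{lem:xterms} and \Cref{lem:mixedTerms}.
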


\begin{proof}
To evaluate the expectation value using the generalized brackets, we analyze the inner product of the time-dependent states $e^{-\mathbf{C}t}\boldsymbol{\Upsilon}(t)$ and $e^{-\mathbf{C}t}\boldsymbol{\Upsilon}_{\mathcal{V}^\perp}(t)$. 
By the product rule and \Cref{lem:diffLem}, the time derivative of the expectation value of the $k$-fold nested bracket between these states evaluates exactly to:
\begin{align} \label{eq:state_deriv}
    \frac{\mathrm{d}}{\mathrm{d}\tau} \left[ \boldsymbol{\Upsilon}^\dagger(\tau) e^{-\mathbf{C}^\dagger \tau} (\mathbf{Q},-\mathbf{C})^{(k)} e^{-\mathbf{C}\tau} \boldsymbol{\Upsilon}_{\mathcal{V}^\perp}(\tau) \right] =& \boldsymbol{\Upsilon}^\dagger(\tau) e^{-\mathbf{C}^\dagger \tau} (\mathbf{Q},-\mathbf{C})^{(k+1)} e^{-\mathbf{C}\tau} \boldsymbol{\Upsilon}_{\mathcal{V}^\perp}(\tau) \nonumber\\
    &+ \boldsymbol{\chi}^\dagger(\tau) (\mathbf{Q},-\mathbf{C})^{(k)} e^{-\mathbf{C}\tau} \boldsymbol{\Upsilon}_{\mathcal{V}^\perp}(\tau) \nonumber\\
    &+ \boldsymbol{\Upsilon}^\dagger(\tau) e^{-\mathbf{C}^\dagger \tau} (\mathbf{Q},-\mathbf{C})^{(k)} \boldsymbol{\chi}_{\mathcal{V}^\perp}(\tau).
\end{align}
Because $\boldsymbol{\chi}_{\mathcal{V}^\perp}(\tau)$ is strictly supported on $\mathcal{V}^\perp$ and $\mathbf{C}$ is $\Delta$-local, for any $k < K = \lfloor d / \Delta \rfloor$, the expanded support of $(\mathbf{Q},-\mathbf{C})^{(k)}$ is separated from $\mathcal{V}^\perp$ by at least $d - (K-1)\Delta \ge \Delta > 0$. 
Because $\mathbf{Q}$ acts as the zero operator outside $\mathcal{W}$, $(\mathbf{Q},-\mathbf{C})^{(k)} \boldsymbol{\chi}_{\mathcal{V}^\perp}(\tau) = 0$ exactly. 

Using $\boldsymbol{\Upsilon}(0) = \boldsymbol{\Upsilon}_{\mathcal{V}^\perp}(0) = 0$, we integrate \Cref{eq:state_deriv} from $0$ to $t$ and apply it recursively $K$ times:
\begin{align}
    \boldsymbol{\Upsilon}^\dagger(t) e^{-\mathbf{C}^\dagger t} \mathbf{Q} e^{-\mathbf{C}t} \boldsymbol{\Upsilon}_{\mathcal{V}^\perp}(t) =& \int_0^t\int_0^{\tau_1}\cdots \int_0^{\tau_{K-1}} \boldsymbol{\Upsilon}^\dagger(\tau_K) e^{-\mathbf{C}^\dagger \tau_K} (\mathbf{Q},-\mathbf{C})^{(K)} e^{-\mathbf{C}\tau_K} \boldsymbol{\Upsilon}_{\mathcal{V}^\perp}(\tau_K) \mathrm{d}^K\tau \nonumber\\
    &+ \sum_{p=1}^K\int_0^t\int_0^{\tau_1}\cdots \int_0^{\tau_{p-1}} \boldsymbol{\chi}^\dagger(\tau_{p}) (\mathbf{Q},-\mathbf{C})^{(p-1)} e^{-\mathbf{C}\tau_{p}} \boldsymbol{\Upsilon}_{\mathcal{V}^\perp}(\tau_{p})\mathrm{d}^p\tau.
\end{align}
We bound the remainder term using the uniform bounds $\|e^{-\mathbf{C}\tau}\boldsymbol{\Upsilon}(\tau)\| \le \|\boldsymbol{\chi}\|_{L^1}$ and $\|e^{-\mathbf{C}\tau}\boldsymbol{\Upsilon}_{\mathcal{V}^\perp}(\tau)\| \le \|\boldsymbol{\chi}_{\mathcal{V}^\perp}\|_{L^1}$. 
Since $\|(\mathbf{Q},-\mathbf{C})^{(K)}\| \le 2^K \|\mathbf{C}\|^K \|\mathbf{Q}\|$, integrating over the $K$-simplex yields a bound of $2^K \|\mathbf{Q}\| \|\boldsymbol{\chi}\|_{L^1} \|\boldsymbol{\chi}_{\mathcal{V}^\perp}\|_{L^1} \frac{(\|\mathbf{C}\| t)^K}{K!}$.

To bound the sum terms, we expand the distant vector $e^{-\mathbf{C}\tau_p} \boldsymbol{\Upsilon}_{\mathcal{V}^\perp}(\tau_{p}) = \int_0^{\tau_{p}} e^{-\mathbf{C}(\tau_{p}-\tau)} \boldsymbol{\chi}_{\mathcal{V}^\perp}(\tau) \mathrm{d}\tau$. 
Because $(\mathbf{Q},-\mathbf{C})^{(p-1)}$ is separated from $\mathcal{V}^\perp$ by $d - (p-1)\Delta$, we expand the single propagator $e^{-\mathbf{C}(\tau_{p}-\tau)}$ to order $K-(p-1)$. 
The polynomial terms vanish because the total expanded support is at most $(p-1+K-p)\Delta = (K-1)\Delta < d$. 
Writing the remainder of this single propagator in its integral form, $e^{-\mathbf{C}s} - \sum_{k=0}^{M-1}\frac{(-\mathbf{C}s)^k}{k!} = \int_0^s \frac{(s-u)^{M-1}}{(M-1)!}(-\mathbf{C})^M e^{-\mathbf{C}u}\,\mathrm{d}u$ with $M = K-(p-1)$ and $s = \tau_p - \tau \le \tau_p$, and bounding $\|e^{-\mathbf{C}u}\| \le 1$ by the non-positive logarithmic norm, its norm is at most $\frac{(\|\mathbf{C}\|s)^M}{M!}$. This yields:
\begin{equation}
    \left|\boldsymbol{\chi}^\dagger(\tau_{p}) (\mathbf{Q},-\mathbf{C})^{(p-1)} e^{-\mathbf{C}\tau_p} \boldsymbol{\Upsilon}_{\mathcal{V}^\perp}(\tau_{p})\right| \le \|\boldsymbol{\chi}(\tau_{p})\| \|(\mathbf{Q},-\mathbf{C})^{(p-1)}\| \|\boldsymbol{\chi}_{\mathcal{V}^\perp}\|_{L^1} \frac{(\|\mathbf{C}\|\tau_{p})^{K-(p-1)}}{(K-(p-1))!}.
\end{equation}
Substituting $\|(\mathbf{Q},-\mathbf{C})^{(p-1)}\| \le 2^{p-1} \|\mathbf{C}\|^{p-1} \|\mathbf{Q}\|$, the geometric coefficient of each sum term is exactly $2^{p-1} \|\mathbf{C}\|^K \|\mathbf{Q}\|$. 
Integrating out $\tau_1,\ldots,\tau_{p-1}$ over the simplex leaves $\int_0^t \frac{(t-\tau)^{p-1}}{(p-1)!} \frac{\tau^{K-p+1}}{(K-p+1)!} \|\boldsymbol{\chi}(\tau)\| \mathrm{d}\tau$; substituting $j = p-1$ and $q = \tau/t$, the polynomial factor equals $\frac{t^K}{K!}\binom{K}{j} q^{K-j}(1-q)^{j} \le \frac{t^K}{K!}$, since the binomial factor is a probability mass. 
The remaining nested time integrals are therefore bounded by $\frac{t^K}{K!} \|\boldsymbol{\chi}\|_{L^1}$.
Summing the geometric sequence $\sum_{p=1}^{K} 2^{p-1} = 2^K - 1$ and adding the $K$-th order remainder (which contributes $2^K$) bounds the total expectation value by:
\begin{align}
    |\boldsymbol{\Upsilon}^\dagger(t) e^{-\mathbf{C}^\dagger t} \mathbf{Q} e^{-\mathbf{C} t} \boldsymbol{\Upsilon}_{\mathcal{V}^\perp}(t)| &\le (2^{K+1}-1) \|\mathbf{Q}\| \|\boldsymbol{\chi}\|_{L^1} \|\boldsymbol{\chi}_{\mathcal{V}^\perp}\|_{L^1} \frac{(\|\mathbf{C}\| t)^K}{K!} \nonumber \\
    &< 2 \|\mathbf{Q}\| \|\boldsymbol{\chi}\|_{L^1} \|\boldsymbol{\chi}_{\mathcal{V}^\perp}\|_{L^1} \frac{(2\|\mathbf{C}\| t)^K}{K!}.
\end{align}
By Stirling's approximation, $K! \ge \sqrt{2\pi K} (K/e)^K$. 
Since $\sqrt{2\pi K} > 2$ for all $K \ge 1$, the mathematical inequality $\frac{2}{K!} \le (e/K)^K$ holds globally. 
We absorb the constant $2$ entirely to drop the factorial and substitute $K \ge (d-\Delta)/\Delta$, yielding the stated information locality bound:
\begin{equation}
    |\boldsymbol{\Upsilon}^\dagger(t) e^{-\mathbf{C}^\dagger t} \mathbf{Q} e^{-\mathbf{C} t} \boldsymbol{\Upsilon}_{\mathcal{V}^\perp}(t)| \le \|\mathbf{Q}\| \|\boldsymbol{\chi}\|_{L^1} \|\boldsymbol{\chi}_{\mathcal{V}^\perp}\|_{L^1} e^{-\frac{d -\Delta}{\Delta}\log\left(\frac{d-\Delta}{2e\Delta \|\mathbf{C}\| t} \right)}.
\end{equation}
The trivial bound is evaluated by bounding the continuous drive directly via the exact norm integrals established in \Cref{lem:classicalSim}:
\begin{align}
     |\boldsymbol{\Upsilon}^\dagger(t) e^{-\mathbf{C}^\dagger t} \mathbf{Q} e^{-\mathbf{C} t} \boldsymbol{\Upsilon}_{\mathcal{V}^\perp}(t)| &\le \left(\int_0^t e^{-\lambda_{\min}(t-\tau_1)} \|\boldsymbol{\chi}(\tau_1)\| \mathrm{d}\tau_1\right) \|\mathbf{Q}\| \left(\int_0^t e^{-\lambda_{\min}(t-\tau_2)} \|\boldsymbol{\chi}_{\mathcal{V}^\perp}(\tau_2)\| \mathrm{d}\tau_2\right) \nonumber\\
     &\le \|\mathbf{Q}\| A_{\boldsymbol{\chi}} A_{\boldsymbol{\chi}_{\mathcal{V}^\perp}} \left( \frac{1 - e^{-\lambda_{\min} t}}{\lambda_{\min}} \right)^2.
\end{align}
As both bounds are upper bounds, the tightest achievable bound is simply their minimum.
\end{proof}

Now we can simply combine these technical lemmas to arrive at the following result, where we assume a non-trivial parameter regime.
\begin{lemma}[Partitioned Observable Bound] \label{lem:exactPartition}
    Assume that the preconditions of~\Cref{lem:doubleTerm} hold. In particular, let $\mathbf{Q}$ be a physically $\Delta$-local observable with support only on a subspace $\mathcal{W}$ and let $\mathcal{V}^\perp$ be a subspace that is separated from $\mathcal{W}$ by at least distance $d$. Further, let the global initial state be partitioned as $\boldsymbol{\psi}(0) = \boldsymbol{\psi}_{\mathcal{V}}(0) + \boldsymbol{\psi}_{\mathcal{V}^\perp}(0)$, where $\boldsymbol{\psi}_{\mathcal{V}^\perp}(0)$ has strict spatial support within $\mathcal{V}^\perp$. 
    Let $\boldsymbol{\Upsilon}_{\mathcal{V}}(t) := \int_0^t e^{\mathbf{C} \tau} \boldsymbol{\chi}_{\mathcal{V}}(\tau) \mathrm{d}\tau$ and let $\boldsymbol{\psi}_{\mathcal{V}}(t) := e^{-\mathbf{C} t} \big( \boldsymbol{\psi}_{\mathcal{V}}(0) + \boldsymbol{\Upsilon}_{\mathcal{V}}(t) \big)$ be the exact component of the state driven solely by the local initial state and local source. 
    We then have for time $t \ge 0$ and $v_I t := 2e\Delta\|\mathbf{C}\|t < d - \Delta$ that the difference in the expectation value of the observable $\mathbf{Q}$ from that due solely to the local state components satisfies
    $$
    | {\boldsymbol{\psi}}^\dagger(t) \mathbf{Q} {\boldsymbol{\psi}}(t)-\boldsymbol{\psi}_{\mathcal{V}}^\dagger(t) \mathbf{Q} \boldsymbol{\psi}_{\mathcal{V}}(t)| \le 12\|\mathbf{Q}\| C_{\max}^2 e^{-\frac{d -\Delta}{\Delta}\log\left(\frac{d-\Delta}{2e\Delta \|\mathbf{C}\| t} \right)},
    $$
    where $C_{\max} := \max(\|\boldsymbol{\psi}(0)\|, \|\boldsymbol{\chi}\|_{L^1})$.
\end{lemma}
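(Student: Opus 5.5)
We decompose the full state by linearity and reduce everything to the three preceding locality lemmas. Write $\boldsymbol{\psi}(t) = \boldsymbol{\psi}_{\mathcal{V}}(t) + \boldsymbol{\psi}_{\mathcal{V}^\perp}(t)$ with
$$\boldsymbol{\psi}_{\mathcal{V}^\perp}(t) := e^{-\mathbf{C}t}\big(\boldsymbol{\psi}_{\mathcal{V}^\perp}(0) + \boldsymbol{\Upsilon}_{\mathcal{V}^\perp}(t)\big),$$
which is valid since $\boldsymbol{\Upsilon} = \boldsymbol{\Upsilon}_{\mathcal{V}} + \boldsymbol{\Upsilon}_{\mathcal{V}^\perp}$ by linearity of the integral. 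Then
$$\boldsymbol{\psi}^\dagger\mathbf{Q}\boldsymbol{\psi} - \boldsymbol{\psi}_{\mathcal{V}}^\dagger\mathbf{Q}\boldsymbol{\psi}_{\mathcal{V}} = \boldsymbol{\psi}^\dagger \mathbf{Q}\boldsymbol{\psi}_{\mathcal{V}^\perp} + \boldsymbol{\psi}_{\mathcal{V}^\perp}^\dagger\mathbf{Q}\boldsymbol{\psi}_{\mathcal{V}},$$
so every surviving term contains at least one "distant" factor, either $e^{-\mathbf{C}t}\boldsymbol{\psi}_{\mathcal{V}^\perp}(0)$ or $e^{-\mathbf{C}t}\boldsymbol{\Upsilon}_{\mathcal{V}^\perp}(t)$. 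Expanding $\boldsymbol{\psi}$ and $\boldsymbol{\psi}_{\mathcal{V}}$ into their initial-state and source parts gives at most a dozen bilinear terms of the form $\mathbf{a}^\dagger e^{-\mathbf{C}^\dagger t}\mathbf{Q}e^{-\mathbf{C}t}\mathbf{b}$. We bound each one separately and sum; the constant $12$ should come from this count.

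Each term falls into one of three types, handled by the earlier lemmas.

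\emph{Type 1} (distant initial state paired with an initial state). This is handled by \Cref{cor:homo} with the same nested-bracket vanishing argument as in \Cref{lem:xterms}. The polarized version, with two different vectors and only one of them in $\mathcal{V}^\perp$, goes through verbatim. The reason is that $(\mathbf{Q},-\mathbf{C})^{(p)}$ acting on the distant vector vanishes for $p<K$, and the adjoint identity $(\mathbf{X},-\mathbf{C})^\dagger=(\mathbf{X}^\dagger,-\mathbf{C})$ handles the case where the distant vector sits on the left. Each such term is bounded by $\|\mathbf{Q}\|\,C_{\max}^2\, e^{-\frac{d-\Delta}{\Delta}\log(\cdot)}$.

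\emph{Type 2} (an initial state paired with a distant source, or a distant initial state paired with a source). This is \Cref{lem:mixedTerms}, again used in polarized form and giving the same bound with $\|\boldsymbol{\psi}(0)\|\|\boldsymbol{\chi}\|_{L^1}\le C_{\max}^2$. Here one must check which factor is the localized one. When the initial state is the distant one, the source-on-source bracket terms in \Cref{cor:inhomo} vanish because the bracket acts on $\mathcal{V}^\perp$ from the adjoint side.

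\emph{Type 3} (source paired with distant source). This is directly \Cref{lem:doubleTerm}, applied with $\boldsymbol{\Upsilon}$ replaced by $\boldsymbol{\Upsilon}_{\mathcal{V}}$ or by the full $\boldsymbol{\Upsilon}$ as needed, and using its conjugate for the reversed ordering. Since $\|\boldsymbol{\chi}_{\mathcal{V}^\perp}\|_{L^1}\le\|\boldsymbol{\chi}\|_{L^1}\le C_{\max}$, every term is bounded by $\|\mathbf{Q}\|C_{\max}^2 e^{-\frac{d-\Delta}{\Delta}\log\left(\frac{d-\Delta}{2e\Delta\|\mathbf{C}\|t}\right)}$. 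We use only the locality branch of each minimum.

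The main obstacle is bookkeeping. The lemmas are stated for specific pairings (e.g.\ \Cref{lem:mixedTerms} assumes both $\boldsymbol{\psi}(0)$ and $\boldsymbol{\chi}$ lie in $\mathcal{V}^\perp$), while our cross terms pair a distant vector with a local or global one. The key point to verify is that the nested-bracket vanishing only ever requires the distant factor to sit in $\mathcal{V}^\perp$; the partner vector only enters through norm bounds such as $\|e^{-\mathbf{C}\tau}\boldsymbol{\Upsilon}(\tau)\|\le\|\boldsymbol{\chi}\|_{L^1}$. To settle this we would reread each proof and confirm that the support hypothesis is used only on the vector adjacent to the bracket. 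Then we count the terms: with $\boldsymbol{\psi}^\dagger\mathbf{Q}\boldsymbol{\psi}_{\mathcal{V}^\perp}$ giving $2\times2=4$ pairings and $\boldsymbol{\psi}_{\mathcal{V}^\perp}^\dagger\mathbf{Q}\boldsymbol{\psi}_{\mathcal{V}}$ giving another $4$, the total is 8 terms. A factor up to $12$ then covers any looser splitting, such as writing $\boldsymbol{\psi}=\boldsymbol{\psi}_{\mathcal{V}}+\boldsymbol{\psi}_{\mathcal{V}^\perp}$ fully to obtain three blocks of four terms. The triangle inequality finishes the proof.
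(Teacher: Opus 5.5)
Your overall plan is the paper's: split off the distant pieces, expand into bilinear terms $\mathbf a^\dagger e^{-\mathbf C^\dagger t}\mathbf Q e^{-\mathbf C t}\mathbf b$ that each contain a distant factor, bound each term with a locality lemma, and sum. The paper expands all four vectors and gets $16-4=12$ terms. Your identity $\boldsymbol\psi^\dagger\mathbf Q\boldsymbol\psi-\boldsymbol\psi_{\mathcal V}^\dagger\mathbf Q\boldsymbol\psi_{\mathcal V}=\boldsymbol\psi^\dagger\mathbf Q\boldsymbol\psi_{\mathcal V^\perp}+\boldsymbol\psi_{\mathcal V^\perp}^\dagger\mathbf Q\boldsymbol\psi_{\mathcal V}$ is correct and gives $8$. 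Your Types 1 and 3, and the half of Type 2 where the \emph{source} is the distant factor, are fine. In those cases the bracket $(\mathbf Q,-\mathbf C)^{(p)}$ acts directly on the distant vector: on $\boldsymbol\psi_{\mathcal V^\perp}(0)$ in \Cref{cor:homo}, and on $\boldsymbol\chi_{\mathcal V^\perp}(\tau_p)$ in \Cref{cor:inhomo}. In \Cref{lem:doubleTerm} the left-hand source is arbitrary.

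The gap is the other half of Type 2: the pairings $(\boldsymbol\Upsilon,\boldsymbol\psi_{\mathcal V^\perp}(0))$ and $(\boldsymbol\psi_{\mathcal V^\perp}(0),\boldsymbol\Upsilon_{\mathcal V})$, where the initial state is distant but the source is not. Your claim that the sum terms of \Cref{cor:inhomo} vanish here is false.

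\Cref{cor:inhomo} is derived by cancelling $e^{-\mathbf C\tau}e^{\mathbf C\tau}$ on the source side. Its sum terms therefore always have the bare source next to the bracket and the initial state behind a propagator, as in $\boldsymbol\chi(\tau_p)^\dagger(\mathbf Q,-\mathbf C)^{(p-1)}e^{-\mathbf C\tau_p}\boldsymbol\psi_{\mathcal V^\perp}(0)$. Already for $p=1$ this is $\int_0^t\boldsymbol\chi(\tau)^\dagger\mathbf Q e^{-\mathbf C\tau}\boldsymbol\psi_{\mathcal V^\perp}(0)\,\mathrm d\tau$, which is nonzero in general: $e^{-\mathbf C\tau}$ spreads $\boldsymbol\psi_{\mathcal V^\perp}(0)$ onto $\mathcal W$, and $\boldsymbol\chi$ may be supported on $\mathcal W$. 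So your proposed check (that the support hypothesis is used only on the vector adjacent to the bracket) fails exactly here. This is also why \Cref{lem:mixedTerms} needs $\boldsymbol\chi$ in $\mathcal V^\perp$; its hypothesis $\boldsymbol\psi(0)\in\mathcal V^\perp$ is never used. Pulling $\boldsymbol\Upsilon$ outside \Cref{cor:homo} does not help either. The remainder would then involve $e^{-\mathbf C\tau_K}\boldsymbol\Upsilon(t)$ with $\tau_K<t$, which contains backward propagators that are unbounded under dissipation.

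The repair is easy. By Cauchy--Schwarz, $|(e^{-\mathbf Ct}\boldsymbol\Upsilon(t))^\dagger\mathbf Q\,e^{-\mathbf Ct}\boldsymbol\psi_{\mathcal V^\perp}(0)|\le\|\boldsymbol\chi\|_{L^1}\|\mathbf Q e^{-\mathbf Ct}\boldsymbol\psi_{\mathcal V^\perp}(0)\|$. Taylor-expand $e^{-\mathbf Ct}$ to degree $K-1$ with integral remainder. $\mathbf Q$ annihilates the polynomial part and $\|e^{-\mathbf Cu}\|\le 1$, so $\|\mathbf Q e^{-\mathbf Ct}\boldsymbol\psi_{\mathcal V^\perp}(0)\|\le\|\mathbf Q\|\|\boldsymbol\psi_{\mathcal V^\perp}(0)\|(\|\mathbf C\|t)^K/K!$, which is below the target exponential. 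Alternatively, handle the nonvanishing sum terms the way \Cref{lem:doubleTerm} handles its own.

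The paper's proof has the same soft spot: it cites \Cref{lem:mixedTerms} for $2|\boldsymbol\Upsilon_{\mathcal V}^\dagger e^{-\mathbf C^\dagger t}\mathbf Q e^{-\mathbf C t}\boldsymbol\psi_{\mathcal V^\perp}(0)|$, so the same patch is needed there. The Cauchy--Schwarz route in fact handles every term at once. Use $\|\mathbf Q\boldsymbol\psi_{\mathcal V^\perp}(t)\|,\|\mathbf Q^\dagger\boldsymbol\psi_{\mathcal V^\perp}(t)\|\le 2\|\mathbf Q\|C_{\max}(\|\mathbf C\|t)^K/K!$ and $\|\boldsymbol\psi(t)\|,\|\boldsymbol\psi_{\mathcal V}(t)\|\le 2C_{\max}$ in your two-term identity. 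This gives the lemma with constant $8$.
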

\begin{proof}
From the partitioned definitions of the initial state and source, we can split the exact forward evolution into distant and local components. 
Let $\boldsymbol{\Upsilon}(t) := \boldsymbol{\Upsilon}_{\mathcal{V}}(t) + \boldsymbol{\Upsilon}_{\mathcal{V}^\perp}(t)$ follow the identical spatial partition. 
By expanding the equation of motion of $\boldsymbol{\psi}$ in \eqref{eq:diffDef}, we have $\boldsymbol{\psi}(t) = e^{-\mathbf{C}t}\big(\boldsymbol{\psi}_{\mathcal{V}}(0) + \boldsymbol{\psi}_{\mathcal{V}^\perp}(0)\big) + e^{-\mathbf{C}t}\big(\boldsymbol{\Upsilon}_{\mathcal{V}}(t) + \boldsymbol{\Upsilon}_{\mathcal{V}^\perp}(t)\big)$.
Substituting this exact state into the expectation value distributes the difference ${\boldsymbol{\psi}}^\dagger(t) \mathbf{Q} {\boldsymbol{\psi}}(t) -\boldsymbol{\psi}_{\mathcal{V}}^\dagger(t) \mathbf{Q} \boldsymbol{\psi}_{\mathcal{V}}(t)$ into 12 cross-terms. 
Specifically,
\begin{align}
    \left|{\boldsymbol{\psi}}^\dagger(t) \mathbf{Q} {\boldsymbol{\psi}}(t) -\boldsymbol{\psi}_{\mathcal{V}}^\dagger(t) \mathbf{Q} \boldsymbol{\psi}_{\mathcal{V}}(t)\right| &\le 2\left| \boldsymbol{\psi}_{\mathcal{V}}(0)^\dagger e^{-\mathbf{C}^\dagger t} \mathbf{Q} e^{-\mathbf{C} t} \boldsymbol{\psi}_{\mathcal{V}^\perp}(0) \right| + \left| \boldsymbol{\psi}_{\mathcal{V}^\perp}(0)^\dagger e^{-\mathbf{C}^\dagger t} \mathbf{Q} e^{-\mathbf{C} t} \boldsymbol{\psi}_{\mathcal{V}^\perp}(0) \right|\nonumber\\
    &\quad+ 2\left| \boldsymbol{\Upsilon}_{\mathcal{V}}(t)^\dagger e^{-\mathbf{C}^\dagger t} \mathbf{Q} e^{-\mathbf{C} t} \boldsymbol{\psi}_{\mathcal{V}^\perp}(0) \right|+2\left| \boldsymbol{\Upsilon}_{\mathcal{V}^\perp}(t)^\dagger e^{-\mathbf{C}^\dagger t} \mathbf{Q} e^{-\mathbf{C} t} \boldsymbol{\psi}_{\mathcal{V}^\perp}(0) \right|
    \nonumber\\
    &\quad+2\left| \boldsymbol{\psi}_{\mathcal{V}}(0)^\dagger e^{-\mathbf{C}^\dagger t} \mathbf{Q} e^{-\mathbf{C} t} \boldsymbol{\Upsilon}_{\mathcal{V}^\perp}(t) \right|+2\left| \boldsymbol{\Upsilon}_{\mathcal{V}}(t)^\dagger e^{-\mathbf{C}^\dagger t} \mathbf{Q} e^{-\mathbf{C} t} \boldsymbol{\Upsilon}_{\mathcal{V}^\perp}(t) \right|\nonumber\\
    &\quad + \left| \boldsymbol{\Upsilon}_{\mathcal{V}^\perp}(t)^\dagger e^{-\mathbf{C}^\dagger t} \mathbf{Q} e^{-\mathbf{C} t} \boldsymbol{\Upsilon}_{\mathcal{V}^\perp}(t) \right|
\end{align}
Because $\boldsymbol{\psi}_{\mathcal{V}^\perp}(0)$ and $\boldsymbol{\Upsilon}_{\mathcal{V}^\perp}(t)$ are derived from vectors strictly supported on $\mathcal{V}^\perp$, every single term in this 12-term expansion contains at least one vector belonging to the distant subspace.  

For the fully distant cross-terms, where both vectors belong to $\mathcal{V}^\perp$, applying \Cref{lem:xterms,lem:mixedTerms,lem:doubleTerm} bounds them exponentially. 
Specifically, from \Cref{lem:xterms} we have that
\begin{align}
    \left|{\boldsymbol{\psi}}^\dagger(t) \mathbf{Q} {\boldsymbol{\psi}}(t) -\boldsymbol{\psi}_{\mathcal{V}}^\dagger(t) \mathbf{Q} \boldsymbol{\psi}_{\mathcal{V}}(t)\right| &\le 3 \|\mathbf{Q}\| C_{\max}^2 e^{-\frac{d -\Delta}{\Delta}\log\left(\frac{d-\Delta}{2e\Delta \|\mathbf{C}\| t} \right)}\nonumber\\
    &\quad+ 2\left| \boldsymbol{\Upsilon}_{\mathcal{V}}(t)^\dagger e^{-\mathbf{C}^\dagger t} \mathbf{Q} e^{-\mathbf{C} t} \boldsymbol{\psi}_{\mathcal{V}^\perp}(0) \right|+2\left| \boldsymbol{\Upsilon}_{\mathcal{V}^\perp}(t)^\dagger e^{-\mathbf{C}^\dagger t} \mathbf{Q} e^{-\mathbf{C} t} \boldsymbol{\psi}_{\mathcal{V}^\perp}(0) \right|
    \nonumber\\
    &\quad+2\left| \boldsymbol{\psi}_{\mathcal{V}}(0)^\dagger e^{-\mathbf{C}^\dagger t} \mathbf{Q} e^{-\mathbf{C} t} \boldsymbol{\Upsilon}_{\mathcal{V}^\perp}(t) \right|+2\left| \boldsymbol{\Upsilon}_{\mathcal{V}}(t)^\dagger e^{-\mathbf{C}^\dagger t} \mathbf{Q} e^{-\mathbf{C} t} \boldsymbol{\Upsilon}_{\mathcal{V}^\perp}(t) \right|\nonumber\\
    &\quad + \left| \boldsymbol{\Upsilon}_{\mathcal{V}^\perp}(t)^\dagger e^{-\mathbf{C}^\dagger t} \mathbf{Q} e^{-\mathbf{C} t} \boldsymbol{\Upsilon}_{\mathcal{V}^\perp}(t) \right|
\end{align}
Then invoking \Cref{lem:mixedTerms} we have 
\begin{align}
    \left|{\boldsymbol{\psi}}^\dagger(t) \mathbf{Q} {\boldsymbol{\psi}}(t) -\boldsymbol{\psi}_{\mathcal{V}}^\dagger(t) \mathbf{Q} \boldsymbol{\psi}_{\mathcal{V}}(t)\right| &\le 11 \|\mathbf{Q}\| C_{\max}^2 e^{-\frac{d -\Delta}{\Delta}\log\left(\frac{d-\Delta}{2e\Delta \|\mathbf{C}\| t} \right)}\nonumber\\
    &\quad + \left| \boldsymbol{\Upsilon}_{\mathcal{V}^\perp}(t)^\dagger e^{-\mathbf{C}^\dagger t} \mathbf{Q} e^{-\mathbf{C} t} \boldsymbol{\Upsilon}_{\mathcal{V}^\perp}(t) \right|
\end{align}
The claim then immediately follows by substituting in the result of~\Cref{lem:doubleTerm} to bound the double inhomogeneous term.
\end{proof}

\subsubsection{Observable Truncation Bounds}

The previous lemma establishes that the exact dynamics of a local observable are exponentially insensitive to distant initial states and sources. To translate this physical property into an efficient classical algorithm, we must truncate the vector space. We partition the generator as $\mathbf{C} = \mathbf{C}_{\mathcal{V}} + \mathbf{C}_I$, where $\mathbf{C}_{\mathcal{V}}$ is the local generator strictly acting within $\mathcal{V}$, and $\mathbf{C}_I$ contains the severed boundary interactions coupling $\mathcal{V}$ to $\mathcal{V}^\perp$. 
Let $\tilde{\boldsymbol{\psi}}_{\mathcal{V}}(t)$ be the simulated state evolved only within $\mathcal{V}$ by $\mathbf{C}_{\mathcal{V}}$. 
We now bound the algorithmic truncation error for a local observable by directly mapping the boundary errors onto the exact partitioned expectation value bounds derived previously.

\begin{lemma}[Observable Truncation Bound] \label{thm:obsTruncation}
    Let $\mathbf{Q}$ be an 
    observable with spatial support strictly within a subspace $\mathcal{W}$ that is within distance at least $d$ from a subspace $\mathcal{V}^\perp$. Let the generator be physically $\Delta$-local and partitioned as $\mathbf{C} = \mathbf{C}_{\mathcal{V}} + \mathbf{C}_I$, where the truncated generator $\mathbf{C}_{\mathcal{V}}$ strictly acts within the local subspace $\mathcal{V}$ and $\mathbf{C}_I$ contains the interaction terms coupling $\mathcal{V}^\perp$ to $\mathcal{V}$ and further assume
    \begin{enumerate}
        \item for any error tolerance $\epsilon > 0$, define the temporal memory horizon $t^* = \frac{1}{\lambda_{\min}} \log\left(\frac{16 \|\mathbf{Q}\| C_{\max}^2}{\epsilon}\right)$
        \item let  the simulation start time be $t_0 = \max(0, t - t^*)$
        \item if $\boldsymbol{\psi}(t)$ is the exact global state and $\tilde{\boldsymbol{\psi}}_{\mathcal{V}}(t)$ is the approximate state evolved entirely within the truncated space $\mathcal{V}$ by the truncated differential equation $\partial_\tau \tilde{\boldsymbol{\psi}}_{\mathcal{V}}(\tau) = -\mathbf{C}_{\mathcal{V}}\tilde{\boldsymbol{\psi}}_{\mathcal{V}}(\tau) + \boldsymbol{\chi}_{\mathcal{V}}(\tau)$ over the interval $[t_0, t]$, with initial condition $\tilde{\boldsymbol{\psi}}_{\mathcal{V}}(t_0) = \boldsymbol{\psi}_{\mathcal{V}}(0)$ if $t_0 = 0$, and $\tilde{\boldsymbol{\psi}}_{\mathcal{V}}(t_0) = \mathbf{0}$ if $t_0 > 0$
        \item the assumptions of~\Cref{lem:exactPartition} hold,
    \end{enumerate} 
     then there exists a truncation distance $d_* \ge 0$ such that for all $d \ge d_*$, the error in the expectation value obeys
    \begin{equation}
        \left| \boldsymbol{\psi}^\dagger(t) \mathbf{Q} \boldsymbol{\psi}(t) - \tilde{\boldsymbol{\psi}}_{\mathcal{V}}^\dagger(t) \mathbf{Q} \tilde{\boldsymbol{\psi}}_{\mathcal{V}}(t) \right| \le \epsilon,
    \end{equation}
    for a spatial truncation distance $d_*$ satisfying
    \begin{equation}
    \begin{gathered}
        d_* = 2\Delta + \Delta \frac{c}{W(c/a)},
    \end{gathered}
    \end{equation}
    where $a = 2e\|\mathbf{C}\| (t-t_0)$, $c = \log\left(\frac{24 \|\mathbf{Q}\| C_{\max}^2 (1 + 2(t-t_0)\|\mathbf{C}_I\|)^2}{\epsilon}\right)$, and $W$ denotes the principal branch of the Lambert $W$ function.
\end{lemma}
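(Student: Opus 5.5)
We split the error with the triangle inequality into a temporal truncation part and a spatial truncation part. To do so we introduce the exact global state $\check{\boldsymbol{\psi}}(t)$ restarted at $t_0$ (history discarded, or the true initial state if $t_0=0$), and the exact untruncated-generator state $\boldsymbol{\psi}_{\mathcal{V}}(t)$ driven only by local data from $t_0$. Then
\[
|\boldsymbol{\psi}^\dagger\mathbf{Q}\boldsymbol{\psi}-\tilde{\boldsymbol{\psi}}_{\mathcal{V}}^\dagger\mathbf{Q}\tilde{\boldsymbol{\psi}}_{\mathcal{V}}|
\le |\boldsymbol{\psi}^\dagger\mathbf{Q}\boldsymbol{\psi}-\check{\boldsymbol{\psi}}^\dagger\mathbf{Q}\check{\boldsymbol{\psi}}|
+|\check{\boldsymbol{\psi}}^\dagger\mathbf{Q}\check{\boldsymbol{\psi}}-\boldsymbol{\psi}_{\mathcal{V}}^\dagger\mathbf{Q}\boldsymbol{\psi}_{\mathcal{V}}|
+|\boldsymbol{\psi}_{\mathcal{V}}^\dagger\mathbf{Q}\boldsymbol{\psi}_{\mathcal{V}}-\tilde{\boldsymbol{\psi}}_{\mathcal{V}}^\dagger\mathbf{Q}\tilde{\boldsymbol{\psi}}_{\mathcal{V}}|,
\]
and we budget roughly $\epsilon/4$, $\epsilon/2$, $\epsilon/4$ (constants to be tuned to match $16$ and $24$).

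\textbf{Temporal term.} By Duhamel's principle, $\boldsymbol{\psi}(t)-\check{\boldsymbol{\psi}}(t)=e^{-\mathbf{C}(t-t_0)}\boldsymbol{\psi}(t_0)$ when $t_0>0$. The log-norm bound gives norm at most $e^{-\lambda_{\min}t^*}\|\boldsymbol{\psi}(t_0)\|\le 2C_{\max}e^{-\lambda_{\min}t^*}$. Expanding the quadratic form and using that $\|\boldsymbol{\psi}\|$ and $\|\check{\boldsymbol{\psi}}\|$ are at most $2C_{\max}$, the difference is $\mathcal{O}(\|\mathbf{Q}\|C_{\max}^2e^{-\lambda_{\min}t^*})$. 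The choice of $t^*$ with the factor $16$ then makes this $\le\epsilon/4$ or so. When $t_0=0$ the term vanishes.

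\textbf{Partition term.} We apply \Cref{lem:exactPartition} on $[t_0,t]$ (time-shift invariance of the ODE), giving $12\|\mathbf{Q}\|C_{\max}^2e^{-\frac{d-\Delta}{\Delta}\log\frac{d-\Delta}{a\Delta}}$ with $a=2e\|\mathbf{C}\|(t-t_0)$.

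\textbf{Generator truncation term.} This is the main obstacle. We write $\boldsymbol{\psi}_{\mathcal{V}}-\tilde{\boldsymbol{\psi}}_{\mathcal{V}}=\int_{t_0}^t e^{-\mathbf{C}_{\mathcal{V}}(t-s)}\mathbf{C}_I\boldsymbol{\psi}_{\mathcal{V}}(s)\,\mathrm{d}s$. A naive norm bound gives $\mathcal{O}((t-t_0)\|\mathbf{C}_I\|C_{\max})$, which is not small. The idea is to instead bound only its effect on $\mathbf{Q}$: $\mathbf{C}_I\boldsymbol{\psi}_{\mathcal{V}}(s)$ is supported near the boundary of $\mathcal{V}$, at distance about $d-\Delta$ from $\mathcal{W}$. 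We treat it as a source term supported in a shell (a new $\mathcal{V}^\perp$ with separation $d-\Delta$). This source has $L^1$ norm at most $2(t-t_0)\|\mathbf{C}_I\|C_{\max}$, and $\mathbf{C}_{\mathcal{V}}$ inherits $\Delta$-locality and non-positive log-norm. We then apply \Cref{lem:mixedTerms} and \Cref{lem:doubleTerm} with this enlarged source: the effective $C_{\max}$ becomes $C_{\max}(1+2(t-t_0)\|\mathbf{C}_I\|)$ and the separation drops by one $\Delta$, producing the extra $\Delta$ in $d_*$ and the squared factor in $c$.

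\textbf{Solving for $d_*$.} Combining the two locality terms, it suffices that
\[
24\|\mathbf{Q}\|C_{\max}^2(1+2(t-t_0)\|\mathbf{C}_I\|)^2 e^{-x\log(x/a)}\le\epsilon, \qquad x=(d-2\Delta)/\Delta,
\]
that is, $x\log(x/a)\ge c$. Setting $x=c/W(c/a)$ gives $x/a=e^{W(c/a)}$, so $x\log(x/a)=c$ exactly. Since $x\log(x/a)$ is increasing on the relevant range, every $d\ge d_*=2\Delta+\Delta c/W(c/a)$ works. We also check that this $d$ satisfies the precondition $a<d-\Delta$ (or that the case $a\ge d-\Delta$ is excluded by $c>0$).
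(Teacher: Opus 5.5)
Your proof is correct and gives the same $t^*$, $a$, $c$ and $d_*$, but it splits the spatial error differently from the paper.

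\textbf{The paper's route.} It uses only two pieces. After the same Duhamel step for the temporal part, it rewrites the equation for $\check{\boldsymbol{\psi}}$ on $[t_0,t]$ with the truncated generator: $\partial_\tau\check{\boldsymbol{\psi}}=-\mathbf{C}_{\mathcal{V}}\check{\boldsymbol{\psi}}+\boldsymbol{\chi}_{\mathcal{V}}+\boldsymbol{\chi}_{\mathrm{eff}}$, with $\boldsymbol{\chi}_{\mathrm{eff}}=\boldsymbol{\chi}_{\mathcal{V}^\perp}-\mathbf{C}_I\check{\boldsymbol{\psi}}$. It notes that $\tilde{\boldsymbol{\psi}}_{\mathcal{V}}$ is exactly the part of this equation driven by local data. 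It then invokes \Cref{lem:exactPartition} once, at separation $d-\Delta$, with the inflated constant $\hat C_{\max}=C_{\max}(1+2(t-t_0)\|\mathbf{C}_I\|)$.

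\textbf{Your route.} You insert the intermediate state $\boldsymbol{\psi}_{\mathcal{V}}$ (full generator, local data). This charges the distant initial data and source to \Cref{lem:exactPartition} at the full separation $d$, with constant $12C_{\max}^2$. Only the severed couplings go to \Cref{lem:mixedTerms,lem:doubleTerm}: they act as a shell source $-\mathbf{C}_I\boldsymbol{\psi}_{\mathcal{V}}$ under $\mathbf{C}_{\mathcal{V}}$, with $L^1$ norm $B\le 2(t-t_0)\|\mathbf{C}_I\|C_{\max}$. The two locality pieces then total at most $\|\mathbf{Q}\|(12C_{\max}^2+4C_{\max}B+B^2)\le 12\|\mathbf{Q}\|\hat C_{\max}^2$ times the $d-2\Delta$ exponential.

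\textbf{What each buys.} Your bound is slightly sharper, and it shows that the factor $(1+2(t-t_0)\|\mathbf{C}_I\|)^2$ comes from severing alone. The paper's route is shorter: one lemma call and no merging of exponentials with different separations.

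\textbf{Bookkeeping fixes.}
\begin{itemize}
\item \emph{Precondition.} For the shell term the lemmas need $\Delta a<d-2\Delta$, i.e.\ $x>a$, not $a<d-\Delta$. This holds at $x_*=a\,e^{W(c/a)}$ whenever $c>0$. It also places you where $x\log(x/a)$ is increasing, which justifies dominating the $d-\Delta$ exponential by the $d-2\Delta$ one and extending to all $d\ge d_*$.
\item \emph{Error budget.} With the factor $16$ in $t^*$, your temporal bound is $8\|\mathbf{Q}\|C_{\max}^2e^{-\lambda_{\min}t^*}=\epsilon/2$, not $\epsilon/4$. So the two locality terms must share $\epsilon/2$, which is exactly what your combined condition $24\|\mathbf{Q}\|\hat C_{\max}^2e^{-x\log(x/a)}\le\epsilon$ delivers. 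Replace the $\epsilon/4,\epsilon/2,\epsilon/4$ split by $\epsilon/2+\epsilon/2$.
\item \emph{Use of \Cref{lem:mixedTerms}.} That lemma is stated with both the initial state and the source distant, whereas you pair a local initial state with a distant source. Its proof only uses that the source is distant, and the paper's proof of \Cref{lem:exactPartition} uses it the same way, so this is harmless.
\end{itemize}
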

\begin{proof}
    By Duhamel's principle, the exact state at time $t$ evaluated from $t_0 = \max(0, t - t^*)$ is $\boldsymbol{\psi}(t) = e^{-\mathbf{C}(t-t_0)}\boldsymbol{\psi}(t_0) + \int_{t_0}^t e^{-\mathbf{C}(t-\tau)}\boldsymbol{\chi}(\tau)\mathrm{d}\tau$. 
    Let us define a temporally truncated exact state $\check{\boldsymbol{\psi}}(t)$ that ignores the history prior to $t_0$. 
    If $t_0 = 0$, $\check{\boldsymbol{\psi}}(t) = \boldsymbol{\psi}(t)$. 
    If $t_0 > 0$, $\check{\boldsymbol{\psi}}(t) = \int_{t_0}^t e^{-\mathbf{C}(t-\tau)}\boldsymbol{\chi}(\tau)\mathrm{d}\tau$. 
    The temporal error state is $\boldsymbol{\psi}(t) - \check{\boldsymbol{\psi}}(t) = e^{-\mathbf{C}(t-t_0)}\boldsymbol{\psi}(t_0)$ for $t_0>0$. 
    Because the logarithmic norm is bounded by $-\lambda_{\min}$, this difference decays as $\|\boldsymbol{\psi}(t) - \check{\boldsymbol{\psi}}(t)\| \le 2C_{\max} e^{-\lambda_{\min} t^*}$. 
    Applying the triangle inequality, the temporal error in the expectation value is bounded by 
    \begin{equation}
        |\boldsymbol{\psi}^\dagger(t) \mathbf{Q} \boldsymbol{\psi}(t) - \check{\boldsymbol{\psi}}^\dagger(t) \mathbf{Q} \check{\boldsymbol{\psi}}(t)| \le 8\|\mathbf{Q}\| C_{\max}^2 e^{-\lambda_{\min} t^*}.
    \end{equation}
    Setting this temporal error bound to $\epsilon/2$ yields the defined memory horizon $t^* = \frac{1}{\lambda_{\min}} \log\left(\frac{16 \|\mathbf{Q}\| C_{\max}^2}{\epsilon}\right)$.
    
    Over the integration window $[t_0, t]$ of duration $(t-t_0)$, the exact differential equation for $\check{\boldsymbol{\psi}}$ can be rewritten algebraically as $\partial_\tau \check{\boldsymbol{\psi}}(\tau) = -\mathbf{C}_{\mathcal{V}}\check{\boldsymbol{\psi}}(\tau) + \boldsymbol{\chi}_{\mathcal{V}}(\tau) + \boldsymbol{\chi}_{\text{eff}}(\tau)$, where $\boldsymbol{\chi}_{\text{eff}}(\tau) = \boldsymbol{\chi}_{\mathcal{V}^\perp}(\tau) - \mathbf{C}_I \check{\boldsymbol{\psi}}(\tau)$. 
    Because $\mathbf{C}_I$ consists exclusively of boundary interactions coupling $\mathcal{V}^\perp$ to $\mathcal{V}$, the effective source $\boldsymbol{\chi}_{\text{eff}}(\tau)$ structurally acts from the distant region, supported at a distance of at least $d-\Delta$ from the observable $\mathbf{Q}$. 
    Therefore, the spatial truncation error $| \check{\boldsymbol{\psi}}^\dagger(t) \mathbf{Q} \check{\boldsymbol{\psi}}(t) - \tilde{\boldsymbol{\psi}}_{\mathcal{V}}^\dagger(t) \mathbf{Q} \tilde{\boldsymbol{\psi}}_{\mathcal{V}}(t) |$ evaluates identically to the exact partitioned observable error bounded in \Cref{lem:exactPartition}, driven by this effective distant source.
    
    Over this window, the local initial state at $t_0$ is $\check{\boldsymbol{\psi}}(t_0)$, which by definition evaluates exactly to $\boldsymbol{\psi}(0)$ or $\mathbf{0}$, guaranteeing its norm is bounded strictly by $C_{\max}$.
    The effective source is bounded by an integrated norm of $\int_{t_0}^t \|\boldsymbol{\chi}_{\text{eff}}(\tau)\| \mathrm{d}\tau \le \|\boldsymbol{\chi}\|_{L^1} + \int_{t_0}^t \|\mathbf{C}_I\| \|\check{\boldsymbol{\psi}}(\tau)\| \mathrm{d}\tau \le C_{\max} + (t-t_0)\|\mathbf{C}_I\| (2C_{\max}) = C_{\max}(1 + 2(t-t_0)\|\mathbf{C}_I\|)$. 
    Because the initial state bound $C_{\max}$ is bounded by this total source bound, the elevated maximum constant evaluates exactly to $\hat{C}_{\max} = C_{\max}(1 + 2(t-t_0)\|\mathbf{C}_I\|)$.
    Substituting $\hat{C}_{\max}$ into \Cref{lem:exactPartition} guarantees a spatial truncation error bounded by
    \begin{equation}
        12 \|\mathbf{Q}\| \hat{C}_{\max}^2 \exp\left( -\frac{d - 2\Delta}{\Delta}\log\left(\frac{d - 2\Delta}{2e\Delta \|\mathbf{C}\| (t-t_0)} \right) \right)\le \epsilon'.\label{eq:epsilonPrime}
    \end{equation}
    Setting $\epsilon'$ to $\epsilon/2$ in~\eqref{eq:epsilonPrime} and defining the dimensionless variables $x = \frac{d - 2\Delta}{\Delta}$, $a = 2e\|\mathbf{C}\|(t-t_0)$, and $c = \log\left(\frac{24 \|\mathbf{Q}\| \hat{C}_{\max}^2}{\epsilon}\right)$, the inequality reduces to $x \log(x/a) \ge c$. 
    Utilizing the principal branch of the Lambert $W$ function, the exact boundary is given by $x = \frac{c}{W(c/a)}$. 
    Substituting $x$ into the distance relation $d_* = 2\Delta + \Delta x$ yields the exact spatial truncation radius
    \begin{equation}
        d_* = 2\Delta + \Delta \frac{c}{W(c/a)},
    \end{equation}
    which bounds the spatial projection error by $\epsilon/2$. 
    By the triangle inequality, the total error evaluates to $\le \epsilon/2 + \epsilon/2 = \epsilon$.
\end{proof}

\subsubsection{State Vector Truncation Bounds}

The previous discussion examined the error on local observables in region $\mathcal{W}$ and showed that any $\Delta$-local observable can be $\epsilon$ approximated using a restricted simulation.  However, in some applications one may wish to simulate vectors in the entire space rather than simply observables.  The results that we find in this setting are similar and we see that for fixed dimension of the subspace is qualitatively the same.  However, in order to use observables to ensure that the projected dynamics is close to the actual dynamics we need to ensure that the solution vector is bounded away from zero as part of our argument.  For this reason, the following theorem requires a slightly stronger assumption of a lower bound on the solution norm as well as the upper bound postulated earlier.
Furthermore, the result is only resolved up to a global phase $\theta \in [0, 2\pi)$.

\begin{theorem}[Solution Vector Error Bound]\label{thm:solnVector}
    Let $\Pi_{\mathcal{W}}$ be a projector onto a subspace $\mathcal{W}$ that has power of $2$ dimension and is within distance at least $d$ from a subspace $\mathcal{V}^\perp$. Let the physically $\Delta$-local generator be partitioned as $\mathbf{C} = \mathbf{C}_{\mathcal{V}} + \mathbf{C}_I$, where the truncated generator $\mathbf{C}_{\mathcal{V}}$ strictly acts within the local subspace $\mathcal{V}$ and $\mathbf{C}_I$ contains the interaction terms coupling $\mathcal{V}^\perp$ to $\mathcal{V}$ and further assume
    \begin{enumerate}
        \item there exists $C_{\min}>0$ such that for all $t$ in the domain of simulation  $\|\Pi_{\mathcal{W}} \boldsymbol{\psi}(t)\| \ge C_{\min}$,
        \item for any error tolerance $\epsilon > 0$, define the temporal memory horizon $t^* \in \mathcal{O}\left( \frac{1}{\lambda_{\min}} \log\left(\frac{C_{\max}|\mathcal{W}|}{C_{\min}\epsilon}\right)\right)$
        \item let the simulation start time be $t_0 = \max(0, t - t^*)$
        \item if $\boldsymbol{\psi}(t)$ is the exact global state and $\tilde{\boldsymbol{\psi}}_{\mathcal{V}}(t)$ is the approximate state evolved entirely within the truncated space $\mathcal{V}$ by the truncated differential equation $\partial_\tau \tilde{\boldsymbol{\psi}}_{\mathcal{V}}(\tau) = -\mathbf{C}_{\mathcal{V}}\tilde{\boldsymbol{\psi}}_{\mathcal{V}}(\tau) + \boldsymbol{\chi}_{\mathcal{V}}(\tau)$ over the interval $[t_0, t]$, with initial condition $\tilde{\boldsymbol{\psi}}_{\mathcal{V}}(t_0) = \boldsymbol{\psi}_{\mathcal{V}}(0)$ if $t_0 = 0$, and $\tilde{\boldsymbol{\psi}}_{\mathcal{V}}(t_0) = \mathbf{0}$ if $t_0 > 0$
        \item all additional assumptions of~\Cref{lem:exactPartition} hold.
    \end{enumerate} 
    then there exists a truncation distance $d_* \ge 0$ such that for all $d \ge d_*$, there exists a phase $\theta \in [0, 2\pi)$ such that the error in the projected state relative to the restricted dynamics on $\mathcal{V}$ obeys
    \begin{equation}
        \left\| \Pi_{\mathcal{W}}\boldsymbol{\psi}(t) - e^{i\theta}\Pi_{\mathcal{W}}\tilde{\boldsymbol{\psi}}_{\mathcal{V}}(t) \right\| \le \epsilon,
    \end{equation}
    for a spatial truncation distance $d_*$ satisfying
    $$
    \begin{gathered}
    d_*\in \mathcal{O}\left( \Delta\|\mathbf{C}\|t + \Delta \log\left(\frac{C_{\max}|\mathcal{W}|}{C_{\min}\epsilon}\right) \right).
    \end{gathered}
    $$
\end{theorem}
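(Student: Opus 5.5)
The plan is to reduce the vector statement to the observable statement of \Cref{thm:obsTruncation} by reconstructing $\Pi_{\mathcal{W}}\boldsymbol{\psi}(t)$ from finitely many local expectation values, i.e.\ by a tomography-type argument on the $|\mathcal{W}|$-dimensional subspace. Write $\mathbf{a}:=\Pi_{\mathcal{W}}\boldsymbol{\psi}(t)$ and $\mathbf{b}:=\Pi_{\mathcal{W}}\tilde{\boldsymbol{\psi}}_{\mathcal{V}}(t)$, and consider the rank-one Gram matrices $\mathbf{a}\mathbf{a}^\dagger$ and $\mathbf{b}\mathbf{b}^\dagger$ restricted to $\mathcal{W}$. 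Since $\mathcal{W}$ has power-of-two dimension, we can expand any operator supported on $\mathcal{W}$ in the Pauli basis $\{\mathbf{P}_j\}$ of $|\mathcal{W}|^2$ elements, each with $\|\mathbf{P}_j\|=1$ and supported (both $\mathbf{P}_j$ and $\mathbf{P}_j^\dagger$) only on $\mathcal{W}$. Then
\begin{equation}
\mathbf{a}\mathbf{a}^\dagger-\mathbf{b}\mathbf{b}^\dagger=\frac{1}{|\mathcal{W}|}\sum_j\big(\boldsymbol{\psi}^\dagger(t)\mathbf{P}_j\boldsymbol{\psi}(t)-\tilde{\boldsymbol{\psi}}_{\mathcal{V}}^\dagger(t)\mathbf{P}_j\tilde{\boldsymbol{\psi}}_{\mathcal{V}}(t)\big)\mathbf{P}_j^\dagger ,
\end{equation}
so applying \Cref{thm:obsTruncation} to each $\mathbf{P}_j$ with tolerance $\epsilon_0$ gives $\|\mathbf{a}\mathbf{a}^\dagger-\mathbf{b}\mathbf{b}^\dagger\|_F\le |\mathcal{W}|\,\epsilon_0$ (Frobenius norm via orthogonality of the Paulis under the normalized trace inner product).

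The second step converts closeness of the rank-one projectors into closeness of the vectors up to a phase. Choose $\theta$ so that $e^{i\theta}\mathbf{a}^\dagger\mathbf{b}$ is real and nonnegative. A standard estimate gives $\min_\theta\|\mathbf{a}-e^{i\theta}\mathbf{b}\|^2\le \|\mathbf{a}\mathbf{a}^\dagger-\mathbf{b}\mathbf{b}^\dagger\|_F^2/\|\mathbf{a}\|^2$ up to a constant (e.g.\ apply $\mathbf{a}\mathbf{a}^\dagger-\mathbf{b}\mathbf{b}^\dagger$ to $\mathbf{a}/\|\mathbf{a}\|$ and compare). Here assumption 1 enters: $\|\mathbf{a}\|\ge C_{\min}$, so the vector error is $\mathcal{O}(|\mathcal{W}|\epsilon_0/C_{\min})$. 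Setting $\epsilon_0=\Theta(C_{\min}\epsilon/|\mathcal{W}|)$ yields the target bound $\epsilon$. A union over the $|\mathcal{W}|^2$ Paulis costs nothing extra since the truncation bound is deterministic.

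Finally substitute $\epsilon_0$ into \Cref{thm:obsTruncation}. The memory horizon becomes $t^*=\lambda_{\min}^{-1}\log(16C_{\max}^2|\mathcal{W}|/(C_{\min}\epsilon))$, matching the stated $\mathcal{O}$ form, and $c=\log(24C_{\max}^2(1+2(t-t_0)\|\mathbf{C}_I\|)^2|\mathcal{W}|/(C_{\min}\epsilon))$. Then use the elementary Lambert bound $c/W(c/a)\le e a + c$ (if $c\le ea$, $W(c/a)\ge c/(ea)$ gives $\le ea$; otherwise $W\ge1$ gives $\le c$) to get $d_*\le 2\Delta+\Delta(2e^2\|\mathbf{C}\|t+c)$. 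The main obstacle I expect is absorbing the factor $(1+2t\|\mathbf{C}_I\|)^2$ and $C_{\max}$ inside $c$ into the stated $\mathcal{O}$: the logarithm of $t\|\mathbf{C}_I\|\le t\|\mathbf{C}\|$ is dominated by the linear $\Delta\|\mathbf{C}\|t$ term, and the $C_{\max}$ dependence is kept inside the log, so the claimed $d_*\in\mathcal{O}(\Delta\|\mathbf{C}\|t+\Delta\log(C_{\max}|\mathcal{W}|/(C_{\min}\epsilon)))$ follows (with $C_{\max}^2$ vs.\ $C_{\max}$ only changing constants). Care is also needed in the projector-to-vector step to get linear rather than square-root dependence on $\epsilon_0$; if only a square-root bound is obtained, it merely doubles the logarithm and is still absorbed.
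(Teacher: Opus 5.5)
Your proof is correct and follows the paper's skeleton: Pauli tomography on $\mathcal{W}$, one application of \Cref{thm:obsTruncation} per basis element with a common $t^*$ and $d_*$, then rescaling the tolerance inside the logarithms. The step from the rank-one operators to the vectors, however, is genuinely different.

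The paper bounds $\|\mathbf{u}\mathbf{u}^\dagger-\mathbf{v}\mathbf{v}^\dagger\|$ in operator norm. It then uses $\|\mathbf{v}\|^2\ge C_{\min}^2-\epsilon\ge C_{\min}^2/2$ as a spectral gap for Davis--Kahan on the normalized vectors, and controls $\bigl|\|\mathbf{u}\|-\|\mathbf{v}\|\bigr|$ separately by eigenvalue perturbation. It arrives at $\mathcal{O}(\epsilon C_{\max}/C_{\min}^2)$, which holds only once $\epsilon$ is small enough for the gap to survive.

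Your direct inequality is more elementary and sharper, but you should prove it rather than call it standard. Set $x=\|\mathbf{a}\|^2$, $y=\|\mathbf{b}\|^2$, $p=|\mathbf{a}^\dagger\mathbf{b}|$. Then $\|\mathbf{a}\mathbf{a}^\dagger-\mathbf{b}\mathbf{b}^\dagger\|_F^2=(x-y)^2+2(xy-p^2)$ and $\min_\theta\|\mathbf{a}-e^{i\theta}\mathbf{b}\|^2=(\sqrt{x}-\sqrt{y})^2+2(\sqrt{xy}-p)$. Comparing these termwise, separately in the cases $y\ge x/4$ and $y<x/4$, gives $\min_\theta\|\mathbf{a}-e^{i\theta}\mathbf{b}\|^2\le 3\|\mathbf{a}\mathbf{a}^\dagger-\mathbf{b}\mathbf{b}^\dagger\|_F^2/\|\mathbf{a}\|^2$, with no condition beyond $\mathbf{a}\neq 0$.

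What your route buys:
\begin{itemize}
\item The error is $\mathcal{O}(|\mathcal{W}|\epsilon_0/C_{\min})$, with linear rather than quadratic dependence on $1/C_{\min}$ and no $C_{\max}$ prefactor.
\item There is no smallness requirement on $\epsilon$.
\item Linear dependence on $\epsilon_0$ comes out automatically, so your square-root fallback remark is unnecessary.
\end{itemize}
After taking logarithms both routes give the same $d_*$. Your explicit bound $c/W(c/a)\le ea+c$ makes precise the absorption that the paper leaves implicit.

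Two small remarks. First, orthogonality of the Paulis under the trace inner product actually gives $\|\mathbf{a}\mathbf{a}^\dagger-\mathbf{b}\mathbf{b}^\dagger\|_F\le\sqrt{|\mathcal{W}|}\,\epsilon_0$, so your $|\mathcal{W}|\epsilon_0$ is valid but loose. Second, trading $C_{\max}^2$ for $C_{\max}$ inside the logarithm is not literally a constant-factor change. The paper's own substitution has the same looseness (it produces a $C_{\max}^3/C_{\min}^2$ ratio inside the logarithm), so this reflects the $\mathcal{O}$-form of the statement rather than a flaw in your argument.
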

\begin{proof}
    Define the projected state vectors $\mathbf{u}(t) := \Pi_{\mathcal{W}}\boldsymbol{\psi}(t)$ and $\mathbf{v}(t) := \Pi_{\mathcal{W}}\tilde{\boldsymbol{\psi}}_{\mathcal{V}}(t)$ on $\mathcal{W}$. From Lemma~\ref{thm:obsTruncation} we have that provided $d\ge d_*$ that for any $\mathbf{Q}$ supported only on $\mathcal{W}\subset \mathcal{V}$ with $\|\mathbf{Q}\|\le 1$,
    \begin{equation}
        \left| \boldsymbol{\psi}^\dagger(t) \mathbf{Q} \boldsymbol{\psi}(t) - \tilde{\boldsymbol{\psi}}_{\mathcal{V}}^\dagger(t) \mathbf{Q} \tilde{\boldsymbol{\psi}}_{\mathcal{V}}(t) \right| = \left| \mathbf{u}^\dagger(t) \mathbf{Q} \mathbf{u}(t) - \mathbf{v}^\dagger(t) \mathbf{Q} \mathbf{v}(t) \right| \le \frac{\epsilon}{|\mathcal{W}|}.
    \end{equation}
    This implies that since $|\mathcal{W}|$ is a power of $2$ we can take $\mathbf{Q}$ to be elements of a set of unitary and Hermitian operators forming a complete operator basis of $\mathcal{W}$, where $\mathbf{P}_i$ ($1 \le i \le |\mathcal{W}|^2$) is chosen to be a particular element of the basis.  For a generic operator basis of dimension $|\mathcal{W}|^2$ we see that by choosing $d\ge d_*$
    \begin{align}
        \left| \mathbf{u}^\dagger(t) \mathbf{P}_i \mathbf{u}(t) - \mathbf{v}^\dagger(t) \mathbf{P}_i \mathbf{v}(t) \right|&= \left|{\rm Tr}(\mathbf{u}(t)\mathbf{u}^\dagger(t) \mathbf{P}_i ) - {\rm Tr}(\mathbf{v}(t) \mathbf{v}^\dagger(t) \mathbf{P}_i) \right|\nonumber\\
        &=|\mathcal{W}| \left|\langle \mathbf{P}_i, \mathbf{u}(t) \mathbf{u}^\dagger(t)\rangle - \langle \mathbf{P}_i, \mathbf{v}(t) \mathbf{v}^\dagger(t)\rangle \right| \le \frac{\epsilon}{|\mathcal{W}|}
    \end{align}
    where $\langle\cdot ,\cdot\rangle$ is the normalized Hilbert-Schmidt inner product on matrices.  
    As $\mathbf{P}_i$ forms a Hermitian orthonormal operator basis for matrices on $\mathcal{W}$, this implies that
    \begin{equation}
        \mathbf{u}(t) \mathbf{u}^\dagger(t) = \sum_{i=1}^{|\mathcal{W}|^2} \mathbf{P}_i \langle \mathbf{P}_i, \mathbf{u}(t) \mathbf{u}^\dagger(t) \rangle.
    \end{equation}
    Next, this implies from the fact that $\|\mathbf{P}_i\| \le 1$ that
    \begin{align}
        \| \mathbf{u}(t) \mathbf{u}^\dagger(t) - \mathbf{v}(t) \mathbf{v}^\dagger(t)\| &=\left\| \sum_{i=1}^{|\mathcal{W}|^2} \mathbf{P}_i\left(\langle \mathbf{P}_i, \mathbf{u}(t) \mathbf{u}^\dagger(t)\rangle - \langle \mathbf{P}_i, \mathbf{v}(t) \mathbf{v}^\dagger(t)\rangle\right) \right\|\nonumber\\
        &\le \sum_{i=1}^{|\mathcal{W}|^2} \left\| \mathbf{P}_i\left(\langle \mathbf{P}_i, \mathbf{u}(t) \mathbf{u}^\dagger(t)\rangle - \langle \mathbf{P}_i, \mathbf{v}(t) \mathbf{v}^\dagger(t)\rangle\right) \right\|\nonumber\\
        &\le \sum_{i=1}^{|\mathcal{W}|^2} \left| \langle \mathbf{P}_i, \mathbf{u}(t) \mathbf{u}^\dagger(t)\rangle - \langle \mathbf{P}_i, \mathbf{v}(t) \mathbf{v}^\dagger(t)\rangle \right|\nonumber\\
        &\le \sum_{i=1}^{|\mathcal{W}|^2} \frac{\epsilon}{|\mathcal{W}|^2} = \epsilon.
    \end{align}
    This implies that the density matrix analogues of both of these solutions are $\epsilon$-close to each other in the spectral norm.  From the matrix norm inequality, the norm of $\mathbf{v}(t)$ satisfies $\|\mathbf{v}(t)\|^2 = \|\mathbf{v}(t)\mathbf{v}^\dagger(t)\| \ge \|\mathbf{u}(t)\mathbf{u}^\dagger(t)\| - \|\mathbf{u}(t)\mathbf{u}^\dagger(t) - \mathbf{v}(t)\mathbf{v}^\dagger(t)\| \ge C_{\min}^2 - \epsilon \ge \frac{1}{2}C_{\min}^2$ for sufficiently small $\epsilon$. The eigenvalue gap of the rank-$1$ operator $\mathbf{v}(t) \mathbf{v}^\dagger(t)$ is its non-zero eigenvalue $\|\mathbf{v}(t)\|^2 \ge \frac{1}{2} C_{\min}^2$. Invoking the Davis-Kahan theorem, we have that there exists a phase $\theta \in [0, 2\pi)$ such that
    \begin{equation}
        \left\| \frac{\mathbf{u}(t)}{\|\mathbf{u}(t)\|} - e^{i\theta}\frac{\mathbf{v}(t)}{\|\mathbf{v}(t)\|} \right\| \le \frac{2\sqrt{2} \epsilon}{C_{\min}^2}. 
    \end{equation}
    Further, the corresponding eigenvalues of the two principal eigenvectors differ by at most $\epsilon$, so $\left|\|\mathbf{u}(t)\| - \|\mathbf{v}(t)\|\right| \le \frac{\epsilon}{C_{\min}}$. Combining these with $\|\mathbf{u}(t)\| \le C_{\max}$ yields
    \begin{align}
        \left\| \Pi_{\mathcal{W}} \boldsymbol{\psi}(t) - e^{i\theta}\Pi_{\mathcal{W}} \tilde{\boldsymbol{\psi}}_{\mathcal{V}}(t) \right\| = \| \mathbf{u}(t) - e^{i\theta} \mathbf{v}(t) \| \in \mathcal{O}\left(\frac{\epsilon C_{\max}}{C_{\min}^2} \right).  
    \end{align}
    We then achieve the final results by substituting $\epsilon \to \epsilon \frac{C_{\min}^2}{(2\sqrt2+1)C_{\max} |\mathcal{W}|}$ into the results of \Cref{thm:obsTruncation}, shifting the parameters inside the logarithmic terms by constant factors that are absorbed into $\mathcal{O}(d_*)$.
\end{proof}

While \Cref{thm:solnVector} establishes a state vector bound via observable tomography, it relies on a non-zero lower bound $C_{\min}$ and holds only up to a global phase factor $e^{i\theta}$. To remove these assumptions for our classical algorithm (\Cref{lem:classicalSim}), we now present a direct proof that bounds the $L_2$-norm error of the locally projected state vector via submatrix Taylor expansion remainders, requiring no lower bounds or phase alignments.

\begin{theorem}[State Vector Truncation Bound] \label{thm:stateTruncation}
    Under the assumptions of~\Cref{lem:exactPartition}, let $\mathcal{W}$ be the local spatial support of the subsystem of interest. Let $\mathcal{V}^\perp$ be the distant region separated from $\mathcal{W}$ by a distance $d$. Let the physically $\Delta$-local  and $s$-sparse generator acting on $\mathbb{C}^N$ be partitioned as $\mathbf{C} = \mathbf{C}_{\mathcal{V}} + \mathbf{C}_I$, where the truncated generator $\mathbf{C}_{\mathcal{V}}$ strictly acts within the local subspace $\mathcal{V}$ and $\mathbf{C}_I$ contains the interaction terms coupling $\mathcal{V}^\perp$ to $\mathcal{V}$. 
    For any error tolerance $\epsilon > 0$, define the temporal memory horizon $t^* = \frac{1}{\lambda_{\min}} \log\left(\frac{4 C_{\max}}{\epsilon}\right)$ and the simulation start time $t_0 = \max(0, t - t^*)$. 
    If $\boldsymbol{\psi}(t)$ is the exact global state and $\tilde{\boldsymbol{\psi}}_{\mathcal{V}}(t)$ is the approximate state evolved entirely within the truncated space $\mathcal{V}$ by the truncated differential equation $\partial_\tau \tilde{\boldsymbol{\psi}}_{\mathcal{V}}(\tau) = -\mathbf{C}_{\mathcal{V}}\tilde{\boldsymbol{\psi}}_{\mathcal{V}}(\tau) + \boldsymbol{\chi}_{\mathcal{V}}(\tau)$ over the interval $[t_0, t]$, with initial condition $\tilde{\boldsymbol{\psi}}_{\mathcal{V}}(t_0) = \boldsymbol{\psi}_{\mathcal{V}}(0)$ if $t_0 = 0$, and $\tilde{\boldsymbol{\psi}}_{\mathcal{V}}(t_0) = \mathbf{0}$ if $t_0 > 0$, then there exists a truncation distance $d_* \ge 0$ such that for all $d \ge d_*$, the $L_2$-norm error of the locally projected state vector obeys
    \begin{equation}
        \|\Pi_{\mathcal{W}}\boldsymbol{\psi}(t) - \Pi_{\mathcal{W}}\tilde{\boldsymbol{\psi}}_{\mathcal{V}}(t)\| \le \epsilon,
    \end{equation}
    where $\Pi_{\mathcal{W}}$ is the orthogonal projector onto $\mathcal{W}$, for a spatial truncation distance $d_*$ satisfying
    \begin{equation}
    \begin{gathered}
    d_* = 2\Delta + \Delta \frac{c}{W(c/a)},
    \end{gathered}
    \end{equation}
    where $a = e\|\mathbf{C}\| (t-t_0)$, $c = \log\left(\frac{4 C_{\max} (1 + (t-t_0)\|\mathbf{C}_I\|)}{\epsilon}\right)$, and $W$ denotes the principal branch of the Lambert $W$ function.
\end{theorem}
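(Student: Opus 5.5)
We combine a temporal truncation with a direct spatial truncation on the state vector, mirroring the proof of \Cref{thm:obsTruncation} but working with norms rather than expectation values. \emph{Temporal step.} By Duhamel's principle, define $\check{\boldsymbol{\psi}}(t)$ as the exact global solution restarted at $t_0$ from $\check{\boldsymbol{\psi}}(t_0)=\boldsymbol{\psi}(0)$ if $t_0=0$ and $\mathbf{0}$ otherwise. The discarded piece is $e^{-\mathbf{C}(t-t_0)}\boldsymbol{\psi}(t_0)$. The non-positive log-norm bound $\|e^{-\mathbf{C}s}\|\le e^{-\lambda_{\min}s}$ together with $\|\boldsymbol{\psi}(t_0)\|\le 2C_{\max}$ gives $\|\boldsymbol{\psi}(t)-\check{\boldsymbol{\psi}}(t)\|\le 2C_{\max}e^{-\lambda_{\min}t^*}=\epsilon/2$ for the stated $t^*$. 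Since $\|\Pi_{\mathcal{W}}\|\le1$, this costs at most $\epsilon/2$ after projection.

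\emph{Spatial step.} On $[t_0,t]$ rewrite the dynamics of $\check{\boldsymbol{\psi}}$ as $\partial_\tau\check{\boldsymbol{\psi}}=-\mathbf{C}_{\mathcal{V}}\check{\boldsymbol{\psi}}+\boldsymbol{\chi}_{\mathcal{V}}+\boldsymbol{\chi}_{\rm eff}$, with $\boldsymbol{\chi}_{\rm eff}=\boldsymbol{\chi}_{\mathcal{V}^\perp}-\mathbf{C}_I\check{\boldsymbol{\psi}}$. Subtracting the truncated equation gives
\[
\check{\boldsymbol{\psi}}(t)-\tilde{\boldsymbol{\psi}}_{\mathcal{V}}(t)=\int_{t_0}^t e^{-\mathbf{C}_{\mathcal{V}}(t-\tau)}\boldsymbol{\chi}_{\rm eff}(\tau)\,\mathrm{d}\tau .
\]
We must take care that $\mathbf{C}_{\mathcal{V}}$ also has non-positive log-norm, as a compression of $\mathbf{C}$ (if $\mathbf{C}_{\mathcal{V}}$ is the principal submatrix on $\mathcal{V}$, its Hermitian part is a compression of that of $\mathbf{C}$). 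Then $\int_{t_0}^t\|\boldsymbol{\chi}_{\rm eff}\|\le C_{\max}(1+(t-t_0)\|\mathbf{C}_I\|)$, where we bound $\|\check{\boldsymbol{\psi}}\|\le C_{\max}$. This is one factor of $2$ better than in \Cref{thm:obsTruncation} because only the source, not a squared quantity, enters. The effective source is supported at distance at least $d-\Delta$ from $\mathcal{W}$.

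\emph{Key estimate.} For a vector $\mathbf{x}$ supported at distance $\ge d-\Delta$ from $\mathcal{W}$, Taylor-expand $e^{-\mathbf{C}_{\mathcal{V}}s}$ to order $K-1$ with $K=\lfloor (d-\Delta)/\Delta\rfloor$. Each power $\mathbf{C}_{\mathcal{V}}^k$ moves support by at most $k\Delta$, so $\Pi_{\mathcal{W}}\mathbf{C}_{\mathcal{V}}^k\mathbf{x}=0$ for $k<K$. The integral-form remainder, using $\|e^{-\mathbf{C}_{\mathcal{V}}u}\|\le1$, gives
\[
\|\Pi_{\mathcal{W}}e^{-\mathbf{C}_{\mathcal{V}}s}\mathbf{x}\|\le \frac{(\|\mathbf{C}\|s)^K}{K!}\|\mathbf{x}\|\le\Big(\frac{e\|\mathbf{C}\|(t-t_0)}{K}\Big)^K\|\mathbf{x}\|.
\]
This is where $a=e\|\mathbf{C}\|(t-t_0)$, without the factor $2$ of the bracket expansion, comes from. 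Integrating against $\boldsymbol{\chi}_{\rm eff}$ gives spatial error $\le C_{\max}(1+(t-t_0)\|\mathbf{C}_I\|)\exp(-x\log(x/a))$ with $x=(d-2\Delta)/\Delta\le K$, handling the floor by the extra $\Delta$. Setting this to $\epsilon/2$ yields $x\log(x/a)\ge c$, whose solution via the principal Lambert branch is $x=c/W(c/a)$, hence $d_*=2\Delta+\Delta c/W(c/a)$. The triangle inequality then closes the bound at $\epsilon$.

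\emph{Expected obstacle.} The delicate point is the support bookkeeping. We must check that $\boldsymbol{\chi}_{\rm eff}$ genuinely lives at distance $\ge d-\Delta$: $\mathbf{C}_I$ maps $\mathcal{V}^\perp$-supported components into a $\Delta$-shell of $\mathcal{V}$. We must also check that the monotonicity $(e a'/K)^K\le$ the Lambert expression holds in the regime $x>a$. A secondary check is the log-norm claim for $\mathbf{C}_{\mathcal{V}}$. If the truncation were not a compression, we would instead bound $\|e^{-\mathbf{C}_{\mathcal{V}}u}\|$ by $e^{\|\mathbf{C}\|u}$ and lose the clean constant.
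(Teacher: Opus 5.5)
Your proposal follows the paper's proof essentially step for step: temporal cutoff via the log-norm, the effective distant source $\boldsymbol{\chi}_{\mathrm{eff}}=\boldsymbol{\chi}_{\mathcal{V}^\perp}-\mathbf{C}_I\check{\boldsymbol{\psi}}$, the compression argument for $\mu(-\mathbf{C}_{\mathcal{V}})\le 0$, the degree-$(K-1)$ Taylor expansion annihilated by $\Pi_{\mathcal{W}}$, the integral remainder with $K!\ge(K/e)^K$, and the Lambert-$W$ inversion. However, the $t_0=0$ branch is mishandled in two places.

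First, your Duhamel identity $\check{\boldsymbol{\psi}}(t)-\tilde{\boldsymbol{\psi}}_{\mathcal{V}}(t)=\int_{t_0}^t e^{-\mathbf{C}_{\mathcal{V}}(t-\tau)}\boldsymbol{\chi}_{\mathrm{eff}}(\tau)\,\mathrm{d}\tau$ omits the homogeneous term $e^{-\mathbf{C}_{\mathcal{V}}(t-t_0)}\big(\check{\boldsymbol{\psi}}(t_0)-\tilde{\boldsymbol{\psi}}_{\mathcal{V}}(t_0)\big)$. This term vanishes for $t_0>0$. For $t_0=0$, however, you start $\check{\boldsymbol{\psi}}$ from the full $\boldsymbol{\psi}(0)$ and $\tilde{\boldsymbol{\psi}}_{\mathcal{V}}$ only from $\boldsymbol{\psi}_{\mathcal{V}}(0)$, so it equals $e^{-\mathbf{C}_{\mathcal{V}}t}\boldsymbol{\psi}_{\mathcal{V}^\perp}(0)\neq\mathbf{0}$.

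Second, the bound $\|\check{\boldsymbol{\psi}}(\tau)\|\le C_{\max}$ is false when $t_0=0$. Then $\check{\boldsymbol{\psi}}=\boldsymbol{\psi}$, and with $C_{\max}=\max(\|\boldsymbol{\psi}(0)\|,\|\boldsymbol{\chi}\|_{L^1})$ one only has $\|\boldsymbol{\psi}(\tau)\|\le\|\boldsymbol{\psi}(0)\|+\|\boldsymbol{\chi}\|_{L^1}\le 2C_{\max}$. That is the very bound you use in your own temporal step.

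Both defects can be repaired within your framework. The missing term is supported in $\mathcal{V}^\perp$, at distance at least $d$ from $\mathcal{W}$, so your key estimate bounds its projection by $C_{\max}e^{-x\log(x/a)}$. Using $\|\check{\boldsymbol{\psi}}\|\le 2C_{\max}$, the source contribution becomes $C_{\max}\big(1+2(t-t_0)\|\mathbf{C}_I\|\big)e^{-x\log(x/a)}$. The total prefactor is then $2C_{\max}\big(1+(t-t_0)\|\mathbf{C}_I\|\big)$. This is exactly what the paper obtains, and exactly what the stated $c=\log\big(4C_{\max}(1+(t-t_0)\|\mathbf{C}_I\|)/\epsilon\big)$ absorbs at budget $\epsilon/2$, so the theorem as stated survives.

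Your claimed factor-of-two gain over \Cref{thm:obsTruncation} is, however, spurious. It is precisely the slack these two corrections consume. Moreover, the factor $2$ multiplying $(t-t_0)\|\mathbf{C}_I\|$ there comes from $\|\check{\boldsymbol{\psi}}\|\le 2C_{\max}$, not from the observable being quadratic in the state. Your explicit remark that $K\ge x>a$ places you on the increasing branch of $y\log(y/a)$ is correct, and it fills in a step the paper uses silently.
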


\begin{proof}
    By Duhamel's principle, the exact state at time $t$ evaluated from $t_0 = \max(0, t - t^*)$ is $\boldsymbol{\psi}(t) = e^{-\mathbf{C}(t-t_0)}\boldsymbol{\psi}(t_0) + \int_{t_0}^t e^{-\mathbf{C}(t-\tau)}\boldsymbol{\chi}(\tau)\mathrm{d}\tau$. 
    Let us define a temporally truncated exact state $\check{\boldsymbol{\psi}}(t)$ that ignores the history prior to $t_0$. 
    If $t_0 = 0$, $\check{\boldsymbol{\psi}}(t) = \boldsymbol{\psi}(t)$. 
    If $t_0 > 0$, $\check{\boldsymbol{\psi}}(t) = \int_{t_0}^t e^{-\mathbf{C}(t-\tau)}\boldsymbol{\chi}(\tau)\mathrm{d}\tau$. 
    The difference between the exact state and this temporally truncated state evaluates to
    \begin{equation}
        \boldsymbol{\psi}(t) - \check{\boldsymbol{\psi}}(t) = e^{-\mathbf{C}(t-t_0)}\boldsymbol{\psi}(t_0)
    \end{equation}
    for $t_0>0$. 
    Because the logarithmic norm is bounded by $-\lambda_{\min}$, this difference decays as
    \begin{equation}
        \|\boldsymbol{\psi}(t) - \check{\boldsymbol{\psi}}(t)\| \le e^{-\lambda_{\min} t^*} \|\boldsymbol{\psi}(t_0)\| \le 2C_{\max} e^{-\lambda_{\min} t^*}.
    \end{equation}
    Setting this temporal error bound to $\epsilon/2$ yields the memory horizon $t^* = \frac{1}{\lambda_{\min}} \log\left(\frac{4 C_{\max}}{\epsilon}\right)$.
    
    Over the integration window $[t_0, t]$ of duration $(t-t_0)$, we track the spatial error $\check{\boldsymbol{\psi}}(\tau) - \tilde{\boldsymbol{\psi}}_{\mathcal{V}}(\tau)$. 
    Because the locally truncated approximation shares the identical local initial conditions with $\check{\boldsymbol{\psi}}(t_0)$, the initial spatial error evaluates to $\check{\boldsymbol{\psi}}(t_0) - \tilde{\boldsymbol{\psi}}_{\mathcal{V}}(t_0) = \boldsymbol{\psi}_{\mathcal{V}^\perp}(0)$ if $t_0 = 0$, and $\check{\boldsymbol{\psi}}(t_0) - \tilde{\boldsymbol{\psi}}_{\mathcal{V}}(t_0) = \mathbf{0}$ if $t_0 > 0$. 
    By evaluating the time derivative and applying Duhamel's principle, this spatial difference evolves according to the truncated generator $-\mathbf{C}_{\mathcal{V}}$ driven by the distant initial state, the distant source, and the boundary interactions
    \begin{equation}
        \check{\boldsymbol{\psi}}(t) - \tilde{\boldsymbol{\psi}}_{\mathcal{V}}(t) = e^{-\mathbf{C}_{\mathcal{V}}(t-t_0)}\big(\check{\boldsymbol{\psi}}(t_0) - \tilde{\boldsymbol{\psi}}_{\mathcal{V}}(t_0)\big) + \int_{t_0}^t e^{-\mathbf{C}_{\mathcal{V}}(t-\tau)} \big[ \boldsymbol{\chi}_{\mathcal{V}^\perp}(\tau) - \mathbf{C}_I \check{\boldsymbol{\psi}}(\tau) \big] \mathrm{d}\tau.
    \end{equation}
    Because $-\mathbf{C}_{\mathcal{V}}$ is a principal submatrix of $-\mathbf{C}$, it inherits a non-positive log-norm, guaranteeing that the homogeneous propagator does not amplify the norm. 
    
    To bound the spatial error that reaches the local subsystem, we multiply by the local spatial projector $\Pi_{\mathcal{W}}$. 
    Every error source component, i.e. $\check{\boldsymbol{\psi}}(t_0) - \tilde{\boldsymbol{\psi}}_{\mathcal{V}}(t_0)$, $\boldsymbol{\chi}_{\mathcal{V}^\perp}(\tau)$, and $\mathbf{C}_I \check{\boldsymbol{\psi}}(\tau)$, is supported exclusively in the distant region $\mathcal{V}^\perp$ and its boundary layer, which are separated from $\mathcal{W}$ by a distance of at least $d - \Delta$. 
    By expanding the exact Taylor series of the propagator $e^{-\mathbf{C}_{\mathcal{V}}\tau}$ up to degree $K-1$, where $K = \lfloor (d - \Delta)/\Delta \rfloor$, the polynomial terms $\mathbf{C}_{\mathcal{V}}^k$ expand the spatial support of these distant vectors by at most $k\Delta \le (K-1)\Delta \le d - 2\Delta$. 
    Because this is strictly less than the separation distance, the local projector $\Pi_{\mathcal{W}}$ identically annihilates the polynomial portion of the evolution.
    
    Therefore, the effect of the propagator on these distant components is entirely bounded by the norm of its analytical remainder. 
    Using $K! \ge (K/e)^K$ and substituting the continuous lower bound $K \ge (d - 2\Delta)/\Delta$ yields the spatial attenuation factor
    \begin{equation}
        \frac{(\|\mathbf{C}_{\mathcal{V}}\|\tau)^K}{K!} \le \left(\frac{e\|\mathbf{C}\|\tau}{K}\right)^K \le \exp\left(-\frac{d - 2\Delta}{\Delta}\log\left(\frac{d- 2\Delta}{e\Delta \|\mathbf{C}\| \tau} \right)\right).
    \end{equation}
    Applying the triangle inequality to the locally projected spatial error equation, uniformly upper-bounding the monotonically increasing exponential at $\tau=t-t_0$, and utilizing the global norm bounds $\|\check{\boldsymbol{\psi}}(t_0) - \tilde{\boldsymbol{\psi}}_{\mathcal{V}}(t_0)\| \le C_{\max}$, $\int_{t_0}^t \|\boldsymbol{\chi}_{\mathcal{V}^\perp}(\tau)\|\mathrm{d}\tau \le C_{\max}$, and $\max_\tau \|\check{\boldsymbol{\psi}}(\tau)\| \le 2C_{\max}$ established previously, we obtain
    \begin{align}
        \left\|\Pi_{\mathcal{W}} \big(\check{\boldsymbol{\psi}}(t) - \tilde{\boldsymbol{\psi}}_{\mathcal{V}}(t)\big)\right\| &\le \left( \left\|\check{\boldsymbol{\psi}}(t_0) - \tilde{\boldsymbol{\psi}}_{\mathcal{V}}(t_0)\right\| + \int_{t_0}^t \big( \|\boldsymbol{\chi}_{\mathcal{V}^\perp}(\tau)\| + \|\mathbf{C}_I\| \|\check{\boldsymbol{\psi}}(\tau)\| \big) \mathrm{d}\tau \right) \nonumber \\
        &\quad \times \exp\left(-\frac{d - 2\Delta}{\Delta}\log\left(\frac{d- 2\Delta}{e\Delta \|\mathbf{C}\| (t-t_0)} \right)\right) \nonumber \\
        &\le 2 C_{\max} (1 + (t-t_0)\|\mathbf{C}_I\|) \exp\left(-\frac{d - 2\Delta}{\Delta}\log\left(\frac{d- 2\Delta}{e\Delta \|\mathbf{C}\| (t-t_0)} \right)\right).
    \end{align}
    Setting the spatial error budget $\left\|\Pi_{\mathcal{W}} \big(\check{\boldsymbol{\psi}}(t) - \tilde{\boldsymbol{\psi}}_{\mathcal{V}}(t)\big)\right\| \le \epsilon/2$, we define the dimensionless variables $x = \frac{d - 2\Delta}{\Delta}$, $a = e\|\mathbf{C}\|(t-t_0)$, and $c = \log\left(\frac{4 C_{\max}(1+(t-t_0)\|\mathbf{C}_I\|)}{\epsilon}\right)$. 
    For $a > 0$, this bounds the error via the transcendental inequality $(x/a)\log(x/a) \ge c/a$. 
    Utilizing the principal branch of the Lambert $W$ function, defined by $W(z)e^{W(z)} = z$, the boundary satisfying the inequality is given by $x = \frac{c}{W(c/a)}$. 
    Substituting $x$ back into the distance relation $d_* = \Delta x + 2\Delta$ yields the required spatial truncation radius
    \begin{equation}
    \begin{gathered}
    d_* = 2\Delta + \Delta \frac{c}{W(c/a)},
    \end{gathered}
    \end{equation}
    which bounds the local spatial projection error by $\epsilon/2$.
    
    By the triangle inequality, the total bounded error evaluates to
    \begin{align}
        \|\Pi_{\mathcal{W}}\boldsymbol{\psi}(t) - \Pi_{\mathcal{W}}\tilde{\boldsymbol{\psi}}_{\mathcal{V}}(t)\| &\le \|\Pi_{\mathcal{W}}\big(\boldsymbol{\psi}(t) - \check{\boldsymbol{\psi}}(t)\big)\| + \|\Pi_{\mathcal{W}}\big(\check{\boldsymbol{\psi}}(t) - \tilde{\boldsymbol{\psi}}_{\mathcal{V}}(t)\big)\| \nonumber \\
        &\le \frac{\epsilon}{2} + \frac{\epsilon}{2} = \epsilon.
    \end{align}
\end{proof}

We use this result to restrict the complexity of classical local subsystem simulation: 
\begin{theorem}[Classical Complexity of Local Observable Simulation] \label{cor:classical}
    Under the assumptions of \Cref{lem:classicalSim} and \Cref{thm:stateTruncation}, assume that 
    \begin{enumerate}
        \item $\mathbf{Q}$ is a physically $\Delta$-local operator acting on a subspace $\mathcal{W}$ corresponding to a ball of radius $\mathcal{O}(|\mathcal{W}|^{1/D})$ in a $D$-dimensional lattice for constant $D$ and has unit norm, and the subsystem dimension $|\mathcal{W}|$ is $\mathcal{O}(1)$.
        \item $\mathbf{C}$ is the generator for the differential equation $\partial_t \boldsymbol{\psi}(t) = -\mathbf{C} \boldsymbol{\psi}(t)+\boldsymbol{\chi}(t)$ on $\mathbb{C}^N$ where $\mathbf{C}$ is $s$-sparse and physically $\Delta$ local with $\|\mathbf{C}\|\le \alpha$.
        \item the number of arithmetic operations $\mathsf{T}_{\boldsymbol{\chi}}$ needed to compute the local source and its derivatives scale at most linearly with the active simulated dimension $N_*$, such that $\mathsf{T}_{\boldsymbol{\chi}} \in \mathcal{O}\big(N_* \log((\|\boldsymbol{\psi}(0)\|+A_{\boldsymbol{\chi}} t)/\epsilon)\big)$.
        \item the local source obeys $\sup_{\tau\in [0,t]}\|\partial_\tau^p \boldsymbol{\chi}(\tau)\|\le A_{\boldsymbol{\chi}} \omega_{\boldsymbol{\chi}}^p p!$ for an amplitude bound $A_{\boldsymbol{\chi}}$ and frequency bound $\omega_{\boldsymbol{\chi}}$.
    \end{enumerate} 
    
     There exists a classical algorithm that can provide an $\epsilon$-approximate estimate of $\boldsymbol{\psi}^\dagger(t) \mathbf{Q} \boldsymbol{\psi}(t)$ 
    using a number of bit operations that is in
    \begin{equation}
        \tilde{\mathcal{O}}\left( s \Delta^D (\alpha + \omega_{\boldsymbol{\chi}})t\left(\alpha t + \log\left(\frac{\|\boldsymbol{\psi}(0)\|+A_{\boldsymbol{\chi}}t}{\epsilon} \right)\right)^D \log^3\left(\frac{\|\boldsymbol{\psi}(0)\| + A_{\boldsymbol{\chi}}t}{\epsilon}\right)\right).
    \end{equation}
\end{theorem}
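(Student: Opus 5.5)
The plan is to combine the spatial truncation of \Cref{thm:stateTruncation} with the dense local solver of \Cref{lem:classicalSim}, applied only on the truncated region $\mathcal{V}$. First, since $\mathbf{Q}$ has unit norm and is supported on $\mathcal{W}$, we have $\boldsymbol{\psi}^\dagger(t)\mathbf{Q}\boldsymbol{\psi}(t) = (\Pi_{\mathcal{W}}\boldsymbol{\psi}(t))^\dagger \mathbf{Q}(\Pi_{\mathcal{W}}\boldsymbol{\psi}(t))$. So an $\epsilon'$-accurate approximation $\mathbf{v}$ of $\mathbf{u}:=\Pi_{\mathcal{W}}\boldsymbol{\psi}(t)$ gives
\[
|\mathbf{u}^\dagger\mathbf{Q}\mathbf{u}-\mathbf{v}^\dagger\mathbf{Q}\mathbf{v}|\le \epsilon'(2\|\mathbf{u}\|+\epsilon').
\]
Using $\|\mathbf{u}\|\le 2C_{\max}\le 2(\|\boldsymbol{\psi}(0)\|+A_{\boldsymbol{\chi}}t)$, it suffices to take $\epsilon' = \Theta(\epsilon/(\|\boldsymbol{\psi}(0)\|+A_{\boldsymbol{\chi}}t))$. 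This only changes the arguments of the logarithms by polynomial factors.

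Next I would split $\epsilon'$ evenly between truncation error and numerical error. For the truncation part, I invoke \Cref{thm:stateTruncation} with tolerance $\epsilon'/2$. It gives a radius
\[
d_* = 2\Delta + \Delta\, c/W(c/a), \qquad a=e\|\mathbf{C}\|(t-t_0)\le e\alpha t .
\]
I then bound $d_*$ explicitly via the Lambert function. When $c\le a e$ we have $W(c/a)\ge c/(ea)$ up to constants, so $x\le e a$. Otherwise $W(c/a)\ge 1$ and $x\le c$. Either way $x = \mathcal{O}(\alpha t + c)$, giving
\[
d_* = \mathcal{O}\bigl(\Delta(\alpha t + c)\bigr).
\]
To bring $c$ into the claimed form, I bound $\|\mathbf{C}_I\|\le\alpha$ and get $c = \mathcal{O}(\log((\|\boldsymbol{\psi}(0)\|+A_{\boldsymbol{\chi}}t)/\epsilon)+\log(1+\alpha t))$. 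The $\log(1+\alpha t)$ term is dominated by $\alpha t$ or absorbed in $\tilde{\mathcal{O}}$. Taking $\mathcal{V}$ as the lattice ball of radius $d_*$ plus the $\mathcal{O}(1)$ radius of $\mathcal{W}$, the active dimension is
\[
N_* = \mathcal{O}\bigl(\Delta^D(\alpha t+\log((\|\boldsymbol{\psi}(0)\|+A_{\boldsymbol{\chi}}t)/\epsilon))^D\bigr),
\]
treating $D$ as a constant.

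Then I run \Cref{lem:classicalSim} on the truncated system $\partial_\tau\tilde{\boldsymbol{\psi}}_{\mathcal{V}}=-\mathbf{C}_{\mathcal{V}}\tilde{\boldsymbol{\psi}}_{\mathcal{V}}+\boldsymbol{\chi}_{\mathcal{V}}$ over $[t_0,t]$, with tolerance $\epsilon'/2$. Its hypotheses hold for the following reasons.
\begin{itemize}
\item $\mathbf{C}_{\mathcal{V}}$ is a principal submatrix, so it is $s$-sparse, has norm at most $\alpha$, and has non-positive log-norm (noted in the proof of \Cref{thm:stateTruncation}).
\item The restricted source inherits the derivative bounds.
\item Its initial condition has norm at most $C_{\max}$.
\end{itemize}
The cost is $\tilde{\mathcal{O}}((N_*s+\mathsf{T}_{\boldsymbol{\chi}})(\alpha+\omega_{\boldsymbol{\chi}})t\log^2(\cdot))$, using $t-t_0\le t$. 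By assumption 3, $\mathsf{T}_{\boldsymbol{\chi}}$ contributes an extra $N_*\log(\cdot)$. Since $N_*s\log\ge N_*$, this yields $N_*s\log^3(\cdot)$ overall. Substituting $N_*$ gives exactly the stated bound. Finally, I would add $\mathcal{O}(|\mathcal{W}|^2)=\mathcal{O}(1)$ operations to evaluate $\mathbf{v}^\dagger\mathbf{Q}\mathbf{v}$.

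The main obstacle is the careful inversion of the Lambert-$W$ radius into the clean additive form $\alpha t+\log(\cdot)$, uniformly in both regimes of $c/a$. A secondary obstacle is checking that the error of the truncated simulation composes with the truncation error without amplification. For the latter I would rely on the non-expansiveness of $e^{-\mathbf{C}_{\mathcal{V}}\tau}$ and the triangle inequality.
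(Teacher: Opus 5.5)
Your proposal is correct and follows the paper's proof essentially step for step. Like the paper, you reduce to the error in $\Pi_{\mathcal{W}}\boldsymbol{\psi}(t)$, split the budget between \Cref{thm:stateTruncation} and \Cref{lem:classicalSim} run on the principal submatrix $\mathbf{C}_{\mathcal{V}}$, bound $c/W(c/a)=\mathcal{O}(a+c)$, and absorb $\mathsf{T}_{\boldsymbol{\chi}}$ into the third logarithm; the only differences are cosmetic. For the Lambert-$W$ step, the paper gets $x\le a+c$ directly from $c=x\log(x/a)\ge x-a$ instead of your two-case argument. For the error split, the paper's choice $\tilde{\epsilon}=\sqrt{C_{\max}^2+\epsilon}-C_{\max}$ also covers the degenerate regime $C_{\max}\lesssim\sqrt{\epsilon}$, whereas your $\epsilon'=\Theta(\epsilon/C_{\max})$ would need to be capped at $\mathcal{O}(\sqrt{\epsilon})$ there.
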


\begin{proof}
    Our strategy is to use the truncation bounds provided previously to show that the expectation value of $\mathbf{Q}$ can be computed over a lower-dimensional space of dimension $d$.  We will then truncate the space to the minimum dimension and solve the differential equation directly in the space.  Our algorithm then computes the expectation value of $\mathbf{Q}$ from the resulting state $\boldsymbol{\psi}(t)$.
    
    Let $\tilde{\epsilon} = \sqrt{C_{\max}^2 + \epsilon} - C_{\max}$. We allocate a target error of $\tilde{\epsilon}/2$ to the spatiotemporal state truncation governed by \Cref{thm:stateTruncation} and $\tilde{\epsilon}/2$ to the numerical discretization error of the classical solver governed by \Cref{lem:classicalSim}, where $C_{\max} = \|\boldsymbol{\psi}(0)\| + A_{\boldsymbol{\chi}} t \ge \max_\tau \|\boldsymbol{\psi}(\tau)\|$.
    
    From \Cref{thm:stateTruncation}, ensuring a spatiotemporal state truncation error of $\tilde{\epsilon}/2$ requires defining the memory horizon $t^* = \frac{1}{\lambda_{\min}} \log\big(\frac{8 C_{\max}}{\tilde{\epsilon}}\big)$ and simulating a subspace $\mathcal{V}$ defined by a distance $d_*$ surrounding $\mathcal{W}$, where
    \begin{equation}
        d_* = 2\Delta + \Delta \frac{c}{W(c/a)},
    \end{equation}
    with $a = e\|\mathbf{C}\|(t-t_0)$ and $c = \log\left(\frac{8 C_{\max} (1 + (t-t_0)\|\mathbf{C}_I\|)}{\tilde{\epsilon}}\right)$. 
    For strict error tolerances $\epsilon \to 0$, the parameter $c$ grows large. 
    For $a,c>0$, the identity $c=x\log(x/a)$ with $x=c/W(c/a)$ and the inequality $\log(x/a)\ge 1-a/x$ imply $c/W(c/a)\le a+c$. 
    Using $\|\mathbf{C}_I\|\le 2\alpha$, $\log(1+2\alpha(t-t_0))\le 2\alpha(t-t_0)$, and the definition of $\tilde{\epsilon}$ gives a uniform coarse bound.
    Therefore, the asymptotic truncation radius obeys
    \begin{equation}
    \begin{gathered}
        d_* \in \mathcal{O}\left( \Delta \alpha t + \Delta \log\left(\frac{C_{\max}}{\epsilon}\right) \right).
    \end{gathered}
    \end{equation}
    Substituting the global norm bound $C_{\max} \le \|\boldsymbol{\psi}(0)\| + A_{\boldsymbol{\chi}} t$, the asymptotic radius scales as
    \begin{equation}
    \begin{gathered}
    d_* \in \mathcal{O}\left( \Delta \alpha t + \Delta \log\left(\frac{\|\boldsymbol{\psi}(0)\| + A_{\boldsymbol{\chi}} t}{\epsilon}\right) \right).
    \end{gathered}
    \end{equation}
    
    Assuming the local subsystem $\mathcal{W}$ occupies a ball of constant radius $r_0 \in \mathcal{O}(|\mathcal{W}|^{1/D})$, the simulated region $\mathcal{V}$ has a total radius of $r_0 + d_*$. 
    The effective dimension of this truncated metric space evaluates to
    \begin{equation}
        N_{*} \in \mathcal{O}\big((r_0 + d_*)^D\big) = \mathcal{O}\big(|\mathcal{W}| + d_*^D\big).
    \end{equation}
    Because $|\mathcal{W}|$ is constant, the active simulated dimension scales as $N_{*} \in \mathcal{O}(d_*^D)$.
    
    We now evaluate the cost to classically simulate the exact projected state $\tilde{\boldsymbol{\psi}}_{\mathcal{V}}(t)$ over the interval $[t_0, t]$ of duration $(t-t_0)$ on this $N_{*}$-dimensional space to within an accuracy of $\tilde{\epsilon}/2$. 
    By \Cref{lem:classicalSim}, utilizing the global bounds to uniformly bound the state at $t_0$ via $\|\tilde{\boldsymbol{\psi}}_{\mathcal{V}}(t_0)\| \le \|\boldsymbol{\psi}(0)\| + A_{\boldsymbol{\chi}} t$ and $A_{\boldsymbol{\chi}_{\mathcal{V}}} \le A_{\boldsymbol{\chi}}$, the bit complexity is
    \begin{equation}
        \mathsf{T}_{\text{bit}} = \tilde{\mathcal{O}}\left( (N_{*} s + \mathsf{T}_{\boldsymbol{\chi}}) (\alpha + \omega_{\boldsymbol{\chi}}) (t-t_0) \log^2\left(\frac{\|\boldsymbol{\psi}(0)\| + A_{\boldsymbol{\chi}} t}{\tilde{\epsilon}}\right) \right).
    \end{equation}
    Furthermore, under the assumption that the cost to evaluate the local source and its derivatives scale at most linearly with the active dimension, we have $\mathsf{T}_{\boldsymbol{\chi}}\in \Theta\big(N_* \log\big((\|\boldsymbol{\psi}(0)\| + A_{\boldsymbol{\chi}} t)/\epsilon\big)\big)$.  Thus $N_*s + \mathsf{T}_{\boldsymbol{\chi}}\in \mathcal{O}\big(N_* s \log\big((\|\boldsymbol{\psi}(0)\| + A_{\boldsymbol{\chi}} t)/\epsilon\big)\big)$.
    
    Factoring out the parameter $\Delta^D$ from the exponentiated radius and substituting $N_{*} \in \mathcal{O}(d_*^D)$ yields a final boolean scaling of
    \begin{equation}
        \mathsf{T}_{\text{bit}} = \tilde{\mathcal{O}}\left( s \Delta^D (\alpha + \omega_{\boldsymbol{\chi}}) t \left( \alpha t + \log\left(\frac{\|\boldsymbol{\psi}(0)\| + A_{\boldsymbol{\chi}} t}{\epsilon}\right) \right)^D \log^3\left(\frac{\|\boldsymbol{\psi}(0)\| + A_{\boldsymbol{\chi}} t}{\epsilon}\right) \right),
    \end{equation}
    which bounds the global computational complexity to estimate the state within an error of at most $\tilde{\epsilon}/2$.

    If the error in the resultant state $\tilde{\boldsymbol{\psi}}(t)$ compared to the exact state $\boldsymbol{\psi}(t)$ accumulates at most $\tilde{\epsilon}/2$ from spatiotemporal truncation and $\tilde{\epsilon}/2$ from numerical discretization, the projected state error $E = \|\Pi_{\mathcal{W}}(\tilde{\boldsymbol{\psi}}(t) - \boldsymbol{\psi}(t))\|$ by the triangle inequality is bounded by $\tilde{\epsilon} = \sqrt{C_{\max}^2 + \epsilon} - C_{\max}$. Because $\mathbf{Q} = \Pi_{\mathcal{W}} \mathbf{Q} \Pi_{\mathcal{W}}$ and $\|\mathbf{Q}\|\le 1$, the final observable error evaluates to 
    \begin{align}
        |\tilde{\boldsymbol{\psi}}^\dagger(t) \mathbf{Q} \tilde{\boldsymbol{\psi}}(t) - \boldsymbol{\psi}^\dagger(t) \mathbf{Q} \boldsymbol{\psi}(t)| &\le \|\Pi_{\mathcal{W}}(\tilde{\boldsymbol{\psi}}(t) - \boldsymbol{\psi}(t))\| \big(\|\Pi_{\mathcal{W}}\tilde{\boldsymbol{\psi}}(t)\| + \|\Pi_{\mathcal{W}}\boldsymbol{\psi}(t)\|\big) \nonumber \\
        &\le E \big(2\|\boldsymbol{\psi}(t)\| + E\big) \nonumber \\
        &\le \tilde{\epsilon} \big(2 C_{\max} + \tilde{\epsilon}\big) = \epsilon. 
    \end{align}
\end{proof}

\section{\BQP-completeness}
\label{app:BQP}

Here we consider the following decision problem for a system of dissipative coupled classical oscillators which aims to determine whether most of the energy of a system of coupled oscillators has dissipated or not.  While \BQP-completeness has been demonstrated for undamped systems of harmonic oscillators~\cite{babbush2023exponential}, such arguments are sensitive to the encoding of the solution used which differs in our setting.  Further, relying on such a proof strategy would at best show classical hardness of such simulations in cases where the dynamics is non-dissipative and as such any simulation problem that contains non-negligible damping would not necessarily be seen to be computationally hard. Here we provide a decision problem that will be appropriate for the \BQP-completeness argument that is not only adapted to our oscillator encoding but also includes non-negligible damping.

\begin{problem}[Dissipative oscillators decision problem]
    Using the notation and encoding introduced in~\Cref{sec:embed}, consider a system of $M = 2^n$ coupled classical oscillators with $1$ Maxwell body per connection, i.e.~$C=1$, and no body force sources, i.e.~$\mathbf{b}(t) = \bzero$. Each mass has mass equal to $1$ and is coupled to at most $4$ other masses. Each spring constant is further promised to be  upper bounded by $4$. 
    All springs are undamped in the sense that springs and dashpots only ever appear in parallel but not in series. However, there are no dashpots between pairs of masses. All masses have a spring connecting them to ground and a $1/\mathrm{polylog(M)}$-fraction of the ground connections also feature a dashpot of viscosity $\eta \in \Theta \lb \poly(n) \rb$ in parallel with the spring. 
    We are given $\poly(n)$-sized quantum circuits to efficiently implement the oracles for the spring constants, the topology matrix, and the ground-dashpot viscosities.
    The initial conditions are such that all masses are initially in their rest positions, the first mass has velocity $+1$, the second mass has velocity $-1$ and all other masses do not have any initial velocity.
    The problem is to decide whether after time $T \in \cO \lb \poly(n) \rb$, $\norm{e^{-\bC T}\ket{\psi_0}}^2$ is at least $2/3$ or at most $1/3$, under the promise that one of these holds.
\label{bqp_problem_diss}
\end{problem}

We show \BQP-completeness for the above problem by reduction from a standard \BQP-complete problem where we consider a $\poly(n)$-sized quantum circuit on $n$ qubits with initial state $\ket{0}^{\otimes n}$ and the task is to decide whether the probability of measuring the first qubit in the state $\ket{1}$ is at least $2/3$ or at most $1/3$.

\begin{theorem}[\BQP-completeness of dissipative coupled classical oscillators]
    The dissipative oscillators decision problem described in Problem~\ref{bqp_problem_diss} is \BQP-complete.
\end{theorem}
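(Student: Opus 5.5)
We prove membership in \BQP{} and \BQP-hardness separately.

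\textbf{Containment.} We apply the \ac{LCHS} block encoding of \Cref{th:_lchs_op_applied} to the given instance. The ingredients are as follows.
\begin{itemize}
\item Masses equal $1$, spring constants are at most $4$, each mass has degree at most $4$, and $C=1$. Hence $\alpha_{\mathbf{H}}=\mathcal{O}(1)$.
\item The dissipator is diagonal on the grounded masses, so $\alpha_{\mathbf{L}}=\poly(n)$.
\item The oracles are $\poly(n)$-sized circuits by assumption.
\end{itemize}
Therefore $e^{-\bC T}$ with $T=\poly(n)$ has an $\mathcal{O}(1)$-normalized block encoding costing $\poly(n)$ gates. The initial state $\ket{\psi_0}$ is $(\ket{1}-\ket{2})/\sqrt2$ in the velocity sector, which is trivially preparable. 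Amplitude estimation, as in \Cref{th:energy_estimation} with $\mathcal{S}$ the full space, then estimates $\norm{e^{-\bC T}\ket{\psi_0}}^2$ to constant additive error with $\mathcal{O}(1)$ repetitions. This separates the $2/3$ and $1/3$ cases.

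\textbf{Hardness.} Given a circuit $V=U_L\cdots U_1$ over $\{X,\text{Hadamard},\text{Toffoli}\}$, we follow \cite{babbush2023exponential}. We pad the circuit with identities, so the clock path has length $L'=\poly(n)$ and the evolution spreads over a large fraction of clock positions whose computational state is $V\ket{0}$. We then build a real, sparse, signed version of $\bH_\cl$ as $\mathbf{K}^{1/2}\mathbf{T}\mathbf{M}^{-1/2}$: the Hadamard's negative sign is realized through edge orientation in $\mathbf{D}_0$, and ground springs are used to make $\mathbf{T}^\dagger\mathbf{K}\mathbf{T}$ equal to $\lambda_0\mathbf{I}-\bH_\cl'$ with $\lambda_0$ constant. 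We use the Hermitian dilation so that $\mathbf{H}$ has spectrum $\pm\sqrt{\lambda_0-\mathrm{spec}(\bH_\cl')}$. The initial condition $v_1=-v_2=1$ encodes the first clock state tensored with $\ket{0\cdots0}$, as in that reference.

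We add ground dashpots of viscosity $\eta=\poly(n)$ only on masses whose index has clock value past $L$ and answer qubit $\ket{1}$. This is a $1/\mathrm{polylog}$ fraction after padding with $\log$-many ancilla copies.

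The invariant subspaces decouple by answer value after the final Toffoli. In the NO case, the amplitude that reaches the dissipative region is at most $1/3$ in norm squared, which gives $\norm{e^{-\bC T}\psi_0}^2\ge 2/3$ once we account for the leakage bound and the fact that $\mathbf{L}\succeq0$ never increases the norm. In the YES case we need a decay estimate. We write $-\bC$ restricted to the YES sector through its left and right eigenvectors, and we show two things:
\begin{enumerate}
\item $\mathrm{Re}\,\lambda_j\le -\Omega(1/\poly(n))$ for all eigenvalues, by perturbation of the path-graph spectrum with the damped tail, using a quadratic-form argument $\mathrm{Re}\,\lambda=-\vec v^\dagger\mathbf{L}\vec v/\|\vec v\|^2$.
\item The eigenvector overlap of the undamped path modes with the damped segment is $\Omega(1/\poly)$.
\end{enumerate}
We then take $T=\poly(n)$ so that $e^{\mathrm{Re}\lambda T}$ times the condition factor $\max_j\|\vec{\mathcal R}_j\|\|\vec{\mathcal L}_j\|/|\vec{\mathcal L}_j^\top\vec{\mathcal R}_j|$ is at most $1/6$. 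This yields $\norm{e^{-\bC T}\psi_0}^2\le1/3$.

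\textbf{Main obstacle.} The hardest part is the non-normal analysis, namely lower-bounding $|\vec{\mathcal L}_j^\top\vec{\mathcal R}_j|$ by $1/\poly(n)$. With $\eta=\poly(n)$ strong, there is overdamping, and modes may become nearly defective close to exceptional points. We would first choose $\eta$ and the tail length so that the damped tail acts as an effective absorbing boundary. We would then treat the block in Schur-complement form, so that the effective non-Hermitian path operator is tridiagonal with a complex boundary term. Its eigenvectors are explicit sinusoids, and the bilinear norms can be computed directly. If exceptional points are unavoidable, we would fall back to a pseudospectral or resolvent bound on $\|e^{-\bC T}\|$ restricted to the YES sector instead of the eigen-expansion.
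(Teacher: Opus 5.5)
Your containment argument is fine, but the hardness construction fails at the Hadamard sign. Edge orientation in $\mathbf{D}_0$ is pure gauge. Reversing an edge flips one row of $\mathbf{D}_0$ and the sign of its force $\mathbf{f}_e$, and leaves $\mathbf{D}_0^\dagger\mathbf{K}_e\mathbf{D}_0$ and the energy norm unchanged. So $\bT^\dagger\bK\bT$, the only matrix the velocity dynamics see, is always a grounded weighted Laplacian with off-diagonal entries $-k_e\le 0$. The shifted clock Hamiltonian $4\one-\bH_\cl$ (or any constant shift of $-\bH_\cl$) has the off-diagonal entry $+1/\sqrt2$ coming from $\bra{1}H\ket{1}=-1/\sqrt2$. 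Each Hadamard also creates a four-cycle in the interaction graph with an odd number of positive off-diagonals, so not even a $\pm1$ rescaling of mass coordinates can repair it.

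The missing idea is the resource-state trick the paper takes from Babbush et al. Replace each Hadamard by the entrywise nonnegative $H_{\mathrm{enc}}=\tfrac{1}{\sqrt2}\bigl(\begin{smallmatrix}\one&\one\\ \one&X\end{smallmatrix}\bigr)$ on the target and an extra ancilla. This acts as a Hadamard on the invariant subspace where the ancilla is $\ket{-}$. It also makes $4\one-\sum_l(\ketbra{l+1}{l}+\ketbra{l}{l+1})\otimes W_l$ wireable with inter-mass springs plus ground springs of stiffness $4-\sum_{n\neq m}|h''_{mn}|$. That ancilla is exactly what the initial condition $v_1=-v_2=1$ encodes, namely $\ket{1}_{\mathrm{clock}}\ket{0}^{\otimes q}\ket{-}$. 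Without it, your reading of this initial condition as the first clock state tensored with $\ket{0\cdots0}$ is wrong.

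Your YES-sector plan also uses the wrong expansion. For $\eta=\poly(n)\gg\norm{\bH}$ the damped sites are overdamped and effectively pinned: a nearly reflecting wall, not an absorbing boundary. The slow eigenvectors carry only $\cO(\eta^{-2})$ velocity weight there. Inserting undamped path-mode overlaps into $\mathrm{Re}\,\lambda=-\vec v^\dagger\mathbf{L}\vec v/\norm{\vec v}^2$ (first-order perturbation in $\mathbf{L}$) is therefore invalid; it predicts rates growing with $\eta$, while the true slow rates scale like $1/\eta$.

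The paper proceeds differently in four steps:
\begin{itemize}
\item It damps only clock site $L+1$ with answer bit $\ket{1}$. This is already a $1/(2(L+1))$, i.e.\ $1/\mathrm{polylog}(M)$, fraction with no padding. Damping your whole padded tail would make the fraction constant, and extra ancillas do not change that ratio.
\item It rotates by $\bS$, so the damping becomes $\eta\,\bPi_{L+1}\otimes\bPi_y$ and the YES sector splits off exactly.
\item It expands $\sin((L+2)\omega)=-\lambda\eta\sin((L+1)\omega)$ in $\delta=1/\eta$ around the clamped chain $\sin((L+1)\omega)=0$. It takes $\eta=\eta_0(L+1)^3$ so the implicit-function remainders are uniform.
\item This gives $2L$ modes with $-\mathrm{Re}\,\lambda\ge\Omega(1/(\eta L^3))$, one near $-(2+\sqrt3)/\eta$ and one near $-\eta$. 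All are distinct, so there is no exceptional point. The bilinear forms $\vec{\boell}_j^\top\vec{\br}_j$ then follow from explicit sinusoids, giving $T=\cO(\eta L^3\log L)$.
\end{itemize}
Finally, you must boost the circuit's success probability, e.g.\ to $1-\epsilon'$ with $\epsilon'\le 1/6$. If $p_{\mathrm{NO}}$ can be $1/3$ in a YES instance, $p_{\mathrm{NO}}+p_{\mathrm{YES}}\norm{\boldsymbol\psi_{\mathrm{YES}}(T)}^2\le 1/3$ is unattainable.
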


\begin{proof}
    This is a proof in two parts. In the first part, we show how to map the Feynman-Kitaev clock Hamiltonian to a system of coupled oscillators. This allows us to reduce the standard \BQP-complete problem of deciding whether the probability of measuring the first qubit of a poly-sized quantum circuit in the state $\ket{1}$ is least $2/3$ or at most $1/3$, to the problem of deciding whether the total energy of a system of coupled classical oscillators has fallen below a fixed threshold.

    In the second part, we prove bounds on the eigenvalues of a system of dissipative coupled classical oscillators. Additionally, we prove bounds on the left and right eigenvectors of the non-normal matrix governing the time evolution of the dissipative oscillators. These bounds are needed in order to argue about the simulation time, and hence computational complexity, required to distinguish a YES from a NO instance.

    \textbf{Part I: Feynman-Kitaev clock Hamiltonian and mapping to oscillators}

    This part of our proof is similar to that of Ref.~\cite{babbush2023exponential}. In particular, we also utilize the Feynman-Kitaev construction for embedding a quantum circuit in a Hamiltonian. We consider a universal gate set consisting of the Hadamard and the Toffoli gate (see~\cite{aharonov2003simple} for an elementary proof of universality). Following Ref.~\cite{babbush2023exponential}, we also include the $X$ gate in our gate set for simplicity. Since Toffoli and $X$ can be combined to construct $\mathtt{SWAP}$ gates, we can then assume, without loss of generality, that all Hadamard gates act on the same qubit. Further, we may assume without loss of generality that the quantum circuit we wish to simulate circuit does not contain consecutive Hadamard gates since the Hadamard gate is self-inverse, i.e.~$H^2 = \one$. Note that all three gates, Hadamard, Toffoli and $X$, are Hermitian. Now consider a quantum circuit on $q$ qubits,
    \begin{equation}
        V := U_L \cdots U_1,
    \end{equation}
    where each $U_l$ with $l \in \{1, 2, \dots, L\}$ is either a Hadamard, a Toffoli or an $X$ gate and the total number of gates is polynomial in the number of qubits, i.e. $L \in \cO \lb \poly(q) \rb$.
    The success probability of an arbitrary \BQP circuit, which accepts a YES instance with probability at least $2/3$, can always be boosted to $1 - 2^{-\Theta(k)} =  1 - \epsilon'$ for some $\epsilon' \geq 0$ by running $k$ copies of the circuit in parallel and taking a majority vote on the output of the first qubits of all copies. For notational simplicity, we will assume that $V$ is already such a boosted circuit. For our purposes, it will suffice to have $\epsilon' \leq 1/6$.
    Recall the standard Feynman-Kitaev clock Hamiltonian for encoding the quantum circuit $V$:
    \begin{equation}
        \bH_\cl = \sum_{l=1}^L \lb \ketbra{l+1}{l} + \ketbra{l}{l+1} \rb \otimes U_l,
    \end{equation}
    where the first qubit register is the clock register and the second qubit register is the computational register.
    Without loss of generality, we can assume that the last gate $U_L$ performs a Toffoli gate targeted onto the last qubit of the computational register, which we will call the ``answer qubit''. Further, we can assume that no other gate acts on this answer qubit. $U_L$ leaves the answer qubit in the state $\ket{0}$ for NO instances with probability at least $1 - \epsilon' \geq 5/6$ and flips it to $\ket{1}$ with probability at least $1 - \epsilon' \geq 5/6$ for YES instances. 

    Let us now discuss how to map the clock Hamiltonian $\bH_\cl$ to coupled classical oscillators using the encoding from the main text, which differs somewhat from the original encoding used in Ref.~\cite{babbush2023exponential}.
    Note that all matrix elements of the clock Hamiltonian $\bH_\cl$ are in the set $\{-1/\sqrt{2}, 0, 1/\sqrt{2}, 1 \}$. We cannot directly encode $\bH_\cl$ into a system of coupled oscillators due to two hurdles: $\bH_\cl$ is not positive semi-definite and some of its off-diagonal entries are positive. A direct encoding would therefore require negative spring constants. The first issue can be fixed by considering the following shifted clock-Hamiltonian instead: 
    \begin{equation}
        \bH_\cl' := 4\one - \bH_{\cl},
    \end{equation}
    which is positive definite and 4-sparse. It is 4-sparse as opposed to 5-sparse since all our Hadamard gates act on the same qubit and we do not have consecutive Hadamard gates in our circuit as explained above. Despite being positive definite now, $\bH_\cl'$ can still have positive off-diagonal entries due to some of the $U_l$'s being Hadamard gates.
    Following Ref.~\cite{babbush2023exponential} we can circumvent this problem by utilizing a resource state. Specifically, we add one ancilla qubit initialized to $\ket{-} = \frac{1}{\sqrt{2}} \lb \ket{0} - \ket{1} \rb$ to our circuit. Whenever we would like to implement a Hadamard gate, we instead apply the following matrix to the target qubit and the ancilla qubit:
    \begin{equation}
        H_{\mathrm{enc}} := \frac{1}{\sqrt{2}} 
        \begin{pmatrix}
            \one_2 & \one_2 \\
            \one_2 & X
        \end{pmatrix}.
    \end{equation}
    This matrix is not unitary but on the subspace where the additional ancilla qubit is in the state $\ket{-}$, it acts like a Hadamard gate.
    Thus, we ultimately work with the following modified clock Hamiltonian:
    \begin{equation}
        \bH_\cl'' := 4\one - \sum_{l=1}^L \lb \ketbra{l+1}{l} + \ketbra{l}{l+1} \rb \otimes W_l,
    \end{equation}
    where $W_l$ is in the set $\{ X \otimes \one, \mathtt{Toff} \otimes \one,  H_{\mathrm{enc}} \}$, where the identity attached to $X$ and $\mathtt{Toff}$ indicates that they act trivially on the ancilla qubit used for $H_{\mathrm{enc}}$. By construction, applying $\bH_\cl''$ to a state of the form $\ket{\psi}_{q}\ket{-}$ is identical to applying $\bH_\cl' \otimes \one$ to the same state $\ket{\psi}_{q}\ket{-}$. 

    Let us now discuss how to map $\bH_\cl''$ to a system of coupled oscillators. First, we set all masses equal to $1$ for simplicity, so the mass matrix $\mathbf{M}$ is just an identity matrix.
    Following the notation in the main text, we will then represent $\bH_\cl''$ as $\mathbf{T}^\dagger \mathbf{K} \mathbf{T}$. It may be helpful to recall that $\mathbf{T}^\dagger \mathbf{K} \mathbf{T}$ is just a convenient decomposition of a weighted undirected graph Laplacian. In particular, all off-diagonal elements of a graph Laplacian describing a graph with positive edge weights are non-positive.
    In our case, we consider only one Maxwell body per connection, so $\bT = \bT_0$ and $\bK = \bK_1$. Note that $\mathbf{T}^\dagger \mathbf{K} \mathbf{T}$ is the upper left block of $\mathbf{H}^2$ where $\bH$, given in Eq.~\eqref{H_expanded}, captures the oscillatory part of the dynamics. For convenience, we give the block form of $\bH$ again below:
    \begin{equation}
        \mathbf{H} = 
        \begin{bmatrix}
            \mathbf{0} & -i\mathbf{T}^\dagger\mathbf{K}^{1/2} \\
            i\mathbf{K}^{1/2}\mathbf{T} & \mathbf{0}
        \end{bmatrix}.
    \end{equation}
    This means that we do not directly map $\bH_\cl''$ to the oscillatory part, $\bH$, of the generator matrix $-\bC$. The reason is that the form of $\bH$ itself is too restricted to be able to directly emulate $\bH_\cl''$. For example, note that $\bH$ has a very specific block-off-diagonal structure which does not directly capture the structure of $\bH_\cl''$. Instead, one can think of $\bH$ as a square root of $\bH_\cl''$.

    Now, to map $\bH_\cl''$ to $\mathbf{T}^\dagger \mathbf{K} \mathbf{T}$, note that $\bH''_\cl$ acts on a space of dimension $(L+1) \times 2^{q+1}$ where the extra qubit is coming from the resource state $\ket{-}$ used for encoding the Hadamard gate. Let us now show how $\bH_\cl''$ can be encoded in $M = 2^n$ oscillators where
    \begin{equation}
        n = \left\lceil \log_2 \lb (L+1) \times 2^{q+1} \rb \right\rceil.
    \end{equation}
    If $M>(L+1)\times2^{q+1}$, we can pad the oscillator system with stationary oscillators. Specifically, we add $M-(L+1)\times2^{q+1}$ oscillators of mass $1$, each connected only to ground by a spring of stiffness $4$, with no dashpot and zero initial displacement and velocity. These added oscillators remain at rest. All subsequent matrices, vectors, and oscillator labels in the reduction are restricted to the original invariant component of $(L+1)\times2^{q+1}$ oscillators.
    It will be useful to label the oscillators by a tuple $(l,r)$ with $l \in \{1, 2, \dots, L+1\}$ and $r \in \{1, 2, \dots, 2^{q+1} \}$.
    We start with the off-diagonal elements. Due to the use of the encoded Hadamard gate, $H_{\mathrm{enc}}$, all off-diagonal elements of $\bH''_\cl$ are non-positive. Thus, any off-diagonal matrix element $h''_{mn}$ of $\bH''_\cl$ can be directly encoded as an inter-mass spring with stiffness $-h''_{mn} \geq 0$ between masses $m$ and $n$. 
    Next, we deal with the diagonal terms $h''_{mm}$ which are all equal to $4$. This can be achieved by connecting the mass labeled $m$ with a spring of stiffness $4 - \sum_{n\neq m} |h''_{mn}| \geq 0$ to the ground.
    Note that $\sum_{n\neq m} |h''_{mn}| \leq 1 + 2/\sqrt{2} < 4$ due to $\bH''_\cl$ being 4-sparse as discussed before.
    With this mapping, all masses are connected to ground via one spring each. 
    
    So far, all connections between masses or masses and ground consist only of single springs without any dashpots.
    The last step is to add dissipation via carefully chosen dashpots. We only place dashpots in parallel to certain ground springs. Specifically, we place dashpots on masses with label $l=L+1$, corresponding to oscillators who are representing the last state of the clock register, and simultaneously have $r$ corresponding to the answer qubit being in the state $\ket{1}$.
    These viscosities can be computed efficiently by testing whether the clock label is $L+1$ and the answer bit is $1$, returning $\eta$ when both conditions hold and zero otherwise.

    \begin{figure}[tbp]
    \centering
    \begin{tikzpicture}[scale=0.78, transform shape]
    
    \ctikzset{
        bipoles/length=1.2cm,
        bipoles/generic/width=0.60,
        bipoles/generic/height=0.25
    }
    
    \draw[pattern=north east lines, draw=none] (-0.5, -0.5) rectangle (10.5, 9.5);
    \fill[white] (0, 0) rectangle (10, 9);
    \draw[thick] (0, 0) rectangle (10, 9);
    
    \node[minimum size=1.2cm, draw, thick, fill=white] (m1) at (2.5, 6.0) {\large $m_1$};
    \node[minimum size=1.2cm, draw, thick, fill=white] (m2) at (5.5, 6.5) {\large $m_2$};
    \node[minimum size=1.2cm, draw, thick, fill=white] (m3) at (7.5, 3.5) {\large $m_3$};
    \node[minimum size=1.2cm, draw, thick, fill=white] (m4) at (4.0, 2.5) {\large $m_4$};
    
    \draw ([yshift=0.4cm]m1.west) to[damper, l_=$\eta_{g,0}$] (0, 6.4);
    \draw ([yshift=-0.4cm]m1.west) to[spring=$k_{1g}$] (0, 5.6);
    
    \draw (m2.north) to[spring=$k_{2g}$] (5.5, 9.0);
    
    \draw ([yshift=0.4cm]m3.east) to[damper=$\eta_{g,0}$] (10.0, 3.9);
    \draw ([yshift=-0.4cm]m3.east) to[spring, l_=$k_{3g}$] (10.0, 3.1);
    
    \draw (m4.south) to[spring=$k_{4g}$] (4.0, 0);

    \draw (m1) to[spring=$k_{12}$] (m2);
    \draw (m2) to[spring=$k_{23}$] (m3);
    \draw (m3) to[spring=$k_{34}$] (m4);
    \draw (m4) to[spring=$k_{14}$] (m1);
    
    \draw (m4) to[spring=$k_{24}$, pos=0.4] (m2);
    
    \end{tikzpicture}
    \caption{Example mapping of \BQP-hard instances to a network of dissipative oscillators. Dissipation only happens as viscous damping through ground connectors.}
    \label{fig:bqp-hard_networks}
    \end{figure}

    \textbf{Part II: Spectral analysis}
    
    To simplify the analysis, we  consider the second-order differential equation for the position/displacement vector $\mathbf{x} \in \mathbb{C}^{(L+1)2^{q+1}}$ which can be derived from Eqs.~\eqref{velocity_de} and \eqref{eq:maxwell_element_short} displayed below again for convenience:
    \begin{align}
        \frac{\mathrm{d} \mathbf{v}(t)}{\mathrm{d} t} &= \mathbf{M}^{-1}\Big(\mathbf{b}_0(t) - \mathbf{T}_0^{\dagger}\big(\boldsymbol{\eta}_{0}\mathbf{T}_0\mathbf{v}(t) + \sum_{c=1}^{C}{\mathbf{f}}_{c}(t)\big)\Big)  \\
        \frac{\mathrm{d} {\mathbf{f}}_{c}(t)}{\mathrm{d} t} &= {\mathbf{K}}_{c}  (\mathbf{T}_0 \mathbf{v}(t) + {\mathbf{b}}_{c}(t)) - \boldsymbol{\Lambda}_c{\mathbf{f}}_{c}(t).
    \end{align}
    In our case, all masses are equal to $1$, $\mathbf{b}_0 = \mathbf{b}_1 = \mathbf{0}$ (no source terms), $C=1$ (only one Maxwell element) and $\boldsymbol{\Lambda_{c}} = \mathbf{0}$ (no dashpots in series with springs). For simplicity, we will also assume that $\bx(0) = \bzero$ and $\mathbf{f}(0) = \bzero$.
    Integrating the second differential equation with these simplifications, we then have that
    \begin{equation}
        \mathbf{f}(t) = \bK \bT \bx(t).
    \end{equation}
    Plugging this into the first differential equation, we obtain the following second-order differential equation for the position vector $\bx$:
    \begin{align}
         \frac{\mathrm{d}^2 \mathbf{x}(t)}{\mathrm{d} t^2} &= \frac{\mathrm{d} \mathbf{v}(t)}{\mathrm{d} t} =  -\mathbf{T}^{\dagger} \boldsymbol{\eta}_{0}\mathbf{T}\mathbf{v}(t) -  \mathbf{T}^{\dagger}{\mathbf{f}}(t) \nn
         &= -\boldsymbol{\eta}_{\mathbf{M}}  \frac{\mathrm{d} \mathbf{x}(t)}{\mathrm{d} t} - \mathbf{T}^{\dagger} \mathbf{K}  \mathbf{T} \mathbf{x}(t).\label{eq:2orderdiffeq}
    \end{align}
    Ultimately, the goal is to find the eigenvalues associated with the above second-order differential equation. As we will show, all eigenvalues in the YES sector have a negative real part. We are interested in determining the largest real part (i.e.~closest to $0$) of any eigenvalue in this sector since this determines the slowest decay rate which in turn dictates the amount of time evolution necessary to distinguish a YES from a NO instance.

    To simplify the problem, first recall the specific form of $\boldsymbol{\eta}_{\mathbf{M}}$ from~\eqref{eq:etaM} and $\mathbf{T}^{\dagger} \mathbf{K} \mathbf{T}$ corresponding to the encoding of the clock Hamiltonian~$\bH_{\cl}''$:
    \begin{align}
        \boldsymbol{\eta}_{\mathbf{M}} &= \eta \ketbra{L+1}{L+1} \otimes \one_{2^{q-1}} \otimes \ketbra{1}{1} \otimes \one_2 \\
        \mathbf{T}^{\dagger} \mathbf{K}  \mathbf{T} &= 4 \one - \sum_{l=1}^L \lb \ketbra{l+1}{l} + \ketbra{l}{l+1} \rb \otimes W_l.
    \end{align}
    Also recall that the $W_l$ are not necessarily unitary on the entire space due to the encoded Hadamard gate. They are unitary however on the subspace where the ancilla qubit is in the state $\ket{-}$. For the spectral analysis, it does not matter whether we analyze $\bH_{\cl}'$ or $\bH_{\cl}''$ since they have the same spectrum on the relevant subspace. For simplicity we therefore choose to work with $\bH_{\cl}'$ (which acts on one qubit less than $\bH_{\cl}''$) going forward. Note that
    \begin{align}
        \norm{\boldsymbol{\eta}_{\mathbf{M}}} &= \eta \in \Theta \lb \poly(n) \rb \\
          \norm{\bH_{\cl}'} &= \norm{4 \one - \sum_{l=1}^L \lb \ketbra{l+1}{l} + \ketbra{l}{l+1} \rb \otimes U_l}\le 6.
    \end{align}
    In particular, we consider $\eta \gg \|\mathbf{H}'_{\rm cl}\|$.

    Further, define the following matrices:
    \begin{align}
        \bS &:= \sum_{l=0}^L \ketbra{l+1}{l+1} \otimes U_l \cdots U_0  \\
        \bJ &:= \sum_{l=1}^L \lb \ketbra{l+1}{l} + \ketbra{l}{l+1} \rb,
    \end{align}
    with $U_0 = \one$.
    Rotating $\bx$ by $\bS^\dagger$ and defining $\tbx := \bS^\dagger \bx$, we then have from~\eqref{eq:2orderdiffeq} that
    \begin{align}
         \frac{\mathrm{d}^2 \tbx(t)}{\mathrm{d} t^2} = - \eta \lb \bPi_{L+1} \otimes \bPi_y \rb \frac{\mathrm{d} \tbx(t)}{\mathrm{d} t} - \lb 4 \one - \bJ \otimes \one \rb \tbx(t),
    \label{second_order_ODE}
    \end{align}
    where
    \begin{align}
        \bPi_{L+1} &:= \ketbra{L+1}{L+1} \\
        \bPi_y &:= U_1 \cdots U_L \lb \one_{2^{q-1}} \otimes \ketbra{1}{1} \rb U_L \cdots U_1,
    \end{align}
    i.e.~$\bPi_{L+1}$ is the projector onto the last clock state and $\bPi_y$ is the projector onto the set of initial states whose answer qubit is set to $\ket{1}$ at the end of the computation (i.e.~the YES sector). Since the YES and NO sectors remain invariant under time evolution, we can analyze both cases separately. More specifically, $\bPi_y$ and $\one$ commute, so we can choose the eigenspaces of $\bPi_y$, which are the YES and NO sectors, as a basis.
    For the NO sector, the dissipative term $- \eta \lb \ketbra{L+1}{L+1} \otimes \bPi_y \rb$ does not activate due to the projector $\bPi_y$ evaluating to $0$. Therefore, the dynamics on the NO sector are unitary.

    Let us now consider the YES sector. We use the standard ansatz for the normal modes of the clock register: $\mathbf{u}(t) = e^{\lambda t} \balpha$, where $\balpha \in \mathbb{C}^{(L+1)}$  is a vector of constant coefficients and $\lambda$ is an eigenvalue. This ansatz yields the following quadratic eigenvalue equation:
    \begin{align}
        \lambda^2 \balpha = - \lambda \eta \ketbra{L+1}{L+1}  \balpha - \lb 4\one - \sum_{l=1}^L \lb \ketbra{l+1}{l} + \ketbra{l}{l+1} \rb \rb \balpha.
    \end{align}
    We can now attempt to solve the above quadratic eigenvalue problem entry-by-entry for the components $\alpha_k$:
    \begin{align}
        k=1:& \quad \lambda^2 \alpha_1 = - 4\alpha_1 + \alpha_2 \\
        2 \leq k \leq L:& \quad  \lambda^2 \alpha_k = -4 \alpha_k +\alpha_{k+1} + \alpha_{k-1} \\
        k=L+1:& \quad \lambda^2 \alpha_{L+1}  = - \lambda \eta  \alpha_{L+1} - 4 \alpha_{L+1} + \alpha_{L}. 
    \end{align}
    We use the following standard ansatz for eigenvalues and amplitudes of a 1d nearest neighbor chain: 
    \begin{align}
        \lambda^2 &= - (4 - 2 \cos(\omega)) ,\label{eigval_ansatz}\\
        \alpha_k &= \sin (k \omega) \label{eigvec_ansatz}.
    \end{align}
    This ansatz satisfies the boundary condition at $k=1$:
    \begin{equation}
        (\lambda^2+4) \alpha_1 = 2 \cos (\omega) \sin(\omega) = \sin (2\omega) = \alpha_2.
    \end{equation}
    Using the fact that 
    \begin{equation}
        \sin(a) + \sin(b) = 2 \sin \lb \frac{a+b}{2} \rb \cos \lb \frac{a-b}{2} \rb,
    \label{sine_add}
    \end{equation} 
    it is also straightforward to see that the ansatz solves the bulk equations ($2 \leq k \leq L$) since
    \begin{equation}
        \sin((k+1) \omega) + \sin((k-1)\omega) = 2 \sin(k\omega) \cos(\omega).
    \end{equation}
    So far, $\omega$ can be any complex number. The boundary condition at $k=L+1$ restricts the allowed values of $\omega$ to a finite set of modes $\{\omega_j\}_{j=1}^{2L+2}$.
    In particular, we require that
    \begin{equation}
        2\cos(\omega) \sin((L+1)\omega) = -\lambda \eta \sin((L+1)\omega) + \sin(L\omega).
    \end{equation}
    Using again Eq.~\eqref{sine_add}, we obtain the following transcendental equation for $\omega$:
    \begin{equation}
        \sin((L+2)\omega) = -\lambda \eta \sin((L+1)\omega).
    \end{equation}
    Throughout the following analysis, we choose $\eta = \eta_0 (L+1)^3$, where $\eta_0$ is a sufficiently large positive constant independent of $L$. As we will show, this choice guarantees that all constants implicit in the remainder estimates below are independent of $L$ and of the mode index $j$, thus allowing us to safely ignore the remainder terms.
    Defining $\delta := 1/\eta \ll 1$, we perform perturbation theory in terms of $\delta$ to find an expression for $\omega$. Specifically, we use the following power series ansatz:
    \begin{equation}
        \omega = \omega^{(0)} + \delta \omega^{(1)} + \delta^2 \omega^{(2)} + \dots
    \end{equation}
    The transcendental equation in terms of $\delta$ reads
    \begin{equation}
        \delta \sin((L+2)\omega) = -\lambda \sin((L+1)\omega).
    \label{transcendental}
    \end{equation}
    At $0$th order, the left-hand side is equal to $0$, so
    \begin{equation}
        0 = -\lambda \sin((L+1)\omega) = \pm \sqrt{-(4-2\cos(\omega))} \sin((L+1)\omega).
    \end{equation}
    The above can be satisfied by having either
    \begin{align}
        \sin((L+1)\omega) &= 0, \quad \text{or} \label{sin_sols}\\
        -\lambda^2 = 4-2\cos(\omega) &= 0. \label{cos_sol}
    \end{align}
    The first condition, Eq.~\eqref{sin_sols} yields $L$ solutions,

    \begin{equation}
        \omega_j^{(0)} = \frac{\pi j}{L+1}, \quad j \in \{1, 2, \dots, L \},
    \end{equation}
    which then gives the following $2L$ eigenvalues at $0$th order in $\delta$:
    \begin{equation}
        \lambda_{j, \pm}^{(0)} = \pm i \sqrt{4 - 2\cos \lb \omega_j^{(0)} \rb} = \pm i \sqrt{4 - 2\cos \lb \frac{\pi j}{L+1} \rb}.
    \end{equation}
    Note that these $0$th order eigenvalues are purely imaginary, meaning there is no dissipation at this order.
    At $1$st order, we have the following:
    \begin{align}
         \sin \lb (L+2)\omega_j^{(0)} \rb &= \sin \lb (L+2)\omega_j^{(0)} \rb \nn
         &= \sin  \lb \pi j + \frac{\pi j}{L+1} \rb \nn
         &= (-1)^j \sin  \lb \frac{\pi j}{L+1} \rb \nn
         &{=} -\lambda_{j, \pm}^{(0)} (-1)^j (L+1) \omega_{j}^{(1)} ,
    \end{align}
    where the last equality follows from noting that the quantity that we're computing is proportional to the first order shift in $\omega_j$.  Thus solving for the value of the shift implies  that
    \begin{equation}
        \omega_{j, \pm}^{(1)} = -\frac{\sin  \lb \frac{\pi j}{L+1} \rb}{(L+1) \lambda_{j, \pm}^{(0)}},
    \end{equation}
    which is purely imaginary due to $\lambda_{j, \pm}^{(0)}$ in the denominator.
    This leads to the following expression for the ($2L$ out of $2L+2$) eigenvalues up to $1$st order in $\delta$:
    \begin{equation}
    \begin{split}
        \lambda_{j, \pm} &= \pm i \sqrt{4 - 2 \cos \lb \omega_j^{(0)} + \delta  \omega_{j, \pm}^{(1)} \rb} + \cO \lb \delta^2 \rb \\
        &= \pm i \sqrt{4 - 2\cos \lb \omega_j^{(0)} \rb} \pm i \frac{\delta \omega_{j, \pm}^{(1)} \sin \lb \omega_j^{(0)} \rb}{\sqrt{4- 2 \cos \lb \omega_j^{(0)} \rb}} + \cO \lb \delta^2 \rb \\
        &= \pm i \sqrt{4 - 2\cos \lb \frac{\pi j}{L+1} \rb} - \frac{\delta \sin^2 \lb \frac{\pi j}{L+1} \rb}{(L+1) \lb 4 - 2 \cos \lb \frac{\pi j}{L+1} \rb \rb} + \cO \lb \delta^2 \rb.
    \end{split}
    \label{osc_modes_bound}
    \end{equation}
    In order to be able to safely ignore the remainder term, we need to ensure that the constants hidden in the expression $\cO \lb \delta^2 \rb$ do not scale badly with the other parameters $L$ and $j$. The analytic implicit function theorem provides this guarantee. To be more explicit, let us define the following rescaled correction to the $0$th order term of $\omega$:
    \begin{equation}
        \zeta := (L+1) \lb \omega - \omega^{(0)}_j \rb.
    \end{equation}
    Then we can rewrite the transcendental equation in Eq.~\eqref{transcendental} as follows:
    \begin{equation}
        \mathcal{F} \lb \delta, \zeta \rb := \delta \sin \lb \omega_j^{(0)} + \zeta + \frac{\zeta}{L+1} \rb + \lambda_{\pm} \lb \zeta \rb \sin \lb \zeta \rb = 0,
    \end{equation}
    where $\lambda_{\pm} = \pm i \sqrt{4 - 2 \cos(\omega)} =  \pm i \sqrt{4 - 2 \cos \lb \omega_j^{(0)} + \frac{\zeta}{L+1} \rb}$.
    Note that $\mathcal{F}$ is complex analytic in a fixed neighborhood around the point $(\delta, \zeta) = (0,0)$.
    Additionally, the derivative of $\mathcal{F}$ w.r.t.~$\zeta$ at the point $(\delta, \zeta) = (0,0)$ is equal to $\lambda_{\pm}(\omega_j^{(0)})$ and its absolute value is bounded away from $0$. Specifically,
    \begin{equation}
        \Bigg| \partial_\zeta \mathcal{F} \big|_{\delta = \zeta =  0} \Bigg| \geq \sqrt{2}.
    \end{equation}
    The analytic implicit function theorem therefore gives a common radius of convergence and uniformly bounded Taylor remainders. In particular,
    \begin{align}
        \zeta  &= - \delta \frac{\sin \lb \omega_j^{(0)} \rb}{\lambda_{j,\pm}^{(0)}} + \cO \lb \delta^2 \rb \\
        \implies \omega_{j,\pm} &= \omega_j^{(0)} - \delta \frac{\sin \lb \omega_j^{(0)} \rb}{(L+1) \lambda_{j,\pm}^{(0)}} + \cO \lb\frac{\delta^2}{L+1} \rb,
    \end{align}
    with the constants in the remainder term independent of $L$.
    This proves the uniform $\cO \lb \delta^2 \rb$ remainder in Eq.~\eqref{osc_modes_bound}. Since
    \begin{equation}
        \sin \lb \omega_j^{(0)} \rb \geq \frac{2}{L+1}, \quad 4 - 2 \cos \lb  \omega_j^{(0)} \rb \leq 6,
    \end{equation}
    the absolute value of the first-order real part of $\lambda_{j,\pm}$ obeys
    \begin{equation}
       \frac{\delta \sin^2 \lb \frac{\pi j}{L+1} \rb}{(L+1) \lb 4 - 2 \cos \lb \frac{\pi j}{L+1} \rb \rb} \geq \frac{2 \delta}{3 (L+1)^3}.
    \end{equation}
    Our choice of $\eta = \eta_0 (L+1)^3$ ensures that the $\cO \lb \delta^2 \rb$ remainder is at most $\frac{\delta}{3 (L+1)^3}$ for sufficiently large $\eta_0$. Hence,
    \begin{equation}
        -\mathrm{Re}\lb\lambda_{j,\pm}\rb\geq\frac{1}{3\eta (L+1)^3}\in\Omega\lb\frac{1}{\eta L^3}\rb.
    \end{equation}

    Let us now discuss the remaining two eigenvalues. Eq.~\eqref{cos_sol} yields one solution at $0$th order:
    \begin{align}
        \lambda_{\mathrm{mid}}^{(0)} &= 0 \\
        \omega_{\mathrm{mid}}^{(0)} &= \pm i \log \lb 2 + \sqrt{3} \rb.
    \end{align}
    Note that the sign of the expression for $\omega_{\mathrm{mid}}^{(0)}$ does not matter as either branch yields the same eigenvalue, so we will choose the ``+'' branch going forward.
    To find the $1$st-order contribution to $\lambda_{\mathrm{mid}}$, it will be useful to directly consider the power series expansion of $\lambda_{\mathrm{mid}}$ in terms of $\delta$ rather than going through a series expansion of $\omega$ as we did before. The reason this would be problematic is because we would need to expand the square root function near $0$ which is not well defined.
    Plugging the ansatz $\lambda_{\mathrm{mid}} = \tilde{\lambda}_{\mathrm{mid}}^{(0)} + \delta \tilde{\lambda}_{\mathrm{mid}}^{(1)} + \dots$ into the transcendental equation in Eq.~\eqref{transcendental} and eliminating all terms apart from the $1$st-order contributions, we see that
    \begin{equation}
        \sin  \lb (L+2) \omega_{\mathrm{mid}}^{(0)} \rb = - \tilde{\lambda}_{\mathrm{mid}}^{(1)} \sin \lb (L+1) \omega_{\mathrm{mid}}^{(0)} \rb.
    \end{equation}
    Hence,
    \begin{equation}
        \tilde{\lambda}_{\mathrm{mid}}^{(1)} = - \frac{ \sin  \lb (L+2) \omega_{\mathrm{mid}}^{(0)} \rb}{ \sin \lb (L+1) \omega_{\mathrm{mid}}^{(0)} \rb} = - \frac{ \sinh \lb (L+2) \log \lb 2 + \sqrt{3} \rb \rb}{ \sinh \lb (L+1) \log \lb 2 + \sqrt{3} \rb \rb}.
    \end{equation}
    In the limit of large $\eta$ and large $L$, we then have the following expression for $\lambda_{\mathrm{mid}}$ to first order:
    \begin{equation}
        {\lambda_{\mathrm{mid}}^{(1)}} = \tilde{\lambda}_{\mathrm{mid}}^{(0)} + \delta \tilde{\lambda}_{\mathrm{mid}}^{(1)}  \sim -\frac{2+\sqrt{3}}{\eta}. 
    \end{equation}
    The analytic implicit function theorem again allows us to  bound the remainder uniformly. In particular, we have that
    \begin{equation}
        \lambda_{\mathrm{mid}} = \delta \tilde{\lambda}_{\mathrm{mid}}^{(1)} + \cO \lb \delta^3 \rb,
    \end{equation}
    with the constants of the remainder term independent of $L$.
    This eigenvalue leads to significantly faster decay than the slowest decaying eigenvalues which have real part scaling like $- \eta^{-1} L^{-3}$ as discussed above.

    There should be a total of $2L+2$ eigenvalues but we have only discussed $2L+1$ so far. The last missing eigenvalue can be located via the Gershgorin circle theorem. To simplify the argument, we consider a system of first-order ODEs obtained from the system of second-order ODEs in Eq.~\eqref{second_order_ODE} by introducing $\tilde{\bv}(t) := \dot{\tilde{\bx}}(t)$ as a new variable:
    \begin{equation}
        \frac{d}{dt}
        \begin{pmatrix}
            \tilde{\bx} \\
            \tilde{\bv}
        \end{pmatrix}
        =
        \underbrace{
        \begin{pmatrix}
            \mathbf{0} & \one \\
            -\lb 4 \one - \bJ \otimes \one \rb & -\eta \lb \bPi_{L+1} \otimes \bPi_y \rb
        \end{pmatrix}
        }_{=: \bG_{\mathrm{ord:} 1}}
        \begin{pmatrix}
            \tilde{\bx} \\
            \tilde{\bv}
        \end{pmatrix}.
    \end{equation}
    Since the eigenvalues of the above matrix $\bG_{\mathrm{ord:} 1}$ are exactly the same eigenvalues $\{\lambda\}$ of the second-order system of ODEs in Eq.~\eqref{second_order_ODE}, we apply the Gershgorin circle theorem to one $(2L+2)$-dimensional YES clock block, where $\bPi_y$ acts as the identity. Specifically, for $\eta>12$, the disc of radius $6$ centered at $-\eta$ is disjoint from the other eigenvalue discs, all contained in the disc of radius $6$ centered at $0$. The isolated disc contains exactly one eigenvalue, counted with algebraic multiplicity. Hence there exists an eigenvalue $\lambda_{\mathrm{fast}}$ close to $-\eta$ such that
    \begin{equation}
        \left| \lambda_{\mathrm{fast}} - (-\eta) \right| \leq \max \left\{1, \norm{\lb 4 \one - \bJ \otimes \one \rb} \right\} \leq 6.
    \end{equation}
    We can again invoke the analytic implicit function theorem to uniformly bound the error term. Specifically, we have that
    \begin{equation}
        \lambda_{\mathrm{fast}} = -\eta + \cO \lb \frac{1}{\eta} \rb,
    \end{equation}
    with the constants in the remainder being independent of $L$.
    
    Together with the $2L$ distinct oscillatory eigenvalues and the middle eigenvalue, it gives $2L+2$ distinct eigenvalues of the reduced YES clock block. They exhaust its dimension, so that block is diagonalizable.
    This eigenvalue decays much faster than any of the other $2L+1$ eigenvalues.
    In summary, we have therefore shown that the real part of any of the $2L+2$ eigenvalues of our dissipative chain scales like
    \begin{equation}
        -\mathrm{Re} \lb \lambda \rb \in \Omega \lb \frac{1}{\eta L^3} \rb.
    \end{equation}

    In the following norm estimates, $\boldsymbol\psi(t)$ is the physical energy vector of \Cref{eq:sim_trans}, so that $\boldsymbol\psi(t)=e^{-\bC t}\boldsymbol\psi(0)$. The physical initial energy vector for the stated velocities $+1,-1$ has norm $\sqrt2$, so $\ket{\psi_0}=\boldsymbol\psi(0)/\sqrt2$ and $\norm{\boldsymbol\psi(t)}^2/\norm{\boldsymbol\psi(0)}^2=\norm{e^{-\bC t}\ket{\psi_0}}^2=E(t)/E(0)$. The conditional YES and NO states are normalized at $t=0$; in the YES-sector energy estimate below, $\bx,\bv$ and their rotated versions denote the corresponding normalized conditional vectors. These constant rescalings leave the linear evolution equations unchanged.

    Let us take a step back now. Ultimately, we want to show that the problem of deciding whether $\frac{\norm{\boldsymbol\psi(t)}^2}{\norm{\boldsymbol\psi(0)}^2}$, the squared norm of the state evolved under $-\bC$ relative to its initial value, is at least $2/3$ or at most $1/3$ is \BQP-complete.
    We will perform a series of decompositions of $\frac{\norm{\boldsymbol\psi(t)}^2}{\norm{\boldsymbol\psi(0)}^2}$ to simplify the analysis. First, recall that the dynamics can be neatly separated into the dynamics of two invariant subspaces which do not mix under $-\bC$: the YES sector and the NO sector. Thus,
    \begin{equation}
        \frac{\norm{\boldsymbol\psi(t)}^2}{\norm{\boldsymbol\psi(0)}^2} = p_{\mathrm{YES}} \norm{\boldsymbol\psi_{\mathrm{YES}}(t)}^2 + p_{\mathrm{NO}} \norm{\boldsymbol\psi_{\mathrm{NO}}(t)}^2, 
    \end{equation}
    where $p_{\mathrm{YES}}$ ($p_{\mathrm{NO}}$) denotes the probability mass of the initial state in the YES (NO) sector. Since the dynamics in the NO sector are unitary, we have that $\norm{\boldsymbol\psi_{\mathrm{NO}}(t)}^2 = 1$. To deal with the YES branch, we utilize the block form of $\norm{\boldsymbol\psi_{\mathrm{YES}}(t)}^2$ to decompose it as follows:
    \begin{equation}
    \begin{split}
        \norm{\boldsymbol\psi_{\mathrm{YES}}(t)}^2 &= \norm{\bv(t)}^2 + \norm{\bK^{1/2} \bT \bx(t)}^2 \\
        &= \norm{\frac{d\bx(t)}{dt}}^2 + \bx^\dagger(t) \bT^\dagger \bK \bT \bx(t) \\
        &= \norm{\frac{d\bx(t)}{dt}}^2 + \bx^\dagger(t) \bH_{\cl}' \bx(t) \\
        &\leq \norm{\frac{d\bx(t)}{dt}}^2 + \norm{\bx(t)}^2 \norm{ \bH_{\cl}'} \\
        &\leq \norm{\frac{d\bx(t)}{dt}}^2 + 6 \norm{\bx(t)}^2 \\
        &\leq 6 \lb \norm{\frac{d\bx(t)}{dt}}^2 + \norm{\bx(t)}^2 \rb \\
        &\leq 6 \lb \norm{\frac{d\tbx(t)}{dt}}^2 + \norm{\tbx(t)}^2 \rb.
    \end{split}
    \label{eq:yesNorm}
    \end{equation}
    The last line follows from the fact that $\bS$ is unitary, so $\norm{\bx} = \norm{\bS^\dagger \bx} = \norm{\tbx}$.
    Our previous calculations gave us the eigenmodes $\balpha^{(\lambda)}$ for the clock register in the frame rotated under $\bS$. All we need to do now is decompose $\tbx(t)$ in terms of the eigenmodes to be able to bound $\norm{\boldsymbol\psi_{\mathrm{YES}}(t)}^2$. The situation is complicated by the fact that $-\bC$ is not a normal matrix, so the eigenvectors do not form an orthonormal basis. Instead of working with $-\bC$, it will be easier to analyze $\bG_{\mathrm{ord:} 1}$ which corresponds to considering the evolution of the block vector $\lb \tbx(t), \tilde{\bv}(t) \rb^\top$. In fact, we can simplify the situation further by only focusing on the action of $\bG_{\mathrm{ord:} 1}$ on the $(2L+2)$-dimensional clock space, which is possible because $\bG_{\mathrm{ord:} 1}$ can be diagonalized individually on the clock space and the computational space (where the $W_l$'s are acting on). Define the following reduced version of $\bG_{\mathrm{ord:} 1}$ which describes its action on the clock register for the YES sector:
    \begin{equation}
        \tilde{\bG}_{\mathrm{ord:} 1}^{(\mathrm{YES})} :=
        \begin{pmatrix}
            \mathbf{0} & \one \\
            -\lb 4 \one - \bJ \rb & -\eta \bPi_{L+1}
        \end{pmatrix}
        \in \mathbb{C}^{(2L+2) \times (2L+2)}.
    \end{equation}
    We will now decompose $\tilde{\bG}_{\mathrm{ord:} 1}^{(\mathrm{YES})}$ in terms of left and right eigenvectors. 
    It is not difficult to see that the right eigenvectors of $\tilde{\bG}_{\mathrm{ord:} 1}^{(\mathrm{YES})}$ are of the form
    \begin{equation}
        \vec{\br}_{\lambda} = 
        \begin{pmatrix}
            \balpha^{(\lambda)} \\
            \lambda \balpha^{(\lambda)}
        \end{pmatrix},
    \end{equation}
    where the $\lambda$'s are the previously identified $2L+2$ eigenvalues. To find the left eigenvectors, one can start with the ansatz 
    \begin{equation}
        \vec{\boell}^\top_{\lambda, \mathrm{trial}} = 
        \begin{pmatrix}
            \lb {\balpha^{(\lambda)}} \rb^\top ,\quad
            \lambda \lb {\balpha^{(\lambda)}} \rb^\top
        \end{pmatrix},
    \end{equation}
    to realize that the first component actually needs to be of the form $- \balpha^\top \lb 4 \one - \bJ \rb$. From the quadratic eigenvalue equation, we also find that
    \begin{equation}
        - \balpha^\top \lb 4 \one - \bJ  \rb = \lambda^2 \balpha^\top + \lambda \eta \balpha^\top \bPi_{L+1} .
    \end{equation}
    Thus, the left eigenvectors of $\tilde{\bG}_{\mathrm{ord:} 1}^{(\mathrm{YES})}$ can be written as follows:
    \begin{equation}
        \vec{\boell}^\top_{\lambda} = 
        \begin{pmatrix}
            \lb {\balpha^{(\lambda)}} \rb^\top \lb \lambda \one + \eta \bPi_{L+1} \rb ,\quad 
            \lb {\balpha^{(\lambda)}} \rb^\top
        \end{pmatrix}.
    \end{equation}
    We therefore have the following spectral decomposition of $\tilde{\bG}_{\mathrm{ord:} 1}^{(\mathrm{YES})}$ in terms of left and right eigenvectors:
    \begin{equation}
        \tilde{\bG}_{\mathrm{ord:} 1}^{(\mathrm{YES})} = \sum_{j=1}^{2L+2} \lambda_j \frac{\vec{\br}_j \vec{\boell}_j^\top}{\vec{\boell}_j^\top \vec{\br}_j}.\label{eq:Gord}
    \end{equation}

    Next, we need to lower bound the inner-product-like quantities $\vec{\boell}_j^\top \vec{\br}_j$. Before doing that though, let us first recall what our input state for the whole computation is. To derive the second-order differential equation for the displacement vector $\bx$, we made the assumption that $\bx(0) = \bzero$. Additionally, we assume that only the oscillators associated with the clock register being in the state $\ket{1}$ have a nonzero velocity. In the $(2L+2)$-dimensional space of $\tilde{\bG}_{\mathrm{ord:} 1}$, our initial state can therefore be written as
    \begin{equation}
        \tilde{\phi}_0 =
        \begin{pmatrix}
            \mathbf{0} \\
            \ket{1}_{\mathrm{clock}}
        \end{pmatrix}.
    \end{equation}
    Note also that
    \begin{equation}
    \begin{split}
         \vec{\boell}_j^\top \vec{\br}_j &= \lb \balpha^{(j)} \rb^\top \lb \lambda_j \one + \eta \bPi_{L+1} \rb \balpha^{(j)} + \lambda_j \lb \balpha^{(j)} \rb^\top \balpha^{(j)} \\
         &= 2 \lambda_j \lb \balpha^{(j)} \rb^\top \balpha^{(j)} + \eta \lb \alpha^{(j)}_{ L+1} \rb^2.
    \end{split}
    \end{equation}

    Let us first look at $\vec{\boell}_j^\top \vec{\br}_j$ for the $2L$ oscillatory modes. Recall that 
    \begin{equation}
        \alpha_k^{(j, \pm)} = \sin (k \omega_{j, \pm}), \quad \omega_{j,\pm} = \frac{\pi j}{L+1} - \frac{1}{\eta} \frac{\sin  \lb \frac{\pi j}{L+1} \rb}{(L+1) \lambda_{j, \pm}^{(0)}} + \cO \lb \frac{\delta^2}{L+1} \rb,
    \end{equation}
    with the hidden constants independent of $L$.
    The above expansion of $\omega_{j,\pm}$ implies that the perturbation of $\omega_{j,\pm}$ around $\omega_{j,\pm}^{(0)}$ satisfies
    \begin{equation}
        k \lb \omega_{j,\pm} - \omega_{j,\pm}^{(0)} \rb \in \cO \lb \eta^{-1} \rb
    \end{equation}
    for all $k \in \{1,2, \dots, L+1 \}$. Thus, 
    \begin{equation}
        \alpha_k^{(j,\pm)}=\sin \lb k \omega_{j,\pm}^{(0)} \rb + \cO(\eta^{-1}),
    \end{equation}
    uniformly in both $k$ and $j$. Next, we use the fact that
    \begin{equation}
        \sum_{k=1}^{L+1}\sin^2 \lb k \omega_{j,\pm}^{(0)} \rb = \frac{L+1}{2},
    \end{equation}
    to obtain
    \begin{equation}
        \lb\balpha^{(j)}\rb^\top\balpha^{(j)} = \frac{L+1}{2} + \cO \lb \frac{L+1}{\eta} \rb.
    \end{equation}
    Also note that $\alpha_{L+1}^{(j)} \in \cO \lb \eta^{-1} \rb$.
    Putting it all together, we therefore have that for any of the $2L$ oscillatory modes,
    \begin{equation}
        \left|\vec{\boell}_j^\top\vec{\br}_j\right|\in\Theta(L),\qquad
        \norm{\balpha^{(j)}}\in\cO\lb\sqrt{L}\rb.
    \end{equation}
    Since $\vec{\boell}_j^\top\tilde{\phi}_0 = \alpha_1^{(j)} \in \cO(1)$ and $|\lambda_j| \in \cO(1)$, this implies that
    \begin{equation}
        \norm{\frac{\vec{\br}_j\vec{\boell}_j^\top}{\vec{\boell}_j^\top\vec{\br}_j}\tilde{\phi}_0}\in\cO(1).
    \end{equation}
    Next, let us consider the remaining two eigenvectors associated with $\lambda_{\mathrm{mid}}$ and $\lambda_{\mathrm{fast}}$. Recall that we did not obtain a power series expansion for $\omega_{\mathrm{mid}}$ (or $\omega_{\mathrm{fast}}$) as part of finding an expansion for $\lambda_{\mathrm{mid}}$ due to the square root singularity at $0$. In order to find the components of $\balpha^{(\mathrm{mid})}$ and $\balpha^{(\mathrm{fast})}$, we instead use the eigenvalue ansatz from Eq.~\eqref{eigval_ansatz} to find an expression for $z := e^{i \omega}$ in terms of $\lambda_{\mathrm{mid}}$ or $\lambda_{\mathrm{fast}}$ which we can then use to find an expression for the components of $\balpha^{(\mathrm{mid})}$ (or $\balpha^{(\mathrm{fast})}$). In the following, we will mostly drop the subscripts for notational simplicity but will write them out explicitly when needed for clarity. The first part of the analysis is the same for $\balpha^{(\mathrm{mid})}$ and $\balpha^{(\mathrm{fast})}$.
    From Eq.~\eqref{eigval_ansatz} we have that
    \begin{equation}
        \lambda^2 = - (4 - 2 \cos(\omega)) = - \lb 4 - \lb z + \frac{1}{z} \rb \rb.
    \end{equation}
    Multiplying both sides by $z$ and rearranging the terms, we find that 
    \begin{align}
        &z^2 - (4+\lambda^2) z + 1 = 0 \\
        &\implies z_{\pm} = \frac{4+\lambda^2}{2} \pm \sqrt{\lb \frac{4+\lambda^2}{2} \rb^2 - 1}.
    \end{align}
    It is straightforward to verify that the two roots are inverses of each other. Looking at the components of the ansatz for $\balpha$ in Eq.~\eqref{eigvec_ansatz}, we see that both roots result in the same vector (up to a global phase) since $\sin \lb k \omega \rb \propto \lb z^k - z^{-k}\rb$.
    In the following, we choose to work with the plus branch, $z_+ = \frac{4+\lambda^2}{2} + \sqrt{\lb \frac{4+\lambda^2}{2} \rb^2 - 1}$. 
    Our previous uniform eigenvalue estimates give
    \begin{equation}
        z_{+,\mathrm{fast}}=\eta^2 \lb 1 + \cO \lb \eta^{-2} \rb \rb,\qquad
        z_{+,\mathrm{mid}}= \lb 2+\sqrt{3} \rb \lb 1 + \cO \lb \eta^{-2} \rb \rb .
    \end{equation}
    We can safely drop the $\cO(\eta^{-2})$ remainder terms in the following discussion of the components of $\balpha^{(\mathrm{fast})}$ and $\balpha^{(\mathrm{mid})}$ since for our choice of $\eta = \eta_0 (L+1)^3$,
    \begin{equation}
        \lb 1 + \cO \lb \eta^{-2} \rb \rb^k = e^{\cO \lb \frac{L+1}{\eta^2} \rb}
    \end{equation}
    is uniformly bounded and approaches $1$ in the large $L$ limit.
    The components of $\balpha^{(\mathrm{fast})}$ are then given by
    \begin{equation}
    \begin{split}
         \alpha^{(\mathrm{fast})}_k &= \sin \lb k \, \omega_{\mathrm{fast}} \rb \\
         &= \frac{1}{2i} \lb e^{ik\omega_{\mathrm{fast}}} - e^{-ik\omega_{\mathrm{fast}}}\rb \\
         &= \frac{1}{2i} \lb z_{+, \mathrm{fast}}^k - z_{+, \mathrm{fast}}^{-k} \rb \\
         &\sim \frac{1}{2i} \eta^{2k}.
    \end{split}
    \end{equation}
    This means that 
    \begin{align}
        \left|\lb \balpha^{(\mathrm{fast})} \rb^\top \balpha^{(\mathrm{fast})}\right| = \left| \alpha^{(\mathrm{fast})}_{ L+1} \right|^2 + \cO \lb \eta^{4L+2} \rb \in \Theta \lb \eta^{4L+4} \rb .
    \end{align}
    Therefore,
    \begin{equation}
    \begin{split}
         \left| \vec{\boell}_{\mathrm{fast}}^\top \vec{\br}_{\mathrm{fast}} \right| &= \left| -2 \eta \lb \alpha^{(\mathrm{fast})}_{ L+1} \rb^2 + \eta \lb \alpha^{(\mathrm{fast})}_{ L+1} \rb^2 + \cO \lb \eta^{4L+3} \rb \right| \\
         &\in \Theta \lb \eta^{4L+5} \rb.
    \end{split}
    \end{equation}
    Additionally,
    \begin{equation}
        \norm{\balpha^{(\mathrm{fast})}} \in \cO \lb \sqrt{L} \, \eta^{2(L+1)} \rb,
    \end{equation}
    meaning that
    \begin{equation}
        \norm{\frac{\vec{\br}_{\mathrm{fast}} \vec{\boell}_{\mathrm{fast}}^\top}{\vec{\boell}_{\mathrm{fast}}^\top \vec{\br}_{\mathrm{fast}}} \tilde{\phi}_0} \in \cO \lb L \rb.
    \end{equation}
   
    A similar argument shows that 
    \begin{equation}
        \left|\lb \balpha^{(\rmid)} \rb^\top \balpha^{(\rmid)}\right| = \lb \alpha^{(\mathrm{mid})}_{ L+1} \rb^2 + \cO \lb \lb 2 + \sqrt{3} \rb^{2L+1} \rb \in \Theta \lb \lb 2 + \sqrt{3} \rb^{2L+2} \rb.
    \end{equation}
    Therefore,
    \begin{equation}
    \begin{split}
         \left| \vec{\boell}_{\mathrm{mid}}^\top \vec{\br}_{\mathrm{mid}} \right| &= \left| 2\lambda_{\mathrm{mid}} \lb \alpha^{(\mathrm{mid})}_{ L+1} \rb^2 + \eta \lb \alpha^{(\mathrm{mid})}_{ L+1} \rb^2 + \cO \lb \lb 2 + \sqrt{3} \rb^{2L+1} \rb \right| \\
         &\in \Theta \lb \eta \lb 2 + \sqrt{3} \rb^{2L+2} \rb.
    \end{split}
    \end{equation}
    Additionally,
    \begin{equation}
        \norm{\balpha^{(\mathrm{mid})}} \in \cO \lb \sqrt{L}  \lb 2 + \sqrt{3} \rb^{L+1} \rb,
    \end{equation}
    meaning that
    \begin{equation}
        \norm{\frac{\vec{\br}_{\mathrm{mid}} \vec{\boell}_{\mathrm{mid}}^\top}{\vec{\boell}_{\mathrm{mid}}^\top \vec{\br}_{\mathrm{mid}}} \tilde{\phi}_0} \in \cO \lb \frac{L}{\eta} \rb.
    \end{equation}

    Now we have all the ingredients needed to upper bound the norm of the part of the initial state that is in the YES sector. Specifically, from the use of~\eqref{eq:yesNorm} and the eigenvector form provided in~\eqref{eq:Gord} combined with the observation that the eigenvalues of the exponential of $\tilde{\mathbf{G}}_{\rm ord: 1}$ are of the form $e^{\lambda_j t}$
    \begin{align}
        \norm{\boldsymbol\psi_{\mathrm{YES}}(t)}^2 &\leq 6 \norm{\sum_{j=1}^{2L+2} e^{\lambda_j t} \frac{\vec{\br}_j \vec{\boell}_j^\top}{\vec{\boell}_j^\top \vec{\br}_j} \tilde{\phi}_0 }^2 \in \cO \lb L^2 e^{-\frac{c}{\eta L^3}t} \rb,
    \end{align}
    for some positive constant $c \in \mathbb{R}_{>0}$. Now recall that the overall squared norm of the evolved state relative to its initial value can be written as
    \begin{equation}
        \frac{\norm{\boldsymbol\psi(t)}^2}{\norm{\boldsymbol\psi(0)}^2} = p_{\mathrm{YES}} \norm{\boldsymbol\psi_{\mathrm{YES}}(t)}^2 + p_{\mathrm{NO}} \norm{\boldsymbol\psi_{\mathrm{NO}}(t)}^2. 
    \end{equation}
    For NO instances we have that
    \begin{equation}
        \frac{\norm{\boldsymbol\psi(t)}^2}{\norm{\boldsymbol\psi(0)}^2} \geq p_{\mathrm{NO}} \norm{\boldsymbol\psi_{\mathrm{NO}}(t)}^2 
        \geq 1 - \epsilon' \geq \frac{5}{6}.
    \end{equation}
    For YES instances on the other hand we have that
    \begin{align}
        \frac{\norm{\boldsymbol\psi(t)}^2}{\norm{\boldsymbol\psi(0)}^2} &\leq (1-\epsilon') \norm{\boldsymbol\psi_{\mathrm{YES}}(t)}^2 + \epsilon' \norm{\boldsymbol\psi_{\mathrm{NO}}(t)}^2 \\
        &\leq \frac{5}{6}  \norm{\boldsymbol\psi_{\mathrm{YES}}(t)}^2 + \frac{1}{6}.
    \end{align}
    It therefore suffices to simulate for time $T \in  \cO \lb \eta L^3 \log(L) \rb \subseteq \cO \lb \poly(n) \rb$, such that $\norm{\boldsymbol\psi_{\mathrm{YES}}(t)}^2 \leq 1/5$, in order to distinguish a YES from a NO instance with a constant precision estimate of $\frac{\norm{\boldsymbol\psi(T)}^2}{\norm{\boldsymbol\psi(0)}^2}$. Here we used that $\eta = \eta_0 (L+1)^3 \in \cO \lb \poly(n) \rb$ and the fact that $L \in \cO \lb \poly(n) \rb$ by construction.

    To show containment of this problem in \BQP, we can use LCHS as discussed in the main text to simulate the evolution operator $e^{-\bC T}$. 
    Recall that the initial conditions for the oscillators are such that all masses are initially in their rest positions, the first mass has velocity $+1$, the second mass has velocity $-1$ and all other masses do not have any initial velocity. This corresponds to the following easy-to-prepare initial state:
    \begin{equation}
        \ket{\psi_0} = 
        \begin{pmatrix}
             \ket{1}_{\mathrm{clock}}\ket{0}^{\otimes q}_{\mathrm{comp}} \ket{-}_{\mathrm{anc}} \\
        \mathbf{0}
        \end{pmatrix}.
    \end{equation}
    Since $T \in \cO \lb  \poly(n)\rb$ and $\bC$ can be block-encoded using $\cO \lb \poly(n) \rb$ gates with block-encoding constant scaling like $\eta \in \cO \lb \poly(n) \rb$, the overall number of gates scales like $\cO \lb \poly(n) \rb$ and thus the problem is in \BQP.

\end{proof}

\end{document}